\def\Xset{\mathsf{X}}
\def\Yset{\mathsf{Y}}
\def\sigmaX{\mathcal{X}}
\def\sigmaY{\mathcal{Y}}
\def\1{\mathbbm{1}}
\def\pE{\mathbb{E}}
\def\pP{\mathbb{P}}
\def\pV{\mathbb{V}}
\def\N{\mathbb{N}}
\def\plim{\overset{\pP}{\longrightarrow}}
\def\dlim{\Longrightarrow}
\def\gauss{\mathcal{N}}
\def\GT{\mathsf{GT}}
\def\BS{\mathsf{BS}}
\def\rmd{\mathrm{d}}
\def\R{\mathbb{R}}
\def\aslim{\overset{a.s.}{\underset{N \to \infty}{\longrightarrow}}}
\def\backsum{\mathcal{T}}
\def\sett{\mathcal{S}}
\def\eve{\mathrm{E}}
\def\Bset{\mathcal{B}}
\def\zero{\mathbf{0}}
\def\bigo{\mathcal{O}}
\def\boldone{\mathbf{1}}
\def\FFBS{\mathrm{FFBS}}
\def\iid{\overset{\mathrm{iid}}{\sim}}
\def\simplelim{\underset{N \to \infty}{\lim}}
\def\boundg{G_\infty}
\def\paris{\textit{PaRIS}}
\def\Pbacksum{\widetilde{\backsum}} 
\def\bigH{\textbf{H}}
\def\bigF{\textbf{F}}
\def\pos{\mathsf{Pos}}
\newcommand{\Const}[1]{\mathcal{C}_{N,b,#1}}
\newcommand{\Gfilt}[2]{\mathcal{G}^N _{#2}}
\newcommand{\card}[1]{\mathrm{Card}(#1)}
\newcommand{\bigprod}[1]{\overline{\Lambda}^{1,2} _{#1}}
\newcommand{\intersect}[2]{\mathrm{I}_{#1}(#2)}
\newcommand{\joint}[1]{\gamma _{#1}}
\newcommand{\pred}[1]{\eta _{#1}}
\newcommand{\filter}[1]{\phi _{#1}}
\newcommand{\F}[1]{\mathcal{F}^N _{#1}}
\newcommand{\GG}[1]{\mathrm{\mathbf{G}}_{#1}}
\newcommand{\smooth}[2]{\phi_{#1 | #2}}
\newcommand{\smoothN}[3]{\phi^{#3} _{#1 | #2}}
\newcommand{\particletraj}[2]{\xi^{#1 _{0:#2}} _{0:#2}}
\newcommand{\particlecloud}[1]{\bm \xi^{1:N} _{0:#1}}
\newcommand{\bwker}[1]{\mathrm{\mathbf{B}} _{\phi_{#1}}}
\newcommand{\bwpath}[1]{\mathrm{\mathbf{T}} _{#1}}
\newcommand{\transition}[1]{\ifthenelse{\equal{#1}{0}}{\mathrm{M}_0}{\mathrm{M}_{#1}}}
\newcommand{\Q}[1]{\mathrm{\mathbf{Q}}_{#1}}
\newcommand{\Qmarg}[1]{\overline{\mathrm{\mathbf{Q}}}_{#1}}
\newcommand{\asymptvarest}[2]{\ifthenelse{\equal{#2}{}}{\ensuremath{\Gamma _{0: #1 \mid #1}}}{\ensuremath{\Gamma^{#2}_{0: #1 \mid #1}}}}
\newcommand{\asymptvarffbs}[2]{\ifthenelse{\equal{#2}{}}{\ensuremath{\Gamma^{\textrm{FFBS}} _{0: #1 \mid #1}}}{\ensuremath{\Gamma^{#2, \textrm{FFBS}}_{0: #1 \mid #1}}}}
\newcommand{\transitiondens}[1]{\ifthenelse{\equal{#1}{0}}{m_0}{m_{#1}}}
\newcommand{\bitransition}[2]{\mathcal{M}^{#1} _{#2}}
\newcommand{\mumeasure}[1]{\mathcal{Q}_{#1}}
\newcommand{\bounded}[1]{\ifthenelse{\equal{#1}{}}{\ensuremath{\mathbb{F}_b (\mathcal{X})}}{\ensuremath{\mathbb{F}_b (\mathcal{X}^{\otimes {#1}})}}}
\newcommand{\measurable}[1]{\ifthenelse{\equal{#1}{}}{\ensuremath{\mathbb{F}(\mathcal{X})}}{\ensuremath{\mathbb{F}(\mathcal{X}^{\otimes {#1}})}}}
\newcommand{\additive}[1]{\mathbb{A}_b(\mathcal{X}^{\otimes #1})}
\newcommand{\muest}[2]{\mathcal{Q}^{N,#1} _{#2}}
\newcommand{\predasymptvar}[2]{\mathcal{V}^{\infty}_{\eta, #1} \left(#2\right)}
\newcommand{\jointasymptvar}[2]{\mathcal{V}^{\infty}_{\gamma, #1} (#2)}
\newcommand{\predasymptvarestim}[2]{\mathcal{V}^{N}_{\eta, #1} (#2)}
\newcommand{\asymptvar}[1]{\mathcal{V}^{\infty} _{#1}}
\newcommand{\asymptvarestim}[2]{\mathcal{V}^{N,#2} _{#1}}
\newcommand{\tbtasymptvarestim}[2]{\overline{\mathcal{V}}^{N,#2} _{#1}}
\newcommand{\parisasymptvar}[2]{\mathsf{V}^{N,M} _{#1}}
\newcommand{\esssup}[2][]
{\ifthenelse{\equal{#1}{}}{\left\| #2 \right\|_\infty}{\left\| #2 \right\|^2_{\infty}}}
\newcommand{\eqdef}{\ensuremath{:=}}
\newcommand{\eqsp}{}
\newcommand{\weight}[2]{\omega^{#1} _{#2}}
\newcommand{\normweight}[2]{\mathcal{W}^{#1} _{#2}}
\newcommand{\particle}[2]{\xi^{#1} _{#2}}
\newcommand{\mixture}[3]{
    \ifthenelse{\equal{#3}{}}{\mathrm{Q} _{#1}(\particle{1:N}{#1 -1}, \rmd \particle{#2}{#1})}{\mathrm{Q}^{\otimes #3} _{#1}(\particle{1:N}{#1}, \rmd \particle{#2}{#1})}}
\newcommand{\ffbsasymptvar}[3]{\mathcal{V}^{#1} _{#2 | #3}}
\newcommand{\parismuest}[2]{\widetilde{\mathcal{Q}}^{N,M} _{#2}}
\newcommand{\A}[2]{\ifthenelse{\equal{#2}{}}{(\mathbf{A}{\mathbf{\ref{#1})}}}{(\mathbf{A}\mathbf{\ref{#1}:\ref{#2}})}}
\newcommand{\As}[2]{\ifthenelse{\equal{#2}{}}{(\mathbf{A}{\mathbf{\ref*{#1})}}}{(\mathbf{A}\mathbf{\ref*{#1}:\ref*{#2}})}}
\newcommand{\suppl}{appendix}
\newtheorem{thm}{Theorem}[section]
\newtheorem{corollary}[thm]{Corollary}
\newtheorem{lemma}[thm]{Lemma}
\newtheorem{proposition}[thm]{Proposition}
\theoremstyle{definition}
\theoremstyle{remark}
\newtheorem{rem}[thm]{Remark}
\newtheorem*{rem*}{Remark}
\newtheorem{example}[thm]{Example}
\numberwithin{equation}{section}
\newcounter{hypA}
\newenvironment{hypA}{\refstepcounter{hypA}\begin{itemize}
  \item[({\bf A\arabic{hypA}})]}{\end{itemize}}
\begin{document}

 \title[Variance estimation for SMC algorithms]{Variance estimation for Sequential Monte Carlo algorithms: a backward sampling approach}

 \author{Yazid Janati El Idrissi}
\address{SAMOVAR, TELECOM SUDPARIS. INSTITUT POLYTECHNIQUE DE PARIS, 91120 PALAISEAU FRANCE}
\email{yazid.janati\_elidrissi@telecom-sudparis.eu}

\author{Sylvain Le Corff}
\address{LPSM, Sorbonne Université, UMR CNRS 8001, 4 Place Jussieu, 75005 Paris.}
\email{yohan.petetin@telecom-sudparis.eu}

\author{Yohan Petetin}
\email{sylvain.le\_corff@sorbonne-universite.fr}

\begin{abstract}
  In this paper, we consider the problem of online asymptotic variance estimation for particle filtering and smoothing. Current solutions for the particle filter rely on the particle genealogy and are either unstable or hard to tune in practice. We propose to mitigate these limitations by introducing a new estimator of the asymptotic variance based on the so called backward weights. The resulting estimator is weakly consistent and trades computational cost for more stability and reduced variance. We also propose a more computationally efficient estimator inspired by the $\paris$ algorithm of \cite{paris}. As an application, particle smoothing is considered and an estimator of the asymptotic variance of the Forward Filtering Backward Smoothing estimator applied to additive functionals is provided.
\end{abstract}

\maketitle

{\hypersetup{linkcolor=black}
\tableofcontents
}
\section{Introduction}
Sequential Monte Carlo (SMC) methods offer a flexible framework for the approximation of posterior distributions in the context of Bayesian inference, for instance in \emph{Hidden Markov Models} (HMM).  These models presuppose that the observations are defined using an unobserved process assumed to be a Markov chain. In such a setting, we are particularly interested in estimating the law of a hidden state given all past observations referred to as the filtering distribution and the laws of sequences of states given past and future observations, referred to as smoothing distributions. These distributions can be approximated by weighted empirical measures associated with random samples, usually known  as {\em particles}. All SMC methods are based on successive importance sampling and resampling steps. When a new observation is available, new particles are sampled 
according to an importance distribution and then they are 
weighted to match the target distribution. Finally, through a resampling scheme,  particles with large weights are duplicated while low weighted particles are discarded.
This general procedure has been used in a wide range of applications such as signal processing, target tracking, econometrics, biology, see  \cite{infhidden,douc2014nonlinear,Chopin_2020} and the references therein. 

 The quantification of the Monte Carlo error of SMC estimators is a major challenge. For a variety of SMC methods such as the \emph{bootstrap filter} \cite{gordon1993novel} or the \emph{Forward Filtering Backward Smoothing} (FFBS) \cite{Tanizaki-smoothing} algorithms, \emph{Central Limit Theorems} (CLT) with theoretical expressions of the asymptotic variance (in the number of particles) have been derived \cite{del1999central, chopin2004central,kunsch2005recursive, doucmoulines}. However, these expressions are not computable in practice and give rise to the natural question of their estimation. 
Since this problem appears in an online context, a critical constraint is that the samples
produced by the original SMC algorithm should be recycled to compute such estimates.

This problem  has received some satisfactory  solutions 
recently: a simple estimator
of the asymptotic variance when multinomial resampling is used has
been proposed in \cite{chan}
for the bootstrap filter algorithm
and has since been refined in \cite{leewhiteley,olssonvar, asmc}.
The computation of the associated asymptotic variance estimator at time $t$ is based on tracing the genealogy of each particle down to time $0$. Although it has been shown to be consistent as the number of particles $N$ grows to infinity,
it is particularly prone to instability when $t$ is large, as the successive resampling steps lead to the well known path degeneracy issue.
After a few time steps, particles are likely to share the same ancestor at time $0$ which in turn makes the asymptotic variance estimates collapse. To overcome this degeneracy issue, \cite{olssonvar} proposed to only trace a part of the genealogy of each particle according to a fixed-lag parameter following the fixed-lag smoothing approach introduced in \cite{olsson2008sequential}. As long as the number of particles is balanced with the chosen lag, the bias introduced by considering only the most recent ancestors of each particle can be controlled as shown in \cite{olssonvar}. However, while this alternative estimator remains easily computable, 
choosing an optimal lag is a non trivial task which makes this approach hard to tune in practice.  We thus address the limitations of current asymptotic variance  estimators in this paper.

The first contribution of this paper is to propose a parameter free estimator of the asymptotic variance associated with the bootstrap particle filter with multinomial resampling that trades computational cost for stability and reduced variance. The construction of our
estimator starts from the observation that
the aforementioned degeneracy 
is similar to that of classical SMC-based smoothing algorithms \cite{doucet2000sequential, fearnhead2010sequential, poyiadjis2011particle} or Particle Markov Chain  Monte Carlo algorithms such as the
Particle Gibbs sampler \cite{andrieu2010particle}.
In both cases, a backward sampling step which aims at diversifying the particle trajectories has shown to be a reliable workaround that decreases the (theoretical) variance of the estimators at the expense of higher computational cost \cite{godsill2004monte,doucmoulines,paris,del2010backward,lindstenBS,chopin_PG}.
We thus aim at introducing such a mechanism in the estimation of the asymptotic variance.
The construction of our estimator 
relies on the analysis conducted in \cite{leewhiteley} in which it is shown that the estimator of \cite{chan} can be interpreted as a conditional expectation
 with respect to the indices that retrace the genealogy of the particles, 
 given all the particles and ancestors.
We  show that this construction still holds when the distribution of  the indices relies on the backward importance weights. The resulting estimator is computed by averaging 
auxiliary statistics that are very similar to those of the \emph{forward implementation} of the FFBS for additive functionals \cite{delmoral:doucet:singh:2010} and can be thus updated online. The time complexity per update of our estimator is of order $N^3$. 
Driven by the efficient implementation of the FFBS for additive functionals developed in \cite{paris}, we show that the computational cost of our estimator  can be reduced from $\bigo(N^3)$ to $\bigo(N^2)$ by means of additional Monte Carlo simulation while remaining as competitive in terms of bias and variance.

We next focus on the FFBS algorithm for the estimation of smoothing estimators.
Despite the fact that a CLT has been obtained for estimators based on the FFBS, no variance estimator has been proposed in the literature. We show that our previous construction enables us to 
fill this gap and we thus provide 
a consistent estimator in the case of additive functionals  which are
particularly critical, for instance in the Expectation Maximization framework. Again, this estimator
can be computed online and in the
particular case of 
marginal smoothing
its computational cost can be drastically 
reduced.

The paper is organized as follows. In Section \ref{sec:SMC} we briefly review the SMC framework and discuss the current estimators of the asymptotic variance proposed in \cite{chan,leewhiteley,olssonvar}. In Section \ref{sec:variance_estimation}, we introduce
our estimator based on the backward weights, propose an online implementation and establish its asymptotic properties. In Section \ref{sec:varFFBS}, we extend our derivations to the FFBS algorithm and provide a consistent asymptotic variance estimator. We finally
validate our results with numerical experiments in Section \ref{sec:experiments}. Notably, we show empirically that our novel estimator for the filter has a favourable dependence on the time horizon $t$ in comparison with the existing estimators. All additional proofs and  discussions can be found in the \suppl.

\section{Notations}
\label{sec:notations}
For any measurable space $(\mathsf{E}, \mathcal{E})$, we denote by $\mathbb{F}(\mathcal{E})$ the set of $\mathbb{R}$ valued, $\mathcal{E}$-measurable functions,  by $\mathbb{F}_b(\mathcal{E})$ the subset of $\mathbb{F}(\mathcal{E})$ of bounded functions on $\mathsf{E}$, and by $\mathbb{M}(\mathsf{E})$ the set of measures on $\mathsf{E}$. For any $\mu \in \mathbb{M}(\mathrm{E})$ and $h \in \mathbb{F}(\mathcal{E})$, we write 
\[
\mu(h) \eqdef \int_{\mathsf{E}} h(x) \mu(\rmd x) \eqsp.
\]
 For any transition kernel $M$ from $(\mathsf{E}, \mathcal{E})$ to another measurable space $(\mathsf{G}, \mathcal{G})$, define 
\[
M[h](x) \eqdef \int_{\mathsf{G}} M(x, \rmd y) h(y) \eqsp, \quad \forall h \in \mathbb{F}(\mathcal{G})\eqsp, \quad \forall x \in \mathsf{E} \eqsp.
\]
and write $\mu M$ the measure defined on $\mathsf{G}$ by
\[
\mu M(A) \eqdef \int_{\mathsf{E}} \mu(\rmd x) M(x, A)\eqsp, \quad\forall A \in \mathcal{G} \eqsp.
\]
If $M_1$ is a transition kernel from $(\mathsf{E}, \mathcal{E})$ to $(\mathsf{G}, \mathcal{G})$ and $M_2$ a transition kernel from $(\mathsf{G}, \mathcal{G})$ to a measurable space  $(\mathsf{H}, \mathcal{H})$, then $M_1 M_2$ is the transition kernel from $(\mathsf{E}, \mathcal{E})$ to $(\mathsf{H}, \mathcal{H})$ defined by
\[ 
M_1M_2(x,A) = \int_{\mathsf{G}} M_1(x, \rmd y) M_2(y, A)\eqsp,\quad \forall x \in \mathsf{E}\eqsp, \quad \forall A \in \mathcal{H} \eqsp.
\]
In addition, $M_1 \otimes M_2$ is the transition kernel from $(\mathsf{E}, \mathcal{E})$ to $(\mathsf{G} \times \mathsf{H}, \mathcal{G} \otimes \mathcal{H})$ defined by
\[
M_1 \otimes M_2 (x, A ) \eqdef \int \1_{A}(y,z)M_1(x,\rmd y) M_2(y, \rmd z)\eqsp, \quad \forall x \in \mathsf{E}, \quad A \in \mathcal{G} \otimes \mathcal{H} \eqsp.
\]
In particular, for all $N\geq 1$,we will write $M^{\otimes N}$ for $\bigotimes_{i = 1}^N M$. For two $\mathcal{E}$-measurable functions $f,g$ the tensor product is defined as
\[
f \otimes g : \mathsf{E} ^2 \ni (x,y) \mapsto f(x)g(y) \eqsp.
\]
The sets $\N$ and $\N^{*}$ are respectively the sets of natural numbers and positive natural numbers. The $\mathbf{L}_q$ norm of a random variable $X$ is $\| X \|_q \eqdef \pE \big[ |X|^q \big]^{1/q}$. The supremum norm of $f \in \mathbb{F}_b (\mathsf{E})$ is denoted by $|f |_\infty$. The unit function $\boldone$ is such that $\boldone(x) = 1$ for all $x \in \mathsf{E}$.
 The transpose of a matrix $A$ is denoted $A^\top$. If $A, B \in \R^{M \times N}$ are two matrices, then the Hadamard product $A \odot B$ is the element-wise product, i.e for all $1\leq i \leq M$ and $1\leq j \leq N$, $(A \odot B) _{i,j} = a_{i,j} b_{i,j}$.
If $A \in \R^{N \times N}$, then $\mathrm{Diag}(A)$ is the $N \times N$ diagonal matrix such that for all $1\leq i\leq N$, $\mathrm{Diag}(A) _{i,i} = A_{i,i}$ and if $\bm x \in \R^{N \times 1}$ then $\mathrm{Diag}(\bm{x})$ is the $N \times N$ diagonal matrix such that $\mathrm{Diag}(\bm{x})_{i,i} = \bm{x} _i$.  For $(a, b) \in \N^2$, $[a:b] \eqdef \N \cap [a,b]$ and  $[b] \eqdef [1:b]$. If $f$ is a mapping from $[N]^2$ to $\R$, we denote by $\bm{f}$ the associated $N \times N$ matrix such that $\bm{f}_{i,j} = f(i,j)$. Finally, we adopt the following conventions. Given some set $\{ \xi^i _s\} _{s \in [0:t], i \in [N]}$, we write $\particle{1:N}{t} \eqdef (\particle{1}{t}, \cdots, \particle{N}{t})$, $\particletraj{k}{t} \eqdef (\particle{k_0}{0}, \cdots, \particle{k_t}{t})$, $\particlecloud{t} \eqdef \big\{\particletraj{k}{t} \big\} _{k_{0:t} \in [N]^{t+1}}$.

\section{Sequential Monte Carlo}
\label{sec:SMC}
In this section, we first review the bootstrap particle filter methodology and recall the main asymptotic results associated
with the estimates produced by this algorithm. A state of the art and a discussion of the estimation of the asymptotic variances 
related to this algorithm are also presented. 
\subsection{Definitions}
Let $(\mathsf{X}, \sigmaX)$ be a general measurable space. Let $\transition{0}$ and $(\transition{t})_{t \in \mathbb{N}}$ be a probability measure on $(\Xset, \mathcal{X})$ and a sequence of Markov transition kernels on $\Xset\times \mathcal{X}$, respectively. Consider also a family $(g_t)_{t \in \mathbb{N}}$ of non-negative $\sigmaX$-measurable functions, referred to as potentials. Throughout this paper, we make the following assumptions on $\{\transition{t} \}_{t \in \N}$ and $\{ g_t \}_{t \in \N}$.
\begin{hypA}
\label{assp:A}
The probability measure $\transition{0}$ admits $\transitiondens{0}$ as probability density with respect to some reference measure $\nu \in \mathbb{M}(\mathcal{X})$.  For all $t \in \N$ and $x_t \in \Xset$, $\transition{t+1}(x_t,.)$  admits $\transitiondens{t+1}(x_t,.)$  as probability density with respect to $\nu$.
\end{hypA}
\begin{hypA}
\label{assp:B}
There exists a constant $\boundg > 0$ such that for all $t \in \N$ and $x \in \Xset$,  $0 < g_t (x) \leq \boundg \eqsp.$
\end{hypA}
Define the sequence of unnormalized transition kernels $(\Q{t+1})_{t \in \N}$ where, for all $t \in \N$, $x_t \in \Xset$ and $A\in \sigmaX$,
\begin{equation*}
    \label{def:Qdef}
    \Q{t+1}(x_t, A) \eqdef g_t(x_t) \transition{t+1}(x_t, A) \eqsp,
\end{equation*}
and, for any $s,t \in \N^2$,
\begin{equation*}
    \Q{s:t} \eqdef \begin{cases} \Q{s} \otimes \cdots \otimes \Q{t} & \text { if } s \leq t \eqsp,
         \\ \text { Id } & \text { otherwise} \eqsp .\end{cases}
\end{equation*}
Let $\Qmarg{s:t}$ denote its marginal with respect to the variable $x_t$, i.e. for all $x_{s-1}\in\Xset$ and all measurable set $A$,
\[ 
    \Qmarg{s:t}(x_{s-1},A) := \int_{\Xset^{t-s+1}} \Q{s:t}(x_{s-1}, \rmd x_s, \dotsc, \rmd x_t)\1_A(x_t).
\]
Define recursively the sequence of measures $(\joint{0:t}) _{t \in \N}$ by
\begin{equation}
\label{eq:predjoint}
    \joint{0}(\rmd x_0) \eqdef \transition{0}(\rmd x_0) \eqsp, \quad 
    \joint{0:t}(\rmd x_{0:t}) \eqdef \joint{0:t-1}(\rmd x_{0:t-1}) \Q{t}(x_{t-1}, \rmd x_t)
    \eqsp,
\end{equation}
and let $\joint{t}(A) = \int_{\Xset^{t+1}} \joint{0:t}(\rmd x_{0:t})\1_A(x_t)$ for all measurable set $A$. Sequential Monte Carlo algorithms aim at solving recursively the filtering problem, i.e. at estimating the sequence of probability measures defined as
\begin{equation}
    \label{def:posteriors}
    \pred{t}(\rmd x_t) \eqdef \joint{t}^{-1}(\boldone) \joint{t}(\rmd x_t), \quad \filter{t}(\rmd x_t) \eqdef g_t(x_t) \pred{t}(\rmd x_t) / \pred{t}(g_t) \eqsp,
\end{equation}
respectively called the \textit{predictive} and \textit{filtering} measures. Note that $\pred{t}$ can be computed recursively using 
\begin{equation}
\label{eq:recurseta}
    \pred{t}(\rmd x_t) = \int \filter{t-1}(\rmd x_{t-1}) \transition{t}(x_{t-1}, \rmd x_t) \eqsp.
\end{equation}
We motivate these definitions with the following example.
\begin{example}
\label{ex:hmm}
\textit{Hidden Markov models} consist of an unobserved state process $\{X_t\}_{t \in \N}$ and observations $\{Y_t\}_{t \in \N}$. 
They respectively evolve in two general measurable spaces $(\Xset, \sigmaX)$ and $(\Yset, \sigmaY)$. 
It is assumed that $\{ X_t\}_{t \in \N}$ is a Markov chain with transition kernels $(\transition{t+1})_{t \in \N} $ and 
initial distribution $\transition{0}$. Given the states $\{X_t\}_{t \in \N}$, the observations $\{Y_t\}_{t \in \N}$ are independent
 and for all $t \in \N$, the conditional distribution of the observation $Y_t$ only depends on the current state $X_t$. This  distribution is written $G_t(X_t, .)$ and admits the potential $g_t(x_t,.)$ as density (the dependency in $Y_t$ is made implicit and we drop the second argument). Given an observation record $Y_{0:t} $, the predictive and filtering 
  distributions \eqref{def:posteriors} are then the distributions of $X_t$ given $Y_{0:t-1}$ and $X_t$ given $Y_{0:t}$ respectively.

These two distributions are of considerable interest in Bayesian filtering as they enable the estimation of the hidden states through the observed data record. Unfortunately only in a few cases, such as discrete state spaces or linear and Gaussian HMM, can they be obtained in closed form, see \cite{infhidden,Chopin_2020} for a complete overview.
\end{example}
\subsection{Particle filter}
\label{sec:PF}
We now illustrate how to obtain empirical estimates of $\pred{t}$ and $\filter{t}$ in an online manner through Monte Carlo simulation.
Assume that at time $t$ the empirical measure $\pred{t}^N (\rmd x_t) \eqdef N^{-1} \sum_{i = 1}^N \delta_{\particle{i}{t}} (\rmd x_t)$ based on random samples $\{\particle{i}{t}\}_{1\leq i\leq N}$ approximates $\pred{t}(\rmd x_t)$. Plugging $\pred{t}^N $ in \eqref{def:posteriors} provides an approximation of $\filter{t}(\rmd x_t)$, 
\begin{equation*}
    \filter{t}^N (\rmd x_t) \eqdef \sum_{i = 1}^N \normweight{i}{t} \delta_{\particle{i}{t}} (\rmd x_t) \eqsp,
\end{equation*}
where 
 $
 \normweight{i}{t} \eqdef \Omega^{-1} _t \weight{i}{t}$, $\weight{i}{t} \eqdef g_t(\particle{i}{t})$ and $\Omega_{t} \eqdef \sum_{i = 1}^N \weight{i}{t}\eqsp.
 $
Replacing $\filter{t}$ by $\filter{t}^N$ in \eqref{eq:recurseta}, we obtain the mixture $ \filter{t}^N \transition{t+1}$ which allows to construct $\pred{t+1}^N$ by drawing $N$ samples from it.  We first sample for all $1\leq i \leq N$ an ancestor index $A^i_t \sim \mathrm{Categorical}(\normweight{1:N}{t})$, and then sample $\particle{i}{t+1} \sim \transition{t+1}(\particle{A^i_t}{t}, \cdot)$. The algorithm is initialized with the approximation of $\pred{0} = \transition{0}$,  $\eta^N _0 \eqdef N^{-1} \sum_{i = 1}^N \delta_{\particle{i}{0}}(\rmd x_t)$ where $\particle{1:N}{0} \sim \transition{0} ^{\otimes N}$ and coincides with the \emph{bootstrap} algorithm with multinomial resampling \cite{gordon1993novel}. 
 Note that this mechanism has been extended in many directions in the past decades \cite{pitt1999filtering, douc2005comparison, Chopin_2020}.
\\Alongside $\pred{t}^N$ and $\filter{t}^N$, the particle approximation of the unnormalized marginal $\joint{t}(\rmd x_t)$ is given by
\begin{equation}
    \label{eq:jointdef}
\joint{t}^N (\rmd x_t) \eqdef \bigg\{ \prod_{s = 0}^{t-1} N^{-1} \Omega_s \bigg\} \pred{t}^N (\rmd x_t) \eqsp, \quad \forall t > 0 \eqsp,
\end{equation}
and $\joint{0}^N(\rmd x_0) = \pred{0}^N(\rmd x_0) \eqsp.$ In particular, $\joint{t}^N(h)$ is unbiased for any measurable function $h$ \cite{del2004feynman}.

In the remainder of this paper, we denote by $\F{t}$ the $\sigma$-field containing all the particles and ancestors up to time $t$, i.e. $\F{t} \eqdef \sigma \left( \particlecloud{t}, \bm{A}^{1:N} _{0:t-1}\right)$.
\subsection{Asymptotic variance estimation in particle filters}
\label{sec:relatedwork} 
The particle filter described above yields consistent estimators, see for instance \cite{infhidden,douc2014nonlinear,Chopin_2020, liu2001sequential, del2004feynman} 
and references therein for a complete overview. Indeed, for a test function $h \in \bounded{}$
and under assumption 
$\A{assp:B}{}$, the SMC estimators satisfy a Strong Law of Large
 Numbers when the number of particles $N$ goes to infinity, i.e.
\begin{equation}
    \label{eq:asconv}
    \joint{t}^N (h) \aslim \joint{t}(h), \quad \pred{t}^N (h) \aslim \pred{t}(h) \eqsp, \quad \filter{t}^N (h) \aslim \filter{t}(h)\eqsp.
\end{equation}
Under the same assumptions, CLTs for $\joint{t}^N(h)$, $\pred{t}^N(h)$ and $\filter{t}^N(h)$ are also available \cite{del1999central, chopin2004central}: 
\begin{align}
    \label{eq:CLTs}
    \begin{cases}
    \sqrt{N}\big( \joint{t}^N(h) - \joint{t}(h) \big) & \underset{N \to \infty}{\dlim} \gauss\big(0,\jointasymptvar{t}{h} \big)\eqsp,\\
    \sqrt{N}\big( \pred{t}^N(h) - \pred{t}(h) \big) & \underset{N \to \infty}{\dlim} \gauss\big(0,\predasymptvar{t}{h} \big)\eqsp, \\
    \sqrt{N}\big( \filter{t}^N(h) - \filter{t}(h) \big) & \underset{N \to \infty}{\dlim} \gauss\big(0,\asymptvar{\phi, t}(h)\big)\eqsp,
    \end{cases}
\end{align}
where $\dlim$ denotes weak convergence and
\begin{align}
\label{eq:asymptvarjoint}
\asymptvar{\gamma,t}(h) & =  \sum_{s = 0}^{t} \big\{ \joint{s} (\mathbf{1}) \joint{s}\big(\Qmarg{s+1:t}[h]^2\big) - \joint{t}(h)^2 \big\} \eqsp, \\
\label{eq:asymptvarpred}
\asymptvar{\eta, t}(h) & = \sum_{s = 0}^t \frac{\joint{s}(\mathbf{1})\joint{s}\big( \Qmarg{s+1:t}[h - \pred{t}(h) ]^2 \big)}{\joint{t}(\mathbf{1})^2} \eqsp, \\
\label{eq:asymptvarfilter}
\asymptvar{\phi, t}(h) & = \sum_{s = 0}^t \frac{\joint{s}(\mathbf{1})\joint{s}\big( \Qmarg{s+1:t}[g_t\{h - \filter{t}(h)\} ]^2 \big)}{\joint{t+1}(\mathbf{1})^2} \eqsp.
\end{align}
An intuitive derivation of \eqref{eq:asymptvarjoint} is proposed in Section~\ref*{apdx:asymptvar} of the \suppl.
The authors of \cite{chan} propose to estimate \eqref{eq:asymptvarjoint} online using the samples produced by the particle filter described above.
 Their estimator is based on the genealogy of the particle system induced by the successive resampling steps
  of the particle filter.
  From the indices $A^i _{t}$,
  it is possible to trace back the ancestors of each particle and deduce the corresponding ancestor at time  $t=0$.
  More interestingly, these ancestors 
  can be computed in a forward way 
  by introducing the Eve indices $\eve^i _{t,0}$. For all $i \in [1:N]$, $\eve^i _{t,0}$ describes the index of the ancestor at time $0$ of particle $\particle{i}{t}$ and can be 
  computed from 
  \begin{equation}
\label{eq:evedef}
\eve^i _{t,0} = \eve^{A^i _{t-1}} _{t-1,0} \1_{ t > 0} + i \1_{t = 0}\eqsp.
\end{equation}
The asymptotic variance estimator of $\pred{t}^N(h)$ obtained in \cite{chan} reads (see Section~\ref*{sec:altexpr} of the supplementary for a proof):
\begin{equation}
\label{eq:CLE}
    \predasymptvarestim{t}{h} \eqdef  - N^{-1} \sum_{i,j \in [N]^2} \1_{\eve^i _{t,0} \neq \eve^j _{t,0} } \big\{ h(\particle{i}{t}) - \pred{t}^N (h) \big\} \big\{ h(\particle{j}{t}) - \pred{t}^N (h) \big\}  \eqsp.
\end{equation}
 We sometimes refer to $\predasymptvarestim{t}{h}$
 as the CLE (Chan \& Lai Estimator).
Note that  it can be computed online in a remarkably simple way since the Eve indices \eqref{eq:evedef} are computed recursively. However, the counterpart of its computational simplicity is that it degenerates as soon as the ancestral paths coalesce. Indeed, it is widely known in the SMC
 literature that all lineages eventually end up with the same ancestor when $t$ is large enough with respect to the number of samples $N$ (see e.g. \cite[Section 2.2]{fearnhead2010sequential} for a more detailed explanation).
   This means that for a fixed $N$, as $t$ grows and $s \ll t$, $E^i _{s,0} = E^j _{s,0}$ for all $(i,j) \in \N^2$ and $\predasymptvarestim{t}{h} = 0$ for any test function $h$. 
    
    The degeneracy problem concerning \eqref{eq:CLE} is partially addressed in \cite{olssonvar} by truncating the genealogy of the particle system. Denoting 
     $\lambda \in [t]$ the lag and $\eve^i _{t, t - \lambda}$ the ancestor of $\particle{i}{t}$ at time $t - \lambda$, their estimator reads
    \begin{equation}
        \label{eq:truncCLE}
        \asymptvarestim{\eta, t}{\lambda}(h) \eqdef - N^{-1} \sum_{i,j \in [N]^2} \1_{\eve^i _{t, t - \lambda} \neq \eve^j _{t, t - \lambda}} \big\{ h(\particle{i}{t}) - \pred{t}^N (h) \big\} \big\{ h(\particle{j}{t}) - \pred{t}^N(h) \big\} \eqsp.
    \end{equation}
    In the regime where \eqref{eq:CLE} degenerates, \eqref{eq:truncCLE} can be made stable provided that the lag $\lambda$ is chosen such that there is little asymptotic bias. However, besides the strong mixing case for which the authors propose a heuristic, the practical choice of such a $\lambda$, although crucial, is a non trivial task. 
 
In \cite{leewhiteley}, the CLE is revisited using different techniques based on \cite{cerou2011, andrieu2010particle, andrieu2018}. These tools enable them to derive  a weakly consistent term
 by term estimator of \eqref{eq:asymptvarjoint} based on
 the unbiased estimation of $\joint{t}(h)^2$ and of each $\joint{s} (\mathbf{1}) \joint{s}\big(\Qmarg{s+1:t}[h]^2\big)$, 
 for all $s \in [0:t]$.
 The construction of this second estimator is appealing and insightful in that it helps identifying the deep root of the degeneracy in the CLE. They indeed show that \eqref{eq:CLE} and  each $\joint{s}(\boldone) \joint{s}\big( \Qmarg{s+1:t}[h]^2 \big)$ can be interpreted as a conditional expectation with respect to particle indices that retrace the ancestral paths.
 Indeed, by introducing discrete
 random variables $K^1_{0:t}$ and 
 $K^2_{0:t}$ such that conditionally on $\F{t}$, $K^1 _t$ and $K^2 _t$ are distributed uniformly on $[N]$ and such that for any $s \in [0:t-1]$, 
\begin{equation*}
    K^1 _{s}  = A^{K^1 _{s+1}} _s,  \quad 
    K^2 _s = \1_{K^1 _{s+1} \neq K^2 _{s+1}} A^{K^2 _{s+1}} _s + \1_{K^1 _{s+1} = K^2 _{s+1}} C_s \eqsp,  
\end{equation*}
where $C_s \sim \mathrm{Categorical}(  \normweight{1:N}{s})$, then for example
 \[
\predasymptvarestim{t}{h} = - N \pE \left[ \prod_{s = 0}^t \1_{K^1 _s \neq K^2 _s} \{ h(\particle{K^1 _t}{t}) 
- \pred{t}^N (h) \} \{ h(\particle{K^2 _t}{t}) - \pred{t}^N (h) \} \bigg| \F{t} \right] \eqsp.
\]

An intuitive extension is to replace the deterministic assignments $K^1 _s = A^{K^1 _{s+1}} _s$ and $K^2 _s = A^{K^2 _{s+1}} _s$ by random ones based on backward sampling. Essentially, the backward kernel 
samples at time $s$ an index $i$ with probability proportional to $\weight{i}{s} \transitiondens{s+1}(\particle{i}{s}, \particle{K^1 _{s+1}}{s+1})$ (resp. $\weight{i}{s} \transitiondens{s+1}(\particle{i}{s}, \particle{K^2 _{s+1}}{s+1})$) and thus allows to consider relevant trajectories which are not necessarily ancestral trajectories.
\section{Variance estimation with backward sampling}
\label{sec:variance_estimation}
In this section we present three variance estimators for the bootstrap particle filter. In Section \ref{subsec:tbt}, we lay out our methodology and derive a term by term variance estimator; its computation
is detailed
in Section \ref{subsec:compute}. In Sections \ref{subsec:disj} and \ref{subsec:paris}, we provide two additional estimators that have a lower computational cost. 
All estimators and justifications are provided for the distribution \eqref{eq:jointdef}. We give the expressions for the variance estimators of the predictor and filter and provide their justification in Section~\ref*{subsec:predfiltervariance} of the \suppl.
\subsection{Term by term variance estimator}
\label{subsec:tbt}
For any $t \in \N$, let $\Bset_t \eqdef \{0,1\}^{t+1}$. Denote by $\zero$ the null vector in $\Bset_t$ and $e_s$ the vector
 with 1 at position $s$ and $0$ elsewhere. 
Let $(X_s, X' _s)_{s \in [0:t]}$ be a bivariate Markov chain in $(\Xset^2, \sigmaX^{\otimes 2})$ and depending on $b \in \Bset_t$  with initial distribution $\bitransition{b_0}{0}$ and transition kernels $\bitransition{b_{t}}{t}$, $t\geq 1$, where
\begin{equation}
\label{eq:bivariatechain}
\begin{alignedat}{2}
    \bitransition{b_0}{0} (\rmd x_0,\rmd x'_0 ) & \eqdef \transition{0}(\rmd x_0) \{ \1_{b_0 = 0} \transition{0}(\rmd x' _0) + \1_{b_0 = 1} \delta_{x_0}(\rmd x'_0)\} \eqsp,\\
    \bitransition{b_{t}}{t}(x, x'; \rmd z, \rmd z') & \eqdef \transition{t}(x, \rmd z)  \{ \1_{b_{t} = 0} \transition{t}(x', \rmd z') + \1_{b _{t} = 1} \delta_{z}(\rmd z') \} \eqsp, \quad \forall t \geq 1 \eqsp.
    \end{alignedat}
\end{equation}
Define also for any $b \in \Bset_t$ the measure $\mumeasure{b,t}$ by $\mumeasure{b,0}(\rmd x_{0}, \rmd x'_{0} ) = \bitransition{b_0}{0}(\rmd x_0, \rmd x'_0 )$ and for $t\geq 1$:
\begin{equation}
    \label{eq:defmu1}
    \mumeasure{b,t}\big( \rmd x_{0:t}, \rmd x'_{0:t} \big) \eqdef \bitransition{b_0}{0}\big(\rmd x_0, \rmd x'_0 \big) \prod_{s = 0}^{t-1} g_s ^{\otimes 2}(x_s, x'_s)  \prod_{s = 1}^t \bitransition{b_s}{s}\big(x_{s-1}, x'_{s-1}; \rmd x_s, \rmd x'_s \big) \eqsp.
\end{equation}
The measure $\mumeasure{b,t}$ is the joint distribution \eqref{eq:predjoint} of the Feynman-Kac model defined by the initial distribution $\bitransition{b_0}{0}$, the transition kernels $\{\bitransition{b_s}{s}\}_{s\in [1:t]}$ and by the potential functions $\{g_s ^{\otimes 2}\}_{s\in [0:t-1]}$.
 Remark that for any $h \in \measurable{}$, writing $h_t: x_{0:t} \mapsto h(x_t)$, we have that 
$\mumeasure{\zero,t}(h_t ^{\otimes 2}) = \joint{t}(h)^2$ and 
\begin{equation}
    \label{eq:identityQes}
    \mumeasure{e_s, t}(h_t ^{\otimes 2}) = \joint{s}(\mathbf{1}) \joint{s}(\Qmarg{s+1:t}[h]^2) \eqsp.
\end{equation}
A generalization of \eqref{eq:identityQes} is proved in Proposition~\ref{prop:identityQes}. Consequently, for $h \in \bounded{}$ $\asymptvar{\gamma, t}(h)$ in \eqref{eq:asymptvarjoint} can be rewritten as 
\begin{equation}
    \label{eq:scdestimator}
     \asymptvar{\gamma, t}(h) = \sum_{s = 0}^t \big\{ \mumeasure{e_s, t}(h_t ^{\otimes 2}) - \mumeasure{\zero, t}(h_t ^{\otimes 2}) \big\}  \eqsp,
    \end{equation}
where $h_t : x_{0:t} \mapsto h(x_t)$.
Following this observation,
for a given $b$,
an estimator of $\mumeasure{b,t}(h_t)$  could be obtained with a single run of a
bootstrap particle filter in augmented dimension (i.e. relying on the
the bivariate transition $\bitransition{b_t}{t}$ at time $t$ and on the weighing of the associated particles with $g^{\otimes 2} _t$). As a direct 
extension of \eqref{eq:jointdef},
this estimator would be unbiased and
as a byproduct, we would get an unbiased estimator of \eqref{eq:scdestimator}.
However, this procedure is not in line with our initial objective in the sense
that we aim at estimating \eqref{eq:scdestimator} with the  particles and indices already available.


In order to motivate and introduce our approach, let us consider the static situation where $t = 0$, $b_0 = 0$ and let $(h,f) \in \bounded{}^2$. Then,
\begin{equation}
    \label{eq:unbiasedest}
    \frac{1}{N(N-1)} \sum_{i, j \in [N]^2} \1_{i \neq j} h(\particle{i}{0})f(\particle{j}{0}), \quad \particle{1:N}{0} \iid \transition{0}
\end{equation}
is an unbiased and almost sure convergent estimator of $\mumeasure{0, 0}(h \otimes f) = \transition{0}^{\otimes 2}(h \otimes f)$ and only relies on i.i.d. samples from $\transition{0}$ rather than $\transition{0}^{\otimes 2}$. Note that for $h = f$, we thus get an unbiased estimator of $\gamma_{0}(h)^2$. If $b_0 = 1$, then $N^{-1} \sum_{i = 1}^N h(\particle{i}{0})f(\particle{i}{0})$ is an unbiased and consistent estimator of $\mumeasure{1,0}(h \otimes f)=\transition{0}(hf)$.

Taking advantage of the fact that the particles at time $t$ are i.i.d. conditionally on $\F{t-1}$ as we use multinomial resampling for the particle filter, we can carry these simple observations to the sequential case in two directions as we now detail. Define for any $t \in \N^{*}$ the functional version of the backward weights:
\begin{equation}
    \label{def:functionalweight}
    \beta^N _t(x,y) \eqdef \frac{g_{t-1}(y) \transitiondens{t}(y, x)}{\sum_{\ell = 1}^N \omega^\ell _{t-1} \transitiondens{t}(\particle{\ell}{t-1}, x)}, \quad \forall (x,y) \in \Xset^2 \eqsp.
\end{equation}
Assume that $t = 1$ and define for any $(k^1 _0, k^2 _0) \in [N]^2$ the following random variables that involve the backward weights
\begin{align}
    \label{eq:EBSdisj}
    \mathcal{E}^{\BS} _0 (k^1 _0, k^2 _0) & \eqdef \frac{\Omega_0 ^2}{N(N-1)} \sum_{k^{1:2} _1 \in [N]^2 } \1_{k^1 _1 \neq k^2 _1} \beta^N _1(\particle{k^1 _1}{1}, \particle{k^1 _0}{0}) \beta^N _1(\particle{k^2 _1}{1}, \particle{k^2 _0}{0}) h(\particle{k^1 _1}{1}) f(\particle{k^2 _1}{1}) \eqsp, \\
    \label{eq:EBSinters}
    \mathcal{E}^{\BS} _1 (k^1 _0, k^2 _0) & \eqdef \frac{\Omega_0 ^2}{N} \sum_{k^{1:2} _1 \in [N]^2 } \1_{k^1 _1 = k^2 _1} \beta^N _1(\particle{k^1 _1}{1}, \particle{k^1 _0}{0}) \normweight{k^2 _0}{0} h(\particle{k^1 _1}{1}) f(\particle{k^2 _1}{1}) \eqsp,
\end{align}
and also the following which involve the ancestors
    \begin{align}
    \label{eq:EGTdisj}
    \mathcal{E}^{\GT} _0(k^1 _0, k^2 _0) & \eqdef \frac{\Omega_0 ^2}{N(N-1)} \sum_{k^{1:2} _1 \in [N]^2 } \1_{A^{k^1 _1} _0 = k^1 _0, A^{k^2 _1} _0 = k^2 _0, k^1 _1 \neq k^2 _1}  h(\particle{k^1 _1}{1}) f(\particle{k^2 _1}{1}) \eqsp, \\
    \label{eq:EGTinters}
    \mathcal{E}^{\GT} _1 (k^1 _0, k^2 _0) & \eqdef \frac{\Omega_0 ^2}{N} \sum_{k^{1:2} _1 \in [N]^2 } \1_{k^1 _0 = A^{k^1 _1} _0, k^1 _1 = k^2 _1} \normweight{k^2 _0}{0} h(\particle{k^1 _1}{1}) f(\particle{k^2 _1}{1}) \eqsp,
\end{align}
Here, $\BS$ and $\GT$ correspond to backward sampling and genealogy tracing, respectively. Consider also Lemma~\ref{lem:BSGTidentity} which states a crucial identity involving the backward weights.
\begin{lemma}
   \label{lem:BSGTidentity}
    For all $t \in \N^{*}$, $(x,y) \in \Xset^2$, 
    \begin{equation}
        \label{lem:identity}
        \beta^N _{t}(x,y) \filter{t-1}^N \transition{t} (\rmd x) = \frac{g_{t-1}(y)}{\Omega_{t-1}}  \transition{t}(y, \rmd x) \eqsp.
    \end{equation}
    and for any $(k _{t-1}, k _t) \in [N]^2$ and $h \in \measurable{}$, 
    \begin{equation}
        \pE \big[ \beta^N _t(\particle{k _{t}}{t}, \particle{k _{t-1}}{t-1}) h(\particle{k _t}{t}) \big| \F{t-1} \big] = \pE \big[ \1_{k _{t-1} = A^{k _t} _{t-1}} h(\particle{k _t}{t}) \big| \F{t-1} \big] =  \normweight{k _{t-1}}{t-1} \transition{t}[h](\particle{k _{t-1}}{t-1}) \eqsp.
    \end{equation}
\end{lemma}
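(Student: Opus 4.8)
The plan is to first establish the measure identity \eqref{lem:identity} by a direct density computation, and then to deduce the two expectation identities by conditioning on $\F{t-1}$ together with the ancestor index $A^{k_t}_{t-1}$ of the particle $\particle{k_t}{t}$.

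For \eqref{lem:identity}: under $\A{assp:A}{}$ each kernel $\transition{t}(\particle{i}{t-1},\cdot)$ has density $\transitiondens{t}(\particle{i}{t-1},\cdot)$ with respect to $\nu$, so that $\filter{t-1}^N\transition{t}(\rmd x) = \big(\sum_{i=1}^N \normweight{i}{t-1}\transitiondens{t}(\particle{i}{t-1},x)\big)\nu(\rmd x)$. Multiplying by $\beta^N_t(x,y)$ from \eqref{def:functionalweight} and using $\normweight{i}{t-1}=\Omega_{t-1}^{-1}\weight{i}{t-1}$, the sum $\sum_{\ell}\weight{\ell}{t-1}\transitiondens{t}(\particle{\ell}{t-1},x)$ cancels exactly the denominator of $\beta^N_t(x,y)$, leaving the $\nu$-density $\Omega_{t-1}^{-1}g_{t-1}(y)\transitiondens{t}(y,x)$, i.e. the measure $\Omega_{t-1}^{-1}g_{t-1}(y)\transition{t}(y,\rmd x)$. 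The only subtlety is measure-theoretic: on the set of $x$ where the denominator of $\beta^N_t$ vanishes the $\nu$-density of $\filter{t-1}^N\transition{t}$ vanishes as well, so that set is $\filter{t-1}^N\transition{t}$-null and the cancellation is valid as an identity of measures in $\rmd x$, which is all the sequel needs.

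Next, for the first expectation identity, I observe that $\beta^N_t(\particle{k_t}{t},\particle{k_{t-1}}{t-1})h(\particle{k_t}{t})$ depends on the particle system only through $\F{t-1}$ and $\particle{k_t}{t}$, and that conditionally on $\F{t-1}$ the particle $\particle{k_t}{t}$ is produced by drawing $A^{k_t}_{t-1}\sim\mathrm{Categorical}(\normweight{1:N}{t-1})$ and then $\particle{k_t}{t}\sim\transition{t}(\particle{A^{k_t}_{t-1}}{t-1},\cdot)$; marginalising over $A^{k_t}_{t-1}$ shows that $\particle{k_t}{t}\mid\F{t-1}\sim\filter{t-1}^N\transition{t}$. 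Hence $\pE[\beta^N_t(\particle{k_t}{t},\particle{k_{t-1}}{t-1})h(\particle{k_t}{t})\mid\F{t-1}] = \int\filter{t-1}^N\transition{t}(\rmd x)\,\beta^N_t(x,\particle{k_{t-1}}{t-1})h(x)$, and by \eqref{lem:identity} this equals $\Omega_{t-1}^{-1}g_{t-1}(\particle{k_{t-1}}{t-1})\transition{t}[h](\particle{k_{t-1}}{t-1}) = \normweight{k_{t-1}}{t-1}\transition{t}[h](\particle{k_{t-1}}{t-1})$.

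Finally, for the middle term I condition on $\F{t-1}$ and $A^{k_t}_{t-1}$: then $\pE[h(\particle{k_t}{t})\mid\F{t-1},A^{k_t}_{t-1}] = \transition{t}[h](\particle{A^{k_t}_{t-1}}{t-1})$, which on $\{A^{k_t}_{t-1}=k_{t-1}\}$ equals $\transition{t}[h](\particle{k_{t-1}}{t-1})$; pulling that factor out and using $\pE[\1_{k_{t-1}=A^{k_t}_{t-1}}\mid\F{t-1}] = \normweight{k_{t-1}}{t-1}$ gives once more $\normweight{k_{t-1}}{t-1}\transition{t}[h](\particle{k_{t-1}}{t-1})$, closing the chain of equalities. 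There is no genuine obstacle in this lemma; the only points to watch are the null-set argument justifying the cancellation in \eqref{lem:identity} and the implicit integrability of $h$ against the kernels involved, the identities being read for $h$ such that both sides are well defined, e.g. $h\in\bounded{}$ or $h\ge 0$.
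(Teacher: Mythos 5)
Your proposal is correct and follows essentially the same route as the paper: the density cancellation for \eqref{lem:identity}, integration against $\filter{t-1}^N\transition{t}$ for the backward-weight expectation, and marginalisation over the ancestor index for the genealogy term. The extra remarks on the null set where the denominator of $\beta^N_t$ vanishes and on integrability of $h$ are sensible points of care that the paper passes over silently.
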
 
The proof is postponed to Section~\ref*{proof:BSGTidentity} in the \suppl. Applying Lemma \ref{lem:BSGTidentity} to \eqref{eq:EBSdisj} and \eqref{eq:EGTdisj} and using that given $\F{0}$ the particles at $t = 1$ are i.i.d., 
we get 
\begin{equation*}
    \pE \big[ \mathcal{E}^\BS _0(k^1 _0, k^2 _0) \big| \F{0} \big] = \pE \big[ \mathcal{E}^\GT _0(k^1 _0, k^2 _0) \big| \F{0} \big] = g^{\otimes 2} _{0}(\particle{k^1 _0}{0}, \particle{k^2 _0}{0}) \bitransition{0}{1}[h \otimes f](\particle{k^1 _0}{0}, \particle{k^2 _0}{0}) \eqsp,
\end{equation*}
and
\begin{align*}
    \pE \bigg[\sum_{k^{1:2} _0 \in [N]^2 } \1_{k^1 _0 \neq k^2  _0} \mathcal{E}^\BS _0(k^1 _0, k^2 _0) \bigg]  & = \pE \bigg[  \sum_{k^{1:2} _0 \in [N]^2 } \1_{k^1 _0 \neq k^2 _0} \pE \big[ \mathcal{E}^\BS _0(k^1 _0, k^2 _0) \big| \F{0} \big] \bigg] \\
    & = \pE \bigg[ \sum_{k^{1:2} _0 \in [N]^2 } \1_{k^1 _0 \neq k^2 _0} \pE \big[ \mathcal{E}^\GT _0(k^1 _0, k^2 _0) \big| \F{0} \big] \bigg] \\
    &= N(N-1) \mumeasure{\zero, 1}(h \otimes f) \eqsp.
\end{align*}
Similarly, 
\begin{equation*}
    \pE \big[ \mathcal{E}^\BS _1(k^1 _0, k^2 _0) \big| \F{0} \big] = \pE \big[ \mathcal{E}^\GT _1(k^1 _0, k^2 _0) \big| \F{0} \big] = g^{\otimes 2} _{0}(\particle{k^1 _0}{0}, \particle{k^2 _0}{0}) \bitransition{1}{1}[h \otimes f](\particle{k^1 _0}{0}, \particle{k^2 _0}{0}) \eqsp;
\end{equation*}
and
\begin{equation*}
    \pE \bigg[\sum_{k^{1:2} _0 \in [N]^2 } \1_{k^1 _0 \neq k^2  _0} \mathcal{E}^\BS _1(k^1 _0, k^2 _0) \bigg] = N(N-1) \mumeasure{e_1, 1}(h \otimes f) \eqsp.
\end{equation*}
Therefore, it is possible to derive
unbiased estimators of $\mumeasure{\zero,1}(h \otimes f)$ and $\mumeasure{e_1,1}(h \otimes f)$ (but also $\mumeasure{e_0, 1}(h \otimes f)$ and $\mumeasure{(1,1), 1}(h \otimes f)$)
with a single run of the particle filter in two different ways: either by using the backward weights and \eqref{eq:EBSdisj}-\eqref{eq:EBSinters}, or by using directly the ancestry of the particles and \eqref{eq:EGTdisj}-\eqref{eq:EGTinters}. The asymptotic variance estimators proposed in \cite{chan,leewhiteley,olssonvar,asmc} are all based on the latter solution, while in this paper we instead focus on estimators based on the backward weights.


We now generalize the derivations performed in the case $t = 1$. Denote by $\Lambda_{1,t}$ and $\Lambda_{2,t}$ the discrete measures conditioned on $\F{t}$ and defined by
\begin{align}
    \label{def:biglambda}
    \Lambda_{1,t}( k_{0:t}) & \eqdef N^{-1} \prod_{s=1}^t \beta_s(k_s, k_{s-1}) \eqsp, \\
    \Lambda_{2,t}(k^1 _{0:t}; k^2 _{0:t}) & \eqdef N^{-1} \prod_{s = 1}^t \left\{\1_{k^2 _s = k^1 _s} \normweight{k^2 _{s-1}}{s-1} + \1_{k^2 _s \neq k^1 _s} \beta_s(k^2 _s,k^2 _{s-1}) \right\}\eqsp.
\end{align}
Specific choices of kernels $\{\beta_s\}_{s = 1} ^t$ are, for all $(k,\ell) \in [N]^2$,
\begin{equation*}
    \beta^{\GT} _s(k,\ell)  \eqdef \1_{\ell = A^k _{s-1}},\quad \beta^{\BS}_s(k,\ell)  \eqdef \frac{\weight{\ell}{s-1} m_{s}(\particle{\ell}{s-1}, \particle{k}{s})}{\sum_{j = 1}^N \weight{j}{s-1} m_{s}(\particle{j}{s-1}, \particle{k}{s})} = \beta^N _s(\particle{k}{s}, \particle{\ell}{s-1}) \eqsp,
\end{equation*}
where here again $\GT$ stands for \emph{genealogy tracing} and $\BS$ for \emph{backward sampling}. When the conditional distribution given $\F{t}$ is  
 $\Lambda_{1,t} ^\BS \otimes \Lambda_{2,t} ^\BS$ (resp. $\Lambda_{1,t} ^\GT \otimes \Lambda_{2,t} ^\GT$) we write  $\pE_\BS[\cdot | \F{t} ]$ (resp. $\pE_\GT[\cdot | \F{t} ]$). Define also for any $b \in \Bset_t$ the coalescence function:
 \begin{equation}
    \label{eq:Idef}
     \mathrm{I}_{b, s}: ([N]^{s+1})^2 \ni (k^1 _{0:s}, k^2 _{0:s}) \mapsto \prod_{\ell = 0}^{s} \{ \1_{k^1 _\ell = k^2 _\ell} \1_{b_\ell = 1}
      + \1_{k^1 _\ell \neq k^2 _\ell}\1_{b_\ell = 0} \} \eqsp, \quad \forall s \in [0:t] \eqsp,
 \end{equation} 
for any $h \in \measurable{2(t+1)}$  the random variable 
\begin{equation} 
    \label{eq:mu_estimator}
    \muest{\BS}{b,t} (h) \!\eqdef \! \prod_{s = 0 }^t N^{b_s} 
    \bigg(\frac{N}{N-1}\bigg)^{1 - b_s} \!\!\! \joint{t}^N(\mathbf{1})^2 
    \pE_\BS\big[ \intersect{b,t}{K^1 _{0:t}, K^2 _{0:t}} 
    h(\particletraj{K^1}{t}, \particletraj{K^2}{t}) | \F{t} \big] \eqsp,
\end{equation}
and denote by $\muest{\GT}{b,t}$ the counterpart where the expectation in the r.h.s. is $\pE _\GT$.
\begin{rem} 
    By \eqref{eq:Idef}, the random variable $\muest{\BS}{b,t}(h)$ remains defined for any $b \in \Bset_r$ with $r > t$ and $\muest{\BS}{b,t}(h) = \muest{\BS}{b_{0:t},t}(h)$ where $b_{0:t}$ is the truncation of $b$ to the $t+1$ first terms.
\end{rem}
Finally, define for any $h \in \measurable{}$, using $h_t : x_{0:t} \mapsto h(x_t)$,
\begin{equation}
    \label{eq:general_termbyterm}
    \tbtasymptvarestim{\gamma,t}{\BS}(h) \eqdef \sum_{s = 0}^t \big\{ \muest{\BS}{e_s,t}(h_t ^{\otimes 2}) - \muest{\BS}{\zero, t}(h_t ^{\otimes 2}) \big\} \eqsp.
\end{equation}
\begin{proposition}
\label{prop:mu_expression}
Let $t \in \N$. For any $b \in B_t$ and any $h \in \measurable{2(t+1)}$,
\begin{enumerate}[label=(\roman*)]
    \item \label{item:condexpect} $ \pE \big[ \muest{\BS}{b,t}(h) \big| \F{t-1} \big] = \muest{\BS}{b,t-1}\big( g^{\otimes 2} _{t-1} \bitransition{b_t}{t}[h] \big)$ for all $t \in \N^{*} \eqsp.$
    \item $\muest{\BS}{b,t}(h)$ is an unbiased estimator of $\mumeasure{b,t}(h) \eqsp.$
    \item If $h \in \measurable{}$, $\tbtasymptvarestim{\gamma,t}{\BS}(h)$ is an unbiased estimator of $\asymptvar{\gamma,t}(h)$.
\end{enumerate}
By convention, we used in (i) the following notation:
$$
g^{\otimes 2} _{t-1} \bitransition{b_t}{t}[h] : (x_{0:t-1}, x' _{0:t-1}) \mapsto g^{\otimes 2} _{t-1}(x_{t-1}, x' _{t-1}) \int h(x_{0:t}, x' _{0:t}) \bitransition{b_{t}}{t}(x_{t-1}, x' _{t-1}; \rmd x_t, \rmd x'_t).
$$
\end{proposition}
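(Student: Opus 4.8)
The plan is to prove (i) by conditioning on $\F{t-1}$, then to obtain (ii) by induction on $t$ using (i), and finally to read off (iii) from (ii) and \eqref{eq:scdestimator}.

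\emph{Part (i).} I fix $t\in\N^{*}$ and unroll $\pE_\BS[\,\cdot\mid\F{t}]$ against the product law $\Lambda_{1,t}^{\BS}\otimes\Lambda_{2,t}^{\BS}$ of \eqref{def:biglambda}, so that $\muest{\BS}{b,t}(h)$ equals the $\F{t-1}$-measurable prefactor $\big(\prod_{s=0}^{t}N^{b_s}(N/(N-1))^{1-b_s}\big)\joint{t}^N(\mathbf{1})^2 N^{-2}$ (using that $\joint{t}^N(\mathbf{1})^2=\joint{t-1}^N(\mathbf{1})^2 N^{-2}\Omega_{t-1}^2$ depends only on $\Omega_0,\dots,\Omega_{t-1}$, and that the trailing $N^{-2}$ comes from the two $N^{-1}$ prefactors of $\Lambda_{1,t},\Lambda_{2,t}$) times $\sum_{k^1_{0:t},k^2_{0:t}}\prod_{s=1}^{t}\beta^{\BS}_s(k^1_s,k^1_{s-1})\prod_{s=1}^{t}\mathcal L_s\,\intersect{b,t}{k^1_{0:t},k^2_{0:t}}\,h(\particletraj{k^1}{t},\particletraj{k^2}{t})$, where $\mathcal L_s=\1_{k^2_s=k^1_s}\normweight{k^2_{s-1}}{s-1}+\1_{k^2_s\neq k^1_s}\beta^{\BS}_s(k^2_s,k^2_{s-1})$. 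Since conditionally on $\F{t-1}$ the particles $\particle{1:N}{t}$ are i.i.d. with law $\filter{t-1}^N\transition{t}$ (multinomial resampling), I push $\pE[\,\cdot\mid\F{t-1}]$ through the finite index sums, so that only the time-$t$ factors $\beta^{\BS}_t(k^1_t,k^1_{t-1})$, $\mathcal L_t$, the time-$t$ coalescence indicator, and the dependence of $h$ on $\particle{k^1_t}{t},\particle{k^2_t}{t}$ need be treated. If $b_t=0$, the indicator forces $k^1_t\neq k^2_t$, $\mathcal L_t$ reduces to $\beta^{\BS}_t(k^2_t,k^2_{t-1})$, the two time-$t$ particles are conditionally independent, and the first identity of Lemma~\ref{lem:BSGTidentity} turns each $\int\beta^N_t(z,\particle{k^i_{t-1}}{t-1})(\cdot)\,\filter{t-1}^N\transition{t}(\rmd z)$ into $\Omega_{t-1}^{-1}g_{t-1}(\particle{k^i_{t-1}}{t-1})\transition{t}(\particle{k^i_{t-1}}{t-1},\cdot)$; the resulting summand is free of $k^1_t,k^2_t$, the sum contributes $N(N-1)$, and $\Omega_{t-1}^{-2}g^{\otimes2}_{t-1}\bitransition{0}{t}[h]$ remains along the retained trajectory indices. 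If $b_t=1$, the indicator forces $k^1_t=k^2_t$, $\mathcal L_t$ reduces to $\normweight{k^2_{t-1}}{t-1}$, the sum over the single surviving index contributes $N$, and the second identity of Lemma~\ref{lem:BSGTidentity} turns $\beta^N_t(\particle{k^1_t}{t},\particle{k^1_{t-1}}{t-1})(\cdot)$ into $\normweight{k^1_{t-1}}{t-1}\transition{t}[\,\cdot\,](\particle{k^1_{t-1}}{t-1})$, leaving $\Omega_{t-1}^{-2}g^{\otimes2}_{t-1}\bitransition{1}{t}[h]$. In both cases the $\Omega_{t-1}^{2}$ cancels, and the scalars recombine to $1$ ($\tfrac{N}{N-1}\cdot\tfrac{N(N-1)}{N^{2}}=1$ if $b_t=0$, $N\cdot\tfrac{N}{N^{2}}=1$ if $b_t=1$), yielding $\pE[\muest{\BS}{b,t}(h)\mid\F{t-1}]=\muest{\BS}{b,t-1}(g^{\otimes2}_{t-1}\bitransition{b_t}{t}[h])$.

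\emph{Part (ii).} I induct on $t$. At $t=0$, $\joint{0}^N(\mathbf{1})=1$ and $K^1_0,K^2_0$ are independent uniform on $[N]$, so $\muest{\BS}{\zero,0}(h)=\tfrac{1}{N(N-1)}\sum_{k^1_0\neq k^2_0}h(\particle{k^1_0}{0},\particle{k^2_0}{0})$ and $\muest{\BS}{e_0,0}(h)=\tfrac1N\sum_{k_0}h(\particle{k_0}{0},\particle{k_0}{0})$; using $\particle{1:N}{0}\iid\transition{0}$ these have expectations $\transition{0}^{\otimes2}(h)=\mumeasure{\zero,0}(h)$ and $\transition{0}(x\mapsto h(x,x))=\mumeasure{e_0,0}(h)$. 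For $t\geq1$, the tower property and (i) give $\pE[\muest{\BS}{b,t}(h)]=\pE[\muest{\BS}{b,t-1}(g^{\otimes2}_{t-1}\bitransition{b_t}{t}[h])]$, which equals $\mumeasure{b,t-1}(g^{\otimes2}_{t-1}\bitransition{b_t}{t}[h])=\mumeasure{b,t}(h)$ by the inductive hypothesis and the recursion \eqref{eq:defmu1}.

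\emph{Part (iii).} For $h\in\measurable{}$ with $h_t:x_{0:t}\mapsto h(x_t)$, linearity and (ii) give $\pE[\tbtasymptvarestim{\gamma,t}{\BS}(h)]=\sum_{s=0}^t\{\mumeasure{e_s,t}(h_t^{\otimes2})-\mumeasure{\zero,t}(h_t^{\otimes2})\}$, which is $\asymptvar{\gamma,t}(h)$ by \eqref{eq:scdestimator} (equivalently, by $\mumeasure{\zero,t}(h_t^{\otimes2})=\joint{t}(h)^2$ and \eqref{eq:identityQes}).

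\emph{Main obstacle.} The substance is entirely in part (i): carrying the multiplicative constants through the two coalescence regimes, checking that the free index sums generate exactly the combinatorial factors that cancel the normalization $N^{b_t}(N/(N-1))^{1-b_t}$, and correctly pairing the conditional i.i.d. structure of the time-$t$ particles with the two identities of Lemma~\ref{lem:BSGTidentity}. Once (i) holds, (ii) and (iii) are immediate.
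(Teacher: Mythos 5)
Your proposal is correct and follows essentially the same route as the paper's proof: part (i) by conditioning on $\F{t-1}$, exploiting the conditional i.i.d. structure of $\particle{1:N}{t}$ under multinomial resampling together with Lemma~\ref{lem:BSGTidentity}, and verifying that the combinatorial factors $N(N-1)$ (resp. $N$) cancel the normalization in the two coalescence regimes; then (ii) by induction via the tower property and (iii) as an immediate consequence of \eqref{eq:scdestimator}. The constant bookkeeping you report ($\tfrac{N}{N-1}\cdot\tfrac{N(N-1)}{N^2}=1$ and $N\cdot\tfrac{N}{N^2}=1$, with $\Omega_{t-1}^2$ cancelling separately) matches the paper's computation exactly.
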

The proof is provided in Section~\ref*{proof:unbiasedest} of the \suppl. First, (ii) 
is a generalization of \cite[Lemma 2]{leewhiteley} which states that $\muest{\GT}{b,t}(h)$ is also an unbiased estimator $\mumeasure{b,t}(h)$.
Its proof, see \cite[Supplementary]{leewhiteley}, is based on a doubly conditional SMC argument \cite{andrieu2018} and while this scheme can be replicated to our estimator based on backward weights, we rather propose an alternative and elementary proof that also extends straightforwardly to $\GT$ and which is based on our previous discussion. From \eqref{eq:scdestimator}, (iii) is a direct consequence of (ii) and provides an estimator of \eqref{eq:scdestimator} based on a single particle run. 

Theorem~\ref{thm:conv} deals with the convergence of $\muest{\BS}{b,t}(h)$ for bounded $h$. The convergence in $\mathbf{L}_2$ is stated under $\A{assp:B}{assp:boundup}$ which are standard and the $1/\sqrt{N}$ convergence rate is obtained under the additional assumption $\A{assp:boundbelow}{}$. The equivalent result for $\muest{\GT}{b,t}$ is stated in \cite{leewhiteley} and is proved under $\A{assp:B}{}$ alone. From a technical point of view, this is possible because the use of indicators instead of backward weights allows for cancellations that simplify the analysis significantly. Assumption  $\A{assp:boundup}{}$ enables us to show that the additional terms that come with the use of backward weights go to zero. 
\begin{hypA}
    \label{assp:positive}
    For all $t > 0$ and $(x,x') \in \Xset^2$, $m_t(x',x) > 0$.
\end{hypA}
\begin{hypA}
    \label{assp:boundup}
    There exists $\sigma_{+} > 0$ such that for all $t \geq 1$,
    $\sup_{x, x' \in \Xset} m_t(x',x) \leq \sigma_{+}$.
\end{hypA}
\begin{hypA}
    \label{assp:boundbelow}
     There exists $0 < \sigma_{-} < \sigma_{+}$ 
    such that for all $t\geq 1$,  $\inf_{x, x' \in \Xset} m_t(x',x) \geq \sigma_{-}$.
\end{hypA}
Assumption $\A{assp:boundbelow}{}$ is a strong assumption that is typically verified in models where the state space $\Xset$ is compact. This assumption, together with $\A{assp:boundup}{},$ are now classic and have been widely used to obtain quantitative bounds in the SMC literature \cite{sylvain_bernoulli,doucmoulines, lee2020coupled}.

\begin{thm}
\label{thm:conv}
Assume that $\A{assp:B}{assp:boundup}$ hold. For any $t \in \N \eqsp,$ $b \in \Bset_t$ and $h \in \bounded{2(t+1)}$,
\begin{equation}
    \label{eq:convlim}
    \simplelim \| \muest{\BS}{b,t}(h) - \mumeasure{b,t}(h) \|_2 = 0 \eqsp.
\end{equation}
In addition, if $\A{assp:boundbelow}{}$ holds the convergence rate is $\bigo(1/\sqrt{N})$.
\end{thm}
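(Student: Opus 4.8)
The plan is to argue by induction on $t$, combining the one-step identity of Proposition~\ref{prop:mu_expression}(i) with the law of total variance. Since $\muest{\BS}{b,t}(h)$ is unbiased for $\mumeasure{b,t}(h)$ by Proposition~\ref{prop:mu_expression}(ii), and $\mumeasure{b,t}(h) = \mumeasure{b,t-1}(\bar h_t)$ with $\bar h_t \eqdef g^{\otimes 2}_{t-1}\bitransition{b_t}{t}[h]$ by \eqref{eq:defmu1},
\begin{equation*}
  \big\| \muest{\BS}{b,t}(h) - \mumeasure{b,t}(h) \big\|_2^2 = \pE\big[ \pV\big( \muest{\BS}{b,t}(h) \,\big|\, \F{t-1} \big) \big] + \big\| \muest{\BS}{b,t-1}(\bar h_t) - \mumeasure{b,t-1}(\bar h_t) \big\|_2^2 \eqsp .
\end{equation*}
By $\A{assp:B}{}$, $\bar h_t \in \bounded{2t}$ with $|\bar h_t|_\infty \leq \boundg^2 |h|_\infty$, so the last term vanishes (and is $\bigo(1/N)$ under $\A{assp:boundbelow}{}$) by the induction hypothesis; the statement thus reduces to showing $\pE[\pV(\muest{\BS}{b,t}(h)\mid\F{t-1})] \to 0$, at rate $\bigo(1/N)$ under $\A{assp:boundbelow}{}$. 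The base case $t = 0$ is immediate: $\muest{\BS}{(0),0}(h) = \{N(N-1)\}^{-1}\sum_{i\neq j}h(\particle{i}{0},\particle{j}{0})$ is a degree-two $U$-statistic in the i.i.d.\ sample $\particle{1:N}{0}\iid\transition{0}$ and $\muest{\BS}{(1),0}(h) = N^{-1}\sum_i h(\particle{i}{0},\particle{i}{0})$ an empirical mean, so both converge in $\mathbf{L}_2$ at rate $1/\sqrt N$.

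To bound the conditional variance, recall that, given $\F{t-1}$, the particles $\particle{1:N}{t}$ are i.i.d.\ with law $\filter{t-1}^N\transition{t}$, of $\nu$-density $\rho_t : x\mapsto\sum_j\normweight{j}{t-1}\transitiondens{t}(\particle{j}{t-1},x)$, so that by \eqref{def:functionalweight} the fresh backward weight is $\beta^N_t(x,\particle{\ell}{t-1}) = \weight{\ell}{t-1}\transitiondens{t}(\particle{\ell}{t-1},x)/(\Omega_{t-1}\rho_t(x))$. Writing $\muest{\BS}{b,t}(h)$ as a double sum over index pairs $(k^1_{0:t},k^2_{0:t})$ and carrying out the sums over the indices $k^1_{0:t-1},k^2_{0:t-1}$ (whose associated weights are $\F{t-1}$-measurable), one reorganizes the result using two elementary identities: the backward weights sum to one, $\sum_\ell\beta^N_t(x,\particle{\ell}{t-1}) = 1$, and Lemma~\ref{lem:BSGTidentity}, which says that integrating a backward weight against $\filter{t-1}^N\transition{t}$ cancels the denominator $\rho_t$. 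This exhibits $\muest{\BS}{b,t}(h)$ as a \emph{principal} part --- a $U$-statistic in the conditionally i.i.d.\ sample $\particle{1:N}{t}$ with $\F{t-1}$-measurable, uniformly bounded kernels, structurally the same as in the genealogy-tracing estimator and of conditional variance $\bigo(1/N)$ by inspection --- plus \emph{correction} terms carrying coefficients of the form $(\normweight{\ell}{t-1})^2$ times $U$-statistics in $\particle{1:N}{t}$ whose kernels still contain the ratio $\transitiondens{t}(\particle{\ell}{t-1},\cdot)^2/\rho_t(\cdot)$.

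The main obstacle is controlling these correction terms --- the only ones with no counterpart in \cite{leewhiteley}. Bounding $\rho_t(x)\geq\normweight{\ell}{t-1}\transitiondens{t}(\particle{\ell}{t-1},x)$ (licit by the positivity $\A{assp:positive}{}$) and using the upper bound $\A{assp:boundup}{}$, the problematic ratio satisfies $\transitiondens{t}(\particle{\ell}{t-1},\cdot)^2/\rho_t(\cdot) \leq \sigma_+\transitiondens{t}(\particle{\ell}{t-1},\cdot)/\rho_t(\cdot)$, whose conditional second moment is finite; together with the $(\normweight{\ell}{t-1})^2$ prefactors and the extra powers of $N$ supplied by the $U$-statistic structure, this makes the conditional variance of each correction term $o(1)$, the a.s.\ convergence \eqref{eq:asconv} of $\filter{t-1}^N\transition{t}$ being used to pass the non-uniform constants to a finite limit. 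Under the additional lower bound $\A{assp:boundbelow}{}$, $\rho_t\geq\sigma_-$ uniformly, hence $\transitiondens{t}(\particle{\ell}{t-1},\cdot)/\rho_t(\cdot)\leq\sigma_+/\sigma_-$ and every $U$-statistic kernel (principal and correction) is uniformly bounded in $N$, so the conditional variance is $\bigo(1/N)$ and the induction closes at the $1/\sqrt N$ rate. The genuinely delicate point --- and the place where backward sampling departs from genealogy tracing --- is precisely this uniform-in-$N$ quantification of the random denominators $\rho_t$: $\A{assp:boundbelow}{}$ is exactly what upgrades the qualitative convergence to the $\bigo(1/\sqrt N)$ rate.
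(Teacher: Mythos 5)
Your overall architecture coincides with the paper's: induction on $t$; the exact orthogonal decomposition
$\| \muest{\BS}{b,t}(h) - \mumeasure{b,t}(h) \|_2^2 = \pE\big[\pV\big(\muest{\BS}{b,t}(h)\mid\F{t-1}\big)\big] + \| \muest{\BS}{b,t-1}(g_{t-1}^{\otimes 2}\bitransition{b_t}{t}[h]) - \mumeasure{b,t-1}(g_{t-1}^{\otimes 2}\bitransition{b_t}{t}[h]) \|_2^2$
(the paper writes the first summand as $\|\muest{\BS}{b,t}(h)\|_2^2 - \|\muest{\BS}{b,t-1}(g_{t-1}^{\otimes 2}\bitransition{b_t}{t}[h])\|_2^2$, which is the same thing); the $U$-statistic base case; the split of the conditional second moment into a principal part indexed by tuples of distinct time-$t$ indices and correction terms indexed by coalesced tuples; and the observation that only the correction terms have no $\GT$ counterpart. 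Your rate argument under \A{assp:boundbelow}{} is also exactly the paper's: $\beta^N_t(x,y)\le \boundg\sigma_+/(\sigma_-\Omega_{t-1})$ makes every kernel uniformly of order $\Omega_{t-1}^{-1}$ and closes the induction at $\bigo(N^{-1})$ in squared norm, using the uniform moment bounds of Proposition~\ref{prop:Qbound}.

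The gap is in the qualitative case, i.e.\ the claim that the correction terms vanish under \A{assp:B}{assp:boundup} alone. After the combinatorial bookkeeping, a typical correction term is
\[
\int \pE\Big[\tfrac{\Omega_{t-1}}{N}\,\muest{\BS}{b,t-1}\big(\transitiondens{t}(\cdot,x)\otimes\boldone\big)\,\muest{\BS}{b,t-1}\big(\beta^N_t(x,\cdot)\otimes\boldone\big)\Big]\,\nu(\rmd x)\eqsp.
\]
Your bound $\rho_t(x)\ge\normweight{\ell}{t-1}\transitiondens{t}(\particle{\ell}{t-1},x)$ only yields $\beta^N_t\le 1$, and with that the display is $\bigo(1)$, not $o(1)$: the explicit $1/N$ is exactly cancelled by the $N$ terms of the inner sum defining $\muest{\BS}{b,t-1}(\beta^N_t(x,\cdot)\otimes\boldone)$ together with $\int\muest{\BS}{b,t-1}(\transitiondens{t}(\cdot,x)\otimes\boldone)\,\nu(\rmd x)=\muest{\BS}{b,t-1}(\boldone\otimes\boldone)$. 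The decay must come from the pointwise-in-$x$ statement that the integrand, rewritten as $N^{-1}$ times a ratio with denominator $\filter{t-1}^N(\transitiondens{t}(\cdot,x))\to\filter{t-1}(\transitiondens{t}(\cdot,x))>0$, tends to zero --- which you do invoke --- but two further ingredients are missing from your sketch. First, upgrading convergence in probability of the integrand to convergence of its expectation requires uniform integrability, which the paper obtains from $\sup_N\pE[\muest{\BS}{b,t-1}(\boldone\otimes\boldone)^3]<\infty$ (Proposition~\ref{prop:Qbound} with $m=3$). Second, and more substantially, interchanging $\lim_N$ with $\int\cdot\,\nu(\rmd x)$ does not follow from pointwise convergence plus boundedness, since $\nu$ need not be finite and the natural dominating function depends on $N$; the paper has to invoke a generalized dominated convergence theorem (Theorem~\ref{thm:GDCT}) with dominating sequence $\widetilde B_N(x)=\muest{\BS}{b,t-1}(\transitiondens{t}(\cdot,x)\otimes\boldone)\,\muest{\BS}{b,t-1}(\boldone\otimes\boldone)$ and to verify separately that $\int\pE[\widetilde B_N]\,\rmd\nu$ converges to the integral of its limit. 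Without these two steps the correction terms are not controlled, so consistency under \A{assp:B}{assp:boundup} alone does not yet follow from what you wrote; everything else in your proposal matches the paper's proof.
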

\begin{rem}
The dependence on the time horizon $t$ of the $\mathbf{L}_2$ bound is difficult to analyze and we did not undertake it in the proof. Adapting the proofs of the existing analysis \cite{delmoral:doucet:singh:2010, sylvain_bernoulli} is not trivial as our smoothing estimators are non standard. Furthermore, the time dependence of the $\GT$ counterpart has not been analyzed neither, which renders the comparison with our approach even more difficult.
\end{rem}
The proof can be found in Section~\ref*{proof:conv} of the \suppl. As a straightforward consequence, the term by term estimator \eqref{eq:general_termbyterm} of the asymptotic variance is weakly consistent. 
It remains to detail how it can be computed. The next section is devoted to the exact computation of the estimators $\muest{\BS}{\zero,t}(h)$ and $\muest{\BS}{e_s, t}(h)$ that appear in its expression.

\subsection{Computation for $b = 0$ and $b = e_s$}
\label{subsec:compute}
We now derive practical expressions of $\muest{\BS}{\zero,t}(h)$ and $\muest{\BS}{e_s,t}(h)$ in the practical case where 
 $h: x_{0:t}, x^\prime _{0:t} \mapsto h(x_t, x^\prime _t) \in \measurable{2(t+1)}$. Define, for any $b \in \Bset_t$ and any $t\geq 0$,   
\begin{equation}
\label{eq:def_tau}
    \backsum^b _t (K^1 _t, K^2 _t)  \eqdef \pE_{\BS} \big[ \intersect{b,t}{K^1 _{0:t} , K^2 _{0:t}} \big| \F{t}, K^1 _t, K^2 _t \big] \eqsp.
\end{equation}
Then, by the tower property, $\muest{\BS}{b, t}(h)$ in \eqref{eq:mu_estimator} can be rewritten as
\begin{equation}
    \label{eq:Qexpr}
    \muest{\BS}{b, t}(h) = \prod_{s = 0 }^t N^{b_s} 
    \bigg(\frac{N}{N-1}\bigg)^{1 - b_s} \frac{\joint{t}^N(\mathbf{1})^2}{N^2} \sum_{k, \ell \in [N]^2} \backsum^b _t(k,\ell) h(\particle{k}{t}, \particle{\ell}{t}) \eqsp.
\end{equation}
Next, define for any $t \in \N$ and $(k,\ell) \in [N]^2$,
\begin{equation}
\label{eq:S}
    S_t(k,\ell) \eqdef \sum_{s = 0}^{t}  \backsum^{e_s} _{t}(k, \ell) \eqsp.
\end{equation}
Plugging \eqref{eq:Qexpr} in \eqref{eq:general_termbyterm}, $\tbtasymptvarestim{\gamma, t}{\BS}(h)$ can be rewritten as
\begin{equation}
    \label{eq:tbt_expr}
    \tbtasymptvarestim{\gamma, t}{\BS}(h) =  \frac{N^{t-1} \joint{t}^N(\boldone)^2}{(N-1)^t} \sum_{k, \ell \in [N]^2} \left\{ S_t(k, \ell)-\frac{t+1}{N-1}\backsum^\zero _t (k,\ell) \right \} h(\particle{k}{t})h(\particle{\ell}{t}) \eqsp.
\end{equation}
The sequential computation
of $\tbtasymptvarestim{\gamma, t}{\BS}(h)$
relies on that of $S_t(k,\ell)$ in \eqref{eq:S}, and so
on that of $\backsum^{e_s} _{t}(k, \ell)$
and $\backsum^\zero _t (k,\ell)$.
By the tower property, we obtain the following recursions for $\backsum^b _t$:
\begin{equation}
        \label{lem:bigtauexpr}
        \begin{cases}
        \backsum^b _0(k, \ell) = \1_{k \neq \ell, b_0 = 0} + \1_{k = \ell, b_0 = 1} \eqsp,\\
    \backsum^b _t(k, \ell) = \1_{k \neq \ell} \sum_{i,j \in [N]^2} \beta^\BS _t (k,i) \beta^\BS _t(\ell,j) \backsum^{b} _{t-1}(i,j)  & \mathrm{if} \quad b_t = 0 \eqsp, \\
     \backsum^b _t(k,\ell) = \1_{k = \ell} \sum_{i, j \in [N]^2} \beta^\BS _t(k,i) \normweight{j}{t-1} \backsum^b _{t-1}(i,j)  & \mathrm{if} \quad b_t = 1 \eqsp,
        \end{cases}
    \end{equation}
    for all $(k, \ell) \in [N]^2$ and $t \in \N^{*}$. In particular, if $b = \zero$,
    \begin{equation}
        \label{eq:update-0}
            \backsum^{\zero} _t(k,\ell) = \1_{k \neq \ell} \sum_{i,j \in [N]^2} \beta^\BS _t (k,i) \beta^\BS _t(\ell,j) \backsum^{\zero} _{t-1}(i,j) \eqsp,
        \end{equation}
    and if $b = e_s$,
    \begin{align}
        \label{eq:update-tau-s}
            \backsum^{e_s} _t(k,\ell)& = \begin{cases}
                \1_{k \neq \ell} \sum_{i,j \in [N]^2} \beta^\BS _t(k, i) \beta^\BS _t(\ell,j) \backsum^{e_s} _{t-1}(i,j) & \quad t > s \eqsp,\\
                \1_{k = \ell} \sum_{i,j \in [N]^2} \beta^\BS _t(k,i) \normweight{j}{t-1} \backsum^\zero _{t-1}(i,j) & \quad t = s \eqsp, \\
                    \backsum^\zero _t(k, \ell) & \quad t < s\eqsp.\\
            \end{cases}
        \end{align}
        Next, combining \eqref{eq:S}-\eqref{eq:update-tau-s} we obtain the online update of $S_t$:
        \begin{equation}
            \label{eq:updatesumes}
             S_t(k,\ell) = \backsum^{e_t} _t(k,\ell) + \1_{k \neq \ell} \sum_{i,j \in [N]^2}  \beta^\BS _t(k,i) \beta^\BS _t(\ell,j)  S_{t-1}(i,j)
             \eqsp,
            \end{equation}
        for any $(k, \ell) \in [N]^2$ and $t \in \N$. We have shown that despite the sum over $s$ that appears in \eqref{eq:general_termbyterm} we are still able to update \eqref{eq:tbt_expr} at a computational cost independent of the time horizon $t$ by propagating $S_t$ and $\backsum^\zero _t$.
       Note that the algorithm provided in \cite[Algorithm 3, Supplementary]{leewhiteley} does not compute $\tbtasymptvarestim{\gamma,t}{\GT}$ sequentially since it relies on the computation of each $\muest{\GT}{e_s,t}(h_t ^{\otimes 2})$ and $\muest{\GT}{\zero,t}(h_t ^{\otimes 2})$ from scratch whenever a new observation is available. In Section~\ref*{apdx:diffBSGT} of the supplementary material we show how it can be computed online using the same ideas behind the previous derivations. 

    The computation of the estimates $\muest{\BS}{b,t}(h)$ and $\tbtasymptvarestim{\gamma,t}{\BS}(h)$ can benefit from parallelization by implementing the updates \eqref{eq:update-0}-\eqref{eq:update-tau-s} with matrix operations:
    \begin{align*}
        \begin{cases}
            \bm{\backsum}^b _t = \bm{\beta}^\BS _t \bm{\backsum}^b _{t-1} \bm{\beta}^{\BS \top} _t - \mathrm{Diag}(\bm{\beta}^\BS _t \bm{\backsum}^b _{t-1} \bm{\beta}^{\BS \top} _t) &\quad \mathrm{if} \quad  b_t = 0 \eqsp,\\
            \bm{\backsum}^b _t = \mathrm{Diag}(\bm{\beta}^\BS _t \bm{\backsum}^b _{t-1} \mathcal{W}^{1:N} _{t-1}) & \quad \mathrm{if} \quad b_t = 1 \eqsp.
        \end{cases}
    \end{align*}
\subsection{Variance estimators with reduced computational cost}
\label{subsec:disj}
In this section we derive a second estimator that relies only on the update of $\backsum^\zero _t$.
Let $h \in \bounded{}$. By \eqref{eq:CLTs}, 
$\sqrt{N}\big(\joint{t}^N(h) - \joint{t}(h) \big)$ converges in distribution; 
moreover, $N \big(\joint{t}^N(h) - \joint{t}(h) \big)^2$ is uniformly integrable, using for instance a Hoeffding type inequality (see \cite{douc2014nonlinear}). Hence $N \pE [ (\joint{t}^N(h) - \joint{t}(h))^2 ]$ converges to the asymptotic variance $\asymptvar{\gamma,t}(h)$. On the other hand, using the lack of bias of $\joint{t}^N(h) \eqsp,$
 \[ 
     N \pE \left[ \left(\joint{t}^N(h) - \joint{t}(h)\right)^2 \right] = 
     N \left( \pE \left[ \joint{t}^N(h)^2 \right] - \joint{t}(h)^2 \right) =  N \left( \pE \left[ \joint{t}^N(h)^2 \right] - \mumeasure{\zero, t}(h _t^{\otimes 2}) \right)\eqsp.
 \]
A  natural estimator of this quantity
is obtained by replacing both terms by their unbiased estimators
 $\joint{t}^N(h)^2$ and $\muest{\BS}{\zero, t}(h _t ^{\otimes 2})$ 
 \begin{equation}
 \begin{aligned}
    \label{eq:general_disjoint}
    \asymptvarestim{\gamma,t}{\BS}(h) & \eqdef N \big( \joint{t}^N (h) ^2 - \muest{\BS}{\zero,t}(h^{\otimes 2} _t)\big) \\
    & = N \joint{t}^N (\boldone)^2 \bigg( \pred{t}^N(h)^2 - \frac{N^{t-1}}{(N-1)^{t+1}} \sum_{i,j \in [N]^2} \backsum^\zero _t(i,j) h(\particle{i}{t}) h(\particle{j}{t})\bigg) \eqsp.
    \end{aligned}
\end{equation}
For the sake of completeness we also provide the estimator for the predictor and filter and defer their justification to the Section~\ref*{subsec:predfiltervariance} of the supplementary material,
\begin{align}
    \label{eq:main:eta_disjoint}
        \asymptvarestim{\eta, t}{\BS}(h) & \eqdef \frac{-N^{t}}{(N-1)^{t+1}} 
             \sum_{i,j \in [N]^2} \backsum^\zero _{t}(i,j) \big\{ h(\particle{i}{t}) - \pred{t}^N (h) \big\} \big\{ h(\particle{j}{t}) - \pred{t}^N(h) \big\}\eqsp, \\  
    \label{eq:main:phi_disjoint}
        \asymptvarestim{\phi,t}{\BS}(h) & \eqdef \frac{-N^{t+2}}{(N-1)^{t+1}} 
              \sum_{i,j \in [N]^2}  \normweight{i}{t} \normweight{j}{t}  \backsum^\zero _{t}(i,j)\big\{ h(\particle{i}{t}) - \filter{t}^N (h) \big\} \big\{ h(\particle{j}{t}) - \filter{t}^N(h) \big\} \eqsp.
\end{align}
\begin{rem}
\label{rem:stability}
It is worthwhile to note the parallel between \eqref{eq:main:eta_disjoint} and \eqref{eq:CLE} (up to a negligible term depending on $N$); the indicator is replaced by the backward statistic $\backsum^\zero _t(i,j)$ which is the conditional probability of having two disjoint backward trajectories starting from $\particle{i}{t}$ and $\particle{j}{t}$.
\end{rem}
The convergence of \eqref{eq:general_disjoint} stated in Theorem~\ref{thm:consistencyVBS} stems from the following identity  which also appears in \cite{leewhiteley, asmc} and dates back to \cite{cerou2011}: 
\begin{equation}
    \label{prop:scdmoment}
\begin{aligned}
    & \sum_{b \in \Bset _t} \bigg\{ \prod_{s = 0} ^{t}  \frac{1}{N^{b_s}}\bigg( \frac{N-1}{N}\bigg)^{1 - b_s} \bigg\} \muest{\BS}{b,t}(h^{\otimes 2} _t) \\
    & \hspace{.5cm} = \joint{t}^N(\boldone)^2 \pE_{\BS} \left[ \sum_{b \in \Bset _t} \intersect{b,t}{K^1 _{0:t}, K^2 _{0:t}} h(\particle{K^1 _t}{t}) h(\particle{K^2 _t}{t}) \middle| \F{t}\right] =  \joint{t}^N(\boldone)^2 \pred{t}^N(h)^2 = \joint{t}^N (h)^2 \eqsp.
\end{aligned}
\end{equation}
\begin{thm}
    \label{thm:consistencyVBS}
    Let $\A{assp:B}{assp:boundup}$ hold. For any $h \in \bounded{}$, $\asymptvarestim{\gamma, t}{\BS}(h)$ converges in probability to $\jointasymptvar{t}{h}$.
\end{thm}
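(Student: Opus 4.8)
The plan is to express $\asymptvarestim{\gamma,t}{\BS}(h)$ as an $N$-dependent linear combination of the unbiased estimators $\{\muest{\BS}{b,t}(h_t^{\otimes 2})\}_{b\in\Bset_t}$ and then to pass to the limit term by term, invoking Theorem~\ref{thm:conv}. The starting point is the second moment identity \eqref{prop:scdmoment}, which can be read as $\joint{t}^N(h)^2 = \sum_{b\in\Bset_t}\pi_N(b)\,\muest{\BS}{b,t}(h_t^{\otimes 2})$, where $\pi_N(b)\eqdef\prod_{s=0}^t N^{-b_s}\left((N-1)/N\right)^{1-b_s} = N^{-|b|}(1-1/N)^{t+1-|b|}$ and $|b|\eqdef\sum_{s=0}^t b_s$. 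Isolating the $b=\zero$ contribution, subtracting $\muest{\BS}{\zero,t}(h_t^{\otimes 2})$ and multiplying by $N$ yields the decomposition
\[
\asymptvarestim{\gamma,t}{\BS}(h) = N(\pi_N(\zero)-1)\,\muest{\BS}{\zero,t}(h_t^{\otimes 2}) + \sum_{s=0}^t N\pi_N(e_s)\,\muest{\BS}{e_s,t}(h_t^{\otimes 2}) + \sum_{b\,:\,|b|\geq 2} N\pi_N(b)\,\muest{\BS}{b,t}(h_t^{\otimes 2})\eqsp.
\]

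Next I would treat the three groups separately. The deterministic coefficients obey $N(\pi_N(\zero)-1)\to-(t+1)$, $N\pi_N(e_s)=(1-1/N)^t\to 1$ for each $s$, and $N\pi_N(b)=N^{1-|b|}(1-1/N)^{t+1-|b|}=\bigo(N^{-1})$ whenever $|b|\geq 2$. Since $h\in\bounded{}$ implies $h_t^{\otimes 2}\in\bounded{2(t+1)}$, Theorem~\ref{thm:conv} applies and gives $\muest{\BS}{b,t}(h_t^{\otimes 2})\to\mumeasure{b,t}(h_t^{\otimes 2})$ in $\mathbf{L}_2$, hence in probability, for every $b\in\Bset_t$; in particular each $\muest{\BS}{b,t}(h_t^{\otimes 2})$ is tight. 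Using that the product of a convergent deterministic sequence and a sequence converging in probability converges in probability to the product of the limits, together with the finiteness of $\Bset_t$, the first group converges in probability to $-(t+1)\mumeasure{\zero,t}(h_t^{\otimes 2})=-(t+1)\joint{t}(h)^2$, the second to $\sum_{s=0}^t\mumeasure{e_s,t}(h_t^{\otimes 2})=\sum_{s=0}^t\joint{s}(\boldone)\joint{s}(\Qmarg{s+1:t}[h]^2)$ by \eqref{eq:identityQes}, and the third (an $\bigo(N^{-1})$ deterministic factor times a tight sequence, summed over finitely many $b$) vanishes in probability. Collecting the pieces and recalling \eqref{eq:asymptvarjoint},
\[
\asymptvarestim{\gamma,t}{\BS}(h)\ \plim\ \sum_{s=0}^t\joint{s}(\boldone)\joint{s}(\Qmarg{s+1:t}[h]^2)-(t+1)\joint{t}(h)^2 = \asymptvar{\gamma,t}(h) = \jointasymptvar{t}{h}\eqsp,
\]
which is the assertion.

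As for the main obstacle, there is essentially none left once Theorem~\ref{thm:conv} is granted: the entire analytic difficulty sits in that result, and the present statement only requires the elementary asymptotics of the weights $\pi_N(b)$ plus the observation that the terms with $|b|\geq 2$ enter with an $\bigo(N^{-1})$ prefactor. The one point needing a little care is to use a mode of convergence for the $\muest{\BS}{b,t}(h_t^{\otimes 2})$ strong enough to survive multiplication by $N\pi_N(b)$; convergence in probability (equivalently, tightness) already suffices for the $|b|\geq 2$ remainder, and Theorem~\ref{thm:conv} delivers much more. A completely equivalent alternative, if one wishes to avoid \eqref{prop:scdmoment}, is to note that Theorem~\ref{thm:conv} directly yields $\tbtasymptvarestim{\gamma,t}{\BS}(h)\to\asymptvar{\gamma,t}(h)$ in $\mathbf{L}_2$ and then verify $\asymptvarestim{\gamma,t}{\BS}(h)-\tbtasymptvarestim{\gamma,t}{\BS}(h)\plim 0$ via the same coefficient bookkeeping; both routes cost the same.
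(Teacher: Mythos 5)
Your proposal is correct and follows essentially the same route as the paper: both start from the second-moment identity \eqref{prop:scdmoment}, decompose $\asymptvarestim{\gamma,t}{\BS}(h)$ into the $b=\zero$ term, the $b=e_s$ terms, and the $|b|\geq 2$ remainder with its $\bigo(N^{-1})$ coefficients, and conclude via the convergence in probability of each $\muest{\BS}{b,t}(h_t^{\otimes 2})$ supplied by Theorem~\ref{thm:conv}. The only difference is cosmetic — you spell out the coefficient asymptotics and the tightness argument for the remainder slightly more explicitly than the paper does.
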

The proof is in Section~\ref*{proof:consistVBS} of the \suppl. The main advantage of \eqref{eq:general_disjoint} w.r.t. 
\eqref{eq:tbt_expr} is the computational cost. Indeed, remark that \eqref{eq:general_disjoint} only relies
on the sequential update of
 $\backsum^\zero _t$, contrary to 
\eqref{eq:tbt_expr} which also relies
on that of $\backsum^{e_s}_t$. Consequently, 
the computational time of \eqref{eq:general_disjoint}
is roughly twice lower. In addition,
experiments show that the difference in performance is negligible so \eqref{eq:general_disjoint} is to be preferred in practice. 

\begin{rem}
This alternative estimator does not invalidate the relevance of \eqref{eq:general_termbyterm}.
Indeed, remember that \eqref{eq:general_termbyterm} is an unbiased estimator. Moreover, the asymptotic variance estimator of the FFBS algorithm that we provide in Section~\ref{sec:ffbs_est} is a term by term estimator that can be updated online in a way similar to \eqref{eq:general_termbyterm}. 
\end{rem}
\subsection{A PaRIS variance estimator}
\label{subsec:paris}
Let us discuss how the computational cost of \eqref{eq:general_disjoint} and \eqref{eq:tbt_expr} 
can be further reduced \textit{à la PaRIS} \cite{paris,gloaguen2019pseudo}. In \cite{paris}, the forward only implementation of the FFBS algorithm is sped up by replacing the backward statistics by a conditionally unbiased estimator obtained by sampling particle indices according to the backward probabilities $\beta^\BS _t$ through rejection sampling. We therefore apply the same idea here by letting $\widetilde{\backsum}^\zero _0 \eqdef \backsum^\zero _0$ and replacing $\backsum^b _t$ with
\label{eqdef:tautilde}
\begin{alignat*}{3}
& \Pbacksum^b _t(k,\ell)  && \eqdef \frac{\1_{k \neq \ell}}{M} \sum_{i = 1}^{M}\Pbacksum^b _{t-1}(J^i _{k,t-1}, J^i _{\ell,t-1}) && \quad \mathrm{if} \quad  b_t = 0 \eqsp, \\
    & \Pbacksum^b _t(k,\ell)  && \eqdef \frac{\1_{k = \ell}}{M} \sum_{i = 1}^M \sum_{j = 1}^N \normweight{j}{t-1} \Pbacksum^b _{t-1}(J^i _{k,t-1}, j) && \quad \mathrm{if} \quad b_t = 1 \eqsp,
\end{alignat*}
where for any $k \in [N]$, $J^{1:M} _{k, t-1}$ 
are i.i.d. samples according
to $\beta^\BS _t(k, .)$. For $h \in \measurable{2}$, the \textit{PaRIS} estimator of $\mumeasure{b,t}(h)$ is, for any $b \in \Bset_t$
\begin{equation}
\label{eqdef:qtilde}
    \parismuest{\BS}{b,t}(h) = \left\{\prod_{s = 0}^{t} N^{b_s} \left( \frac{N}{N-1}\right)^{1 - b_s}\right\} \frac{\joint{t}^N(1)^2}{N^2} \sum_{i,j \in [N]^2} \widetilde{\backsum}^b _t(i,j) h(\particle{i}{t}, \particle{j}{t}) \eqsp,
\end{equation}
and the $\paris$ variance estimators are 
\begin{align}
    \label{eqdef:tbtparis}
    \overline{\mathsf{V}}^{N,M} _{\gamma,t}(h) & = \sum_{s = 0}^t \big\{ \parismuest{\BS}{e_s, t}(h^{\otimes 2}) - \parismuest{\BS}{\zero, t}(h^{\otimes 2}) \big\} \eqsp, \\
    \label{eqdef:VBSparis}
    \parisasymptvar{\gamma, t}{\BS}(h) & = N\big( \joint{t}^N(h)^2 - \parismuest{\BS}{\zero,t}(h^{\otimes 2}) \big) \eqsp,
\end{align}
where $M > 1$ refers to the number of sampled indices. The computation of \eqref{eq:general_disjoint} and \eqref{eqdef:VBSparis} is summarized in Algorithm~\ref{alg:summaryalg}.
\begin{algorithm}
    \caption{Update at step $t+1$ of the variance estimators \eqref{eq:general_disjoint} and \eqref{eqdef:VBSparis} associated to  $\joint{t+1}^N(h)$}
    \label{alg:summaryalg}
    \begin{algorithmic}
    \Require $M, \weight{1:N}{t}, \particle{1:N}{t},
     \particle{1:N}{t+1}$, $\bm{\backsum}^\zero _t$ and $\joint{t}^N(\boldone)$
    \State{Compute $\bm{\beta}^\BS _{t+1}$}
    \If{$\paris$}
    \For{$k \in [1:N]$}

         Sample $J^{1:M} _{k,t} \iid \beta^\BS _{t+1}(k, .)$

    \EndFor
    \For{$(k,\ell) \in [1:N]^2$}

        Set $\backsum^\zero _{t+1}(k,\ell) = \1_{k \neq \ell} \sum_{i = 1}^{M} \backsum^\zero _{t}(
            J^i _{k,t}, J^i _{\ell,t}) / M$

    \EndFor
    \Else

    \State{Compute $\overline{\bm{\backsum}}^\zero _{t+1} = \bm{\beta}^\BS _{t+1} \bm{\backsum}^\zero _t \bm{\beta}^{\BS \prime} _{t+1}$.}
    \State{Set $\bm{\backsum}^\zero _{t+1} = \overline{\bm{\backsum}}^\zero _{t+1} - \mathrm{Diag}(\overline{\bm{\backsum}}^\zero _{t+1})$.}
    \EndIf

    \State{Compute $\bm{\mathcal{Q}} = \bm{\backsum}^\zero _{t+1} \odot \big[ h(\particle{1:N}{t+1}) h(\particle{1:N}{t+1}) ^\top \big]$.}\algorithmiccomment{$h$ is applied elementwise}\\
    \Return $N\joint{t+1}^N(\boldone)^2 \big\{ \eta^N _{t+1}(h)^2 - N^{t} \sum_{i, j \in [N]^2} \bm{\mathcal{Q}}_{i,j} / (N-1)^{t+2} \big\}, \quad \bm{\backsum}^\zero _{t+1}$.
    \end{algorithmic}
    \end{algorithm}

We are able to reduce the time complexity of computing $\backsum^b _t$ to $\bigo(MN^2)$. The key feature of the \textit{PaRIS} approach is that $M$ does not necessarily need to be large (see \cite[Section 3.1]{paris} for a discussion on this matter). We impose $M > 1$ because then in the case $b = \zero$, which is the case we are the most interested in, the support of $\muest{\BS}{\zero,t}(h)$ is made of $N^2 M^{t+1}$ terms whereas when $M = 1$ it is only $N^2$.  We show empirically in our experiments that setting $M = 3$ is sufficient to provide good results for the asymptotic variance estimation.

While it is not needed to obtain a $\bigo(MN^2)$ time complexity, the indices $J^{1:M} _{k, t-1}$ can be sampled using an accept-reject procedure with the weights $\normweight{1:N}{t}$ as proposals if the transition densities $\transitiondens{t}$ are upper bounded. This approach does not require the computation of the normalizing constant of the backward weights \eqref{def:functionalweight}. The computational time is then random but if the transition kernels are strongly mixing it can be provably further reduced \cite{doucmoulines}. 
Theorem~\ref{corr:paris} is concerned with the convergence of $\parismuest{\BS}{b,t}(h)$ for any bounded $h$ and for any \emph{fixed} $M > 1$. Its proof bears some similarity with that of Theorem~\ref{thm:conv} with the exception that the additional sampling introduces non trivial terms that need to be handled carefully. As a straightforward consequence, we obtain the convergence in probability of $\overline{\mathsf{V}}^{N,M} _{\gamma,t}(h)$ for any $h \in \bounded{}$. The weak consistency of \eqref{eqdef:VBSparis} in Theorem~\ref{thm:parisvar} is however less straightforward than that of Theorem~\ref{thm:consistencyVBS} and relies on the insight that the identity \eqref{prop:scdmoment} still holds when $\muest{\BS}{b,t}$ are replaced with their $\paris$ versions. The proofs are provided respectively in Section~\ref*{proof:thmparis} and \ref*{proof:scdmoment_paris} of the \suppl.
\begin{thm}
    \label{corr:paris}
    Assume that $\A{assp:B}{assp:boundup}$ hold. For any $t \in \N \eqsp,$ $b \in \Bset_t$, $M > 1$ and $h \in \bounded{2}$,
    \begin{equation}
        \simplelim \| \parismuest{\BS}{b,t}(h) - \mumeasure{b,t}(h) \|_2 = 0 \eqsp.
    \end{equation}
    In addition, if $\A{assp:boundbelow}{}$ holds the convergence rate is $\bigo(1/\sqrt{N})$.
\end{thm}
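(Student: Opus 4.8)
The plan is to argue by induction on $t$, following the scheme of the proof of Theorem~\ref{thm:conv} while carrying along the additional Monte Carlo error created by the auxiliary PaRIS samples. Throughout I would write $\mathcal{H}_{t-1}$ for the $\sigma$-field generated by the particle system $\F{t-1}$ together with all the index blocks used to build $\Pbacksum^b_0, \dotsc, \Pbacksum^b_{t-1}$, and $\mathcal{H}'_t \eqdef \sigma(\mathcal{H}_{t-1}, \F{t})$ for the field that additionally reveals the time-$t$ particles and ancestors but not the fresh indices $\{J^{1:M}_{k,t-1}\}_{k\in[N]}$. Two preliminary facts would drive the induction. First, a PaRIS analogue of Proposition~\ref{prop:mu_expression}\ref{item:condexpect}: since for each $k$ the indices $J^{1:M}_{k,t-1}$ are i.i.d.\ from $\beta^\BS_t(k,\cdot)$, the PaRIS recursion is conditionally unbiased, so $\pE[\Pbacksum^b_t(i,j)\mid\mathcal{H}'_t]$ is exactly the recursion \eqref{lem:bigtauexpr} applied to $\Pbacksum^b_{t-1}$; integrating out the time-$t$ particles as in the proof of Proposition~\ref{prop:mu_expression} (via Lemma~\ref{lem:BSGTidentity} and the conditional independence of the particles given $\F{t-1}$) then gives, for any $h\in\bounded{2}$,
\begin{equation*}
    \pE\big[ \parismuest{\BS}{b,t}(h) \,\big|\, \mathcal{H}_{t-1} \big] = \parismuest{\BS}{b,t-1}\big( g^{\otimes 2}_{t-1}\bitransition{b_t}{t}[h] \big) \eqsp, \qquad t\in\N^{*} \eqsp,
\end{equation*}
which parallels the Feynman--Kac identity $\mumeasure{b,t}(h) = \mumeasure{b,t-1}\big(g^{\otimes 2}_{t-1}\bitransition{b_t}{t}[h]\big)$ read off \eqref{eq:defmu1} (note that $g^{\otimes 2}_{t-1}\bitransition{b_t}{t}[h]\in\bounded{2}$, so the recursion stays inside the class of bivariate bounded functions). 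Second, the deterministic bound $\Pbacksum^b_t(i,j)\in[0,1]$ for all $i,j,b,t$ — immediate by induction since $\beta^\BS_t(k,\cdot)$ and $\normweight{1:N}{t-1}$ are probability vectors — together with the sharper mass-control on $\Pbacksum^b_t$ (supported, with $\bigo(N^{1-r})$ entries, on $\bigo(N^{2-r})$ pairs when $b$ has $r$ coordinates equal to $1$) that parallels the bounds established for $\backsum^b_t$ in the proof of Theorem~\ref{thm:conv}; here one uses $\sum_\ell\beta^\BS_t(k,\ell)=1$ and, under $\A{assp:boundbelow}{}$, $\beta^\BS_t(k,\ell)\leq(\sigma_+/\sigma_-)\normweight{\ell}{t-1}$ via Lemma~\ref{lem:BSGTidentity}.

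With these in hand the base case $t=0$ is immediate: $\parismuest{\BS}{b,0}=\muest{\BS}{b,0}$ because $\Pbacksum^b_0=\backsum^b_0$ and no PaRIS sampling has occurred, so the claim reduces to the $t=0$ case of Theorem~\ref{thm:conv}, namely $\mathbf{L}_2$-convergence at rate $N^{-1/2}$ of a $U$-statistic (for $b_0=0$) or of an empirical mean (for $b_0=1$) built from $\particle{1:N}{0}\iid\transition{0}$. For the inductive step I would decompose, writing $\overline{\Pbacksum}^b_t\eqdef\pE[\Pbacksum^b_t\mid\mathcal{H}'_t]$,
\begin{equation*}
\begin{aligned}
\parismuest{\BS}{b,t}(h) - \mumeasure{b,t}(h) &= \Big( \parismuest{\BS}{b,t}(h) - \pE[\parismuest{\BS}{b,t}(h)\mid\mathcal{H}'_t] \Big) + \Big( \pE[\parismuest{\BS}{b,t}(h)\mid\mathcal{H}'_t] - \pE[\parismuest{\BS}{b,t}(h)\mid\mathcal{H}_{t-1}] \Big) \\
&\quad + \Big( \parismuest{\BS}{b,t-1}(g^{\otimes 2}_{t-1}\bitransition{b_t}{t}[h]) - \mumeasure{b,t-1}(g^{\otimes 2}_{t-1}\bitransition{b_t}{t}[h]) \Big) \eqsp,
\end{aligned}
\end{equation*}
and bound the three $\mathbf{L}_2$ norms separately. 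The last term is handled by the induction hypothesis applied to $g^{\otimes 2}_{t-1}\bitransition{b_t}{t}[h]\in\bounded{2}$, which has $\big|g^{\otimes 2}_{t-1}\bitransition{b_t}{t}[h]\big|_\infty\leq\boundg^2|h|_\infty$ by $\A{assp:B}{}$ and since $\bitransition{b_t}{t}$ is a probability kernel; this gives $o(1)$, and $\bigo(N^{-1/2})$ once $\A{assp:boundbelow}{}$ holds. The second term is the particle-sampling error at time $t$: on $\mathcal{H}'_t$, the random variable $\pE[\parismuest{\BS}{b,t}(h)\mid\mathcal{H}'_t]$ is obtained from $\parismuest{\BS}{b,t}(h)$ by substituting $\overline{\Pbacksum}^b_t$ for $\Pbacksum^b_t$, so it has exactly the algebraic form of $\muest{\BS}{b,t}(h)$ with $\backsum^b_{t-1}$ replaced by $\Pbacksum^b_{t-1}$; the fluctuation of the fresh i.i.d.\ particles $\particle{1:N}{t}$ is then bounded verbatim as in the proof of Theorem~\ref{thm:conv}, the deterministic and mass bounds on $\Pbacksum^b_{t-1}$ playing the role of those on $\backsum^b_{t-1}$.

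The genuinely new contribution, and the step I expect to be the main obstacle, is the first term, i.e.\ the PaRIS-sampling error at time $t$. Conditionally on $\mathcal{H}'_t$, and taking $b_t=0$ for concreteness, it equals $c_{N,t}\sum_{i\neq j}\big(\Pbacksum^b_t(i,j)-\overline{\Pbacksum}^b_t(i,j)\big)h(\particle{i}{t},\particle{j}{t})$ with $c_{N,t}\eqdef\big\{\prod_s N^{b_s}(N/(N-1))^{1-b_s}\big\}\joint{t}^N(\boldone)^2/N^2$; the summands are centered and (i) bounded by $1$ conditionally, (ii) of conditional variance $\bigo(1/M)$ (each being an average of $M$ i.i.d.\ terms valued in $[0,1]$), and (iii) pairwise uncorrelated unless $\{i,j\}$ and $\{i',j'\}$ overlap, since $\Pbacksum^b_t(i,j)$ depends only on the blocks $J^{1:M}_{i,t-1}$ and $J^{1:M}_{j,t-1}$. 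As there are only $\bigo(N^3)$ overlapping pairs of pairs, the conditional variance of this term is $\bigo(c_{N,t}^2\,N^3/M)$; using the deterministic bound $\joint{t}^N(\boldone)\leq\boundg^t$ (hence $c_{N,t}=\bigo(N^{-2})$ for $b=\zero$, and the $\bigo(N^{1-r})$ support control on $\Pbacksum^b_t$ in the case $b=e_s$) this is $\bigo(1/(NM))$, so the term is $\bigo(N^{-1/2})$ for every fixed $M$; the case $b_t=1$ is analogous with the diagonal sum and the weights $\normweight{1:N}{t-1}$. Combining the three bounds closes the induction. The delicate points are therefore the propagation of the correct deterministic and mass bounds on $\Pbacksum^b_t$ through the PaRIS recursion and the careful bookkeeping of the (single-index) dependence structure of the auxiliary samples, so that the Monte Carlo error injected at each step still vanishes at the $N^{-1/2}$ rate despite $M$ being held fixed; the remainder follows the template of the proof of Theorem~\ref{thm:conv}.
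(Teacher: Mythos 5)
Your overall architecture --- induction on $t$, the conditional-unbiasedness identity $\pE\big[\parismuest{\BS}{b,t}(h)\mid\mathcal{H}_{t-1}\big]=\parismuest{\BS}{b,t-1}\big(g^{\otimes 2}_{t-1}\bitransition{b_t}{t}[h]\big)$ (the paper's Lemma~\ref{lem:condexpectparis}), and the identification of the fresh PaRIS indices as the new source of error --- matches the paper's proof, and your further splitting of the martingale increment into a PaRIS-sampling error and a particle-sampling error is a legitimate refinement (the three pieces are in fact pairwise orthogonal, so Pythagoras applies). For $b=\zero$ your covariance count for the PaRIS error is essentially correct: $\bigo(N^3)$ overlapping pairs of index pairs, each covariance $\bigo(1/M)$, squared normalization $\bigo(N^{-4})$, giving $\bigo(1/(NM))$.

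The gap is in the case $b\neq\zero$, which the theorem covers and which is needed for the term-by-term PaRIS estimator. The bounds you invoke there --- $\Pbacksum^b_t$ ``supported, with $\bigo(N^{1-r})$ entries, on $\bigo(N^{2-r})$ pairs'' --- are not correct: after a coordinate $b_u=1$ the matrix is diagonal, but a single subsequent $b_{u+1}=0$ step makes it generically dense again (all $N(N-1)$ off-diagonal entries can be nonzero), and the entrywise bound never improves beyond $\Pbacksum^b_t\le 1$; only the total mass $\sum_{i,j}\Pbacksum^b_t(i,j)$ scales like $N^{2-r}$. With only an entrywise bound and a total-mass bound, the sum of covariances over overlapping pairs cannot be closed: for $b=e_s$ the relevant quantity is $\sum_{i,j,j'}\pE\big[\Pbacksum^b_{t-1}(J_i,J_j)\Pbacksum^b_{t-1}(J_i,J_{j'})\big]\lesssim\sum_m\big(\sum_n\Pbacksum^b_{t-1}(m,n)\big)^2$, and total mass $\bigo(N)$ is compatible with this being $\bigo(N^2)$ (all mass on one row), which after multiplying by the squared normalizing constant $\bigo(N^{-2})$ gives $\bigo(1/M)$ rather than $o(1)$. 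What is actually needed is a separate recursive control, uniform in $N$, of the quadratic functionals $\sum_{i,j}\Pbacksum^b_t(i,j)^2$, $\sum_{i,j}\Pbacksum^b_t(i,j)\Pbacksum^b_t(j,i)$ and $\sum_{i,j,j'}\Pbacksum^b_t(i,j)\Pbacksum^b_t(i,j')$; this is precisely the sub-induction on $D^N_{1,b},D^N_{2,b},D^N_{3,b}$ in Propositions~\ref{prop:paris_conv_partitions} and \ref{prop:conv_partitions}, and it is the technical core of the proof rather than bookkeeping. Likewise, running the Theorem~\ref{thm:conv} machinery ``verbatim'' on the frozen-index term requires the moment bound $\sup_N\|\parismuest{\BS}{b,t}(\boldone)\|_3<\infty$ for the PaRIS statistics (Proposition~\ref{prop:parisQbound}), whose proof must again track the coupling induced by reusing the same replicate index across particles and is not a direct copy of Proposition~\ref{prop:Qbound}.
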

\begin{thm}
    \label{thm:parisvar}
    Let $\A{assp:B}{assp:boundup}$ hold. For all $t \in \N$, $M > 1$ and $h \in \bounded{}$, $\parisasymptvar{\gamma,t}{\BS}(h)$ converges in probability to $\asymptvar{\gamma,t}(h)$ when $N$ goes to infinity.
\end{thm}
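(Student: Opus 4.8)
The plan is to follow the same route as the proof of Theorem~\ref{thm:consistencyVBS}, substituting the $\paris$ estimators for the exact ones and exploiting the fact that the second moment identity \eqref{prop:scdmoment} survives the additional Monte Carlo randomization, namely that $\sum_{b \in \Bset_t} c_{b,N}\, \parismuest{\BS}{b,t}(h_t^{\otimes 2}) = \joint{t}^N(h)^2$, where $h_t : x_{0:t} \mapsto h(x_t)$ and $c_{b,N} \eqdef \prod_{s=0}^t N^{-b_s}\big((N-1)/N\big)^{1-b_s}$ for $b \in \Bset_t$; this is exactly the content of Section~\ref*{proof:scdmoment_paris}. Plugging this identity into the definition \eqref{eqdef:VBSparis} and separating off the $b = \zero$ term that is subtracted there gives the exact decomposition
\[
\parisasymptvar{\gamma,t}{\BS}(h) = N\big(c_{\zero,N} - 1\big)\,\parismuest{\BS}{\zero,t}(h_t^{\otimes 2}) + \sum_{b \in \Bset_t \setminus \{\zero\}} N\, c_{b,N}\, \parismuest{\BS}{b,t}(h_t^{\otimes 2}) \eqsp.
\]

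First I would record the elementary asymptotics of the deterministic coefficients: $N(c_{\zero,N} - 1) = N\big[(1-1/N)^{t+1} - 1\big] \to -(t+1)$; for $b = e_s$ one has $N c_{e_s,N} = (1 - 1/N)^{t} \to 1$; and for every $b$ with at least two nonzero coordinates, $N c_{b,N} = \bigo(N^{-1}) \to 0$. Then I would apply Theorem~\ref{corr:paris} to the bounded two-variable function $(x,x') \mapsto h(x)h(x')$, whose lift is $h_t^{\otimes 2}$: for each fixed $b \in \Bset_t$, $\parismuest{\BS}{b,t}(h_t^{\otimes 2})$ converges to $\mumeasure{b,t}(h_t^{\otimes 2})$ in $\mathbf{L}_2$, hence in probability, and being $\mathbf{L}_2$-convergent the family $\{\parismuest{\BS}{b,t}(h_t^{\otimes 2})\}_N$ is bounded in $\mathbf{L}_2$, hence bounded in probability. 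Combining the two facts (Slutsky), the $b=\zero$ term converges in probability to $-(t+1)\,\mumeasure{\zero,t}(h_t^{\otimes 2})$, each of the $t+1$ terms with $b = e_s$ to $\mumeasure{e_s,t}(h_t^{\otimes 2})$, and every term with $|b| \geq 2$ to $0$ (a sequence bounded in probability times a deterministic null sequence); since $\Bset_t$ is finite these limits add up, so that, invoking \eqref{eq:scdestimator},
\[
\parisasymptvar{\gamma,t}{\BS}(h) \plim \sum_{s=0}^t \mumeasure{e_s,t}(h_t^{\otimes 2}) - (t+1)\,\mumeasure{\zero,t}(h_t^{\otimes 2}) = \sum_{s=0}^t \big\{ \mumeasure{e_s,t}(h_t^{\otimes 2}) - \mumeasure{\zero,t}(h_t^{\otimes 2}) \big\} = \asymptvar{\gamma,t}(h) \eqsp,
\]
which is the claim.

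The assembly above is merely bookkeeping on $\bigo(1)$ coefficients once Theorem~\ref{corr:paris} is in hand, so the real content — and what I expect to be the main obstacle, albeit a modest one — is the $\paris$ version of \eqref{prop:scdmoment}. The natural way to get it is to show by induction on $t$ that $\sum_{b \in \Bset_t} \Pbacksum^b_t(k,\ell) \equiv 1$ for all $(k,\ell)$, using $M^{-1}\sum_{i=1}^M 1 = 1$ in the $b_t = 0$ branch of the recursion defining $\Pbacksum^b_t$ and $\sum_{j=1}^N \normweight{j}{t-1} = 1$ in the $b_t = 1$ branch; this is the precise analogue of the elementary identity $\sum_{b \in \Bset_t} \intersect{b,t}{\cdot,\cdot} \equiv 1$ underlying \eqref{prop:scdmoment}, and summing $c_{b,N}\,\parismuest{\BS}{b,t}(h_t^{\otimes 2})$ over $b$ then collapses the double sum over the time-$t$ indices into $\joint{t}^N(\boldone)^2\,\pred{t}^N(h)^2 = \joint{t}^N(h)^2$. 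A last, mildly delicate point worth spelling out is that Theorem~\ref{corr:paris} is stated for test functions of two variables, so one should make the identification of $\parismuest{\BS}{b,t}$ evaluated at $(x,x')\mapsto h(x)h(x')$ with the evaluation of $\mumeasure{b,t}$ at $h_t^{\otimes 2}$ explicit.
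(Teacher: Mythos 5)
Your proposal follows exactly the paper's route: the key step is the $\paris$ analogue of the identity \eqref{prop:scdmoment}, obtained by proving $\sum_{b \in \Bset_t} \Pbacksum^b_t(k,\ell) = 1$ by induction (using $M^{-1}\sum_{i=1}^M 1 = 1$ and $\sum_j \normweight{j}{t-1} = 1$ in the two branches), after which the decomposition, the coefficient asymptotics and the term-by-term application of Theorem~\ref{corr:paris} mirror the proof of Theorem~\ref{thm:consistencyVBS} exactly as the paper does. The argument is correct and complete.
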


\section{Application to the FFBS}
\label{sec:varFFBS}
In this section, we derive an estimator for the asymptotic variance of the \emph{Forward Filtering Backward Smoothing} algorithm. We start by giving 
a short presentation of the FFBS algorithm and we next derive an estimator of the asymptotic variance for additive functionals. 

\subsection{FFBS algorithm}
The FFBS algorithm aims at solving the well known degeneracy problem associated with the particle filter of Section \ref{sec:PF} when it is used for approximating smoothing distributions. It relies on the following backward decomposition of the joint smoothing distribution:
\begin{equation}
    \label{eq:smoother}
\smooth{0:t}{t}(h) = \int h(x_{0:t}) \filter{t}(\rmd x_t) \bwpath{t}(x_t, \rmd x_{0:t-1}) \eqsp,
\end{equation}
where $\bwpath{t}$ is the backward transition kernel from $(\Xset, \sigmaX)$ to $(\Xset^t, \sigmaX^{\otimes t})$: $\bwpath{0} \eqdef \text { Id } $ and for $t>0$,
$$
\bwpath{t} \eqdef \bwker{t-1} \otimes  \cdots \otimes \bwker{0}
$$
and $\bwker{s}$ is the backward kernel defined by 
\[ 
\bwker{s}(x_{s+1}, A) \eqdef \frac{\int m_{s+1}(x_s, x_{s+1}) \1_A (x_s) 
\filter{s}(\rmd x_s)}{\phi_s(m_{s+1}(.,x_{s+1}))}, \quad \forall A \in \sigmaX \eqsp, \forall x_{s+1} \in \Xset \eqsp.
\]
Denote by $\bwpath{t}^N$ the particle approximation of $\bwpath{t}$ where each backward kernel $\bwker{s}$ is replaced by plugging in the particle approximation of the filter. This yields for any $A \in \sigmaX$ and $x_{s+1} \in \Xset$, 
\[
\bwker{s}^N(x_{s+1}, A ) \eqdef \frac{\int m_{s+1}(x_s, x_{s+1}) 
\1_A(x_s) \filter{s}^N(\rmd x_s)}{\filter{s}^N(m_{s+1}(.,x_{s+1}))} =
 \sum_{i = 1}^N \frac{\weight{i}{s}m_{s+1}(\particle{i}{s}, x_{s+1})}{\sum_{j = 1}^N 
 \weight{j}{s} m_{s+1}(\particle{j}{s}, x_{s+1})} \1_A(\particle{i}{s}) \eqsp.
\]
Plugging this approximation and that of the filtering distribution  in \eqref{eq:smoother} yields
\begin{equation}
    \label{eq:FFBS}
\smoothN{0:t}{t}{N, \FFBS}(h) \eqdef \sum_{i_0 = 1}^N \cdots \sum_{i_t = 1}^N \widetilde{\Lambda} _t (i_{0:t}) h(\particle{i_0}{0}, \cdots , \particle{i_t}{t}) \eqsp,
\end{equation}
where $\widetilde{\Lambda}_t (i _{0:t}) \eqdef \normweight{i_t}{t} \prod_{s = 1}^t \beta^\BS _t(i _{s}, i_{s-1})$. In the following, we write $\smooth{0:t}{t}^N$ for $\smoothN{0:t}{t}{N, \FFBS}$ and if $h$ is such that $h : x_{0:t} \mapsto h(x_{s:\ell})$ with $0 \leq s \leq \ell \leq t$, we will instead write $\smooth{s:\ell}{t}^N (h)$.

The theoretical properties of the FFBS are well understood in both the asymptotic regimes of $N$ and $t$ \cite{doucmoulines, del2010forward, del2010backward, sylvain_bernoulli, paris,douc2014nonlinear}.
In particular, a Central Limit Theorem with an explicit expression of the asymptotic variance is established for any $h \in \bounded{t+1}$ under $\A{assp:B}{}$ in \cite[Theorem 8]{doucmoulines},
\begin{equation}
\label{eq:FFBSTCL}
\sqrt{N}\big( \smoothN{0:t}{t}{N}(h) - \smooth{0:t}{t}(h) \big) \dlim \gauss\big(0, \ffbsasymptvar{\FFBS}{0:t}{t}(h)\big)\eqsp,
\end{equation}
where 
\begin{equation}
    \label{eq:FFBSvar}
\ffbsasymptvar{\FFBS}{0:t}{t}(h) \eqdef
\sum_{s = 0}^t \frac{\pred{s}\big(\GG{s,t}\big[ g_t \big\{h - \smooth{0:t}{t}(h) \big\}\big]^2)}{\pred{s}(\Qmarg{s+1:t}[g_t])^2} 
\end{equation}
and $\GG{s,t}$ is the kernel that integrates $h$ forward and backward starting from $x_s$, i.e. \[ 
    \GG{s,t}[h](x_s) \eqdef \bwpath{s}\big[\Q{s+1:t}[h]\big](x_s)
    = \int h(x_{0:t}) \bwpath{s}(x_s, \rmd x_{0:s-1}) \Q{s+1:t}(x_s, \rmd x_{s+1:t}) \eqsp,\]
     for any $s \in [0:t]$ and $x_s \in \Xset$.

     Unlike the asymptotic variance of filtering algorithms, no estimator of \eqref{eq:FFBSvar} exists in the literature, even though the FFBS and its variants are of significant importance in marginal smoothing and parameter estimation in HMMs \cite{kantas2015particle}.
     In this section, we bridge this gap by providing an online estimator for additive functionals $h$ of the form 
     \begin{equation}
        \label{eq:additive}
    h_{0:t}(x_{0:t}) = \sum_{s = 0}^{t-1} \tilde{h}_s(x_s, x_{s+1}) \eqsp,
    \end{equation}
    where for $s \in [0:t-1]$, we assume that $\tilde{h}_s$ is bounded. For such functionals, the FFBS can be computed online with a $\bigo(N^2)$ time complexity per time step, i.e. whenever a new observation is processed. 
    For $0 \leq s < r \leq t$, we write $\tilde{h}_{s:r}(x_{s:r}) = \sum_{\ell = s}^{r-1} \tilde{h}_{\ell}(x_\ell, x_{\ell + 1})$.
    Expectations of functionals of the form \eqref{eq:additive} include marginal smoothing, pairwise marginal smoothing and the \emph{E}-step of the Expectation Maximization algorithm.

Before we derive our estimator, let us first recall why the FFBS can be indeed computed online in this case. For more details on the forward only implementation of the FFBS and its variants we refer the reader to \cite{douc2014nonlinear, paris}. For any $t > 0$ and  any additive functional $h_{0:t}$,
\begin{align*}
    \bwpath{t}[h_{0:t}](x_t) 
    & = \int \left\{\tilde{h}_{0:t-1}(x_{0:t-1}) + \tilde{h}_{t-1}(x_{t-1}, x_t)\right\} \bwker{t-1}(x_t, \rmd x_{t-1}) \bwpath{t-1}(x_{t-1}, \rmd x_{0:t-2}) \\
    & = \bwker{t-1}\big[\bwpath{t-1}[\tilde{h}_{0:t-1}] + \tilde{h}_{t-1}
    (., x_t) \big](x_t) \eqsp.
\end{align*}
Then, plugging in the particle approximations, we obtain the following recursion 
\begin{equation}
\label{eq:FFBSrecursion}
\bwpath{t}^N[h_{0:t}](x_t) = \sum_{i = 1}^N \frac{\weight{i}{t-1} 
m_t(\particle{i}{t-1}, x_t)}{\sum_{j = 1}^N \weight{j}{t-1} m_t(\particle{j}{t-1}, 
x_t)} \big\{ \bwpath{t-1}^N[\tilde{h}_{0:t-1}](\particle{i}{t-1}) + \tilde{h}_{t-1}(\particle{i}{t-1}, x_t)\big\} \eqsp,
\end{equation}
and then $\smoothN{0:t}{t}{N}(h_{0:t}) = \sum_{i = 1}^N \normweight{i}{t} \bwpath{t}^N[h_{0:t}](\particle{i}{t})$.  
Therefore, $\bwpath{t}^N[h_{0:t}]$ needs only to be estimated at the particle locations and smoothing estimates for additive functionals can be computed with the forward pass and has $\bigo(N^2)$ complexity per time step.

\subsection{Asymptotic variance estimator}
\label{sec:ffbs_est}
From now on we will assume that $h_{0:t}$
satisfies \eqref{eq:additive}. Our estimator is based on the following alternative 
expression of the asymptotic variance \eqref{eq:FFBSvar}
\begin{equation}
    \label{eq:reexpression}
    \ffbsasymptvar{\FFBS}{0:t}{t}(h) = \sum_{s = 0}^t \frac{\joint{s}(\boldone)\joint{s}\big(\GG{s,t}[g_t \{ h_{0:t} -
     \smooth{0:t}{t}(h_{0:t}) \}]^2)}{\joint{t+1}(\boldone)^2}\eqsp, 
\end{equation}
which is deduced 
using the definitions given in Section~\ref{sec:SMC}. This expression is motivated by Proposition~\ref{prop:identityQes} in which we express the numerators that appear in \eqref{eq:reexpression} in terms of expectations with respect to $\mumeasure{e_s,t}$. The proof is given in Section~\ref{proof:identityQes} of the \suppl.

\begin{proposition}
    \label{prop:identityQes}
For any $s \in [0:t]$ and any additive functional $h_{0:t} \in \measurable{t+1}$,
\begin{equation}
    \joint{s}(\boldone)\joint{s}\big(\GG{s,t}[h_{0:t}]^2) =  \mumeasure{e_s, t}\big( \big[\bwpath{s}[\tilde{h}_{0:s}] + \tilde{h}_{s:t}\big]^{\otimes 2}\big) \eqsp.
\end{equation}
\end{proposition}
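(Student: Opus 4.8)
The plan is to establish the identity by computing both sides and matching them term by term, using the interpretation of $\mumeasure{e_s,t}$ as the joint distribution of an auxiliary Feynman-Kac model given in \eqref{eq:defmu1}. First I would recall that $\mumeasure{e_s,t}$ corresponds to the bivariate chain \eqref{eq:bivariatechain} with $b = e_s$: the two coordinates $(X_u, X'_u)$ evolve as independent copies of the original chain for $u \neq s$, and at time $s$ they are forced to coincide ($X'_s = X_s$) since $(e_s)_s = 1$; moreover, since $(e_s)_u = 0$ for $u < s$, the bridge is broken before time $s$, so in fact the pair $(X_{0:s}, X'_{0:s})$ is a product of two independent runs of the original chain up to time $s-1$, glued at time $s$. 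This is precisely the structure already exploited in \eqref{eq:identityQes}, and I would start from that observation (or re-derive it) to get $\mumeasure{e_s,t}(F^{\otimes 2}) = \joint{s}(\boldone)\,\joint{s}\big((\text{something acting on the common } x_s)^2\big)$ for a suitable functional $F$ of the trajectory.

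Next I would unfold what $F = \bwpath{s}[\tilde h_{0:s}] + \tilde h_{s:t}$ gives when plugged into the bivariate model. The term $\tilde h_{s:t}(x_{s:t}) = \sum_{\ell=s}^{t-1}\tilde h_\ell(x_\ell,x_{\ell+1})$ only involves the coordinates from time $s$ onward, which under $\mumeasure{e_s,t}$ are the forward evolution of the chain weighted by the potentials $g_u^{\otimes 2}$ — i.e. governed by $\Q{s+1:t}$ in each coordinate. The term $\bwpath{s}[\tilde h_{0:s}]$ is a function of $x_s$ alone (it integrates out the past trajectory via the backward kernel), so it factors neatly out of the forward integration. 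Collecting, the conditional expectation of $F$ given the common value $x_s$ equals $\bwpath{s}[\tilde h_{0:s}](x_s) + \Q{s+1:t}[\tilde h_{s:t}](x_s)/(\text{normalization})$ — and I would recognize this combination, after accounting for the $g_t$ factor and the definition of $\GG{s,t}$, as exactly $\GG{s,t}[h_{0:t}](x_s)$. Here the key algebraic fact is the telescoping/additive decomposition $\GG{s,t}[h_{0:t}](x_s) = \bwpath{s}\big[\Q{s+1:t}[h_{0:t}]\big](x_s)$ together with $\Q{s+1:t}[\tilde h_{0:s}] = \Qmarg{s+1:t}[\boldone]\cdot \tilde h_{0:s}$-type relations that let one split $h_{0:t} = \tilde h_{0:s} + \tilde h_{s:t}$ inside the kernel; one carefully checks that the $\bwpath{s}$ acting on the $\tilde h_{0:s}$ piece and the $\Q{s+1:t}$ acting on the $\tilde h_{s:t}$ piece reassemble into the claimed expression.

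Finally, since under $\mumeasure{e_s,t}$ the two coordinates are conditionally independent given the shared $x_s$ (for indices $> s$) and the shared $x_s$ is distributed according to $\joint{s}$ (up to the $\joint{s}(\boldone)$ mass factor coming from $b_u = 0$ on $[0:s-1]$ contributing two independent copies but only one surviving the diagonal), the square $F^{\otimes 2}$ integrates to $\joint{s}(\boldone)\,\joint{s}\big(\GG{s,t}[h_{0:t}]^2\big)$, which is the left-hand side. I expect the main obstacle to be bookkeeping the normalizing constants and the exact role of the potentials: one must track how $g_s^{\otimes 2}$ at time $s$ becomes a single $g_s$ on the diagonal, how the product $\prod_{u=s}^{t-1} g_u^{\otimes 2}$ combines with the $g_t$ appearing in $\GG{s,t}$ versus the $g_t$ that is absent from the statement (note the proposition is stated without a $g_t$ factor, so I would double-check the indexing of the additive functional versus the filtering potentials), and confirm that the $\joint{s}(\boldone)$ prefactor emerges correctly rather than $\joint{s}(\boldone)^2$. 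Making the substitution $x_{0:s} \leftrightarrow$ the auxiliary chain rigorous — i.e. justifying $\mumeasure{e_s,t}(\,\cdot\,) = \joint{s}\otimes\joint{s}$ restricted to the diagonal at time $s$ then propagated by $\Q{s+1:t}^{\otimes 2}$ — is the crux, and everything else is routine manipulation of kernels.
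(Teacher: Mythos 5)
Your plan is correct and follows essentially the same route as the paper's proof: factor $\bwpath{s}[\tilde h_{0:s}]$ out as a function of $x_s$ alone so that $\GG{s,t}[h_{0:t}] = \Q{s+1:t}\big[\bwpath{s}[\tilde h_{0:s}] + \tilde h_{s:t}\big]$, then recognize $\mumeasure{e_s,t}$ as two independent $\joint{0:s}$-type trajectories glued at time $s$ and propagated forward by $\Q{s+1:t}$ in each coordinate, which yields exactly $\joint{s}(\boldone)\joint{s}(\GG{s,t}[h_{0:t}]^2)$. The bookkeeping worries you flag resolve immediately: there is no $g_t$ in $\GG{s,t}$, and the factor $g_s(x_s)^2$ on the diagonal is not reduced to a single $g_s$ but is absorbed as one $g_s(x_s)$ into each of the two copies of $\Q{s+1:t}$ emanating from the shared $x_s$.
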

By Theorem~\ref{thm:conv}, for any additive functional $h_{0:t}$ as in \eqref{eq:additive}, we have that $\muest{\BS}{e_s, t}([\bwpath{s}[\tilde{h}_{0:s}] + \tilde{h}_{s:t}]^{\otimes 2})$ is a consistent estimator
of $\mumeasure{e_s, t}([\bwpath{s}[\tilde{h}_{0:s}] + \tilde{h}_{s:t}]^{\otimes 2})$, but
$\bwpath{s}[h_{0:s}]$ is intractable and we only have access to its particle approximation $\bwpath{s}^N [h_{0:s}]$. Our
proposed estimator of the asymptotic variance \eqref{eq:FFBSTCL} is then
\begin{equation}
\label{eq:FFBSvarestim}
\ffbsasymptvar{N, \BS}{0:t}{t}(h_t) \eqdef \sum_{s = 0}^T 
\frac{\muest{\BS}{e_s, t}\big(\big[g_t \{ \bwpath{s}^N[\tilde{h}_{0:s}] + \tilde{h}_{s:t} - \smooth{0:t}{t}^{N}(h_{0:t}) \} \big]^{\otimes 2}\big)}{\joint{t+1}^N (\boldone)^2}\eqsp,
\end{equation}
where we have replaced $\smooth{0:t}{t}(h_t)$ by its FFBS estimator. Remark that Theorem~\ref{thm:conv} cannot be applied to $\muest{\BS}{e_s, t}\big(\big[g_t \{ \bwpath{s}^N[\tilde{h}_{0:s}] + \tilde{h}_{s:t} \big]^{\otimes 2} \big)$ because its proof relies on the fact that the function $h$ integrated by $\muest{\BS}{b,t}$ does not depend on the particles.

Theorem~\ref{thm:convFFBS} proved in Section~\ref*{proof:FFBS} of the supplementary material shows that weak consistency still holds under the assumptions of Theorem~\ref{thm:conv}. The proof proceeds in three steps. We first establish that for all $s > 0$ and additive functional $h_{0:s}$, $\bwpath{s}^N[h_{0:s}](x_s)$ converges $\pP$-a.s. to $\bwpath{s}[h_{0:s}](x_s)$ for any $x_s \in \Xset$. Then, we use it to show that at $t = s$, the distance in $\mathbf{L}_2$ between $\muest{\BS}{e_s, s}\big(\big[\bwpath{s}^N[h_{0:s}] c_s + \tilde{h}_{s} \big] \otimes \big[\bwpath{s}^N[f_{0:s}] d_s + \tilde{f}_{s} \big] \big)$ and the "idealized" consistent estimator $\muest{\BS}{e_s, s}\big(\big[\bwpath{s}[h_{0:s}] c_s + \tilde{h}_{s} \big] \otimes \big[\bwpath{s}[f_{0:s}] d_s + \tilde{f}_{s} \big] \big)$,  goes to 0. 
Finally, we extend the result to $t > s$ by induction, similarly to Theorem~\ref{thm:conv}.
\begin{thm}
    \label{thm:convFFBS}
Assume that $\A{assp:B}{assp:boundup}$ hold. For any $t \in \N$, $s \in [0:t]$, $(\tilde{h}_{s:t}, \tilde{f}_{s:t}) \in \bounded{t-s+1}^2$, $(c _t, d _t) \in \bounded{}^2$ and additive functionnals $(h_{0:s}, f_{0:s})$ \eqref{eq:additive},
\begin{multline}
    \label{eq:hypthmFFBS}
    \simplelim \big\| \muest{\BS}{e_s, t}\big( \big[\bwpath{s}^N[h_{0:s}]c _t + \tilde{h}_{s:t} \big] \otimes
    \big[\bwpath{s}^N[f_{0:s}]d _t + \tilde{f}_{s:t} \big] \big) \\
    -  \mumeasure{e_s, t}\big( \big[\bwpath{s}[h_{0:s}]c _t + \tilde{h}_{s:t} \big] \otimes
    \big[\bwpath{s}[f_{0:s}]d _t + \tilde{f}_{s:t} \big] \big) \big\|_2 = 0 \eqsp,
\end{multline}
and for any additive functional \eqref{eq:additive}, $\ffbsasymptvar{\BS}{0:t}{t}(h_{0:t})$ converges in probability to $\ffbsasymptvar{\FFBS}{0:t}{t}(h_{0:t})$.
\end{thm}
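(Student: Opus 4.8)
The plan is to establish \eqref{eq:hypthmFFBS} first, and then deduce the convergence of $\ffbsasymptvar{\BS}{0:t}{t}(h_{0:t})$ from it together with the re-expression \eqref{eq:reexpression} and (the identity underlying) Proposition~\ref{prop:identityQes}. The proof of \eqref{eq:hypthmFFBS} is organised in three steps, as announced. In \emph{Step~A}, I would prove by induction on $s$ that $\pred{s}^N\big(|\bwpath{s}^N[h_{0:s}] - \bwpath{s}[h_{0:s}]|\big) \plim 0$ for every bounded additive functional $h_{0:s}$; since the integrand is bounded by $2\sum_\ell |\tilde h_\ell|_\infty$ uniformly in $N$, this upgrades to convergence in $\mathbf{L}_2$. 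Using the recursion \eqref{eq:FFBSrecursion}, $\bwpath{s}^N[h_{0:s}](x_s) - \bwpath{s}[h_{0:s}](x_s)$ decomposes, at any point $x_s$, as the sum of a term of the form $\bwker{s-1}^N(x_s)\big[\bwpath{s-1}^N[\cdot] - \bwpath{s-1}[\cdot]\big]$ and the error $(\bwker{s-1}^N - \bwker{s-1})(x_s)[\cdot]$ of the empirical backward kernel at $x_s$ applied to a function bounded uniformly in $x_s$. For the first term, since conditionally on $\F{s-1}$ the time-$s$ particles are i.i.d.\ with law $\filter{s-1}^N\transition{s}$, a conditional second moment inequality replaces $\pred{s}^N\big(\bwker{s-1}^N[|\cdot|]\big)$ by $\filter{s-1}^N\transition{s}\big(\bwker{s-1}^N[|\cdot|]\big)$ up to an $\bigo(N^{-1/2})$ term in $\mathbf{L}_2$, and identity \eqref{lem:identity} of Lemma~\ref{lem:BSGTidentity} yields $\filter{s-1}^N\transition{s}\big(\bwker{s-1}^N[\varphi]\big) = \filter{s-1}^N(\varphi)$; combined with the induction hypothesis and the elementary bound $\filter{s-1}^N(|\cdot|) \le (\boundg/\pred{s-1}^N(g_{s-1}))\,\pred{s-1}^N(|\cdot|)$, this contribution vanishes. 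The second term is a particle-filter error of the empirical backward kernel applied to bounded functions; writing $\filter{s-1}\transition{s}$ via its $\nu$-density $x_s \mapsto \filter{s-1}(m_s(\cdot,x_s))$ and truncating at a level $\epsilon$ on this density, one gets on $\{\filter{s-1}(m_s(\cdot,\cdot)) \ge \epsilon\}$ an $\bigo_{\mathbf{L}_2}(\epsilon^{-1}N^{-1/2})$ bound (using the uniform $\bigo(N^{-1/2})$ rate of the particle filter on bounded functions under $\A{assp:boundup}{}$, integrated against the probability measure $\filter{s-1}\transition{s}$), while the complement carries a $\filter{s-1}^N\transition{s}$-mass bounded, after a law of large numbers argument, by $\int\1\{\filter{s-1}(m_s(\cdot,x_s)) < \epsilon\}\,\filter{s-1}(m_s(\cdot,x_s))\,\nu(\rmd x_s) \to 0$ as $\epsilon \to 0$; letting $N\to\infty$ then $\epsilon\to 0$ concludes Step~A. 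The pointwise statement $\bwpath{s}^N[h_{0:s}](x_s)\aslim\bwpath{s}[h_{0:s}](x_s)$ follows from the same recursion with $\pred{s}^N$ replaced by the Dirac mass at $x_s$.

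In \emph{Step~B} ($t=s$), since by \eqref{eq:update-tau-s} the matrix $\backsum^{e_s}_s$ is diagonal with $0\le\backsum^{e_s}_s(k,k)\le1$, formula \eqref{eq:Qexpr} gives, for the functions in \eqref{eq:hypthmFFBS}, $\big|\muest{\BS}{e_s,s}(g^N) - \muest{\BS}{e_s,s}(g)\big| \le (1+\bigo(N^{-1}))\,\joint{s}^N(\boldone)^2\,\pred{s}^N(|g^N - g|)$, where $g^N$ and $g$ are built respectively from $\bwpath{s}^N[\cdot]$ and $\bwpath{s}[\cdot]$; as $\joint{s}^N(\boldone)\le\boundg^{\,s}$ and $|g^N - g|\le C\big(|\bwpath{s}^N[h_{0:s}] - \bwpath{s}[h_{0:s}]| + |\bwpath{s}^N[f_{0:s}] - \bwpath{s}[f_{0:s}]|\big)$ by boundedness of the remaining factors, Step~A and bounded convergence give $\|\muest{\BS}{e_s,s}(g^N) - \muest{\BS}{e_s,s}(g)\|_2 \to 0$, while $g$ being a fixed bounded function, Theorem~\ref{thm:conv} gives $\muest{\BS}{e_s,s}(g)\to\mumeasure{e_s,s}(g)$ in $\mathbf{L}_2$. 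In \emph{Step~C} I would induct on $t \ge s$, the base case being Step~B. Since $(e_s)_t = 0$ for $t>s$, Proposition~\ref{prop:mu_expression}\,(i) gives $\pE\big[\muest{\BS}{e_s,t}(g^N_t)\mid\F{t-1}\big] = \muest{\BS}{e_s,t-1}\big(g^{\otimes 2}_{t-1}\bitransition{0}{t}[g^N_t]\big)$, and integrating out the time-$t$ coordinate against $\transition{t}$ and multiplying by $g^{\otimes 2}_{t-1}$ shows that $g^{\otimes 2}_{t-1}\bitransition{0}{t}[g^N_t]$ is again of the product form $\big[\bwpath{s}^N[h_{0:s}]\,c_{t-1} + \tilde h_{s:t-1}\big]\otimes\big[\bwpath{s}^N[f_{0:s}]\,d_{t-1} + \tilde f_{s:t-1}\big]$ with $c_{t-1} = g_{t-1}\transition{t}[c_t]\in\bounded{}$, $\tilde h_{s:t-1}\in\bounded{t-s}$ (and likewise for $f$), so the induction hypothesis applies to that $\muest{\BS}{e_s,t-1}$ and yields its $\mathbf{L}_2$ convergence to $\mumeasure{e_s,t-1}\big(g^{\otimes 2}_{t-1}\bitransition{0}{t}[g_t]\big) = \mumeasure{e_s,t}(g_t)$; the remaining fluctuation $\muest{\BS}{e_s,t}(g^N_t) - \pE[\muest{\BS}{e_s,t}(g^N_t)\mid\F{t-1}]$ is bounded in $\mathbf{L}_2$ exactly as in the proof of Theorem~\ref{thm:conv}, because conditionally on $\F{t-1}$ (with $t-1\ge s$) $\bwpath{s}^N[h_{0:s}]$ is measurable and bounded uniformly in $N$, so the conditional variance estimate of that proof applies verbatim. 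This gives \eqref{eq:hypthmFFBS} for all $t \ge s$.

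To conclude, fix a bounded additive $h_{0:t}$, set $c=\smooth{0:t}{t}(h_{0:t})$ and $c^N=\smooth{0:t}{t}^N(h_{0:t})$ (recall $c^N\plim c$). Expanding $\big[g_t\{\bwpath{s}^N[h_{0:s}] + \tilde h_{s:t} - c^N\}\big]^{\otimes 2}$ around $c$, the leading term is handled by \eqref{eq:hypthmFFBS} (applied with $c_t = d_t = g_t$, $f_{0:s} = h_{0:s}$, and $g_t(\tilde h_{s:t} - c)$ in place of the tail), and the cross and quadratic terms — bounded in probability by \eqref{eq:hypthmFFBS} and multiplied by $c^N - c$ or $(c^N-c)^2$ — vanish; dividing by $\joint{t+1}^N(\boldone)^2 \to \joint{t+1}(\boldone)^2 > 0$ shows that the $s$-th summand of \eqref{eq:FFBSvarestim} converges in probability to $\mumeasure{e_s,t}\big([g_t\{\bwpath{s}[h_{0:s}] + \tilde h_{s:t} - c\}]^{\otimes 2}\big)/\joint{t+1}(\boldone)^2$. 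Since $\GG{s,t}[g_t\{h_{0:t}-c\}](x_s) = \Q{s+1:t}\big[g_t\{\bwpath{s}[h_{0:s}] + \tilde h_{s:t} - c\}\big](x_s)$ — a direct consequence of the definitions of $\GG{s,t}$ and of $\bwpath{s}$ on additive functionals — the identity behind Proposition~\ref{prop:identityQes} (namely $\mumeasure{e_s,t}(\Psi^{\otimes 2}) = \joint{s}(\boldone)\joint{s}(\Q{s+1:t}[\Psi]^2)$ for $\Psi\in\measurable{t-s+1}$, the natural extension of \eqref{eq:identityQes}) identifies this limit with the $s$-th summand of \eqref{eq:reexpression}; summing over $s\in[0:t]$ gives $\ffbsasymptvar{\BS}{0:t}{t}(h_{0:t})\plim\ffbsasymptvar{\FFBS}{0:t}{t}(h_{0:t})$.

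I expect the main obstacle to be Step~A, and within it the control of the empirical backward-kernel error averaged over the particle cloud: under $\A{assp:B}{assp:boundup}$ there is no uniform lower bound on $\filter{s-1}(m_s(\cdot,x_s))$, so one cannot pass to suprema over $x_s$ and must instead exploit the absolute continuity of $\filter{s-1}\transition{s}$ with respect to $\nu$ through the $\epsilon$-truncation above, all the while keeping the $\filter{}$ and $\pred{}$ normalisations aligned along the induction — this bookkeeping is the delicate part.
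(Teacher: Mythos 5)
Your architecture coincides with the paper's own three-step outline (pointwise/averaged control of $\bwpath{s}^N[h_{0:s}]-\bwpath{s}[h_{0:s}]$, the base case $t=s$, induction on $t>s$ via Proposition~\ref{prop:mu_expression}(i) and the conditional variance bound of Theorem~\ref{thm:conv}, then bilinearity plus $\smooth{0:t}{t}^N(h)\plim\smooth{0:t}{t}(h)$ and Proposition~\ref{prop:identityQes} to conclude), and Steps~C and the final identification are essentially identical to the paper's. Where you genuinely diverge is in the key technical lemma and in the base case. The paper first proves \emph{pointwise} almost-sure convergence $\bwpath{s}^N[h_{0:s}](x)\aslim\bwpath{s}[h_{0:s}](x)$ by importing exponential deviation inequalities from the PaRIS paper (Theorem~\ref{thm:parishoeff}) together with a ratio lemma and Borel--Cantelli, and then converts this into $\simplelim\pE\big[\filter{s-1}^N\transition{s}\big[(\bwpath{s}^N[h_{0:s}]-\bwpath{s}[h_{0:s}])^4\big]\big]=0$ via the generalized dominated convergence theorem — it is precisely this GDCT step that absorbs the absence of a lower bound on $x\mapsto\filter{s-1}(\transitiondens{s}(\cdot,x))$, by dominating the error by the (random) density itself. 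You instead run a direct induction on $s$ for the particle-averaged error $\pred{s}^N(|\bwpath{s}^N[h_{0:s}]-\bwpath{s}[h_{0:s}]|)$, splitting via the recursion \eqref{eq:FFBSrecursion} and handling the degenerate-density issue by an $\epsilon$-truncation; this is self-contained (no appeal to external Hoeffding bounds) but compresses several layers — a lower bound on the \emph{empirical} denominator $\filter{s-1}^N(\transitiondens{s}(\cdot,x))$ on the good set, the exchange of the limits $N\to\infty$ and $\epsilon\to 0$, and the passage from $\pred{s}^N$ of the bad-set indicator to its deterministic counterpart — each of which, when written out, amounts to a dominated-convergence argument of the same nature as the paper's GDCT step. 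Your base case is also cheaper than the paper's: you exploit $\backsum^{e_s}_s(k,\ell)\le\1_{k=\ell}$ to reduce $|\muest{\BS}{e_s,s}(g^N)-\muest{\BS}{e_s,s}(g)|$ to an $\mathbf{L}_1$-type particle average and conclude by boundedness, whereas the paper goes through Cauchy--Schwarz to $\mathbf{L}_4$ norms at the particle locations (which is why it needs the fourth-moment statement of Proposition~\ref{prop:filterTcomposition}); since your Step~A only delivers an averaged $\mathbf{L}_1$ control, your weaker Step~A is exactly matched to your weaker Step~B, and the two are consistent. I see no fatal gap, only that Step~A as sketched would need the bookkeeping you yourself flag to be carried out in full; the paper's GDCT route is the more economical way to do that bookkeeping.
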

\subsection{Algorithm for marginal smoothing}
\label{subsec:margsmoothing}
We now provide an algorithm for the case $h_\ell: x_{0:t} \mapsto h_\ell(x_\ell)$ known 
as the marginal smoothing problem. For such functions \eqref{eq:FFBSvarestim} is defined for $t \geq \ell$ and simplifies to 
\begin{equation}
\begin{aligned}
    \label{eq:FFBSmarg}
    & \ffbsasymptvar{N,\BS}{\ell}{t}(h_\ell) \eqdef \frac{1}{\joint{t+1}^N (\boldone)^2}\bigg\{ \sum_{s = 0}^{\ell} \muest{\BS}{e_s, t}\big(\big[ g_t \{ h_\ell- 
    \smooth{\ell}{t}^{N}(h_\ell) \} \big]^{\otimes 2}\big) \\
    &\hspace{5cm} + \sum_{s = \ell+1}^t
    \muest{\BS}{e_s, t}\big(\big[g_t \{ \bwpath{s}^N[h_{\ell}] - 
    \smooth{\ell}{t}^{N}(h_\ell) \} \big]^{\otimes 2}\big) \bigg\} \eqsp.
    \end{aligned}
\end{equation}
When $\ell = t$ we recover the term by term estimator of the filter which is consistent with the fact that $\smooth{t}{t}^N(h) = \filter{t}^N(h)$. Using the bilinearity of $\muest{\BS}{b, t}$  yields
\begin{equation}
    \label{eq:varupdate}
    \ffbsasymptvar{\BS}{\ell}{t}(h_\ell) = R^\ell _{1,t} - \smooth{\ell}{t}^{N}(h_\ell) R^\ell _{2,t} + \smooth{\ell}{t}^{N}(h_\ell)^2 R_t \eqsp,
\end{equation}
where $R_t  \eqdef \sum_{s = 0}^t \muest{\BS}{e_s, t}(g_t ^{\otimes 2})$ and
\begin{align*}
    \begin{cases}
        R^1 _{\ell,t} & \eqdef \sum_{s = 0}^{\ell} \muest{\BS}{e_s, t}\big( [g_t h_\ell]^{\otimes 2} \big) + \sum_{s = \ell+1}^t \muest{\BS}{e_s, t}(\big[g_t \bwpath{s}^N [h_\ell] \big]^{\otimes 2}) \eqsp,  \\
        R^2 _{\ell,t} & \eqdef \big\{ \sum_{s = 0}^{\ell} \muest{\BS}{e_s, t}(g_t h_\ell \otimes g_t) + \muest{\BS}{e_s, t}(g_t \otimes g_t h_\ell)\big\} \\
        & \hspace{3cm} + \big\{ \sum_{s = \ell+1}^{t} \muest{\BS}{e_s, t}(g_t \bwpath{s}^N [h_\ell] \otimes g_t) + \muest{\BS}{e_s, t}(g_t \otimes g_t \bwpath{s}^N [h_\ell]) \big\} \eqsp.
    \end{cases}
\end{align*}
Mirroring \eqref{eq:update-tau-s}, define for any $t \in \N$, $n \in [0:t]$ and $f_n : x_{0:t}, x' _{0:t} \mapsto f(x_n, x' _n)$ the random variable 
\begin{equation}
    \backsum^{e_s} _t [f_n](K^1 _t, K^2 _t) \eqdef \pE \big[ \intersect{e_s, t}{K^1 _{0:t}, K^2 _{0:t}} f _n(\particle{K^1 _n}{n}, \particle{K^2 _n}{n}) \big| \F{t-1}, K^1 _t, K^2 _t \big]
\end{equation}
and also write $S^1 _{t, t}(K^1 _t, K^2 _t) = S_t(K^1 _t, K^2 _t) h^{\otimes 2} _t (\particle{K^1 _t}{t}, \particle{K^2 _t}{t})$, $S^2 _{t, t}(K^1 _t, K^2 _t) =S_t(K^1 _t, K^2 _t) h^{\oplus 2} _t (\particle{K^1 _t}{t}, \particle{K^2 _t}{t})$ and for any $t > \ell$,
\begin{align*}
    S^1 _{\ell, t}(K^1 _t, K^2 _t) &= \sum_{s = 0}^\ell \backsum^{e_s} _t [h^{\otimes 2} _\ell](K^1 _t, K^2 _t) + \sum_{s = \ell + 1}^t \backsum^{e_s} _t \big[\bwpath{s}^N [h_\ell]^{\otimes 2} \big](K^1 _t, K^2 _t),\\
    S^2 _{\ell, t}(K^1 _t, K^2 _t) &= \sum_{s = 0}^\ell \backsum^{e_s} _t [h^{\oplus 2} _\ell](K^1 _t, K^2 _t) + \sum_{s = \ell + 1}^t \backsum^{e_s} _t \big[\bwpath{s}^N [h_\ell]^{\oplus 2} \big](K^1 _t, K^2 _t),
\end{align*}

where for any $f_\ell$, $f^{\oplus 2} _\ell: x_\ell, x' _\ell \mapsto f_\ell(x_\ell) + f_\ell(x' _\ell)$ and $S_\ell$ is defined in \eqref{eq:S}. Applying the tower property we get that 
\begin{equation*}
    \ffbsasymptvar{\BS}{\ell}{t}(h_t) = N \left( \frac{N}{N-1} \right)^t \sum_{i, j \in [N]^2} \normweight{i}{t} \normweight{j}{t} \big\{ S^1 _{\ell, t}(i,j) - \smooth{\ell}{t}^{N}(h_\ell) S^2 _{\ell,t}(i,j) + \smooth{\ell}{t}^{N}(h_\ell)^2 S_t(i,j)\big\} \eqsp.
\end{equation*}
The quantities $S^1 _{\ell, t+1}$, $S^2 _{\ell, t+1}$ may be updated online using the following recursions which are again obtained by applying the tower property
\begin{multline}
    \label{eq:updateS1}
    S^1 _{\ell, t+1} (i,j) \eqdef \backsum^{e _{t+1}} _{t+1}(i,j) \bwpath{t+1}^N[h_\ell](\particle{i}{t+1}) \bwpath{t+1}^N[h_\ell](\particle{j}{t+1}) \\ + \1_{i \neq j} \sum_{m,n \in [N]^2} \beta^\BS _{t+1}(i,m) \beta^\BS _{t+1}(j,n) S^1 _{\ell, t}(m,n) \eqsp,
\end{multline}
and 
\begin{multline}
    \label{eq:updateS2}
    S^2 _{\ell,t+1}(i,j) \eqdef \backsum^{e _{t+1}} _{t+1}(i,j) \big\{ \bwpath{t+1}^N[h_\ell](\particle{i}{t+1}) + \bwpath{t+1}^N[h_\ell](\particle{j}{t+1}) \big\} \\ + \1_{i \neq j}\sum_{m,n \in [N]^2} \beta^\BS _{t+1}(i,m) \beta^\BS _{t+1}(j,n) S^2 _{\ell, t}(m,n)  \eqsp.
\end{multline}
The updates of 
 $S^1 _{\ell, t+1} (i,j)$ and 
of  $S^2 _{\ell, t+1} (i,j)$ are thus similiar to that of $S_t$ in \eqref{eq:updatesumes}. The computation
of the variance estimator is described in Algorithm~\ref{alg:algFFBS}.
\begin{algorithm}

    \caption{Update at step $t+1$ of the variance estimator for marginal smoothing}
    \label{alg:algFFBS}
    \begin{algorithmic}
    \Require $\bm{\mathcal{W}}^{1:N} _{t+1}, \bm{\mathcal{W}}^{1:N} _t, \bm{\beta}^\BS _{t+1},
     \bwpath{t+1}^N [h_\ell], \bm{\backsum}^\zero _t, \bm{S}^1 _{\ell,t}, \bm{S}^2 _{\ell,t}, \smooth{\ell}{t+1}^{N}(h_\ell)$ 
    \State{Compute $\bm{\overline{\backsum}}^{e_{t+1}} _{t+1} = \bm\beta^\BS _{t+1} \bm\backsum^\zero _t \bm{\mathcal{W}} _t$, $\quad \bm{\overline{\backsum}}^{\zero} _{t+1} = \bm\beta^\BS _{t+1} \bm\backsum^\zero _t \bm\beta^{\BS \top} _{t+1}$, $\quad \widetilde{\bm{S}}_{t+1} = \bm\beta^\BS _{t+1}  \bm{S}_{t} \bm\beta^{\BS \top} _{t+1}$}
    \State{Set $\bm{\backsum}^{e _{t+1}} _{t+1} = \mathrm{Diag}(\bm{\overline{\backsum}}^{e_{t+1}} _{t+1})$, $\quad \bm{\backsum}^\zero _{t+1} = \bm{\overline{\backsum}}^{\zero} _{t+1} - \mathrm{Diag}(\bm{\overline{\backsum}}^{\zero} _{t+1})$, $\quad \bm{S}_{t+1} = \widetilde{\bm{S}}_{t+1} - \mathrm{Diag}(\widetilde{\bm{S}}_{t+1}) + \backsum^{e_{t+1}} _{t+1}$}
    \For{i $\in \{1,2\}$}
    \State{Compute $\bm{\widetilde{S}}^i _{\ell,t+1} = \bm\beta^\BS _{t+1}\bm{S}^i _{\ell,t} \bm\beta^{\BS \top} _{t+1}$ \eqsp.}
    \State{Set $\bm{\widetilde{S}}^i _{\ell,t+1} = \bm{\widetilde{S}}^i _{t+1} - \mathrm{Diag}(\bm{\widetilde{S}}^i _{t+1})$}
    \EndFor
    \State{Set $\bm{S}^1 _{\ell,t+1} = \bm{\widetilde{S}}^1 _{\ell,t+1} + \bm\backsum^{e_{t+1}} _{t+1} \odot \big[\bwpath{t+1}[h_\ell] \bwpath{t+1}[h _\ell]^\top ]$}
    \State{Set $\bm{S}^2 _{\ell,t+1} = \bm{\widetilde{S}}^2 _{\ell,t+1} + \bm\backsum^{e_{t+1}} _{t+1} \odot \big[\bwpath{t+1}[h_\ell] + \bwpath{t+1}[h _\ell]^\top ]$}
    \State{Set $\overline{\bm{S}}_{\ell, t+1} = \bm{S}^1 _{\ell,t+1} - \smooth{\ell}{t+1}^{N}(h_\ell) \bm{S}^2 _{\ell,t+1} + \smooth{\ell}{t+1}^{N}(h_\ell)^2\bm{S}_{t+1}$}
    \\
    \Return $- N^{t+2} / (N-1)^{t+1} \sum_{i, j \in [N]^2} \normweight{i}{t+1} \normweight{j}{t+1} \overline{S} _{\ell, t+1}(i,j) , \backsum^\zero _{t+1}, \bm{S}^1 _{\ell,t+1}, \bm{S}^2 _{\ell,t+1}, \bm{S}_{t+1}$.
    \end{algorithmic}
    \end{algorithm}

\section{Numerical simulations}
\label{sec:experiments}
We now demonstrate our estimators on particle filtering and smoothing examples in HMMs (see Ex. \ref{ex:hmm}). We assume in this section that $\Xset = \Yset = \R$ and that the dominating measure is the Lebesgue measure. The model considered is the \emph{stochastic volatility model} with, for all $n\geq 1$,
\begin{equation}
    \label{model:stovol}
    X_{n+1}  = \varphi X_n +  \sigma U_{n+1} \quad\mathrm{and}\quad 
    Y_n  = \beta \exp(X_n / 2) V_n \eqsp,
\end{equation}
with $(\varphi, \beta, \sigma) = (.975, .641, .165)$, 
 $\{U_n\}_{n \in \N}$ and  $\{V_n\} _{n \in \N}$ 
 are two sequences of independent standard Gaussian noises and $U_n$ is independent of $V_m$ for all 
 $(n,m) \in \N^2$.
 The state process $\{X_n\}_{n \in \N}$ is initialized with a Gaussian distribution with  zero mean and variance $\sigma^2 / (1 - \varphi^2)$. These are the exact values and initialization used in \cite{olssonvar}. 
 The assumptions on the model under which \cite{olssonvar} conduct their theoretical analysis and $\A{assp:A}{assp:boundup}$ are satisfied for this model. All the simulations are run on GPU and the implementations using matrix operations are available at \url{https://github.com/yazidjanati/asymptoticvariance}. 
\subsection{Asymptotic variance of the predictor} We are interested in the estimation of the asymptotic variance of the predictor $\pred{t}^N (\rm Id)$ \emph{at each time step} $t$. The estimator is given in Section~\ref*{subsec:predfiltervariance} of the  \suppl.
We use synthetic datasets sampled from \eqref{model:stovol}. The real asymptotic variances are intractable and we estimate them by repeating independently and a thousand times the computation of each predictor mean $\pred{t}^N(\rm Id)$ with $N = 10000$ and then multiplying the sample variance by $N$.

We first investigate how the three variance estimators based on backward sampling $\asymptvarestim{\eta,t}{\BS}$, $\parisasymptvar{\eta,t}{\BS}$ and $\tbtasymptvarestim{\eta,t}{\BS}$ behave in terms of computational time, bias and variance. The $\paris$ estimator is used with $M = 3$ without rejection sampling. For this first experiment, we sampled 750 observations from \eqref{model:stovol} and ran $50$ particle filters with $N = 3000$ from which we obtained 50 replicates of each asymptotic variance estimate. The results are reported in Figure~\ref{fig:comparison3VBS}. 
The three estimators exhibit approximately the same variance but the term by term version becomes slightly more biased at $t$ increases. Strikingly, the $\paris$ estimator $\parisasymptvar{\eta,t}{\BS}$ behaves similarly to $\asymptvarestim{\eta,t}{\BS}$ with a much lower computational time and complexity as can be seen on the left plot of Figure~\ref*{fig:runtime} in the supplementary material.  

We now compare $\parisasymptvar{\eta,t}{\BS}$ with $M = 3$ to \cite{chan,leewhiteley,olssonvar} on two different observation records of different length.
For the lag size parameter $\lambda$, we found that $\lambda = 20$ has the best bias-variance trade-off by comparing the obtained fixed lag estimates with the crude asymptotic variance estimator. 
Note that in realistic situations choosing the right $\lambda$ is non trivial (besides the strong mixing case, as argued in \cite{olssonvar}) and for this reason we conducted the experiments with two additional lag values, $\lambda \in \{100,200\}$. For moderately long observation records $(t \in [750])$ and with $N = 3000$, $\parisasymptvar{\eta, t}{\BS}$ compares favorably in terms of bias-variance trade-off with the best lagged estimator and even has similar computational time on GPU. The results are reported in Figure~\ref{fig:vbsvslag}. 
The five different estimators are computed with the same particle cloud and replicated 50 times. As expected, when the lag is increased the fixed lag estimates exhibit more variance because of the particle degeneracy. In the extreme case where the lag is set to $750$ (CLE) bias and variance both increase significantly, as showcased in the fifth plot.
\begin{figure}
    \centering
    \includegraphics[width = 12cm]{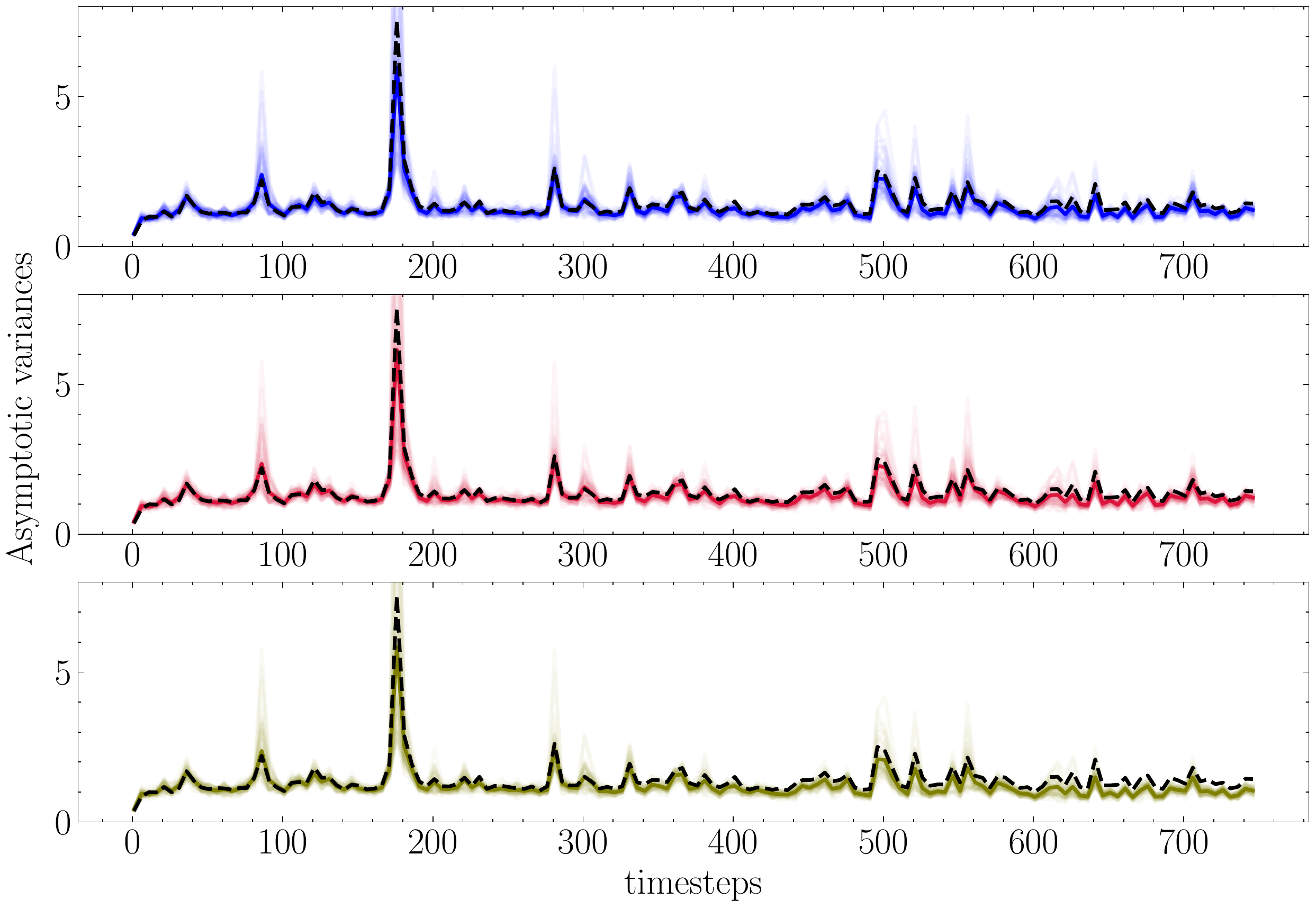}
    \caption{Long-term behavior of $\asymptvarestim{\eta,t}{\BS}$ (top), $\parisasymptvar{\eta,t}{\BS}$ with $M = 3$ (middle) and $\tbtasymptvarestim{\eta,t}{\BS}$ (bottom). The black dashed line is the asymptotic variance estimated using brute force. The number of particles is set to $N = 2000$. }
    \label{fig:comparison3VBS}
\end{figure}

\begin{figure}
    \centering
    \includegraphics[width = 12cm]{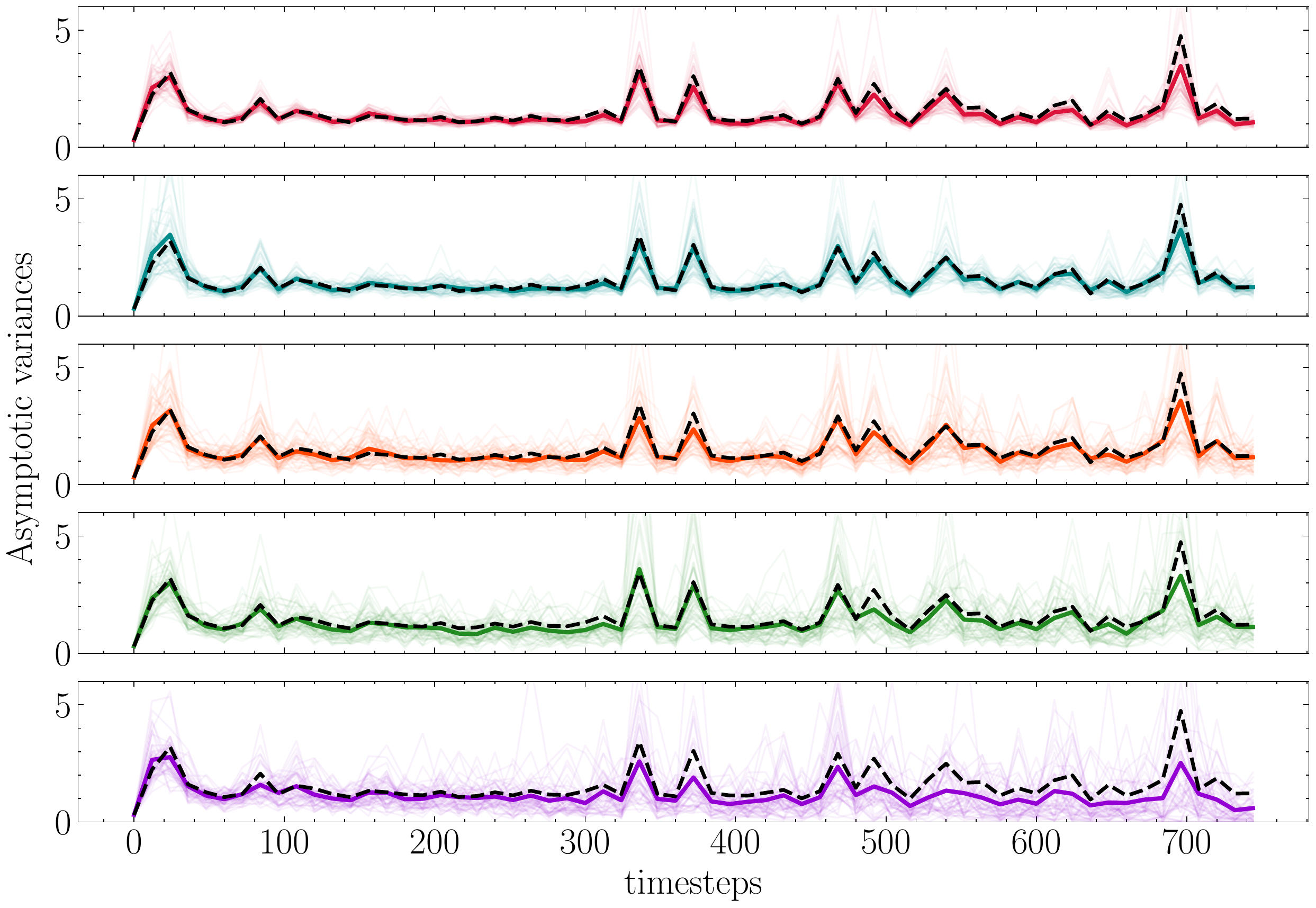}
    \caption{Long-term behavior of the asymptotic variance estimators up to $t = 750$. From top to bottom: $\paris$ version of $\asymptvarestim{\eta, t}{\BS}$ with $M = 3$, lagged estimators with (in order) $\lambda \in \{20,100,200,750\}$. The case $\lambda = 750$ corresponds to the CLE estimator. For each estimator, the blurred colored lines represent each run out of fifty runs and solid colored lines correspond to their average. The black dashed line is the asymptotic variance obtained by brute force. The number of particles $N$ is set to $2000$.
    }
    \label{fig:vbsvslag}
\end{figure}
\begin{figure}
    \centering
    \includegraphics[width = 13cm]{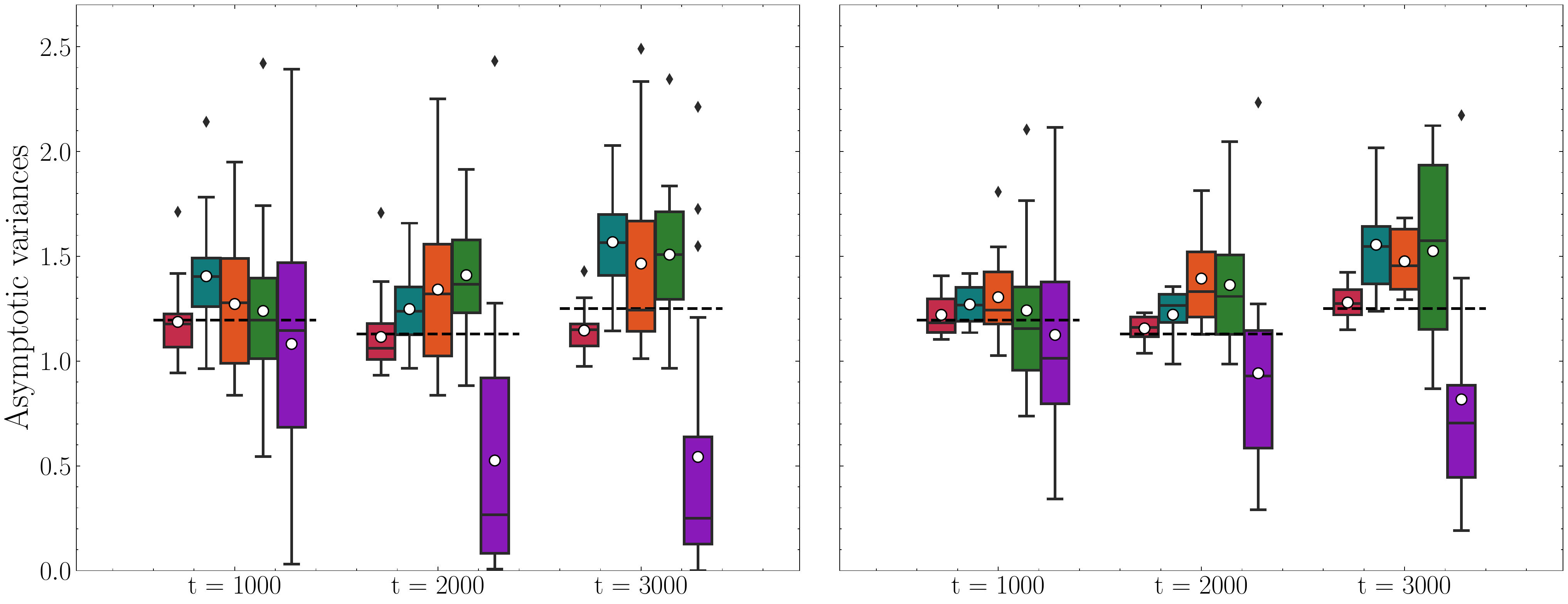}
     \caption{Long-term behavior of the asymptotic variance estimates up to $t = 3000$. White dots represent the average of the asymptotic variance estimates of each algorithm. The dashed black lines correspond to the asymptotic variances estimated by brute force. $N$ is set to $5000$ on the left boxplot and $10000$ on the right one. The boxplots at each time step from left to right are: $\parisasymptvar{\eta, t}{\BS}$ with $M = 3$ and then the lagged CLEs with $\lambda \in \{20,100,200,3000\}$.}
    \label{fig:boxplots}
\end{figure}
For the longer time horizon $t \in [3000]$ we set $N = 5000$ and picked three time steps in order to monitor the bias and variance closely, see Figure~\ref{fig:boxplots}. The variance of $\parisasymptvar{\eta,t}{\BS}$ remains steady while the bias increases gradually but slowly. This is attributed to the fact that our estimator is a ratio of two estimators with increasing bias and variance. Nonetheless, our estimator remains competitive with the best fixed lag estimator. Doubling the number of particles decreases both bias and variance, as highlighted by the  right plot. The computational cost of $\parisasymptvar{\eta,t}{\BS}$ is approximately twice larger than that of the genealogy tracing estimators when $N = 5000$ as shown in Figure~\ref*{fig:runtime} in the supplementary material. However, we are able to maintain a small variance and bias without having to tune any other parameter besides the sample size. In more complicated models or realistic scenarios, we do not know in advance which lag size is suitable and using an inappropriate lag might yield poor estimates. 
We further investigate the stability of our estimator with respect to $t$ by comparing
\[ D^\BS _N(t) \eqdef \sum_{i, j \in [N]^2} \backsum^\zero _t(i,j)  / N(N-1), \quad D^\GT _N(t) \eqdef \sum_{i, j \in [N]^2} \1_{\eve^i _{t,0} \neq \eve^j _{t,0}} / N(N-1) \eqsp,\]
which are central in the expression of the variance estimators (see Remark~\ref{rem:stability}). We also compare $E^\BS _N(t) \eqdef \big| (\mathcal{V}^{N, \BS} _{\eta, t}(\rm Id) / \asymptvar{t}(\rm Id)) -1 \big|$ and $E^\GT _N(t)$ which is defined in an analogous way.
For the CLE \eqref{eq:CLE}, although it is expected to collapse to $0$ after $\bigo(N)$ timesteps following  \cite{koskela2020asymptotic}, the estimator starts to exhibit high bias and variance much before  as the set of time $0$ ancestors depletes at a fast rate. This is illustrated on the left plot of Figure~\ref{fig:time_dep} where we fix $N$ to $1000$ and vary $t$ between $0$ and $3000$. We see that $D^\GT _N(t)$ decreases much faster than $D^\BS _N(t)$ and this in turn translates into longer stability for our estimator as can be seen on the right plot of the same figure. 

\begin{figure}
    \centering
    \includegraphics[width = 14cm]{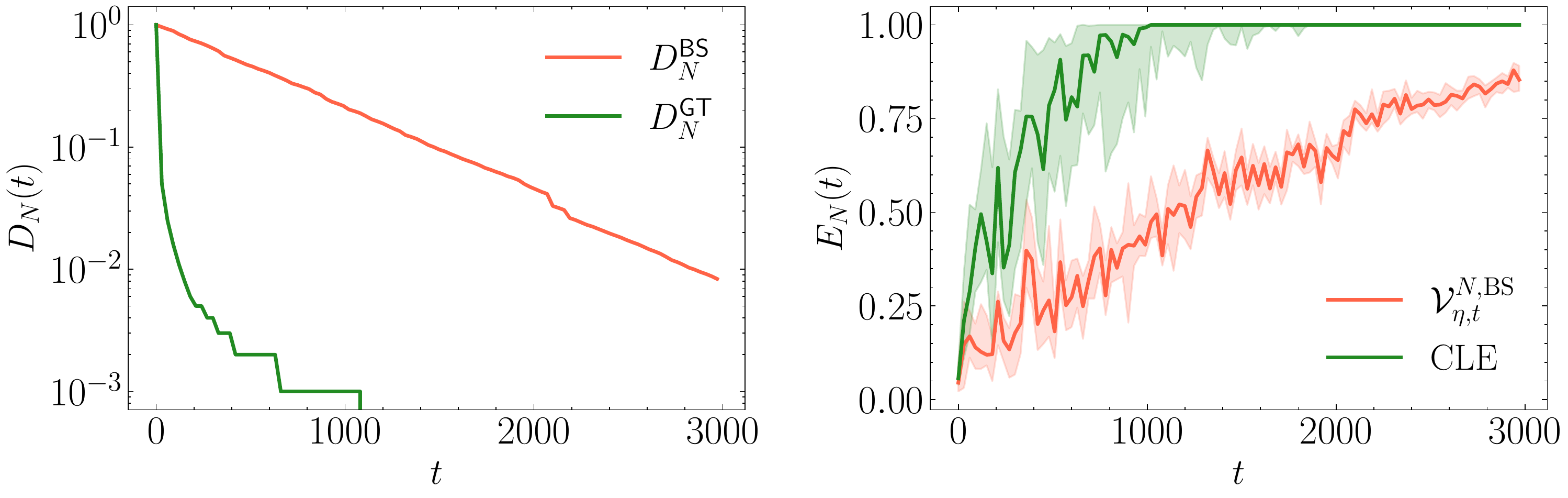}
    \caption{Dependency on the time $t$ of the variance estimators. The right plot displays the empirical error $E_N(t)$ for both $\BS$ and $\GT$ with $N$ fixed to $1000$. We display the median and the interquartile range over $30$ runs. The left plot displays the median of $D_N(t)$ associated with the $\BS$ and $\GT$ variance estimators used on the right plot.}
    \label{fig:time_dep}
\end{figure}

\subsection{Asymptotic variance of the smoother}
Here we are interested in the estimation of the asymptotic variance associated to the  FFBS estimates of the marginal means $\smooth{\ell}{t}(\mathrm{Id})$ with $\ell$ fixed and $t \geq \ell$ varying using Algorithm~\ref{alg:algFFBS}. For this example we sampled four different observation records of length $160$ each and $\ell$ is set to $100$. 
The real asymptotic variances of each $\smooth{\ell}{t}(\mathrm{Id})$ are intractable and they are estimated using $1000$ independent replicates of the marginal means $\smooth{\ell}{t}^N(\mathrm{Id})$ with $N = 10000$. We then multiply the obtained sample variance by $N$. The results are reported in Figure~\ref{fig:ffbs}. As expected, the crude estimates of the asymptotic variances all stagnate after some time $t$ due to the incoming observations becoming less and less informative as $t$ grows and thus no longer influencing the value of $\smooth{\ell}{t}(\mathrm{Id})$.

The estimator proposed in \ref{subsec:margsmoothing} captures well the behavior of the asymptotic variance with little variance and also stagnates at the same time. We observed in our experiments that, in comparison with the variance estimators of filtering algorithms, for this estimator to provide good performance more samples are required for shorter time horizons. 
Nonetheless, the increased computational time incurred by the increase of the number of particles is to be compared with the time it takes to compute the crude variance estimates up to $t = 160$, which is about $1$ hour when running on GPU. In comparison, one run of our estimator takes $3$ minutes. 

\section*{Acknowledgments}
The authors would like to thank the referees for the useful remarks that helped improve the paper. The first author also acknowledges the support of Institut Mines-Telecom through the Futur \& Ruptures PhD scholarship. 

\begin{figure}
    \centering
    \includegraphics[width = 12cm]{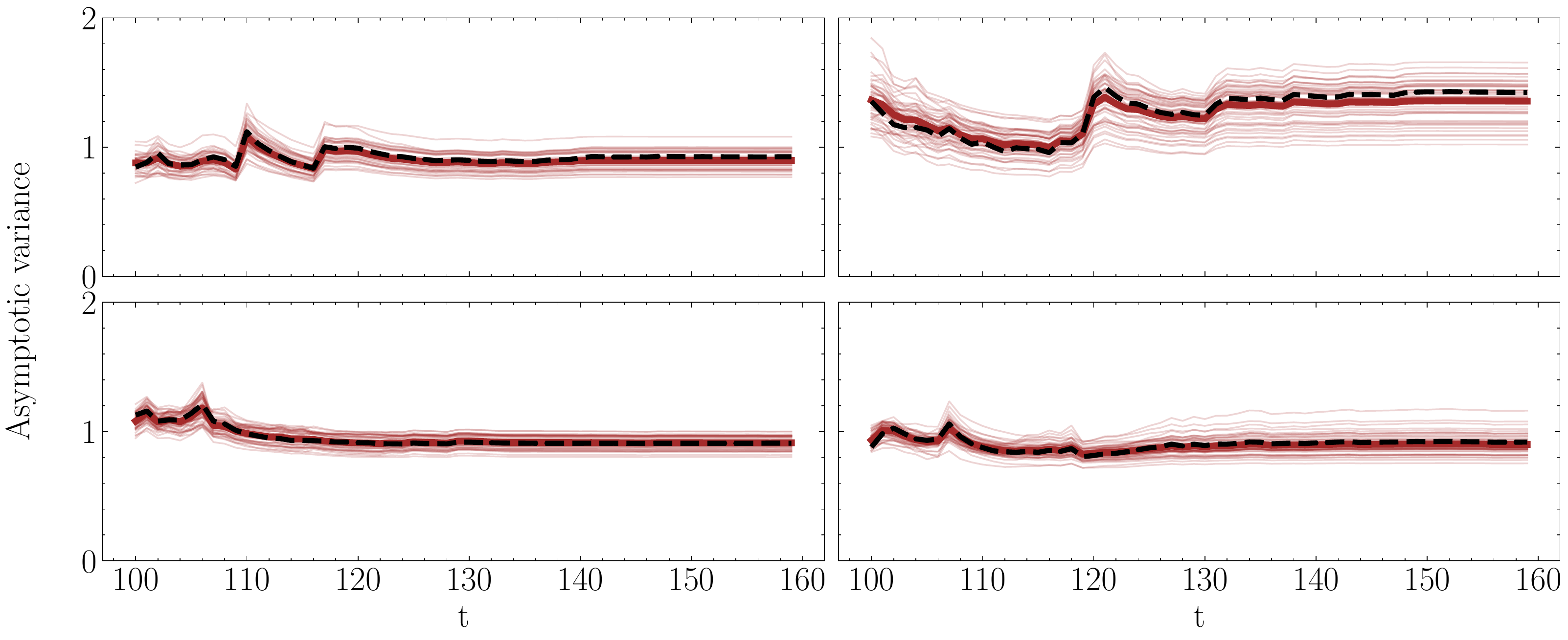}
    \caption{Asymptotic variance estimates for four different observation records of the marginal mean $\smooth{100}{t}^N(\mathrm{Id})$ where $t \in [100,160]$.  The blurred brown lines on the left plot represent 50 runs and the solid brown line their average. The black dashed line is the crude variance estimator. The number of particles $N$ is set to $5000$.}
    \label{fig:ffbs}
\end{figure}
\bibliographystyle{apalike}
\bibliography{bibliography}
\renewcommand{\thesection}{S\arabic{section}}
\renewcommand{\theequation}{S\arabic{section}.\arabic{equation}}

\appendix
\section{Proofs}
In this section we provide the proofs of Propositions~\ref*{prop:mu_expression}, \ref*{prop:identityQes} and Theorems~\ref*{thm:conv}, \ref*{corr:paris}, \ref*{thm:parisvar} and \ref*{thm:convFFBS}. The various Propositions and Lemmata used are stated and proved in Section~\ref{apdx:proofconv}-\ref{sec:supportparis}. Other intermediary technical results are provided in Section~\ref{sec:techres}. Equations, lemmata, propositions and theorems referred to without the prefix S are given in the main text.
\subsection{Preliminaries}

In order to simplify the notations, in what follows we write 
 \[\muest{\BS}{b,t}(h) =
 \sum_{k^{1:2} _{0:t} \in [N]^{2(t+1)}} \bigprod{b, t} (k^1 _{0:t}, k^2 _{0:t}) h(\particletraj{k^1}{t}, \particletraj{k^2}{t}) 
 \eqsp,\] where 
    \begin{equation}
       \label{def:bigprod}
        \bigprod{b, t} (k^1 _{0:t}, k^2 _{0:t}) \eqdef 
        \prod_{s = 0}^{t} N^{b_s} \bigg( \frac{N}{N-1}\bigg)^{1- b_s}
        \joint{t}^N(\boldone)^2 \intersect{b,t}{k^1 _{0:t}, k^2 _{0:t}} \Lambda^\BS _{1,t}(k^1 _{0:t}) \Lambda^\BS _{2,t}(k^1_{0:t}; k^2 _{0:t}) \eqsp.
    \end{equation}
    By \eqref{def:biglambda},
    \begin{align*}
       \Lambda^\BS _{1,t}(k^1 _{0:t}) & = \beta^\BS _t(k^1 _t, k^1 _{t-1}) \Lambda^\BS_{1,t-1}(k^1 _{0:t-1}) \eqsp, \\
     \Lambda^\BS _{2,t}(k^1 _{0:t}; k^2 _{0:t}) & = \big\{ \beta^\BS _t(k^2 _t, k^2 _{t-1}) \1_{k^2 _t \neq k^1 _t} + \normweight{k^2 _{t-1}}{t-1} \1_{k^2 _t = k^1 _t}\big\} \Lambda^\BS _{2,t-1}(k^1 _{0:t-1}; k^2 _{0:t-1}) \eqsp.
    \end{align*}
    and by \eqref{eq:jointdef} we have that $\joint{t}^N(\boldone) = \joint{t-1}^N(\boldone) N^{-1} \Omega _{t-1}$, hence, using \eqref{eq:mu_estimator} $\muest{\BS}{b,t}(h)$ becomes 
    \begin{align}
        \label{eq:alternativeQ}
        \muest{\BS}{b,t}(h) & = \sum_{k^{1:2} _{0:t} \in [N]^{2(t+1)}} \bigprod{b,t}(k^1 _{0:t}, k^2 _{0:t}) h(\particletraj{k^1}{t}, \particletraj{k^2}{t})\\
        & = \sum_{k^{1:2} _{0:t-1} \in [N]^{2t}} \bigprod{b,t-1}(k^1 _{0:t-1}, k^2 _{0:t-1})\frac{\Omega^2 _{t-1}}{N^2} \sum_{k^{1:2} _t \in [N]^2} \beta^\BS _t(k^1 _t, k^1 _{t-1})  \nonumber \\
        & \times \bigg[ \frac{N}{N-1} \beta^\BS _t(k^2 _t, k^2 _{t-1})\1_{k^1 _t \neq k^2 _t, b_t = 0} + N \normweight{k^2 _{t-1}}{t-1} \1_{k^1 _t = k^2 _t, b_t = 1}\bigg] h(\particletraj{k^1}{t}, \particletraj{k^2}{t}) \eqsp. \nonumber
    \end{align}
\subsection{Proof of Proposition~\ref*{prop:mu_expression}}
\label{proof:unbiasedest}
\begin{proof}[Proof of Lemma~\ref*{lem:BSGTidentity}]
    \label{proof:BSGTidentity}
    By $\As{assp:A}{}$, for any $(x,y) \in \Xset^2$, 
    \begin{align*}
        \beta^N _t(x,y) \filter{t-1}^N \transition{t}(\rmd x) & = \beta^N _t(x,y) \sum_{i = 1}^N \normweight{i}{t-1} \transition{t}(\particle{i}{t-1}, \rmd x) \\
        & = \frac{g_{t-1}(y) \transitiondens{t}(y,x)}{\sum_{i = 1}^N \weight{i}{t-1} \transitiondens{t}(\particle{i}{t-1}, x)} \sum_{i = 1}^N \normweight{i}{t-1} \transitiondens{t}(\particle{i}{t-1}, x) \nu(\rmd x) \\
        & = \frac{g_{t-1}(y)}{\Omega _{t-1}}  \transition{t}(y, \rmd x) \eqsp.
    \end{align*}
    Consequently, for any $(k^1 _{t-1}, k^1 _t) \in \Xset^2$ and $h \in \measurable{}$,
    \begin{align*}  
        \pE \big[ \beta^N _t( \particle{k _t}{t},\particle{k _{t-1}}{t-1}) h(\particle{k _t}{t}) \big| \F{t-1} \big] & = \int \beta^N _t(x, \particle{k _{t-1}}{t-1})
        h(x) \filter{t-1}^N \transition{t}(\rmd x) \\
        & = \int \frac{1}{\Omega _{t-1}} \weight{k^1 _{t-1}}{t-1} h(x) \transition{t}(\particle{k^1 _{t-1}}{t-1}, \rmd x) \\
        & = \normweight{k_{t-1}}{t-1} \transition{t}[h](\particle{k_{t-1}}{t-1}) \eqsp.
    \end{align*}
    On the other hand,
    \begin{align*}
        \pE \big[ \1_{k _{t-1} = A^{k _t} _{t-1}} h(\particle{k _t}{t}) \big| \F{t-1} \big] =  \int \sum_{i = 1}^N \1_{k _{t-1} = i} \normweight{i}{t-1} \transition{t}(\particle{i}{t-1}, \rmd x) h(x) = \normweight{k _{t-1}}{t-1} \transition{t}[h](\particle{k _{t-1}}{t-1}) \eqsp.
    \end{align*}
\end{proof}
\begin{proof}[Proof of Proposition~\ref*{prop:mu_expression}]
For the proof of i), note that conditionally on $\F{t-1}$, $\particle{1:N}{t}$ are i.i.d. with distribution $\filter{t-1}^N \transition{t}$, hence,  
\begin{multline*}
    \pE \left[ \sum_{k^{1:2} _t \in [N]^2} \1_{k^1 _t \neq k^2 _t} \beta^\BS _t(k^1 _t, k^1 _{t-1}) \beta^\BS _t(k^2 _t, k^2 _{t-1}) h(\particletraj{k^1}{t}, \particletraj{k^2}{t}) \bigg| \F{t-1}\right] \\
    =  \sum_{k^{1:2} _t \in [N]^2} \int \1_{k^1 _t \neq k^2 _t} \beta^\BS _t(k^1 _t, k^1 _{t-1}) \beta^\BS _t(k^2 _t, k^2 _{t-1})h(\particletraj{k^1}{t}, \particletraj{k^2}{t}) \filter{t-1}^N \transition{t}(\rmd \particle{k^1 _t}{t}) \filter{t-1}^N \transition{t}(\rmd \particle{k^2 _{t}}{t}) \eqsp,
\end{multline*}
and by \eqref{lem:identity} in Lemma~\ref*{lem:BSGTidentity},
\begin{align}
    \label{eq:condexpectb0}
    & \pE \left[ \sum_{k^{1:2} _t \in [N]^2} \1_{k^1 _t \neq k^2 _t} \beta^\BS _t(k^1 _t, k^1 _{t-1}) \beta^\BS _t(k^2 _t, k^2 _{t-1}) h(\particletraj{k^1}{t}, \particletraj{k^2}{t}) \bigg| \F{t-1}\right] \nonumber \\
    & \hspace{2cm} = \frac{N(N-1)}{\Omega _{t-1} ^2} \big(g^{\otimes 2} _{t-1}\bitransition{0}{t}[h]\big) (\particletraj{k^1}{t-1}, \particletraj{k^2}{t-1}) \eqsp,
\end{align}
where \[g_{t-1} ^{\otimes 2} \bitransition{0}{t}[h] : (x_{0:t-1}, x' _{0:t-1}) \mapsto g^{\otimes 2} _{t-1}(x_{t-1}, x' _{t-1}) \int h(x_{0:t}, x' _{0:t}) \transition{t}(x_{t-1}, \rmd x_t)\transition{t}(x'_{t-1}, \rmd x' _t) \eqsp.\]
On the other hand, by \eqref{lem:identity} in Lemma~\ref*{lem:BSGTidentity},
\begin{align}
    \label{eq:condexpectb1}
    &  \pE \left[ \sum_{k^{1:2} _t \in [N]^2} \1_{k^1 _t = k^2 _t} \beta^\BS _t(k^1 _t, k^1 _{t-1}) \normweight{k^2 _{t-1}}{t-1} h(\particletraj{k^1}{t}, \particletraj{k^2}{t}) \bigg| \F{t-1}\right] \nonumber \\
    & \hspace{1cm} =  \sum_{k^{1:2} _t \in [N]^2} \int \1_{k^1 _t = k^2 _t} \beta^\BS _t(k^1 _t, k^1 _{t-1}) \normweight{k^2 _{t-1}}{t-1} h(\particletraj{k^1}{t}, \particletraj{k^2}{t}) \filter{t-1}^N \transition{t}(\rmd \particle{k^1 _t}{t}) \delta_{\particle{k^1 _{t}}{t}} (\rmd \particle{k^2 _{t}}{t})  \nonumber \\
    & \hspace{1cm} = \frac{1}{\Omega _{t-1} ^2} \sum_{k^{1:2} _t \in [N]^2} \1_{k^1 _t = k^2 _t}\weight{k^1 _{t-1}}{t-1} \weight{k^2 _{t-1}}{t-1} \int h(\particletraj{k^1}{t}, \particletraj{k^2}{t}) \transition{t}(\particle{k^1 _{t-1}}{t-1},\rmd \particle{k^1 _t}{t}) \delta_{\particle{k^1 _t}{t}} (\rmd \particle{k^2 _{t}}{t}) \nonumber \\
    & \hspace{1cm} = \frac{N}{\Omega^2 _{t-1}} \big( g^{\otimes 2} _{t-1} \bitransition{1}{t}[h] \big)(\particletraj{k^1}{t}, \particletraj{k^2}{t}) \eqsp,
\end{align}
where 
\[ 
g^{\otimes 2} _{t-1} \bitransition{1}{t}[h] : (x_{0:t-1}, x' _{0:t-1}) \mapsto g^{\otimes 2} _{t-1}(x_{t-1}, x' _{t-1}) \int h(x_{0:t}, x' _{0:t}) \transition{t}(x_{t-1}, \rmd x_t) \delta _{x _t}(\rmd x' _t) \eqsp.
\]
 If $b_t = 0$, since  $\Omega _{t-1}$ and $\bigprod{b,t-1}(k^1 _{0:t-1}, k^2 _{0:t-1})$ are $\F{t-1}$ measurable for any $(k^1 _{0:t-1}, k^2 _{0:t-1}) \in [N]^{2t}$, by \eqref{eq:alternativeQ} and \eqref{eq:condexpectb0},
\begin{align*}
     \pE \left[ \muest{\BS}{b,t}(h) \big| \F{t-1} \right] & =  \sum_{k^{1:2} _{0:t-1} \in [N]^2} \bigprod{b,t-1}(k^1 _{0:t-1}, k^2 _{0:t-1}) \frac{\Omega^2 _{t-1}}{N(N-1)} \\
    & \hspace{1cm}\times \pE \bigg[ \sum_{k^{1:2} _t \in [N]^2} \1_{k^1 _t \neq k^2 _t} \beta^\BS _t(k^1 _t, k^1 _{t-1}) \beta^\BS _t(k^2 _t, k^2 _{t-1}) h(\particletraj{k^1}{t}, \particletraj{k^2}{t}) \bigg| \F{t-1}\bigg]  \\
    & = \sum_{k^{1:2} _{0:t-1} \in [N]^2} \bigprod{b,t-1}(k^1 _{0:t-1}, k^2 _{0:t-1}) \big(g^{\otimes 2} _{t-1}\bitransition{0}{t}[h]\big) (\particletraj{k^1}{t-1}, \particletraj{k^2}{t-1}) \\
    & = \muest{\BS}{b,t-1}(g^{\otimes 2} _{t-1} \bitransition{0}{t}[h]) \eqsp.
\end{align*}
If $b_t = 1$, again by \eqref{eq:alternativeQ} and \eqref{eq:condexpectb1},
\begin{align*}
    \pE \left[ \muest{\BS}{b,t}(h) \big|  \F{t-1}\right] & = \sum_{k^{1:2} _{0:t-1} \in [N]^2} \bigprod{b,t-1}(k^1 _{0:t-1}, k^2 _{0:t-1}) \frac{\Omega^2 _{t-1}}{N} \\
    & \hspace{2cm}\times \pE \bigg[ \sum_{k^{1:2} _t \in [N]^2} \1_{k^1 _t = k^2 _t} \beta^\BS _t(k^1 _t, k^1 _{t-1}) \normweight{k^2 _{t-1}}{t-1} h(\particletraj{k^1}{t}, \particletraj{k^2}{t}) \bigg| \F{t-1}\bigg] \\
    & = \sum_{k^{1:2} _{0:t-1} \in [N]^2} \bigprod{b,t-1}(k^1 _{0:t-1}, k^2 _{0:t-1}) \big(g^{\otimes 2} _{t-1}\bitransition{1}{t}[h]\big) (\particletraj{k^1}{t-1}, \particletraj{k^2}{t-1})  \\
    & = \muest{\BS}{b,t-1}\big( g^{\otimes 2} _{t-1}\bitransition{1}{t}[h] \big) \eqsp.
\end{align*}
For the proof of ii), we proceed by induction. Let $t = 0$ and $h \in \measurable{2}$. If $b_0 = 0$, since $\particle{1:N}{0} \iid \transition{0}$,
\begin{align*}
    \pE \left[ \muest{\BS}{0,0}(h) \right] & = \pE \left[ \frac{1}{N(N-1)} \sum_{i,j \in [N]^2} \1_{i \neq j} h(\particle{i}{0}, \particle{j}{0}) \right] \\
    & = \frac{1}{N(N-1)} \sum_{i, j \in [N]^2} \1_{i \neq j} \bitransition{0}{0}[h] = \bitransition{0}{0}[h] = \mumeasure{b,0}(h) \eqsp.
\end{align*}
If $b_0 = 1$,
\begin{align*}
    \pE \left[ \muest{\BS}{1,0}(h) \right] = \pE \left[ \frac{1}{N} \sum_{i = 1}^N h(\particle{i}{0}, \particle{i}{0}) \right] = \bitransition{1}{0}[h] = \mumeasure{b,0}(h) \eqsp.
\end{align*}
Let $t \in \N^{*}$ and $h \in \measurable{2(t+1)}$. Assume that $\pE \big[ \muest{\BS}{b,t-1}(f)\big] = \mumeasure{b,t-1}(f)$ for any $b \in \Bset_t$ and $f \in \measurable{2t}$. 
By \ref*{item:condexpect} in Proposition~\ref*{prop:mu_expression} and the tower property
\[ 
    \pE \left[ \muest{\BS}{b,t}(h) \right] = \pE \left[ \pE \big[ \muest{\BS}{b,t}(h) \big| \F{t-1} \big]\right] = \pE \left[ \muest{\BS}{b,t-1}\big(g^{\otimes 2} _{t-1} \bitransition{b_t}{t}[h]\big) \right]
    \]
 and $g^{\otimes 2} _{t-1} \bitransition{b_t}{t}[h] \in \measurable{2t}$. Thus, by the induction hypothesis and the definition of $\mumeasure{b,t-1}$ we get
\[ 
    \pE \left[ \muest{\BS}{b,t}(h) \right] = \pE \left[ \muest{\BS}{b,t-1}(g^{\otimes 2} _{t-1} \bitransition{b_t}{t}[h]) \right] = \mumeasure{b,t-1}\big( g^{\otimes 2} _{t-1} \bitransition{b_t}{t}[h] \big) = \mumeasure{b,t}(h) \eqsp,
\]
which completes the proof. The proof of iii) is a direct consequence of \eqref{eq:scdestimator}, \eqref{eq:general_termbyterm} and ii). 
\end{proof}

\subsection{Proof of Proposition~\ref*{prop:identityQes}}
\label{proof:identityQes}
    Let $h_{0:t}$ be an additive functional \eqref{eq:additive}. By definition, for $s \in [t]$,
    \begin{align*}
        \GG{s,t}(x_s, h_{0:t}) & =  \int \big\{ \tilde{h}_{0:s}(x_{0:s}) + \tilde{h}_{s:t}(x_{s:t}) \big\} \bwpath{s}(x_s, \rmd x_{0:s-1}) \Q{s+1:t}(x_s, \rmd x_{s+1: t})\\
        & = \int \big\{ \bwpath{s}[\tilde{h}_{0:s}](x_s) + \tilde{h}_{s:t}(x_{s:t}) \big\} \Q{s+1:t}(x_s, \rmd x_{s+1:t}) \eqsp,
    \end{align*}
and then, setting $\mathsf{H}_{s:t} : x_{s:t} \mapsto\bwpath{s}[\tilde{h}_{0:s}](x_s) + \tilde{h}_{s:t}(x_{s:t})$ we get
\begin{align*}
    & \joint{s}(\boldone)\joint{s}\big(\GG{s,t}[h_{0:t}]^2) \\
    &  = \joint{s}(\boldone)\joint{s}\big( \Q{s+1:t}\big[ \bwpath{s}[\tilde{h}_{0:s}] + \tilde{h}_{s:t} \big]^2\big) \\
    & = \int \joint{0:s-1}(\rmd x' _{0:s-1}) g_{s-1}(x'_{s-1}) \int  \Q{s+1:t}[\mathsf{H}_{s:t}](x_s) \Q{s+1:t}[\mathsf{H}_{s:t}](x_s) \joint{0:s}(\rmd x_{0:s}) \\
    & = \int \joint{0:s-1}(\rmd x' _{0:s-1}) g_{s-1}(x' _{s-1}) \joint{0:s}(\rmd x_{0:s}) \delta_{x_s}(\rmd x' _s) \Q{s+1:t}[\mathsf{H}_{s:t}](x_s) \Q{s+1:t}[\mathsf{H}_{s:t}](x'_s) \eqsp,
\end{align*}
which establishes the result since by definition
\begin{multline*}
    \mumeasure{e_s,t}\big( \rmd x_{0:t}, \rmd x'_{0:t} \big) 
    = \joint{0:s}(\rmd x_{0:s}) \joint{0:s-1}(\rmd x' _{0:s-1}) g_{s-1}(x' _{s-1}) \delta_{x_s}(\rmd x'_s)
    \\\Q{s+1:t} \big(x_s,\rmd x_{s+1:t}\big) \Q{s+1:t}
    \big(x'_s,\rmd x'_{s+1:t}\big) \eqsp.
\end{multline*}
If $s = 0$, then $\GG{0,t}[h_{0:t}](x_0) = \int h_{0:t}(x_{0:t}) \Q{1:t}(x_0, \rmd x_{1:t})$ and 
\begin{align*}
    \joint{0}(\boldone) \joint{0}\big( \GG{0:t}[h_{0:t}]^2 \big) = \joint{0}(\GG{0:t}[h_{0:t}]^2) & = \int \transition{0}(\rmd x_0) \Q{1:t}[h_{0:t}](x_0) \Q{1:t}[h_{0:t}](x_0) \\
    & = \int \transition{0}(\rmd x_0) \delta_{x_0}(\rmd x'_0) \Q{1:t}[h_{0:t}](x_0) \Q{1:t}[h_{0:t}](x' _0) \\
    & = \mumeasure{e_0, t}(h_{0:t} ^{\otimes 2}) \eqsp.
\end{align*}
\subsection{Proof of Theorem~\ref*{thm:conv}}
Let $m \in \N^{*}$ and $N \geq 2$. Define 
 \begin{align}
     \label{def:Iset}
     \mathcal{I}^m _0 & \eqdef \{k^{1:2m} \in [N ]^{2m}: k^{2i - 1} \neq k^{2i} , i \in [1:m]\} \eqsp, \\
     \mathcal{I}^m _1 & \eqdef \{k^{1:2m} \in [N ]^{2m}: k^{2i - 1} = k^{2i}, i \in [1:m]\} \eqsp.
 \end{align}
 Define also for any $p \in [2m]$, \[\sett^p _m \eqdef \{ k^{1:2m} \in [N ]^{2m}: \card{\{k^1, k^2, \cdots, k^{2m-1}, k^{2m}\}} = p \} \eqsp.\]
 Then $[N]^{2m} = \bigsqcup_{p = 1}^{2m} \sett^p _m$ and 
 \begin{align}
     \label{def:Sset}
     \mathcal{I}^m _0 = \bigsqcup_{p = 1}^{2m} \mathcal{I}^m _0 \cap \sett^p _m = \bigsqcup_{p = 2}^{2m} \mathcal{I}^m _0 \cap \sett^p _m \eqsp, \quad
   \mathcal{I}^m _1 =  \bigsqcup_{p = 1}^{2m} \mathcal{I}^m _1 \cap \sett^p _m = \bigsqcup_{p = 1}^{m} \mathcal{I}^m _1 \cap \sett^p _m\eqsp,
 \end{align}
 where $\bigsqcup$ means disjoint union.
 The first equality holds because the tuples in $\mathcal{I}^m _0$ must contain at least two different values and the second because tuples in $\mathcal{I}^m _1$ contain at most $m$ different values. The proof of Theorem~\ref*{thm:conv} is concerned with $m =2$ and that of Proposition~\ref{prop:Qbound} with $m \geq 2$. Example~\ref{ex:explicitset} provides the intersections for the case $m = 2$. 
\begin{example}
     \label{ex:explicitset}
     Choose $m = 2$ and $N \geq 4$. Then,
     \begin{align*}
         \mathcal{I}^2 _0 \cap \sett^2 _2 & = \{k^{1:4} \in [N]^4 : k^1 \neq k^2, k^3 \neq k^4,\{k^3, k^4\} = \{k^1, k^2\} \}\eqsp,\\
         \mathcal{I}^2 _0 \cap \sett^3 _2 & = \{k^{1:4} \in [N]^4 : k^1 \neq k^2, k^3 \neq k^4, k^3 \in \{k^1, k^2\}, k^4 \notin \{k^1, k^2\} \} \eqsp, \\ 
         & \hspace{1cm} \sqcup \{k^{1:4} \in [N]^4 : k^1 \neq k^2, k^3 \neq k^4, k^3 \notin \{k^1, k^2\}, k^4 \in \{k^1, k^2\} \} \eqsp,\\
        \mathcal{I}^2 _0 \cap \sett^4 _2 & = \{k^{1:4} \in [N]^4 : k^1 \neq k^2 \neq k^3 \neq k^4 \} \eqsp,
     \end{align*}
    with $\card{\mathcal{I}^2 _0 \cap \sett^2 _2} = 2N(N-1)$, $\card{\mathcal{I}^2 _0 \cap \sett^3 _2} = 4N(N-1)(N-2)$ and $\card{\mathcal{I}^2 _0 \cap \sett^4 _2} = N(N-1)(N-2)(N-3)$. As for $\mathcal{I}^2 _1$,
     \begin{align*}
         \mathcal{I}^2 _1 \cap \sett^1 _2 & = \{k^{1:4} \in [N]^4 : k^1 = k^2 = k^3 = k^4 \}\eqsp, \\
         \mathcal{I}^2 _1 \cap \sett^2 _2 & = \{k^{1:4} \in [N]^4 : 
         k^1 = k^2, k^3 = k^4, k^1 \neq k^3 \} \eqsp.
     \end{align*}
     with $\card{\mathcal{I}^2 _1 \cap \sett^1 _2} = N$ and $\card{\mathcal{I}^2 _1 \cap \sett^2 _2} = N(N-1).$
 \end{example}
\begin{proof}[Proof of Theorem~\ref*{thm:conv}]
\label{proof:conv}
We proceed by induction. Throughout the proof we assume that $N \geq 4$ for the sake of simplicity. For $t = 0$ and $b = \zero$, using \eqref{lem:bigtauexpr} and \eqref{eq:Qexpr},   \[ \muest{\BS}{\zero, 0}(h) = N^{-1}(N-1)^{-1} \sum_{i, j \in [N]^2} \1_{i \neq j} h(\particle{i}{0}, \particle{j}{0})\eqsp, \] and $\mumeasure{\zero, 0}(h) = \transition{0}^{\otimes 2}(h)$. 
\begin{align*}
    & \big\| \muest{\BS}{\zero,0}(h) - \mumeasure{\zero,0}(h) \big\|^2 _2  \\
    &\hspace{.5cm} = \pE \bigg[ \frac{1}{N^2(N-1)^2} \sum_{i,j,i',j' \in [N]^4} \1_{i \neq j, i' \neq j'} \big\{ 
         h(\particle{i}{0}, \particle{j}{0}) - \mumeasure{\zero,0}(h) \big\} \big\{
            h(\particle{i'}{0}, \particle{j'}{0}) - \mumeasure{\zero,0}(h) \big\} \bigg] \\
    &\hspace{.5cm} = \frac{\tau_0 + \overline{\tau}_0}{N^2(N-1)^2}  \eqsp,
\end{align*}
where
\begin{align*}
    \tau_0 &= \pE \left[ \sum_{i,j,i',j' \in \mathcal{I}^2 _0 \cap \sett^4 _2} \{ 
          h(\particle{i}{0}, \particle{j}{0}) - \mumeasure{\zero,0}(h) \} \{
           h(\particle{i'}{0}, \particle{j'}{0}) - \mumeasure{\zero,0}(h) \}\right]\eqsp,\\
    \overline{\tau}_0 & = \pE \left[ \sum_{i,j,i',j' \in \mathcal{I}^2 _0 \cap \overline{\sett^4 _2}} \{ 
          h(\particle{i}{0}, \particle{j}{0}) - \mumeasure{\zero,0}(h) \} \{
           h(\particle{i'}{0}, \particle{j'}{0}) - \mumeasure{\zero,0}(h) \}\right]\eqsp,
\end{align*}
where $\mathcal{I}^2 _0 \cap \sett^4 _2$ is defined in \eqref{def:Iset}, \eqref{def:Sset} and explicited in Example~\ref{ex:explicitset}, and $\mathcal{I}^2 _0 \cap \overline{\sett^4 _2} = (\mathcal{I}^2 _0 \cap \sett^2 _2) \sqcup (\mathcal{I}^2 _0 \cap \sett^3 _2)$.
If $(i,j,i',j') \in \mathcal{I}^2 _0 \cap \sett^4 _2$, then $\particle{i}{0}, \particle{j}{0}, \particle{i'}{0}$ and $\particle{j'}{0}$ are i.i.d. Therefore,
$$
\pE \left[ \big\{ 
    h(\particle{i}{0}, \particle{j}{0}) - \mumeasure{\zero,0}(h) \big\} \big\{
       h(\particle{i'}{0}, \particle{j'}{0}) - \mumeasure{\zero,0}(h) \big\} \right]  = \pE \left[ \big\{ 
        h(\particle{i}{0}, \particle{j}{0}) - \mumeasure{\zero,0}(h) \big\} \right]^2 = 0 \eqsp,
$$
and $\tau_0 = 0$. Hence, using the fact that because $h$ is bounded, $\| h - \mumeasure{\zero,0}(h) \|_\infty \leq 2 \|h \|_\infty$ and that $\card{\mathcal{I}^2 _0 \cap \overline{\sett^4 _2}} = 4N(N-1)(N-2) + 2N(N-1)$ by Example~\ref{ex:explicitset}, we get 
\begin{align*}
    \big\| \muest{\BS}{\zero,0}(h) - \mumeasure{\zero,0}(h) \big\|^2 _2 &\leq \frac{\overline{\tau}_0 }{N^2 (N-1)^2}\\
    & \leq \frac{4\|h\|^2 _\infty}{N^2 (N-1)^2} \sum_{i,j,i',j' \in \mathcal{I}^2 _0 \cap \overline{\sett^4 _2}} 1\eqsp, \\
    & = \frac{4 \| h \|^2 _\infty \big[ 4N(N-1)(N-2) + 2N(N-1)\big]}{N^2(N-1)^2} = \bigo(N^{-1})\eqsp,
\end{align*}
which completes the proof for $b=0$. For $b = 1$, 
\begin{align*}
    &  \muest{\BS}{1,0}(h) - \mumeasure{1,0}(h) 
    = N^{-1} \sum_{i = 1}^N h(\particle{i}{0}, \particle{i}{0}) - \int h(x,x) \transition{0}(\rmd x) \eqsp,
\end{align*}
and since $h$ is bounded, 
$$ \big\|  \muest{\BS}{1,0}(h) - \mumeasure{1,0}(h) \big\|^2 _2 
        = N^{-1} \pV_{\transition{0}} \big[ h(\xi, \xi) \big] = \bigo(N^{-1})\eqsp,$$ 
where $\pV _{\transition{0}}$ is the variance under $\transition{0}$. This completes the proof for $t = 0$. Let $t>0$, and assume now that \eqref{eq:convlim} holds at time $t - 1$. Consider the following decomposition \begin{align*}
    \label{eq:decomp}
    \muest{\BS}{b,t}(h) - \mumeasure{b,t}(h) 
    = \muest{\BS}{b,t}(h) - \pE \big[ \muest{\BS}{b,t}(h) \big| \F{t-1} \big]
     + \pE \big[ \muest{\BS}{b,t}(h) \big| \F{t-1} \big] - \mumeasure{b,t}(h) \eqsp,
    \end{align*}
    which, by Proposition~\ref*{prop:mu_expression}, becomes
    \begin{multline}
        \muest{\BS}{b,t}(h) - \mumeasure{b,t}(h)  = \muest{\BS}{b,t}(h) - \muest{\BS}{b,t-1}(g_{t-1} ^{\otimes 2} \bitransition{b_{t}}{t}[h]) \\ + \muest{\BS}{b,t-1}(g_{t-1} ^{\otimes 2} \bitransition{b_{t}}{t}[h]) - \mumeasure{b,t-1}(g^{\otimes 2} _{t-1} \bitransition{b_{t}}{t}[h]) \eqsp. 
    \end{multline}
    By the induction hypothesis, since $h$ is bounded and also $g_{t-1}$ by $\As{assp:B}{}$, we have 
    \[
    \simplelim \left\| \muest{\BS}{b,t-1}(g_{t-1} ^{\otimes 2} \bitransition{b_{t}}{t}[h]) - \mumeasure{b,t-1}(g^{\otimes 2} _{t-1} \bitransition{b_{t}}{t}[h]) \right\| _2 = 0 \eqsp,
    \] 
    hence, by Minkowski's inequality it remains to prove that 
    \begin{equation}
        \simplelim \left\|  \muest{\BS}{b,t}(h) - \muest{\BS}{b,t-1}(g_{t-1} ^{\otimes 2} \bitransition{b_{t}}{t}[h]) \right\| _2 = 0 \eqsp.
    \end{equation}
    By Proposition~\ref*{prop:mu_expression}, $\pE \big[ \muest{\BS}{b,t}(h) \big| \F{t-1} \big] = \muest{\BS}{b,t-1}( g^{\otimes 2} _{t-1} \bitransition{b_t}{t}[h])$ and
    \begin{equation*}
        \pE \big[ \muest{\BS}{b,t}(h) \muest{\BS}{b,t-1}(g^{\otimes 2} _{t-1} \bitransition{b_t}{t}[h]) \big] = \pE \big[ \muest{\BS}{b,t-1}(g^{\otimes 2} _{t-1} \bitransition{b_t}{t}[h])^2 \big] \eqsp,
    \end{equation*}
    hence,
    \begin{align*}
         \big\| \muest{\BS}{b,t}(h) - \muest{\BS}{b,t}( g^{\otimes 2} _{t-1} \bitransition{b_t}{t}[h]) \big\|^2 _2 = \big\| \muest{\BS}{b,t}(h) \big\|_2 ^2 - \big\| \muest{\BS}{b,t-1}(g^{\otimes 2} _{t-1} \bitransition{b_t}{t}[h]) \big\|_2 ^2 \eqsp.
    \end{align*}
    Consequently, by Proposition~\ref{prop:norm2QBS}, if $b_t = 0$,
        \begin{align}
          \label{eq:finalnorm2diffb0}
            &\left\| \muest{\BS}{b,t}(h) - \muest{\BS}{b,t-1}(g^{\otimes 2} _{t-1} \bitransition{0}{t}[h])\right\|^2 _2 \\
            & \leq \left( \frac{(N-2)(N-3)}{N(N-1)} - 1 \right) \left\|\muest{\BS}{b,t-1}(g^{\otimes 2} _{t-1}\bitransition{0}{t}[h]) \right\|_2 ^2 \nonumber\\
            & + \frac{N-2}{N-1} \boundg^3 | h |^2 _\infty \int \nu(\rmd x)\pE \bigg[ \frac{\Omega_{t-1}}{N} \bigg\{ \muest{\BS}{b,t-1}(\transitiondens{t}(.,x) \otimes \boldone) \muest{\BS}{b,t-1}(\beta^N _{t}(x,.) \otimes \boldone) \nonumber\\
            & + \muest{\BS}{b,t-1}(\boldone \otimes \transitiondens{t}(.,x)) \muest{\BS}{b,t-1}(\beta^N _{t}(x,.) \otimes \boldone) + \muest{\BS}{b,t-1}(\transitiondens{t}(.,x) \otimes \boldone) \nonumber \\
             & \hspace{1cm}\times \muest{\BS}{b,t-1}(\boldone \otimes \beta^N _{t}(x,.)) 
             + \muest{\BS}{b,t-1}(\boldone \otimes \transitiondens{t}(.,x)) \muest{\BS}{b,t-1}(\boldone \otimes \beta^N _{t}(x,.)) \bigg] \nonumber \\
             & + \boundg^2 |h |^2 _\infty \int \nu^{\otimes 2}(\rmd y,\rmd x)\pE \bigg[ \frac{ \Omega_{t-1} ^2}{N(N-1)} \nonumber \\
            & \hspace{1cm}\times \bigg\{
                 \muest{\BS}{b,t-1}(\transitiondens{t}(.,x) \otimes \transitiondens{t}(.,y)) \muest{\BS}{b,t-1}\left( \beta^N _{t}(x,.) \otimes \beta^N _{t}(y,.) \right) \nonumber \\
             & \hspace{2cm} + \muest{\BS}{b,t-1}(\transitiondens{t}(.,x) \otimes \transitiondens{t}(.,y))  \muest{\BS}{b,t-1}\left( \beta^N _{t}(y,.) \otimes \beta^N _{t}(x,.) \right) \bigg\}\bigg] \eqsp, \nonumber
        \end{align}
        and if $b_t = 1$, 
        \begin{multline}   
             \label{eq:finalnorm2diffb1}
            \left\| \muest{\BS}{b,t}(h) - \muest{\BS}{b,t-1}(g^{\otimes 2} _{t-1} \bitransition{1}{t}[h])\right\|^2 _2 
             \leq \left( \frac{N-1}{N} - 1\right) \| \muest{\BS}{b,t-1}(g^{\otimes 2} _{t-1}\bitransition{1}{t}[h]) \|^2 _2  \\
             + \boundg^3 |h|_\infty ^2 \int \pE \left[ \frac{\Omega _{t-1}}{N} \muest{\BS}{b,t-1}(\transitiondens{t}(.,x) \otimes \boldone) \muest{\BS}{b,t-1}(\beta^N _{t}(x,.) \otimes \boldone) \right] \nu(\rmd x) \eqsp. 
        \end{multline}
        By Proposition~\ref{prop:Qbound}, the first term in the r.h.s. of \eqref{eq:finalnorm2diffb0} and \eqref{eq:finalnorm2diffb1} is $\bigo(N^{-1})$ in both cases because 
$\big| g^{\otimes 2} _{t-1}\bitransition{b_{t}}{t}[h] \big|_\infty \leq \boundg^2 |h|_\infty < \infty$. We now show that the remaining terms go to zero when $N$ goes to infinity. Define for any $x \in \Xset$ and $N \in \N^{*}$, 
\begin{equation}
\begin{aligned}
    \label{eq:defBN}
    B_N (x) & \eqdef  \frac{\Omega_{t-1}}{N} \muest{\BS}{b,t-1}\left(\transitiondens{t}(.,x) \otimes \boldone\right) \muest{\BS}{b,t-1}(\boldone \otimes \beta^N _{t}(x,.)) \eqsp, \\
    \widetilde{B}_N(x) & \eqdef  \muest{\BS}{b,t-1}\left(\transitiondens{t}(.,x) \otimes \boldone\right) \muest{\BS}{b,t-1}\left( \boldone \otimes \boldone\right) \eqsp, \\
    \widetilde{B}(x) & \eqdef \mumeasure{b,t-1}\left(\transitiondens{t}(.,x) \otimes \boldone\right) \mumeasure{b,t-1}(\boldone \otimes \boldone) \eqsp.
\end{aligned}
\end{equation}
We apply Theorem~\ref{thm:GDCT} with $f_N = \pE \big[ B_N \big]$, $g_N = \pE \big[ \widetilde{B}_N \big]$, $g = \widetilde{B}$ and $f = 0$. To establish  i), 
note that $\pE \big[ B_N(x) \big] \leq \boundg \pE \big[ \widetilde{B}_N(x) \big]$ for all $N \in \N^{*}$ and $x \in \Xset$, since for all $(x,i) \in \Xset \times [N]$, $\beta^N _{t}(x,\particle{i}{t-1}) \leq 1$ and $N^{-1} \Omega _{t-1} \leq \boundg$. 
   Then, to prove ii),  for all $(h, f) \in \bounded{t}^2$, by the Cauchy-Schwarz inequality,
    \begin{equation*}
        \label{eq:doubleQconv}
    \begin{alignedat}{3}
        & \left| \pE \left[ \muest{\BS}{b,t-1}(h)\muest{\BS}{b,t-1}(f) - \mumeasure{b,t-1}(h) \mumeasure{b,t-1}(f) \right] \right| \\
        & \hspace{.5cm}\leq \pE \left[ \left| \left( \muest{\BS}{b,t-1}(h) - \mumeasure{b,t-1}(h)\right) \muest{\BS}{b,t-1}(f) \right| \right] + \pE \left[ \left| \left( \muest{\BS}{b,t-1}(f) - \mumeasure{b,t-1}(f)\right) \mumeasure{b,t-1}(h) \right| \right] \\
        & \hspace{.5cm}\leq \big\| \muest{\BS}{b,t-1}(h) - \mumeasure{b,t-1}(h) \big\|_2 \big\| \muest{\BS}{b,t-1}(f) \big\|_2 + \big\| \muest{\BS}{b,t-1}(f) - \mumeasure{b,t-1}(f) \big\|_2 \big| \mumeasure{b,t-1}(h)\big| \eqsp,
    \end{alignedat}
\end{equation*}
    which goes to zero by the induction hypothesis, the fact that $\sup_{N \in \N} \big\| \muest{\BS}{b,t-1}(f) \big\|_2 < \infty$ by Proposition~\ref{prop:Qbound} and $| \mumeasure{b,t-1}(h) | < \infty$.  Hence, for all $x \in \Xset$, 
    \[ \simplelim g_N(x) = g(x) \quad \text{and} \quad \simplelim \pE \left[ \muest{\BS}{b,t-1}(\boldone \otimes \boldone)^2 \right] = \mumeasure{b,t-1}(\boldone \otimes \boldone)^2 \eqsp.\] 
    Added to the fact that $\int \widetilde{B}_N (x) \nu(\rmd x) = \muest{\BS}{b,t-1}(\boldone \otimes \boldone)^2$ and $\int \widetilde{B}(x) \nu(\rmd x) = \mumeasure{b,t-1}(\boldone \otimes \boldone)^2$, we get by applying Fubini's theorem
    \begin{align*}
        \simplelim \int \pE \left[ \widetilde{B}_N(x) \right] \nu(\rmd x) = \simplelim \pE \left[ \muest{\BS}{b,t-1}(\boldone \otimes \boldone)^2 \right] = \mumeasure{b,t-1}(\boldone \otimes \boldone)^2 = \int \widetilde{B}(x) \nu(\rmd x) \eqsp.
    \end{align*}
  Then, for iii), first we have that $\pE \big[ B_N (x)^{3/2} \big] \leq \boundg^{3/2} \pE \big[ \widetilde{B}_N(x) ^{3/2} ]$ and 
    \[
        \sup_{N \in \N} \pE \big[ \widetilde{B}_N(x) ^{3/2} ] \leq \sigma_{+} ^{3/2} \sup_{N \in \N} \pE \big[ \muest{\BS}{b,t-1}(\boldone \otimes \boldone)^3 \big] \eqsp,
    \]
    where the r.h.s. is finite by choosing $m = 3$ in Proposition~\ref{prop:Qbound}. The family of non negative random variables $\{ B_N(x) \}_{N \in \N}$ is then uniformly integrable for any $x \in \Xset$. Indeed, for any $x \in \Xset$, $\alpha \in \R^{*} _{+}$ and $N \in \N$,
    \begin{align*}
        \pE \left[ B_N(x) \1_{B_N(x) \geq \alpha} \right] & \leq  \pE \big[ B_N(x)^{3/2} \big] / \sqrt{\alpha} \\
         & \leq \sigma_{+}^{3/2} \boundg^{3/2} \sup_{N \in \N} \pE \big[ \muest{\BS}{b,t-1}(\boldone \otimes \boldone)^3 \big] / \sqrt{\alpha}  \eqsp,
    \end{align*}
    hence $\underset{\alpha \to \infty}{\lim} \sup_{N \in \N}\pE \left[ B_N(x) \1_{B_N(x) \geq \alpha} \right] = 0 \eqsp.$
    On the other hand,
    \[ 
        B_N(x) =  \frac{\muest{\BS}{b,t-1}\left( \transitiondens{t}(.,x) \otimes \boldone \right) \muest{\BS}{b,t-1}\left(\boldone \otimes g_{t-1}\transitiondens{t}(.,x) \right) }{N \filter{t-1}^N\left(\transitiondens{t}(.,x)\right)} \eqsp,\] 
    and the induction hypothesis coupled with the fact that $\filter{t-1}^N(\transitiondens{t}(.,x)) \plim \filter{t-1}(\transitiondens{t}(.,x))$ with $\filter{t-1}(\transitiondens{t}(.,x))> 0$ by $\As{assp:B}{assp:positive}$ gives 
    \begin{multline*}
         \frac{\muest{\BS}{b,t-1}\left( \transitiondens{t}(.,x) \otimes \boldone \right) \muest{\BS}{b,t-1}\left(\boldone \otimes g_{t-1}\transitiondens{t}(.,x) \right)}{\filter{t-1}^N(\transitiondens{t}(.,x))} \\
    \plim \frac{\mumeasure{b,t-1}\left( \transitiondens{t}(.,x) \otimes \boldone \right) \mumeasure{b,t-1}\left(\boldone \otimes g_{t-1}\transitiondens{t}(.,x) \right)}{\filter{t-1}(\transitiondens{t}(.,x))}  \eqsp.
    \end{multline*}
         Hence, $B_N(x) \plim 0$, and by uniform integrability, for any $x \in \Xset$
     \[ 
        \simplelim f_N(x) = \simplelim \pE[B_N(x)] = 0 \eqsp.
     \]
    Finally, by Theorem~\ref{thm:GDCT} we deduce that
\begin{equation}
    \label{eq:limBN}
    \underset{N \to \infty}{\lim} \int \pE \big[ B_N(x) \big] \nu(\rmd x) = \int  \underset{N \to \infty}{\lim} \pE \big[ B_N(x) \big] \nu(\rmd x) = 0 \eqsp.
\end{equation}
The other similar  terms are treated in the same way by adapting the definitions in \eqref{eq:defBN}. As for the second integral, define for any $(x,y) \in \Xset^2$, 
\begin{align*}
    R_N(x,y) & \eqdef \frac{\Omega_{t-1} ^2}{N(N-1)}
         \muest{\BS}{b,t-1}(\transitiondens{t}(.,x) \otimes \transitiondens{t}(.,y))  \muest{\BS}{b,t-1}( \beta^N _{t}(x,.) \otimes \beta^N _{t}(y,.) ) \eqsp.
\end{align*}
Then, using that
 \begin{multline*} \int \muest{\BS}{b,t-1}\big( \transitiondens{t}(.,x) \otimes \transitiondens{t}(.,y) \big) \muest{\BS}{b,t-1}\left( \beta^N _{t}(x,.) \otimes \beta^N _{t}(y,.) \right) \nu(\rmd y) \\ \leq \muest{\BS}{b,t-1}\big( \transitiondens{t}(.,x) \otimes \boldone \big) \muest{\BS}{b,t-1}\left( \beta^N _{t}(x,.) \otimes \boldone \right) \eqsp, \end{multline*}
together with Fubini's theorem we obtain, using $N\geq 4$, 
\[ 
    0 \leq \int \pE \big[ R_N(x,y) \big] \nu^{\otimes 2}(\rmd x, \rmd y) \leq \frac{4\boundg}{3} \int \pE \big[ B_N(x) \big] \nu(\rmd x) \eqsp,
    \]
and by \eqref{eq:limBN} we get that
\[ 
\simplelim \int \pE \big[ R_N(x,y) \big] \nu^{\otimes 2}(\rmd x, \rmd y) = 0 \eqsp.
\]
The remaining term goes to zero by a similar reasoning. This completes the proof of \eqref{eq:convlim}.




For the convergence rate, by the strong mixing assumption we have that
 \begin{equation}
    \label{eq:betaNbound}
    \beta^N _t(x,y) \leq \frac{\boundg \sigma_{+}}{\sigma_{-} \Omega _{t-1}} \quad \forall (x,y) \in \Xset^2 \eqsp,
\end{equation}
and in the case $b_t = 0$, we have for example that 
\begin{align*}
    \int \muest{\BS}{b,t-1}(\transitiondens{t}(.,x) \otimes \boldone) & \muest{\BS}{b,t-1}(\beta^N _t(x,.) \otimes \boldone) \nu(\rmd x) \\
    & \leq \frac{\boundg \sigma_{+}}{\sigma_{-} \Omega _{t-1}} \int \muest{\BS}{b,t-1}(\transitiondens{t}(.,x) \otimes \boldone) \muest{\BS}{b,t-1}(\boldone \otimes \boldone) \nu(\rmd x) \\
    & \leq \frac{\boundg \sigma_{+}}{\sigma_{-} \Omega _{t-1}} \muest{\BS}{b,t-1}(\boldone \otimes \boldone)^2 \eqsp.
\end{align*}
and
\begin{align*}
    \int \muest{\BS}{b,t-1}\big(\transitiondens{t}(.,x) \otimes \transitiondens{t}(.,y)\big) & \muest{\BS}{b,t-1}\big(\beta^N _t(x,.) \otimes \beta^N _t(y,.) \big) \nu^{\otimes 2}(\rmd x, \rmd y) \\
    & \leq \frac{\boundg^2 \sigma^2 _{+}}{\sigma^2 _{-} \Omega^2 _{t-1}} \muest{\BS}{b,t-1}(\boldone \otimes \boldone)^2 \eqsp.
\end{align*}
Thus, replacing in \ref{eq:finalnorm2b0}, we get
\begin{align*}
    &\big\| \muest{\BS}{b,t}(h) - \muest{\BS}{b,t-1}(g^{\otimes 2} _{t-1} \bitransition{0}{t}[h])\big\|^2 _2 \\
    & \hspace{1cm} \leq \left[ \frac{(N-2)(N-3)}{N(N-1)} - 1 \right] \big\|\muest{\BS}{b,t-1}(g^{\otimes 2} _{t-1}\bitransition{b_{t}}{t}[h]) \big\|_2 ^2 \\
    & \hspace{3cm} + \frac{2\sigma_{+} \boundg^4 |h |^2 _\infty}{\sigma_{-}} \left[ \frac{2(N-2)}{N(N-1)} + \frac{\sigma _{+}}{\sigma _{-} N(N-1)} \right]\big\| \muest{\BS}{b,t-1}(\boldone) \big\|^2 _2 \eqsp.
\end{align*}
The case $b_t = 1$ is handled similarly using \ref*{eq:finalnorm2b1} which yields
\begin{align*}
    &\big\| \muest{\BS}{b,t}(h) - \muest{\BS}{b,t-1}(g^{\otimes 2} _{t-1} \bitransition{1}{t}[h]) \big\|^2 _2 \\
    & \hspace{2cm} \leq \left[ \frac{N-1}{N} - 1\right] \| \muest{\BS}{b,t-1}(g^{\otimes 2} _{t-1}\bitransition{1}{t}[h]) \|^2 _2 + \frac{\sigma_{+} \boundg^4 |h|_\infty ^2}{N} \big\| \muest{\BS}{b,t-1}(\boldone) \big\|^2 _2 \eqsp.
\end{align*}
Both upper bounds are $\bigo(N^{-1})$ by Proposition~\ref{prop:Qbound}. This concludes the proof.
\end{proof}
\subsection{Proof of Theorem~\ref*{thm:consistencyVBS}}
\label{proof:consistVBS}
The proof is a straightforward adaptation of the proof in \cite[Theorem 1]{leewhiteley}. Let $h \in \bounded{}$. By \eqref{prop:scdmoment}, 
\begin{align*}
    \asymptvarestim{\gamma, t}{\BS}(h) & = \sum_{s = 0}^t \left( \frac{N-1}{N} \right)^t \muest{\BS}{e_s, t}(h^{\otimes 2} _t) + N \left[ \left( \frac{N-1}{N} \right)^{t+1} - 1 \right] \muest{\BS}{\zero, t}(h^{\otimes 2} _t) \\
    & \hspace{1.5cm} +  \sum_{b \in \Bset_t \setminus \{\zero, e_{0:t} \}} N\bigg\{ \prod_{s = 0} ^{t}  \frac{1}{N^{b_s}}\bigg( \frac{N-1}{N}\bigg)^{1 - b_s} \bigg\} \muest{\BS}{b,t}(h^{\otimes 2} _t) \\
    & \plim \sum_{s = 0}^t \left\{ \mumeasure{b,t}(h^{\otimes 2} _t) - \mumeasure{\zero, t}(h^{\otimes 2} _t) \right\} = \asymptvar{\gamma,t}(h) \eqsp,
\end{align*}
where we have used that for any $b\in \Bset_t$, $\muest{\BS}{b,t}(h^{\otimes 2} _t) \plim \mumeasure{b,t}(h^{\otimes 2} _t)$ by Theorem~\ref*{thm:conv}.
\subsection{Proof of Theorem~\ref*{corr:paris}}
\label{proof:thmparis}
Define for any $t \in [N]$ and $b \in \Bset_t$,
\begin{equation}
    \label{eq:Gfilt}
    \Gfilt{b}{t} \eqdef \sigma\big( \Gfilt{b}{t-1} \cup \sigma(\{ J^i _{k, t-1}\} _{(i,k) \in [N]^2}) \cup \sigma(\{A^i _{t-1}, \particle{i}{t}\}_{i = 1}^N)\big) \eqsp,
\end{equation}
with $\Gfilt{b}{0} = \F{0}$. In the following, we write \begin{equation}
\label{eqdef:const}
    \Const{t} \eqdef \left\{\prod_{s = 0}^t N^{b_s} \left( \frac{N}{N-1}\right)^{1 - b_s}\right\} \joint{t}^N (\boldone)^2 / N^2\eqsp.
\end{equation}
The intermediary results used in the next proof are given in Section~\ref{sec:supportparis}.
\begin{proof}[Proof of Theorem~\ref*{corr:paris}]

Let $h \in \bounded{2}$. We proceed again by induction. The case $t = 0$ is a consequence of Theorem~\ref*{thm:conv} since $\parismuest{\BS}{b,0}(h) = \muest{\BS}{b,0}(h)$ for any $b \in \Bset_0$.  Let $t > 0$. Similarly to Theorem~\ref*{thm:conv} we make use of the following decomposition: 
\begin{multline*}
    \parismuest{\BS}{b,t}(h) - \mumeasure{b,t}(h) = \parismuest{\BS}{b,t}(h) - \parismuest{\BS}{b,t-1}( g^{\otimes 2} _{t-1} \bitransition{b_t}{t}[h]) \\ + \parismuest{\BS}{b,t-1}( g^{\otimes 2} _{t-1} \bitransition{b_t}{t}[h]) - \mumeasure{b,t-1}(g^{\otimes 2} _{t-1} \bitransition{b_t}{t}[h]) \eqsp.
\end{multline*}
By Minkowski's inequality and the induction hypothesis, it remains to prove that 
\begin{equation}
    \label{eq:conv_L2paris}
    \simplelim \big\| \parismuest{\BS}{b,t}(h) - \parismuest{\BS}{b,t-1}( g^{\otimes 2} _{t-1} \bitransition{b_t}{t}[h]) \big\|_2 = 0\eqsp.
\end{equation}
By Lemma~\ref{lem:condexpectparis}, $\pE \big[ \parismuest{\BS}{b,t}(h) \big| \Gfilt{b}{t-1} \big] = \parismuest{\BS}{b,t}( g^{\otimes 2} _{t-1} \bitransition{b_t}{t}[h])$ and 
\begin{equation*}
    \pE \big[ \parismuest{\BS}{b,t}(h) \parismuest{\BS}{b,t-1}(g^{\otimes 2} _{t-1} \bitransition{b_t}{t}[h]) \big] = \pE \big[ \parismuest{\BS}{b,t-1}(g^{\otimes 2} _{t-1} \bitransition{b_t}{t}[h])^2 \big] \eqsp,
\end{equation*}
hence,
\begin{align*}
     \big\| \parismuest{\BS}{b,t}(h) - \parismuest{\BS}{b,t}( g^{\otimes 2} _{t-1} \bitransition{b_t}{t}[h]) \big\|^2 _2 = \big\| \parismuest{\BS}{b,t}(h) \big\|_2 ^2 - \big\| \parismuest{\BS}{b,t-1}(g^{\otimes 2} _{t-1} \bitransition{b_t}{t}[h]) \big\|_2 ^2 \eqsp.
\end{align*}
By Proposition~\ref{prop:paris_conv_partitions}, if $b_t = 0$,
\begin{align*}
    \big\| \parismuest{\BS}{b,t}(h) \big\|_2 ^2 &= \sum_{p = 2}^4 \pE \bigg[ \Const{t}^2  \sum_{k^{1:4} _t \in \mathcal{I}^2 _0 \cap \sett^p _2} \Pbacksum^b _t(k^1 _t, k^2 _t) \Pbacksum^b _t(k^3 _t, k^4 _t) \bigg] \\
    & \leq \frac{(N-2)(N-3)}{N(N-1)} \big\| \parismuest{\BS}{b,t-1}(g^{\otimes 2 }_{t-1} \bitransition{0}{t}[h]) \big\|_2 ^2 \\
    & \hspace{1cm} + |h|_\infty ^2 \sum_{p = 2}^3 \pE \bigg[ \Const{t}^2  \sum_{k^{1:4} _t \in \mathcal{I}^2 _0 \cap \sett^p _2} \Pbacksum^b _t(k^1 _t, k^2 _t) \Pbacksum^b _t(k^3 _t, k^4 _t) \bigg] \eqsp,
\end{align*}
and 
\begin{align*}
       &  \big\| \parismuest{\BS}{b,t}(h) - \parismuest{\BS}{b,t}( g^{\otimes 2} _{t-1} \bitransition{b_t}{t}[h]) \big\|^2 _2 \\
       & \hspace{1cm} \leq \left( \frac{(N-2)(N-3)}{N(N-1)} - 1\right)  \big\| \parismuest{\BS}{b,t-1}(g^{\otimes 2} _{t-1} \bitransition{b_t}{t}[h]) \big\|_2 ^2 \\
        & \hspace{3cm} + |h|_\infty ^2 \sum_{p = 2}^3 \pE \bigg[ \Const{t}^2  \sum_{k^{1:4} _t \in \mathcal{I}^2 _0 \cap \sett^p _2} \Pbacksum^b _t(k^1 _t, k^2 _t) \Pbacksum^b _t(k^3 _t, k^4 _t) \bigg] \eqsp.
\end{align*}
By Proposition~\ref{prop:parisQbound}, $\As{assp:B}{}$ and the fact that $h$ is bounded, $\sup_{N \in \N} \big\| \parismuest{\BS}{b,t-1}(g^{\otimes 2} _{t-1} \bitransition{b_t}{t}[h]) \big\|_2 ^2 < \infty$ and
\[ 
    \simplelim \left( \frac{(N-2)(N-3)}{N(N-1)} - 1\right)  \big\| \parismuest{\BS}{b,t-1}(g^{\otimes 2} _{t-1} \bitransition{b_t}{t}[h]) \big\|_2 ^2  = 0 \eqsp,
\]
and by \ref*{item:paris_convb0} in Proposition~\ref{prop:conv_partitions}, the second term in the r.h.s. also goes to zero, which shows \eqref{eq:conv_L2paris} when $b_t = 0$. If $b_t = 1$, 
    \begin{align*}
        \big\| \muest{\BS}{b,t}(h) \big\|_2 ^2 &= \sum_{p = 1}^2 \pE \bigg[ \Const{t}^2  \sum_{k^{1:4} _t \in \mathcal{I}^2 _1 \cap \sett^p _2} \Pbacksum^b _t(k^1 _t, k^2 _t) \Pbacksum^b _t(k^3 _t, k^4 _t) \bigg] \\
        & \leq \frac{N-1}{N} \big\| \parismuest{\BS}{b,t-1}(g^{\otimes 2 }_{t-1} \bitransition{1}{t}[h]) \big\|_2 ^2 + |h|_\infty ^2 \pE \bigg[ \Const{t}^2  \sum_{k^{1:4} _t \in \mathcal{I}^2 _1 \cap \sett^1 _2} \Pbacksum^b _t(k^1 _t, k^2 _t) \Pbacksum^b _t(k^3 _t, k^4 _t) \bigg] \eqsp,
    \end{align*}
and $\big\| \parismuest{\BS}{b,t}(h) - \parismuest{\BS}{b,t}( g^{\otimes 2} _{t-1} \bitransition{b_t}{t}[h]) \big\|^2 _2$ goes to zero similarly to the case $b_t = 0$ and by application of Proposition~\ref{prop:conv_partitions}. 

The convergence rate follows straightforwardly by Proposition~\ref{prop:conv_partitions} since for $p \in \{2,3\}$
\begin{equation*}
\pE \bigg[ \Const{t}^2 \sum_{k^{1:4} _t \in \mathcal{I}^2 _0 \cap \sett^p _2} \Pbacksum^b _{t}(k^1 _t, k^2 _t) \Pbacksum^b _t(k^3 _t, k^4 _t) \bigg] = \bigo(N^{-1}) \eqsp,
\end{equation*}
and
\begin{equation*}
\pE \bigg[ \Const{t}^2 \sum_{k^{1:4} _t \in \mathcal{I}^2 _1 \cap \sett^1 _2} \Pbacksum^b _{t}(k^1 _t, k^2 _t) \Pbacksum^b _t(k^3 _t, k^4 _t) \bigg] = \bigo(N^{-1}) \eqsp.
\end{equation*}
\end{proof}

\subsection{Proof of Theorem~\ref*{thm:parisvar}}
\label{proof:scdmoment_paris}
The proof boils down to showing a $\paris$ version  of the identity \eqref{prop:scdmoment}. Let us first prove that for all $t \in \N$ and $(k^1 _t, k^2 _t) \in [N]^2$,
\begin{equation}
    \label{eq:sumb_paris}
    \sum_{b \in \Bset_t} \Pbacksum^b _t(k^1 _t, k^2 _t) = 1 \eqsp.
\end{equation}
We proceed by induction. If $t = 0$, 
\begin{equation*}
    \sum_{b \in \Bset_0} \Pbacksum^b _t(k^1 _0, k^2 _0) = \1_{k^1 _0 \neq k^2 _0} + \1_{k^1 _0 = k^2 _0} = 1 \eqsp.
\end{equation*}
Let $t > 0$ and assume that \eqref{eq:sumb_paris} holds at $t-1$ for all $(k^1 _{t-1}, k^2 _{t-1}) \in [N]^2$. By the induction hypothesis, 
\begin{align*}
     \sum_{b \in \Bset_t} & \Pbacksum^b _t(k^1 _t, k^2 _t) \\
    & = \sum_{b \in \Bset_{t-1}} M^{-1}\bigg\{ \1_{k^1 _t \neq k^2 _t} \sum_{i = 1}^M \Pbacksum^b _{t-1}(J^i _{k^1 _t, t-1}, J^i _{k^2 _t, t-1}) + \1_{k^1 _t = k^2 _t} \sum_{i = 1}^M \sum_{n = 1}^N \normweight{n}{t-1} \Pbacksum^b _{t-1}(J^i _{k^1 _t, t-1}, n)  \bigg\} \\
    & = M^{-1} \sum_{i = 1}^M \big\{ \1_{k^1 _t \neq k^2 _t} \sum_{b \in \Bset_{t-1}} \Pbacksum^b _{t-1}(J^i _{k^1 _t, t-1}, J^i _{k^2 _t, t-1}) + \1_{k^1 _t = k^2 _t} \sum_{n = 1}^N \normweight{n}{t-1} \sum_{b \in \Bset_{t-1}}  \Pbacksum^b _{t-1}(J^i _{k^1 _t, t-1}, n)  \bigg\} \\
    & = \1_{k^1 _t \neq k^2 _t} M^{-1} \sum_{i = 1}^M 1 + \1_{k^1 _t = k^2 _t} M^{-1} \sum_{n = 1}^N \sum_{i = 1}^M \normweight{n}{t-1} \\
    & = \1_{k^1 _t \neq k^2 _t} + \1_{k^1 _t = k^2 _t} = 1 \eqsp.
\end{align*}
which proves \eqref{eq:sumb_paris} at time $t$. Consequently, we have that for all $h \in \bounded{}$
\begin{align*}
    \sum_{b \in \Bset_t} \prod_{s = 0}^t N^{- b_s} \left( \frac{N-1}{N}\right)^{1 - b_s} \parismuest{\BS}{b,t}(h^{\otimes 2}) & = \sum_{b \in \Bset_t} \frac{\joint{t}^N (\boldone)^2}{N^2} \sum_{k^{1:2} _t \in [N]^2} \Pbacksum^b _t(k^1 _t, k^2 _t) h(\particle{k^1 _t}{t}) h(\particle{k^2 _t}{t}) \\
    & = \frac{\joint{t}^N(\boldone)^2}{N^2} \sum_{k^{1:2} _t \in [N]^2} h(\particle{k^1 _t}{t}) h(\particle{k^2 _t}{t}) \sum_{b \in \Bset_t} \Pbacksum^b _t(k^1 _t, k^2 _t) \\
    & = \joint{t}^N(\boldone)^2 \pred{t}^N(h)^2 = \joint{t}^N(h)^2 \eqsp.
\end{align*}
The convergence in probability is then obtained by mimicking the proof of Theorem~\ref*{thm:consistencyVBS} and using Theorem~\ref*{corr:paris}.

\subsection{Proof of Theorem~\ref*{thm:convFFBS}}
\label{proof:FFBS}
The proof of Theorem~\ref*{thm:convFFBS} requires the convergence of $\bwpath{s}^N[h_{0:s}](x)$ to $\bwpath{s}[h_{0:s}](x)$ $\pP$-a.s. for any $x \in \Xset$. 
\begin{proposition}
    \label{prop:Tasconv}
    For any $s \in \N^{*}$, any $x \in \Xset$ and additive functional $h_{0:s}$ \eqref{eq:additive},
    \begin{equation}
        \label{eq:TNas}
        \bwpath{s}^N [h_{0:s}](x) \aslim \bwpath{s}[h_{0:s}](x) \eqsp.
    \end{equation}
\end{proposition}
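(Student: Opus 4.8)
The plan is to combine the forward recursion \eqref{eq:FFBSrecursion} for $\bwpath{t}^N[h_{0:t}]$ with two classical convergence inputs: the strong law of large numbers for the particle filter \eqref{eq:asconv}, and the strong consistency of the FFBS estimator \eqref{eq:FFBS} for bounded path functionals \cite{doucmoulines,del2010forward}. Throughout we work under $\A{assp:B}{assp:boundup}$, and we establish the convergence for fixed $t$ and fixed $x \in \Xset$ (the $\pP$-null set may depend on $x$).

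First I would record an auxiliary fact. For any additive functional $h_{0:s}$ \eqref{eq:additive} and any $\psi \in \bounded{}$, the function $g_\psi : x_{0:s} \mapsto \psi(x_s)\,h_{0:s}(x_{0:s})$ belongs to $\bounded{s+1}$, being a finite sum of products of bounded functions (each $\tilde{h}_\ell$ is bounded and so is $\psi$). Moreover, unfolding the definitions of $\bwpath{s}^N$ and of $\smoothN{0:s}{s}{N}$ around \eqref{eq:FFBS}, one checks that
\[
\smoothN{0:s}{s}{N}(g_\psi) = \filter{s}^N\!\big(\psi\cdot\bwpath{s}^N[h_{0:s}]\big)\eqsp, \qquad \smooth{0:s}{s}(g_\psi) = \filter{s}\!\big(\psi\cdot\bwpath{s}[h_{0:s}]\big)\eqsp.
\]
Hence the strong consistency of the FFBS estimator applied to $g_\psi \in \bounded{s+1}$ yields
\[
\filter{s}^N\!\big(\psi\cdot\bwpath{s}^N[h_{0:s}]\big) \aslim \filter{s}\!\big(\psi\cdot\bwpath{s}[h_{0:s}]\big)\eqsp,
\]
which we shall use below with $s$ replaced by $t-1$ and $\psi = m_t(\cdot,x)$.

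Next, for $t \geq 1$, dividing the numerator and denominator of \eqref{eq:FFBSrecursion} by $\Omega_{t-1} = \sum_{j=1}^N \weight{j}{t-1}$ and recalling $\filter{t-1}^N = \sum_{i=1}^N \normweight{i}{t-1}\delta_{\particle{i}{t-1}}$ gives
\[
\bwpath{t}^N[h_{0:t}](x) = \frac{\filter{t-1}^N\!\big(m_t(\cdot,x)\,\tilde{h}_{t-1}(\cdot,x)\big) + \filter{t-1}^N\!\big(m_t(\cdot,x)\,\bwpath{t-1}^N[\tilde{h}_{0:t-1}]\big)}{\filter{t-1}^N\!\big(m_t(\cdot,x)\big)}\eqsp,
\]
where for $t=1$ the additive functional $\tilde{h}_{0:0}$ is the empty sum so the middle term vanishes. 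By $\A{assp:boundup}{}$ the functions $m_t(\cdot,x)$ and $m_t(\cdot,x)\,\tilde{h}_{t-1}(\cdot,x)$ are bounded, so by \eqref{eq:asconv} the denominator and the first numerator term converge $\pP$-a.s. to $\filter{t-1}(m_t(\cdot,x))$ and $\filter{t-1}(m_t(\cdot,x)\tilde{h}_{t-1}(\cdot,x))$ respectively; the limiting denominator is strictly positive since $m_t(\cdot,x)>0$ by $\A{assp:positive}{}$. The middle numerator term is exactly $\filter{t-1}^N\!\big(m_t(\cdot,x)\cdot\bwpath{t-1}^N[\tilde{h}_{0:t-1}]\big)$, which by the auxiliary fact converges $\pP$-a.s. to $\filter{t-1}\!\big(m_t(\cdot,x)\cdot\bwpath{t-1}[\tilde{h}_{0:t-1}]\big)$. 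Assembling these limits and invoking the backward decomposition of $\bwpath{t}$ recalled just before \eqref{eq:FFBSrecursion},
\[
\bwpath{t}^N[h_{0:t}](x) \aslim \frac{\filter{t-1}\!\big(m_t(\cdot,x)\{\bwpath{t-1}[\tilde{h}_{0:t-1}]+\tilde{h}_{t-1}(\cdot,x)\}\big)}{\filter{t-1}\!\big(m_t(\cdot,x)\big)} = \bwker{t-1}\!\big[\bwpath{t-1}[\tilde{h}_{0:t-1}]+\tilde{h}_{t-1}(\cdot,x)\big](x) = \bwpath{t}[h_{0:t}](x)\eqsp.
\]

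The main obstacle is the middle term $\filter{t-1}^N\!\big(m_t(\cdot,x)\,\bwpath{t-1}^N[\tilde{h}_{0:t-1}]\big)$: here a random, $N$-dependent function, namely $\bwpath{t-1}^N[\tilde{h}_{0:t-1}]$, is integrated against the fluctuating empirical filter $\filter{t-1}^N$ at the moving particle locations, so the pointwise a.s.\ convergence of $\bwpath{t-1}^N[\tilde{h}_{0:t-1}]$ alone cannot be propagated by a continuous-mapping argument. The observation that unlocks the argument is that this term is precisely a FFBS estimate evaluated at the bounded path functional $g_\psi$ with $\psi = m_t(\cdot,x)$, so that the known strong consistency of FFBS takes care of it; everything else reduces to the particle-filter SLLN \eqref{eq:asconv} together with the boundedness and positivity guaranteed by $\A{assp:positive}{}$ and $\A{assp:boundup}{}$.
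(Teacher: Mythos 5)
Your argument is correct and follows essentially the same route as the paper: both peel off one step of the backward recursion \eqref{eq:FFBSrecursion} and reduce the only nontrivial term to the Hoeffding-type inequality for weighted FFBS statistics (Theorem~\ref{thm:parishoeff}, which is exactly the ``strong consistency of FFBS'' you invoke) applied at time $t-1$ with $f_{t-1}=m_t(\cdot,x)$. The paper merely packages this differently, centering the numerator by $\bwpath{s}[h_{0:s}](x)$ so that $a_N$ concentrates around $0$ and Lemma~\ref{lem:genhoeff} gives an exponential deviation bound for the ratio before Borel--Cantelli, whereas you obtain the (sufficient) almost-sure limit by passing to the limit in numerator and denominator separately and using positivity of $\filter{t-1}(m_t(\cdot,x))$.
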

\begin{proof}
Let $x \in \Xset$. Define \begin{align*}
    \begin{cases}
    a_N \eqdef \pred{s-1}^N \left(g_{s-1} \big\{ \bwpath{s-1}^N[\tilde{h}_{0:s-1}]f^x _{s-1} +  \tilde{f}^x _{s-1}\big\}\right)\eqsp, \\
    b_N \eqdef \pred{s-1}^N \left( g_{s-1} m_{s}(., x) \right)\eqsp, \\
    b \eqdef \pred{s-1}\big(g_{s-1} m_{s}(.,x) \big) \eqsp.
    \end{cases}
\end{align*}
where $f^x _{s-1} : y \mapsto \transitiondens{s}(y,x)$ and 
\[ 
\tilde{f}^x _{s-1} : y \mapsto \transitiondens{s}(y, x) \left\{ \tilde{h}_{s-1}(y,x) - \bwpath{s}[h_{0:s}](x) \right\} \eqsp.
\]
Then, we have that $ a_N / b_N = \bwpath{s}^N [h_{0:s}](x) - \bwpath{s} [h_{0:s}](x)$. By $\As{assp:boundup}{}$, 
 $(f^x _{s-1}, \tilde{f}^x _{s-1}) \in \bounded{}^2$ for any $x \in \Xset$  and 
\begin{align*}
    \pred{s-1}\left(g_{s-1}\{ \bwpath{s-1}[\tilde{h}_{0:s-1}] f^x _{s-1} + \tilde{f}^x_{s-1} \} \right) & = 0 \eqsp.
\end{align*}
Hence, choosing $f_{s-1} = f^x _{s-1}$ and $\tilde{f}_{s-1} = \tilde{f}^x _{s-1}$ in  Theorem~\ref{thm:parishoeff} \eqref{eq:hoeffding1}, there exists $(d, \tilde{d}) \in (\R^{*} _{+})^2$ such that
\begin{equation}
    \pP \left( \left| a_N \right| \geq \epsilon \right) \leq \tilde{d} \exp(- dN\epsilon^2) \eqsp.
\end{equation}
On the other hand, by choosing $f_{s-1} = f^x _{s-1}$ and $\tilde{f}_{s-1} = 0$, there exists $(d', \tilde{d}') \in (\R^{*} _{+})^2$ such that 
\begin{align*}
    \pP \left( \left| b_N - b \right| \geq \epsilon\right)  \leq \tilde{d}' \exp(- d' N\epsilon^2) \eqsp.
\end{align*}
Finally, since $\left| a_N / b_N \right| \leq \left| \bwpath{s}^N [h_{0:s}](x) \right| + \left| \bwpath{s} [h_{0:s}](x) \right| \leq 2 | h_{0:s} | _\infty$ $\pP$-a.s. and $b > 0$ by $\As{assp:B}{assp:positive}$, there exist $(c_s, \tilde{c}_s) \in (\R^{*} _{+})^2 $ by Lemma~\ref{lem:genhoeff} such that 
\[
\pP \left( \left| a_N / b_N \right| \geq \epsilon\right) = \pP \left( \left| \bwpath{s}^N [h_{0:s}](x) - \bwpath{s} [h_{0:s}](x) \right|\geq \epsilon \right) \leq \tilde{c}_s \exp(- c_s N \epsilon^2) \eqsp,
\]
from which \eqref{eq:TNas} follows by applying the Borel-Cantelli Lemma. 
\end{proof}
\begin{proposition}
    \label{prop:filterTcomposition}
    For any $s \in \N^{*}$ and additive functional $h_{0:s}$ \eqref{eq:additive}
    \begin{equation}
        \simplelim \pE \left[ \filter{s-1}^N \transition{s}\left[ \big( \bwpath{s}^N [h_{0:s}] - \bwpath{s}[h_{0:s}] \big)^4\right]\right] = 0\eqsp.
    \end{equation}
\end{proposition}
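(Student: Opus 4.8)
The plan is to show $\filter{s-1}^N\transition{s}\big[(\bwpath{s}^N[h_{0:s}]-\bwpath{s}[h_{0:s}])^4\big]\aslim 0$ and then conclude by ordinary dominated convergence. Write $\Delta_N\eqdef(\bwpath{s}^N[h_{0:s}]-\bwpath{s}[h_{0:s}])^4$. The first step is a uniform bound: since $h_{0:s}$ is additive with bounded increments, the recursion \eqref{eq:FFBSrecursion} writes $\bwpath{s}^N[h_{0:s}](x)$ as a convex combination of numbers of modulus at most $\kappa_s\eqdef\sum_{\ell=0}^{s-1}|\tilde h_\ell|_\infty$ (immediate induction on $s$), and the same holds for $\bwpath{s}[h_{0:s}]$ because each $\bwker{\ell}$ is a Markov kernel; hence $0\le\Delta_N\le(2\kappa_s)^4=:C$. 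As $\filter{s-1}^N\transition{s}$ is a probability measure, $0\le\filter{s-1}^N\transition{s}[\Delta_N]\le C$ for every $N$, so the family is uniformly integrable and it is enough to prove the almost sure convergence to $0$.

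The obstacle is that $\filter{s-1}^N\transition{s}$ is random and $N$-dependent, so per-point convergence of $\Delta_N$ does not transfer to the integral. To handle this I would change the reference measure to the deterministic $\pred{s}=\filter{s-1}\transition{s}$. Under $\As{assp:A}{}$ we have $\filter{s-1}^N\transition{s}(\rmd x)=\filter{s-1}^N(\transitiondens{s}(\cdot,x))\,\nu(\rmd x)$ and $\pred{s}(\rmd x)=\filter{s-1}(\transitiondens{s}(\cdot,x))\,\nu(\rmd x)$, with $\filter{s-1}(\transitiondens{s}(\cdot,x))>0$ for every $x$ by $\As{assp:positive}{}$. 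With $\rho_N(x)\eqdef\filter{s-1}^N(\transitiondens{s}(\cdot,x))/\filter{s-1}(\transitiondens{s}(\cdot,x))$ this gives $\filter{s-1}^N\transition{s}[\Delta_N]=\pred{s}(\Delta_N\rho_N)$, and since each $\transitiondens{s}(\particle{i}{s-1},\cdot)$ integrates to $1$ against $\nu$ one gets the \emph{exact} identity $\pred{s}(\rho_N)=\sum_{i=1}^N\normweight{i}{s-1}=1$ for all $N$.

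I would then apply the generalized dominated convergence theorem (Theorem~\ref{thm:GDCT}) $\omega$ by $\omega$ against the fixed measure $\pred{s}$, with $f_N=\Delta_N\rho_N$, $g_N=C\rho_N$, $f=0$, $g=C$. The pointwise inputs are $\Delta_N(x)\to 0$ $\pP$-a.s.\ for each $x$ (Proposition~\ref{prop:Tasconv}) and $\rho_N(x)\to 1$ $\pP$-a.s.\ for each $x$ (the strong law \eqref{eq:asconv} applied to $\transitiondens{s}(\cdot,x)\in\bounded{}$, bounded by $\As{assp:boundup}{}$); a Fubini argument upgrades "for each $x$, $\pP$-a.s." to "$\pP$-a.s., for $\pred{s}$-a.e.\ $x$". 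Together with $0\le f_N\le g_N$, $g_N\to g$ $\pred{s}$-a.e.\ and the exact normalisation $\pred{s}(g_N)=C=\pred{s}(g)$, Theorem~\ref{thm:GDCT} yields $\pred{s}(f_N)\to\pred{s}(f)=0$ for a.e.\ $\omega$, i.e.\ $\filter{s-1}^N\transition{s}[\Delta_N]\aslim 0$. The uniform bound of the first step and ordinary dominated convergence then give $\simplelim\pE\big[\filter{s-1}^N\transition{s}[\Delta_N]\big]=0$.

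The crux — the step I expect to require the most care to state cleanly — is exactly this change of reference measure: $\rho_N$ is \emph{not} uniformly bounded in $x$ (we do not assume $\As{assp:boundbelow}{}$, so $\filter{s-1}(\transitiondens{s}(\cdot,x))$ may be arbitrarily small), hence a naive domination fails, and it is the exact normalisation $\pred{s}(\rho_N)=1$ that makes the generalized dominated convergence theorem applicable; the only other delicate point is the Fubini passage from per-$x$ almost sure convergence to almost sure $\pred{s}$-a.e.\ convergence, which is routine since the sets involved are jointly measurable and $\pred{s}$ is finite.
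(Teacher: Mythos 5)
Your proof is correct, and it rests on the same two pillars as the paper's argument: the pointwise almost sure convergence of Proposition~\ref{prop:Tasconv} and the generalized dominated convergence theorem (Theorem~\ref{thm:GDCT}) with a dominating sequence built from the mixture density $x\mapsto\filter{s-1}^N\big(\transitiondens{s}(\cdot,x)\big)$, whose total $\nu$-mass is exactly normalized. The difference is the order of operations. The paper integrates over $\omega$ first: it applies Theorem~\ref{thm:GDCT} with respect to the fixed reference measure $\nu$ to the deterministic functions $x\mapsto\pE\big[\Delta_N(x)\,\filter{s-1}^N(\transitiondens{s}(\cdot,x))\big]$, dominated by $x\mapsto\pE\big[\filter{s-1}^N(\transitiondens{s}(\cdot,x))\big]$, and uses the crude bound $\filter{s-1}^N(\transitiondens{s}(\cdot,x))\leq\sigma_{+}$ from \As{assp:boundup}{} together with ordinary dominated convergence in $\omega$ to get the required pointwise limits. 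You instead keep $\omega$ fixed, transfer the random integral onto the deterministic measure $\pred{s}$ through the ratio $\rho_N$, exploit the exact identity $\pred{s}(\rho_N)=1$, and apply Theorem~\ref{thm:GDCT} pathwise before a final bounded-convergence step in $\omega$. Your route yields the slightly stronger intermediate conclusion $\filter{s-1}^N\transition{s}[\Delta_N]\aslim 0$, but it pays for this with two ingredients the paper's version does not need: the positivity assumption \As{assp:positive}{} (so that $\rho_N$ is well defined) and the Fubini upgrade from ``for each $x$, $\pP$-a.s.'' to ``$\pP$-a.s., for $\pred{s}$-a.e.\ $x$''. Both are available here -- \As{assp:positive}{} is among the standing assumptions of Theorem~\ref{thm:convFFBS} and the densities are jointly measurable -- so the argument goes through; note also that the paper's proof silently uses the same Tonelli exchange at the very last step, so the measurability burden is comparable.
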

\begin{proof}
    The proof is a straightforward application of \cite[Lemma 17]{paris} (which dates back to \cite{doucmoulines}). We recall it with its proof for the sake of completeness. Define for any $x \in \Xset$
    \begin{align*}
        \begin{cases}
            A_N(x) & \eqdef \big| \bwpath{s}^N [h_{0:s}](x) - \bwpath{s}[h_{0:s}](x) \big|^4 \filter{s-1}^N\big(\transitiondens{s}(.,x)\big) \eqsp, \\
            \widetilde{A} _N(x) & \eqdef \filter{s-1}^N\big(\transitiondens{s}(.,x)\big)  \eqsp, \\
            \widetilde{A} (x) & \eqdef \filter{s-1}\big(\transitiondens{s}(.,x)\big)  \eqsp.
        \end{cases}
    \end{align*}
    We apply Theorem~\ref{thm:GDCT} with $f_N = \pE \big[A _N \big]$, $g_N = \pE \big[ \widetilde{A} _N \big]$, $f = 0$ and  $g = \widetilde{A}$.
    \begin{enumerate}[label=(\roman*)]
    \item For any $x_s \in \Xset$, $\big| \bwpath{s}^N [h_{0:s}] \big|(x_s) \leq \int | h_{0:s} |(x_{0:s}) \bwpath{s}^N (x_s, \rmd x_{0:s-1}) \leq |h_{0:s}|_\infty$, hence \[ \pE \big[ A_N(x) \big] \leq 16|h_{0:s} |^4 _\infty \pE \big[ \widetilde{A}_N(x) \big]. \]
    
    \item We have that $\widetilde{A}_N (x) \aslim \widetilde{A}(x)$ for any $x \in \Xset$ by $\As{assp:boundup}{}$, \eqref{eq:asconv} and  by the dominated convergence theorem $\simplelim \pE \big[ \widetilde{A}_N(x) \big] = \widetilde{A}(x)$. On the other hand, $\int \pE \big[ \widetilde{A}_N(x) \big] \nu(\rmd x) = \pE \big[ \filter{s}^N(\boldone) \big]$, $\int \widetilde{A}(x) \nu(\rmd x) = \filter{s}(\boldone)$ and $\simplelim \pE \big[ \filter{s}^N(\boldone) \big] = \filter{s}(\boldone)$ again by the dominated convergence theorem. Hence
    \begin{equation}
        \simplelim \int \pE \big[ \widetilde{A}_N(x) \big] \nu(\rmd x) = \simplelim \pE \big[ \filter{s}^N(\boldone) \big] = \filter{s}(\boldone) = \int \widetilde{A}(x) \nu(\rmd x) \eqsp.
    \end{equation}
    \item  By Proposition~\ref{prop:Tasconv}, $\As{assp:boundup}{}$ and \eqref{eq:asconv}
    \[
        A_N(x) \aslim 0 \eqsp,
    \] 
    and since $A_N(x) \leq 16 |h_{0:s} |^4 _\infty \sigma_{+}$ $\pP$-a.s.,  by the dominated convergence theorem we get $\simplelim \pE \big[ A _N(x) ] = 0 \eqsp.$
    \end{enumerate}
    Finally, by Theorem~\ref{thm:GDCT} 
    \begin{align}
        \simplelim \pE \bigg[ \int A_N(x) \nu(\rmd x) \bigg] & = \simplelim \pE \big[ \filter{s-1}^N \transition{s}\big[ \big( \bwpath{s}^N [h_{0:s}] - \bwpath{s}[h_{0:s}] \big)^4\big]\big] \\
        & = \int \simplelim \pE \big[ A_N(x) \big] \nu(\rmd x) = 0 \eqsp. \nonumber
    \end{align}
\end{proof}
\begin{proof}[Proof of Theorem \ref*{thm:convFFBS}]
We write \begin{align*}
    \bigH^N _{s,t} & \eqdef \bwpath{s}^N [h_{0:s}]c_t + \tilde{h}_{s:t}, \quad \bigH _{s,t} \eqdef \bwpath{s} [h_{0:s}]d_t + \tilde{h}_{s:t} \\
    \bigF^N _{s,t} & \eqdef \bwpath{s}^N [f_{0:s}]c_t + \tilde{f}_{s:t}, \quad \bigF _{s,t} \eqdef \bwpath{s} [f_{0:s}]d_t + \tilde{f}_{s:t}
    \end{align*}
We proceed by induction on $t \geq s$ with $s$ fixed.  By Theorem~\ref*{thm:conv},
\begin{equation*}
    \simplelim \left\| \muest{\BS}{e_s, s}\big(\bigH_{s,s} \otimes \bigF_{s,s}\big) - \mumeasure{e_s, s}\big(\bigH_{s,s} \otimes \bigF_{s,s}\big) \right\|_2 ^2 = 0\eqsp. 
\end{equation*}
Hence, by the triangle inequality it suffices to show that the difference with the "idealized"
estimator goes to 0, i.e. 
\begin{equation}
    \label{eq:inducFFBS}
    \simplelim \left\| \muest{\BS}{e_s, s}\big(\bigH^N _{s,s} \otimes \bigF^N _{s,s}\big) - \muest{\BS}{e_s, s}\big(\bigH_{s,s} \otimes \bigF_{s,s}\big) \right\|_2 ^2 = 0\eqsp.
\end{equation}
For any $(h_{0:s}, f_{0:s})$ \eqref{eq:additive} and $(h_s, f_s) \in \bounded{}^2$, by \eqref{eq:Qexpr}
\begin{align*}
    & \left\| \muest{\BS}{e_s, s}\big(\bigH^N _{s,s} \otimes \bigF^N _{s,s}\big) - \muest{\BS}{e_s, s}\big(\bigH_{s,s} \otimes \bigF_{s,s}\big) \right\|_2  \\
     & \leq N^{-2} \bigg\| \sum_{i,j \in [N]^2} \frac{N^{s+1} \joint{s}^N(\boldone)^2}{(N-1)^s} \backsum^{e_s} _s(i,j) \bigg[ \bigH^N _{s,s}(\particle{i}{s}) \bigF^N _{s,s}(\particle{j}{s}) - \bigH_{s,s}(\particle{i}{s}) \bigF _{s,s}(\particle{j}{s}) \bigg] \bigg\|_2 \\
     & \leq N^{-1} \sum_{i,j \in [N]^2} \left\|  \frac{N^s \joint{s}^N(\boldone)^2}{(N-1)^s} \backsum^{e_s} _s(i,j) \right\|_4 \bigg\| \bigH^N _{s,s}(\particle{i}{s}) \bigF^N _{s,s}(\particle{j}{s}) - \bigH_{s,s}(\particle{i}{s}) \bigF _{s,s}(\particle{j}{s}) \bigg\|_4
\end{align*}
by Cauchy-Schwarz inequality. We now show $\sup_{N \in \N}\left\|  N^s (N-1)^{-s} \joint{s}^N(\boldone)^2\backsum^{e_s} _s(i,j) \right\|_4$ is bounded for all $(i,j) \in [N]^2$. We first show by induction that for any $n \in \N$, $\backsum^\zero _n(i,j) \leq \1_{i \neq j}$. For all $(i,j) \in [N]^2$ , $\backsum^\zero _0 (i,j) = \1_{i \neq j}$, and for any $n > 0$, by \eqref{lem:bigtauexpr}
\begin{align*}
    \backsum^\zero _n(i,j) & = \1_{i \neq j} \sum_{k,\ell \in [N]^2} \beta^\BS _n(i,k) \beta^\BS _n(j,\ell) \backsum^\zero _{n-1}(k,\ell) \\
    &  \leq \1_{i \neq j} \sum_{k, \ell \in [N]^2} \beta^\BS _n(i,k) \beta^\BS _n(j,\ell) \1_{k \neq \ell} \\
    & \leq \1_{i \neq j} \sum_{k, \ell \in [N]^2} \beta^\BS _n(i,k) \beta^\BS _n(j, \ell) \leq \1_{i \neq j} \eqsp,
\end{align*}
where we have used the induction hypothesis in the second line. This shows the result. Next, we have that 
\begin{align*}
    \backsum^{e_s} _s(i,j) & = \1_{i = j} \sum_{k, \ell \in [N]^2} \beta^\BS _s(i,k) \normweight{\ell}{s-1} \backsum^\zero _{s-1}(k, \ell) \\
    & \leq \1_{i = j} \sum_{k, \ell \in [N]^2} \beta^\BS _s(i,k) \normweight{\ell}{s-1} \1_{k \neq \ell} \\
    & \leq \1_{i = j} \sum_{k, \ell \in [N]^2} \beta^\BS _s(i,k) \normweight{\ell}{s-1}  = \1_{i = j} \eqsp.
\end{align*}
Hence, $\sup_{N \in \N}\left\|  N^s \joint{s}^N(\boldone)^2/(N-1)^s \backsum^{e_s} _s(i,j) \right\|_4 \leq (2\boundg^2)^s \1_{i = j}$. Consequently, 
\begin{multline}
    \label{eq:boundQes}
 \left\| \muest{\BS}{e_s, s}\big(\bigH^N _{s,s} \otimes \bigF^N _{s,s}\big) - \muest{\BS}{e_s, t}\big(\bigH_{s,s} \otimes \bigF_{s,s}\big) \right\|_2   \\
     \leq \frac{(2 \boundg^2)^s}{N}\sum_{i = 1}^N \left\| \bigH^N _{s,s}(\particle{i}{s}) \bigF^N _{s,s}(\particle{i}{s}) - \bigH_{s,s}(\particle{i}{s}) \bigF _{s,s}(\particle{i}{s}) \right\|_4 \eqsp, 
\end{multline}
and
\begin{align*}
    & \left\| \bigH^N _{s,s}(\particle{i}{s}) \bigF^N _{s,s}(\particle{j}{s}) - \bigH_{s,s}(\particle{i}{s}) \bigF _{s,s}(\particle{j}{s}) \right\|_4 \\
    & \hspace{1cm} \leq \big\| \big( \bigH^N _{s,s}(\particle{i}{s}) - \bigH _{s,s}(\particle{i}{s}) \big) \bigF^N _{s,s}(\particle{i}{s}) \big\|_4 + \big\| \big( \bigF^N _{s,s}(\particle{i}{s}) - \bigF _{s,s}(\particle{i}{s}) \big) \bigH^N _{s,s}(\particle{i}{s}) \big\|_4 \\
    & \hspace{1cm} \leq C_{f,c} \big\|  \bwpath{s}^N[h_{0:s}](\particle{i}{s}) - \bwpath{s}[h_{0:s}](\particle{i}{s}) \big\|_4 + C_{h,d} \big\| \bwpath{s}^N[f_{0:s}](\particle{i}{s}) - \bwpath{s}[f_{0:s}](\particle{i}{s})  \big\|_4 \eqsp.
\end{align*}
where $C_{f,d} \eqdef |d_s|^4 _\infty \big( | f_{0:s} |_\infty + |f_s| _\infty \big)^4 $ and $C_{h,c} \eqdef |c_s|^4 _\infty \big( | h_{0:s} |_\infty + |h_s| _\infty \big)^4 $ 
which are finite because $\tilde{h}_{s}, \tilde{f}_{s}) \in \bounded{2}^4$, $(c_s, d_s, h_s, f_s) \in \bounded{}^4$. We have used that
\begin{equation*}
\big| \bigH^N _{s,s}(\particle{i}{s}) \big| \leq \int |c_s|_\infty | h_{0:s} |_\infty \bwpath{s}^N(\particle{i}{s}, \rmd x_{0:s-1}) + | h_s |_\infty = |c_s |_\infty |h_{0:s}|_\infty + | h_s |_\infty = C_{h,c} 
\eqsp.
\end{equation*}
For any $h_{0:s} \in \bounded{s+1}$,
\begin{align*}
    & \big\| \bwpath{s}^N [h_{0:s}](\particle{i}{s}) - \bwpath{s}[h_{0:s}](\particle{i}{s}) \big\|_4 ^4 \\
    & \hspace{2cm} =  \pE \bigg[ \pE \big[ \bwpath{s}^N [h_{0:s}](\particle{i}{s}) - \bwpath{s}[h_{0:s}](\particle{i}{s}) \big)^4 \big| \F{t-1} \big] \bigg] \\
    & \hspace{2cm}  = \pE \bigg[ \sum_{j = 1}^N \normweight{j}{s-1} \int \big( \bwpath{s}^N [h_{0:s}](x) - \bwpath{s}[h_{0:s}](x) \big)^4 \transition{s}(\particle{j}{s-1}, \rmd x) \bigg]\\
    & \hspace{2cm}  = \pE \left[ \filter{s-1}^N \transition{s}\left[ \big( \bwpath{s}^N [h_{0:s}] - \bwpath{s}[h_{0:s}] \big)^4\right] \right]\eqsp,
\end{align*}
and replacing in \eqref{eq:boundQes} we get
\begin{align*}
& \left\| \muest{\BS}{e_s, t}\big(\bigH^N _{s,s} \otimes \bigF^N _{s,t}\big) - \muest{\BS}{e_s, s}\big(\bigH_{s,t} \otimes \bigF_{s,t}\big) \right\|_2 ^2  \\
     & \hspace{2cm} \leq (2\boundg^2)^s \bigg\{ C_{f,d} \pE \left[ \filter{s-1}^N \transition{s}\left[ \big( \bwpath{s}^N [h_{0:s}] - \bwpath{s}[h_{0:s}] \big)^4\right] \right] \\
     &\hspace{5cm}  + C_{h,c} \pE \left[ \filter{s-1}^N \transition{s}\left[ \big( \bwpath{s}^N [f_{0:s}] - \bwpath{s}[f_{0:s}] \big)^4\right] \right] \bigg\}
\end{align*}
The upperbound goes to zero by Proposition~\ref{prop:filterTcomposition} and this finishes the proof of the initialization.

Let $t > s$ and $h_{0:s} \in \additive{s+1}$ an additive functional. Assume that \eqref{eq:inducFFBS} holds at $t-1$. By the induction hypothesis 
\begin{align*}
    \simplelim \big\| \muest{\BS}{e_s, t-1}\big( \Q{t} \bigH^N _{s,t} \otimes 
    \Q{t} \bigF^N _{s,t} \big) - \mumeasure{e_s, t-1}\big( \Q{t} \bigH _{s,t} \otimes 
    \Q{t} \bigF _{s,t} \big)\big\|_2 = 0 \eqsp,
\end{align*}
where $\Q{t}$ is defined in \eqref{def:Qdef} and for example
\[ 
  \Q{t} \bigH^N _{s,t} (x_{s:t-1}) = \bwpath{s}^N[h_{0:s}](x_s) \Q{t}[c_t](x_{t-1}) + \Q{t}[\tilde{h}_{s:t}](x_{s:t-1}) \eqsp,
\]
where $\Q{t}[c_t]$ and $\Q{t}[\tilde{h}_{s:t}]$ are bounded by $\As{assp:B}{}$,
and by defintion of $\mumeasure{e_s, t}$ \eqref{def:Qdef}
\begin{equation*}
    \mumeasure{e_s, t-1}\big( \Q{t} \bigH _{s,t} \otimes 
    \Q{t} \bigF _{s,t} \big) = \mumeasure{e_s, t}\big( \bigH_{s,t} \otimes \bigF _{s,t}\big) \eqsp.
\end{equation*} 
Hence, to prove \eqref{eq:hypthmFFBS} it is enough to show
\begin{equation}
\begin{alignedat}{2}
    \label{eq:FFBSdelta}
     \simplelim \big\| \muest{\BS}{e_s, t}\big( \bigH^N _{s,t}\otimes 
    \bigF^N _{s,t} \big) - \muest{\BS}{e_s, t-1}\big( \Q{t} \bigH^N _{s,t} \otimes 
    \Q{t} \bigF^N _{s,t} \big)\big\|_2 = 0 \eqsp.
\end{alignedat}
\end{equation} 
Because $\bwpath{s}^N [h_{0:s}]$ and $\bwpath{s}^N [f_{0:s}]$ are $\F{t-1}$-measurable, by Proposition~\ref*{prop:mu_expression} 
\[ 
\pE \big[ \muest{\BS}{e_s, t}\big( \bigH^N _{s,t} \otimes \bigF^N _{s,t} \big) \big| \F{t-1} \big] =  \muest{\BS}{e_s, t-1}\big( \Q{t} \bigH^N _{s,t} \otimes \Q{t} \bigF^N _{s,t} ) \eqsp,
\]
and thus
\begin{multline}
    \big\| \muest{\BS}{e_s, t}\big( \bigH^N _{s,t}\otimes 
    \bigF^N _{s,t} \big) - \muest{\BS}{e_s, t-1}\big( \Q{t} \bigH^N _{s,t} \otimes 
    \Q{t} \bigF^N _{s,t} \big)\big\|_2 \\ = \big\| \muest{\BS}{e_s, t}\big( \bigH^N _{s,t}\otimes 
    \bigF^N _{s,t} \big) \big\|_2 - \big\| \muest{\BS}{e_s, t-1}\big( \Q{t} \bigH^N _{s,t} \otimes 
    \Q{t} \bigF^N _{s,t} \big)\big\|_2 \eqsp.
\end{multline}
Now note that Proposition~\ref{prop:norm2QBS} is still applicable with $h = \bigH^N _{s,t} \otimes \bigF^N _{s,t}$ although there is a slight abuse because this specific $h$ depends on the particles up to $s-1$ through $\bwpath{s}^N[h_{0:s}]$ and $\bwpath{s}^N[f_{0:s}]$. However, as they are $\F{t-1}$-measurable, Proposition~\ref{cor:condexpect} is still valid and hence Proposition~\ref{prop:norm2QBS}. Additionally, this specific $h$ is bounded almost surely since for any $(x_{s:t}, x' _{s:t}) \in \big( \Xset^{t-s+1} \big)^2$
\[ 
\big| \bigH^N _{s,t}(x_{s:t}) \bigF^N _{s,t}(x' _{s:t}) \big| \leq C \eqdef \big( |h_{0:s}|_\infty |c_t| _\infty + | \tilde{h}_{s:t} |_\infty \big)\big( |f_{0:s}|_\infty |d_t| _\infty + | \tilde{f}_{s:t} |_\infty \big)
\]
and hence
\begin{align*}
    &\big\| \muest{\BS}{e_s, t}\big( \bigH^N _{s,t}\otimes 
    \bigF^N _{s,t} \big) - \muest{\BS}{e_s, t-1}\big( \Q{t} \bigH^N _{s,t} \otimes 
    \Q{t} \bigF^N _{s,t} \big)\big\|_2\\
    & \hspace{.3cm} \leq \left[ \frac{(N-2)(N-3)}{N(N-1)} - 1 \right] \big\|\muest{\BS}{b,t-1}\big( \Q{t} \bigH^N _{s,t} \otimes 
    \Q{t} \bigF^N _{s,t} \big) \big\|_2 ^2 \\
    & \hspace{1cm} + \frac{N-2}{N-1} \boundg^3 C ^2 \int \nu(\rmd x)\pE \bigg[ \frac{\Omega_{t-1}}{N} \bigg\{ \muest{\BS}{b,t-1}(\transitiondens{t}(.,x) \otimes \boldone) \muest{\BS}{b,t-1}(\beta^N _{t}(x,.) \otimes \boldone) \\
    & \hspace{1cm} + \muest{\BS}{b,t-1}(\boldone \otimes \transitiondens{t}(.,x)) \muest{\BS}{b,t-1}(\beta^N _{t}(x,.) \otimes \boldone) + \muest{\BS}{b,t-1}(\transitiondens{t}(.,x) \otimes \boldone) \\
     & \hspace{1cm}\times \muest{\BS}{b,t-1}(\boldone \otimes \beta^N _{t}(x,.)) 
     + \muest{\BS}{b,t-1}(\boldone \otimes \transitiondens{t}(.,x)) \muest{\BS}{b,t-1}(\boldone \otimes \beta^N _{t}(x,.)) \bigg] \\
     & + \int \nu^{\otimes 2}(\rmd y,\rmd x)\pE \bigg[ \frac{\boundg^2 C^2 \Omega_{t-1} ^2}{N(N-1)} \\
    & \hspace{1cm}\times \bigg\{
         \muest{\BS}{b,t-1}(\transitiondens{t}(.,x) \otimes \transitiondens{t}(.,y)) \muest{\BS}{b,t-1}\left( \beta^N _{t}(x,.) \otimes \beta^N _{t}(y,.) \right) \\
     & \hspace{2cm} + \muest{\BS}{b,t-1}(\transitiondens{t}(.,x) \otimes \transitiondens{t}(.,y))  \muest{\BS}{b,t-1}\left( \beta^N _{t}(y,.) \otimes \beta^N _{t}(x,.) \right) \bigg\}\bigg] \eqsp.
\end{align*}
The first term in the r.h.s. goes to zero by Proposition~\ref{prop:Qbound} and the fact that $\Q{t} \bigH^N _{s,t} \otimes \Q{t} \bigF^N _{s,t}$ are bounded. The remaining terms are similar to those that appear in the proof of Theorem~\ref*{thm:conv} up to some constants and thus go to zero. 
\vspace{.2cm}\\
For the second part, by Theorem~\ref{thm:parishoeff} and Borel-Cantelli Lemma, $\smooth{0:t}{t}^N(h) \aslim \smooth{0:t}{t}(h)$. Then, by multiple applications of Theorem~\ref*{thm:convFFBS} and using the bilinearity of $\muest{\BS}{b,t}$ and $\mumeasure{b,t}$, for any $s \in [0:t]$ and bounded additive functional $h_t$
\begin{align*}
    \muest{\BS}{e_s,t}\big( \big[& g_t \big\{  \bwpath{s}^N [h_{0:s}] + \tilde{h}_{s:t} - \smooth{0:t}{t}^N (h_t) \big\} \big]^{\otimes 2}\big) \\
    & = \muest{\BS}{e_s,t} \big( \big[g_t \big\{ \bwpath{s}^N [h_{0:s}] + \tilde{h}_{s:t} \big\}]^{\otimes 2} \big) - \smooth{0:t}{t}^N(h_{t}) \bigg( \muest{\BS}{e_s,t} \big( \big[g_t \big\{ \bwpath{s}^N [h_{0:s}] + \tilde{h}_{s:t} \big\}] \otimes \boldone \big) + \\
    & \hspace{1cm} +  \muest{\BS}{e_s,t} \big( \boldone \otimes \big[g_t \big\{ \bwpath{s}^N [h_{0:s}] + \tilde{h}_{s:t} \big\}] \big) \bigg) + \smooth{0:t}{t}^N(h_t)^2 \muest{\BS}{e_s,t}\big( \boldone \otimes \boldone \big) \\
    & \plim \mumeasure{e_s,t} \big( \big[g_t \big\{ \bwpath{s} [h_{0:s}] + \tilde{h}_{s:t} \big\}]^{\otimes 2} \big) - \smooth{0:t}{t}(h_{t}) \bigg( \mumeasure{e_s,t} \big( \big[g_t \big\{ \bwpath{s} [h_{0:s}] + \tilde{h}_{s:t} \big\}] \otimes \boldone \big) + \\
    & \hspace{1cm} +  \mumeasure{e_s,t} \big( \boldone \otimes \big[g_t \big\{ \bwpath{s} [h_{0:s}] + \tilde{h}_{s:t} \big\}] \big) \bigg) + \smooth{0:t}{t}(h_t)^2 \mumeasure{e_s,t}\big( \boldone \otimes \boldone \big) \\
    & = \mumeasure{e_s,t}\big( \big[ g_t \big\{  \bwpath{s} [h_{0:s}] + \tilde{h}_{s:t} - \smooth{0:t}{t} (h_t) \big\} \big]^{\otimes 2}\big) \eqsp,
\end{align*}
from which the weak consistency of $\asymptvarestim{0:t|t}{\BS}(h)$ follows.
\end{proof}
\subsection{Supporting results for Theorem~\ref*{thm:conv}}
\label{apdx:proofconv}
In this section we prove Proposition~\ref{prop:Qbound} and the upperbound of $\big\| \muest{\BS}{b,t}(h) \big\|^2 _2$ used in the proof of Theorem~\ref*{thm:conv}.
\begin{proposition}
    \label{prop:Qbound}
    Assume that $\As{assp:B}{}$ holds. For any $t \in \N \eqsp,$ $b \in \Bset_t$ and $m \in \N$,
    \begin{equation}
        \label{eq:normbound}
        \sup_{N \in \N} \pE \big\| \muest{\BS}{b,t}(\boldone) \big\| _m < \infty \eqsp.
    \end{equation}
\end{proposition}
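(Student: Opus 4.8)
The plan is to prove the much stronger statement that $\muest{\BS}{b,t}(\boldone)$ is bounded above, on the whole probability space, by a \emph{deterministic} constant depending on $t$, $b$ and $\boundg$ but not on $N$; the claimed $\mathbf{L}_m$ bound \eqref{eq:normbound} is then immediate for every $m$. Two elementary facts drive the argument. First, $\muest{\BS}{b,t}(\boldone)\geq 0$: in the representation \eqref{def:bigprod} every factor is nonnegative — the running constant $\prod_s N^{b_s}(N/(N-1))^{1-b_s}$, the normalizing constant estimate $\joint{t}^N(\boldone)^2$, the coalescence indicator $\intersect{b,t}{\cdot}$, and the discrete measures $\Lambda^\BS_{1,t}$, $\Lambda^\BS_{2,t}$, which are built from the nonnegative backward kernels $\beta^\BS_s$ and the normalized weights $\normweight{i}{s}$. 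Second, $\As{assp:B}{}$ gives $\weight{i}{t-1}=g_{t-1}(\particle{i}{t-1})\leq\boundg$, hence $\Omega_{t-1}=\sum_{i=1}^N\weight{i}{t-1}\leq N\boundg$. (One may note that the conditional expectation identity of Proposition~\ref{prop:mu_expression} already yields $\pE[\muest{\BS}{b,t}(\boldone)\mid\F{t-1}]=\muest{\BS}{b,t-1}(g_{t-1}^{\otimes 2})\leq\boundg^2\,\muest{\BS}{b,t-1}(\boldone)$, but this only controls a conditional first moment; what we really need is a \emph{pointwise} recursion that survives the $\mathbf{L}_m$ norm.)

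Concretely, I would specialize the recursion \eqref{eq:alternativeQ} to $h=\boldone$. When $b_t=0$ it reads
\begin{multline*}
\muest{\BS}{b,t}(\boldone)=\frac{N}{N-1}\,\frac{\Omega_{t-1}^2}{N^2}\sum_{k^{1:2}_{0:t-1}\in[N]^{2t}}\bigprod{b,t-1}(k^1_{0:t-1},k^2_{0:t-1})\\
\times\sum_{k^{1:2}_t\in[N]^2}\1_{k^1_t\neq k^2_t}\,\beta^\BS_t(k^1_t,k^1_{t-1})\,\beta^\BS_t(k^2_t,k^2_{t-1})\eqsp,
\end{multline*}
and bounding the indicator by $1$ and using $\sum_{k}\beta^\BS_t(k,\ell)=1$ for every $\ell$ makes the inner sum at most $1$; since $\sum_{k^{1:2}_{0:t-1}}\bigprod{b,t-1}=\muest{\BS}{b,t-1}(\boldone)$, $\Omega_{t-1}\leq N\boundg$, and $N/(N-1)\leq 2$ for $N\geq 2$, we get $\muest{\BS}{b,t}(\boldone)\leq 2\boundg^2\,\muest{\BS}{b,t-1}(\boldone)$. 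When $b_t=1$, collapsing the $k^2_t$‑sum through $\1_{k^1_t=k^2_t}$ and then the $k^1_t$‑sum through $\sum_k\beta^\BS_t(k,\ell)=1$ gives
\begin{equation*}
\muest{\BS}{b,t}(\boldone)=\frac{\Omega_{t-1}^2}{N}\sum_{k^{1:2}_{0:t-1}\in[N]^{2t}}\bigprod{b,t-1}(k^1_{0:t-1},k^2_{0:t-1})\,\normweight{k^2_{t-1}}{t-1}\eqsp,
\end{equation*}
and the apparently dangerous extra power of $N$ in the prefactor is absorbed by $\normweight{k^2_{t-1}}{t-1}=\weight{k^2_{t-1}}{t-1}/\Omega_{t-1}\leq\boundg/\Omega_{t-1}$: the right-hand side is at most $(\Omega_{t-1}\boundg/N)\,\muest{\BS}{b,t-1}(\boldone)\leq\boundg^2\,\muest{\BS}{b,t-1}(\boldone)$, using $\Omega_{t-1}\leq N\boundg$ once more. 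So in both cases $\muest{\BS}{b,t}(\boldone)\leq 2\boundg^2\,\muest{\BS}{b,t-1}(\boldone)$ holds pointwise.

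It then remains to iterate this down to $t=0$, where a direct evaluation from \eqref{eq:Qexpr} and \eqref{lem:bigtauexpr} gives $\muest{\BS}{b,0}(\boldone)=1$ for both $b_0\in\{0,1\}$ — indeed $\joint{0}^N(\boldone)=1$ and the prefactor $N^{b_0}(N/(N-1))^{1-b_0}N^{-2}$ exactly cancels $\sum_{k,\ell}\backsum^b_0(k,\ell)$, which equals $N(N-1)$ if $b_0=0$ and $N$ if $b_0=1$. Hence $\muest{\BS}{b,t}(\boldone)\leq(2\boundg^2)^t$ for all $N\geq 2$, and therefore $\sup_{N\geq 2}\|\muest{\BS}{b,t}(\boldone)\|_m\leq(2\boundg^2)^t<\infty$ for every $m\in\N$, which is \eqref{eq:normbound}. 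No step here is genuinely hard; the only point requiring a little care is the case $b_t=1$, where one must track the exact power of $\Omega_{t-1}$ in the prefactor instead of crudely bounding $\normweight{i}{t-1}\leq 1$, which would leave an uncancelled factor $N$ and destroy the induction.
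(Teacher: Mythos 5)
Your induction step rests on the identity $\sum_k\beta^\BS_t(k,\ell)=1$ for fixed $\ell$, and this is where the proof breaks: the backward kernel is normalized over its \emph{second} argument, not its first. By \eqref{def:functionalweight}, $\sum_{\ell=1}^N\beta^\BS_t(k,\ell)=1$ for every $k$, whereas $S(\ell):=\sum_{k=1}^N\beta^\BS_t(k,\ell)$ is a random quantity taking values in $[0,N]$ with only the aggregate constraint $\sum_\ell S(\ell)=N$. In the recursion \eqref{eq:alternativeQ} the inner sum runs over $k^{1:2}_t$, i.e.\ over the \emph{first} arguments of $\beta^\BS_t(\cdot,k^{1:2}_{t-1})$, so bounding it by $1$ is unjustified; the correct deterministic bound is $S(k^1_{t-1})S(k^2_{t-1})$, which can be of order $N^2$. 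The same confusion appears in your $b_t=1$ case, where $\sum_{k^1_t}\beta^\BS_t(k^1_t,k^1_{t-1})$ is again set to $1$. The conclusion you aim for is in fact false: the almost-sure bound $\muest{\BS}{b,t}(\boldone)\le(2\boundg^2)^t$ fails in general. For instance, with $b=e_0$ and $t=1$ one has $\muest{\BS}{e_0,1}(\boldone)=\tfrac{N}{N-1}\tfrac{\Omega_0^2}{N^3}\sum_{k\neq\ell}\sum_i\beta^\BS_1(k,i)\beta^\BS_1(\ell,i)$, and on the (small- but positive-probability) event where all the backward distributions $\beta^\BS_1(k,\cdot)$ concentrate on the same ancestor $i$ while the weights $\weight{j}{0}$ stay of order $\boundg$, this is of order $N\boundg^2$. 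Only the $\mathbf{L}_m$ norm is uniformly bounded, not the random variable itself.

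This is precisely why the paper's proof works with conditional expectations rather than pointwise bounds: Lemma~\ref{lem:BSGTidentity} gives $\pE[\beta^\BS_t(k_t,k_{t-1})\mid\F{t-1}]=\normweight{k_{t-1}}{t-1}$, so that $\sum_{k_t}$ of the \emph{conditional expectation} equals $N\normweight{k_{t-1}}{t-1}\le N\boundg/\Omega_{t-1}$ and the powers of $\Omega_{t-1}$ cancel against the prefactor. Passing to the $m$-th moment then forces one to expand $\muest{\BS}{b,t}(\boldone)^m$ as a sum over $2m$ index chains and to partition the tuples $(k^1_t,\dots,k^{2m}_t)$ according to their number of distinct values, because products of backward weights sharing the same time-$t$ particle do not factorize into products of conditional expectations; this is the role of Lemma~\ref{lem:cardinals} and Proposition~\ref{lem:prodbeta}, which yield a bound of the form $\boundg^p/\Omega_{t-1}^p$ per block of $p$ distinct indices and hence the recursion $\pE[\muest{\BS}{b,t}(\boldone)^m]\le C\,\pE[\muest{\BS}{b,t-1}(\boldone)^m]$. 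Your base case $t=0$ and the overall inductive architecture are fine, but the induction step needs this probabilistic machinery; a purely deterministic recursion of the kind you propose cannot work.
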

    We preface the proof with supporting lemmata.
    \begin{lemma}
        \label{lem:cardinals}
        For any $p \geq 2$, and $N \geq 2m$
        \begin{align*}
            \card{\mathcal{I}^m _0 \cap \sett^p _m} = \bigo(N^p) \quad \textrm{and}\quad \card{\mathcal{I}^m _1 \cap \sett^p _m} = \bigo(N^p) \eqsp.
        \end{align*}
    \end{lemma}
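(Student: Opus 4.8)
The key observation is that both $\mathcal{I}^m_0 \cap \sett^p_m$ and $\mathcal{I}^m_1 \cap \sett^p_m$ are contained in $\sett^p_m$, since intersecting with $\mathcal{I}^m_0$ or $\mathcal{I}^m_1$ only removes tuples. Hence it suffices to prove the single bound $\card{\sett^p_m} = \bigo(N^p)$ for each fixed $p \in [2:2m]$, and the two claimed estimates then follow immediately by inclusion. So the whole argument reduces to a crude combinatorial count of the number of $2m$-tuples over $[N]$ that take exactly $p$ distinct values.

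To carry this out, I would construct an arbitrary element $k^{1:2m} \in \sett^p_m$ in two stages. First choose the \emph{unordered} set of $p$ distinct values that actually appear among $k^1,\dots,k^{2m}$; there are exactly $\binom{N}{p}$ such choices. Then assign to each of the $2m$ coordinates one of these $p$ prescribed values; this gives at most $p^{2m}$ further choices (the exact count would be the number of surjections $[2m]\twoheadrightarrow[p]$, but the upper bound $p^{2m}$ is all that is needed). Multiplying, $\card{\sett^p_m} \le \binom{N}{p}\, p^{2m}$.

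It then remains to read off the asymptotics in $N$ with $m$ (hence $p \le 2m$) fixed. Since $p$ ranges over the finite set $[2:2m]$, the factor $p^{2m}$ is bounded by a constant depending only on $m$, while $\binom{N}{p} = N(N-1)\cdots(N-p+1)/p! \le N^p/p!$. Therefore $\card{\sett^p_m} \le p^{2m} N^p / p! = \bigo(N^p)$, with the implied constant depending on $m$ but not on $N$, which is exactly the assertion. The two displayed bounds of the lemma then follow from $\mathcal{I}^m_0 \cap \sett^p_m,\ \mathcal{I}^m_1 \cap \sett^p_m \subseteq \sett^p_m$.

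There is essentially no obstacle here: the hypothesis $N \ge 2m$ only serves to make the sets in question genuinely nonempty for the relevant values of $p$ (otherwise the bound is vacuous), and the sole point requiring a little care is to record that the constant absorbed into $\bigo(\cdot)$ depends on $m$ alone, which is transparent from the counting above.
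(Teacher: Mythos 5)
Your argument is correct and is essentially the paper's own proof: both bound $\card{\sett^p_m}$ by $\binom{N}{p}p^{2m}\le N^p p^{2m}/p!$ via the same two-stage count (choose the $p$ distinct values, then assign values to the $2m$ coordinates) and conclude by the inclusions $\mathcal{I}^m_0\cap\sett^p_m,\ \mathcal{I}^m_1\cap\sett^p_m\subseteq\sett^p_m$. No gaps.
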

    \begin{proof}
        The tuples in $\sett^p _m$ contain $p$ distinct elements. These $p$ distinct elements can be selected in ${N \choose p}$ ways. For each of these tuples of size $p$, there are $p^{2m}$ tuples of size $2m$ with each element taking one of the $p$ values. These tuples of size $2m$ contain at most $p$ distinct elements. Hence,
        \[ 
            \card{\mathcal{I}^m _0 \cap \sett^p _m} \leq \card{\sett^p _m} \leq {N \choose p} p^{2m} \leq \frac{N^p}{p !} p^{2m} \eqsp,
        \]
    and similarly, 
    
    $$
    \card{\mathcal{I}^m _1 \cap \sett^p _m} \leq \frac{N^p}{p !} p^{2m} \eqsp.
    $$
    \end{proof}
    \begin{proposition}
        \label{lem:prodbeta}
        Let $t \in \N^{*}$, $m \in \N^{*}$ and $(k^1 _{t-1}, \cdots, k^{2m} _{t-1}) \in [N]^{2m}$.
        \begin{enumerate}[label=(\roman*)]
            \item \label{eq:b0}If $p \in [2:2m]$ and $(k^1 _t, \cdots, k^{2m} _t) \in \mathcal{I}^m _0 \cap \sett^p _m$,
        \begin{align*}
            \pE \left[ \prod_{j = 1}^{2m} \beta^\BS _t(k^j _t, k^j _{t-1}) \bigg| \F{t-1} \right] \leq \frac{\boundg^p}{\Omega^p _{t-1}} \eqsp.
        \end{align*}
        \item \label{eq:b1} If $p \in [1:m]$ and $(k^1 _t, \cdots, k^{2m} _t) \in \mathcal{I}^m _1 \cap \sett^p _m$,
        \begin{align*}
            \pE \left[ \prod_{j = 1}^{m} \beta^\BS _t(k^{2j-1} _t, k^{2j-1} _{t-1}) \normweight{k^{2j} _{t-1}}{t-1} \bigg| \F{t-1} \right] \leq \frac{\boundg^{p+m}}{\Omega^{p+m} _{t-1}} \eqsp.  
        \end{align*}
    \end{enumerate}
    \end{proposition}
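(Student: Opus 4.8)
The plan is to reduce both bounds to three elementary ingredients: (a) conditionally on $\F{t-1}$ the particles $\particle{1:N}{t}$ are i.i.d. with common law $\filter{t-1}^N\transition{t}$, since the particle filter uses multinomial resampling; (b) every backward weight satisfies $\beta^\BS_t(k,\ell)=\beta^N_t(\particle{k}{t},\particle{\ell}{t-1})\le 1$, because $\sum_\ell\beta^\BS_t(k,\ell)=1$; and (c) by Lemma~\ref*{lem:BSGTidentity} applied with $h=\boldone$ (so that $\transition{t}[\boldone]\equiv 1$), $\pE[\beta^N_t(\particle{a}{t},\particle{\ell}{t-1})\mid\F{t-1}]=\normweight{\ell}{t-1}\le\boundg/\Omega_{t-1}$ for any indices $a,\ell\in[N]$, the last inequality being $\As{assp:B}{}$ since $\weight{\ell}{t-1}=g_{t-1}(\particle{\ell}{t-1})\le\boundg$.

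For \ref*{eq:b0} I would first list the $p$ distinct values $a_1,\dots,a_p$ of $(k^1_t,\dots,k^{2m}_t)$ and, for each $r\in[1:p]$, fix one representative $j(r)$ with $k^{j(r)}_t=a_r$. Using (b) to discard, per distinct value, all of its $\beta$-factors but one, I get the crude bound $\prod_{j=1}^{2m}\beta^\BS_t(k^j_t,k^j_{t-1})\le\prod_{r=1}^p\beta^N_t(\particle{a_r}{t},\particle{k^{j(r)}_{t-1}}{t-1})$. The $p$ surviving factors involve the \emph{distinct} particles $\particle{a_1}{t},\dots,\particle{a_p}{t}$, which by (a) are conditionally independent given $\F{t-1}$, so the conditional expectation factorizes; applying (c) to each factor yields $\prod_{r=1}^p\normweight{k^{j(r)}_{t-1}}{t-1}\le\boundg^p/\Omega^p_{t-1}$, which is the claim.

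For \ref*{eq:b1} I would first note that the weights $\normweight{k^{2j}_{t-1}}{t-1}$ are $\F{t-1}$-measurable, so they leave the conditional expectation and their product is bounded by $(\boundg/\Omega_{t-1})^m$. It then remains to bound $\pE[\prod_{j=1}^m\beta^\BS_t(k^{2j-1}_t,k^{2j-1}_{t-1})\mid\F{t-1}]$, which I would do by the identical argument as in \ref*{eq:b0} applied to the odd coordinates $k^1_t,k^3_t,\dots,k^{2m-1}_t$; since on $\mathcal{I}^m_1\cap\sett^p_m$ these carry exactly $p$ distinct values, this contributes a further factor $(\boundg/\Omega_{t-1})^p$, and multiplying the two pieces gives $\boundg^{m+p}/\Omega^{m+p}_{t-1}$.

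The only delicate point — which I would flag as the main obstacle, although it is bookkeeping rather than genuine difficulty — is the handling of repeated indices: two factors $\beta^\BS_t(k^j_t,\cdot)$ and $\beta^\BS_t(k^{j'}_t,\cdot)$ with $k^j_t=k^{j'}_t$ share the \emph{same} random particle $\particle{k^j_t}{t}$ and are therefore not conditionally independent, so the conditional expectation cannot be factorized naively. The resolution is precisely to use $\beta^\BS_t\le 1$ to throw away the superfluous factors first, reducing to a product over distinct indices to which conditional independence and Lemma~\ref*{lem:BSGTidentity} apply cleanly; everything else is a direct application of $\As{assp:B}{}$.
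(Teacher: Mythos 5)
Your proof is correct and follows essentially the same route as the paper's: group the indices by their $p$ distinct values, exploit the conditional i.i.d.\ structure of $\particle{1:N}{t}$ given $\F{t-1}$, keep one backward-weight factor per distinct value via $\beta^\BS_t\le 1$, and evaluate it with Lemma~\ref{lem:BSGTidentity}. The only (immaterial) difference is ordering — you discard the superfluous factors pointwise before taking the conditional expectation, whereas the paper factorizes the expectation first, applies the change-of-measure identity inside each integral, and then bounds the remaining factors by one.
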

    \begin{proof}
        Let $p \in [2:2m]$. By definition there are $p$ distinct elements in each $\bm{k} \eqdef (k^1 _t, \cdots, k^{2m} _t)\in \mathcal{I}^m _0 \cap \sett^p _m$. Let $\bm{k}_p \eqdef \{ a_1, \cdots, a_p \} = \{k^1 _t, \cdots, k^{2m} _t\}$ the set of cardinal $p$ containing the $p$ distinct elements in a tuple $\bm{k}\in \sett^p _m$. Define for any $a_i \in \bm{k}_p$, $V_{a_i} \eqdef \{j \in [2m]: k^j _t = a_i\}$. Each $V_{a_i}$ is non-empty so that it is possible to pick $j_i\in V_{a_i}$, and by \eqref{lem:identity} in Lemma~\ref*{lem:BSGTidentity} and the fact that different particles are i.i.d. conditionally to $\F{t-1}$,
        \begin{align*}
            & \pE \left[ \prod_{j = 1}^{2m} \beta^\BS _t(k^j _t, k^j _{t-1}) \bigg| \F{t-1} \right] \\
            & \hspace{1.5cm} = \prod_{i = 1}^p \pE \bigg[ \prod_{j \in V_{a_i}} \beta^\BS _t(k^j _t, k^j _{t-1}) \bigg| \F{t-1} \bigg] \\
            & \hspace{1.5cm} =  \prod_{i = 1}^p \int \left\{\prod_{j \in V_{a_i}} \beta^N _t(\particle{k^j _{t}}{t}, \particle{k^j _{t-1}}{t-1}) \filter{t-1}^N \transition{t}( \rmd \particle{a_i}{t}) \right\}\eqsp, \\
            & \hspace{1.5cm} = \prod_{i = 1}^p \int \left\{ \prod_{j \in V_{a_i} \setminus \{j_i\}}  \beta^N  _t(\particle{a_i}{t}, \particle{k^j _{t-1}}{t-1}) \beta^N _t(\particle{a_i}{t}, \particle{k^{j_i} _{t-1}}{t-1}) \filter{t-1}^N \transition{t}( \rmd \particle{a_i}{t})\right\}\eqsp,  \\
            & \hspace{1.5cm} = \prod_{i = 1}^p \int \left\{ \prod_{j \in V_{a_i} \setminus \{j_i\}}  \beta^N  _t(\particle{a_i}{t}, \particle{k^j _{t-1}}{t-1}) \normweight{k^{j_i} _{t-1}}{t-1} \transition{t}(\particle{k^{j_i} _{t-1}}{t-1}, \rmd \particle{a_i}{t})\right\} \eqsp,
        \end{align*}
        with the convention $\prod_{\emptyset} = 1$.
     Then, since for any $(k, \ell) \in [N]^2$, $\beta^\BS _t(\particle{k}{t},\particle{\ell}{t-1}) \leq 1$, we get 
        \begin{align*}
            \pE \left[ \prod_{j = 1}^{2m} \beta^\BS _t(k^m _t, k^m _{t-1}) \bigg| \F{t-1} \right] \leq \prod_{i = 1}^p \frac{\weight{k^{j_i} _{t-1}}{t-1}}{\Omega_{t-1}} \int \transition{t}(\particle{k^{j_i} _{t-1}}{t-1}, \rmd \particle{a_i}{t}) \leq \frac{\boundg ^p}{\Omega _{t-1}^p} \eqsp.
        \end{align*}
     Now let $p \in [1:m]$ and $\bm{k} = (k^1 _t, \cdots, k^{2m} _t) \in \mathcal{I}^m _1 \cap \sett^p _m$. Define $\widetilde{V}_{a_i} \eqdef V_{a_i} \cap \{1, 3, \ldots, 2m-1\}$ for any $a_i \in \bm{k}_p$. These sets are non-empty since each $V_{a_i}$ is non-empty and has as many even indices as odd indices, by definition of $\mathcal{I}^m _1 \cap \sett^p _m$. It is then possible to pick $j_i\in \widetilde V_{a_i}$ and, since for any $i \in [N]$ $\normweight{i}{t-1}$ is $\F{t-1}$-measurable,
    \begin{align*}
        & \pE \left[ \prod_{j = 1}^{m} \beta^\BS _t(k^{2j-1} _t, k^{2j-1} _{t-1}) \normweight{k^{2j} _{t-1}}{t-1} \bigg| \F{t-1} \right] \\ 
        & \hspace{.5cm} = \prod_{j = 1}^m \normweight{k^{2j} _{t-1}}{t-1}  \pE \left[ \prod_{j = 1}^{m} \beta^\BS _t(k^{2j-1} _t, k^{2j-1} _{t-1}) \bigg| \F{t-1} \right]\eqsp, \\
        & \hspace{.5cm} =  \prod_{j = 1}^m \normweight{k^{2j} _{t-1}}{t-1} \prod_{i = 1}^p \int \left\{\prod_{j \in \widetilde{V}_{a_i}} \beta^\BS _t(k^j _{t}, k^j _{t-1}) \filter{t-1}^N \transition{t}( \rmd \particle{a_i}{t}) \right\}\eqsp, \\
        & \hspace{.5cm} = \prod_{j = 1}^m \normweight{k^{2j} _{t-1}}{t-1} \prod_{i = 1}^p \int \left\{ \prod_{j \in \widetilde{V}_{a_i} \setminus \{j_i\}}  \beta^N  _t(\particle{a_i}{t}, \particle{k^j _{t-1}}{t-1}) \beta^N  _t(\particle{a_i}{t}, \particle{k^{j_i} _{t-1}}{t-1}) \filter{t-1}^N \transition{t}( \rmd \particle{a_i}{t})\right\}\eqsp,  \\
        & \hspace{.5cm} = \prod_{j = 1}^m \normweight{k^{2j} _{t-1}}{t-1} \prod_{i = 1}^p \int \left\{ \prod_{j \in \widetilde{V}_{a^i} \setminus \{j_i\}}  \beta^N  _t(\particle{a_i}{t}, \particle{k^j _{t-1}}{t-1}) \normweight{k^{j_i} _{t-1}}{t-1} \transition{t}(\particle{k^{j_i} _{t-1}}{t-1}, \rmd \particle{a_i}{t})\right\} \eqsp,
    \end{align*}
    with the convention $\prod_\emptyset = 1$. Hence, 
    \begin{equation*}
        \pE \left[ \prod_{j = 1}^{m} \beta^\BS _t(k^{2j-1} _t, k^{2j-1} _{t-1}) \normweight{k^{2j} _{t-1}}{t-1} \bigg| \F{t-1} \right] \leq \frac{\boundg^{p+m}}{\Omega^{p+m} _{t-1}} \eqsp.
    \end{equation*}
    \end{proof}
    \begin{proof}[Proof of proposition~\ref*{prop:Qbound}]
        \label{proof:Qbound}
    Let $m \in \N$ and assume for now that $N \geq 2m$.
    We proceed by induction. For $t = 0$, and $b_0 = 0$ we have that $\muest{\BS}{\zero, 0}(\boldone) = N^{-1} (N-1)^{-1} \sum_{i,j \in [N]^2} \1_{i \neq j} = 1$ which completes the proof. If $b_0 = 1$, $\muest{\BS}{1,0}(\boldone) = N^{-1} \sum_{i,j \in [N]^2} \1_{i = j} = 1$ and the result follows. 
    \vspace{.3cm}\\
    \indent Let $t \in \N^{*}$ and $b \in \Bset_t$. Assume \eqref{eq:normbound} holds at time $t-1$. Again we treat the cases $b_t = 0$ and $b_t = 1$ separately.  
    In the case $b_t = 0$, by \eqref{eq:alternativeQ},
    \begin{align*}
        \pE \left[ \muest{\BS}{b,t}(\boldone)^m \right] = \pE \left[ \sum_{k^{1:2m} _{0:t} \in [N]^{2m(t+1)}} \prod_{j = 1}^m \bigprod{b,t}(k^{2j - 1} _{0:t}, k^{2j} _{0:t}) \right]
    \end{align*}
    and 
    \begin{align*}
        & \sum_{k^{1:2m} _{0:t} \in [N]^{2m(t+1)}} \prod_{j = 1}^m \bigprod{b,t}(k^{2j - 1} _{0:t}, k^{2j} _{0:t}) \\
        & \hspace{1cm} = \sum_{k^{1:2m} _{0:t-1} \in [N]^{2mt}} \prod_{j = 1}^m \bigprod{b,t-1}(k^{2j -1} _{0:t-1}, k^{2j} _{0:t-1}) \\
        & \hspace{3cm} \times \sum_{k^{1:2m} _{t} \in [N]^{2m}} \prod_{j = 1}^m \frac{\Omega^{2} _{t-1} \1_{k^{2j - 1} _t \neq k^{2j} _t}}{N (N-1)} \beta^\BS _t(k^{2j-1} _t, k^{2j -1} _{t-1}) \beta^\BS _t(k^{2j} _t, k^{2j} _{t-1}) \eqsp.
    \end{align*}
    By Proposition~\ref{lem:prodbeta}\ref{eq:b0} and Lemma~\ref{lem:cardinals}, for any $k^{1:2m} _{t-1} \in [N]^{2m}$, 
    \begin{align*}
       & \sum_{k^{1:2m} _t \in [N]^{2m}} \pE \left[ \prod_{j = 1}^{m}  \beta^\BS _t(k^{2j -1} _t, k^{2j-1} _{t-1}) \beta^\BS(k^{2j} _t, k^{2j} _{t-1}) \1_{k^{2j - 1} _t \neq k^{2j} _t} \bigg| \F{t-1} \right] \\
        & \hspace{2cm} = \sum_{p = 2}^{2m} \sum_{k^{1:2m} _t \in \mathcal{I}^m _0 \cap \sett^p _m} \pE \left[ \prod_{j = 1}^{2m}  \beta^\BS _t(k^j _t, k^j _{t-1}) \bigg| \F{t-1} \right] \\
        & \hspace{2cm} \leq  \sum_{p = 2}^{2m} \sum_{k^{1:2m} _t \in \mathcal{I}^m _0 \cap \sett^p _m} \frac{\boundg^p}{\Omega^p _{t-1}} \leq \sum_{p = 2}^{2m}  \frac{\boundg ^p \card{ \mathcal{I}^m _0 \cap \sett^p _m}}{\Omega^p _{t-1}} \leq C\sum_{p = 2}^{2m} \frac{N^p }{\Omega^p _{t-1}} \eqsp,
    \end{align*}
    where $C$ is a constant independent of $N$. 
   Consequently, using the fact that \[\sum_{k^{1:2m} _{0:t-1} \in [N]^{2mt}} \prod_{j = 1}^m \bigprod{b,t-1}(k^{2j -1} _{0:t-1}, k^{2j} _{0:t-1})= \muest{\BS}{b,t-1}(\boldone)^m\] which is $\F{t-1}$-measurable and that $\Omega _{t-1} \leq N \boundg$ $\pP$-a.s. by $\As{assp:B}{}$, we get 
    \begin{align*}
        \pE \left[ \muest{\BS}{b,t}(\boldone)^m \right] 
        & \leq C\pE \left[ \muest{\BS}{b,t-1}(\boldone)^m  \frac{N^{2m - p} N^p}{N^m (N-1)^m} \right] \leq C\pE\left[ \muest{\BS}{b,t-1}(\boldone)^m \right] \eqsp,
    \end{align*}
    which completes the proof. 
    In the case $b_t = 1$, again by \eqref{eq:alternativeQ},
    \begin{align*}
        & \sum_{k^{1:2m} _{0:t} \in [N]^{2m(t+1)}} \prod_{j = 1}^m \bigprod{b,t}(k^{2j - 1} _{0:t}, k^{2j} _{0:t}) \\
        & \hspace{1cm} = \sum_{k^{1:2m} _{0:t-1} \in [N]^{2mt}} \prod_{j = 1}^m \bigprod{b,t-1}(k^{2j -1} _{0:t-1}, k^{2j} _{0:t-1}) \\
        & \hspace{3cm} \times \sum_{k^{1:2m} _{t} \in [N]^{2m}} \prod_{j = 1}^m \frac{\Omega^{2} _{t-1} \1_{k^{2j - 1} _t = k^{2j} _t}}{N} \beta^\BS _t(k^{2j-1} _t, k^{2j -1} _{t-1}) \normweight{k^{2j} _{t-1}}{t-1} \eqsp.
    \end{align*}
    Then, similarly to the case $b_t = 0$, by Proposition~\ref{lem:prodbeta}-\ref{eq:b1} and Lemma~\ref{lem:cardinals}, for any $k^{1:2m} _{t-1} \in [N]^{2m}$,
    \begin{align*}
        \sum_{k_t^{1:2m} \in [N]^{2m}} \pE & \left[ \prod_{j = 1}^{m}  \beta^\BS _t(k^{2j - 1} _t, k^{2j -1} _{t-1}) \normweight{k^{2j}_{t-1}}{t-1}\1_{k^{2j - 1} _t = k^{2j} _t}\bigg| \F{t-1}\right] \\
        & = \sum_{p = 1}^{m} \sum_{k_t^{1:2m} \in \mathcal{I}^1 _m \cap \sett^p _m} \pE \left[ \prod_{j = 1}^{m} \beta^\BS _t(k^{2j-1} _t, k^{2j-1} _{t-1}) \normweight{k^{2j} _{t-1}}{t-1} \bigg| \F{t-1} \right]  \\
        & = \sum_{p = 1}^{m} \sum_{k_t^{1:2m} \in \mathcal{I}^1 _m \cap \sett^p _m}  \frac{\boundg^{m+p}}{\Omega^{m+p} _{t-1}} \leq C\sum_{p = 1}^m \frac{N^p}{\Omega^{m+p} _{t-1}} \eqsp,
    \end{align*}
    where $C$ is a constant independent of $N$, and
    \begin{align*}
        \pE \left[ \muest{\BS}{b,t}(\boldone)^m \right] & \leq C\pE \left[ \muest{\BS}{b,t-1}(\boldone)^m \sum_{p = 1}^m  \frac{\Omega^{2m} _{t-1} N^p}{N^{m} \Omega^{m + p}} \right] \leq C\pE \left[ \muest{\BS}{b,t-1}(\boldone)^m \right]  \eqsp,
    \end{align*}
    which completes the proof.

    If $N < 2m$ (resp. $N < m$), then a tuple in $\mathcal{I}^m _0$ (resp. $\mathcal{I}^m _1$) contains at most $N$ different elements and the proof proceeds similarly by truncating the sums over $p$. 
\end{proof}

We now give more explicit computations for the case $m = 2$. The sets $\mathcal{I}^2 _0 \cap \sett^p _2$ and $\mathcal{I}^2 _1 \cap \sett^p _2$ are detailed in Example~\ref{ex:explicitset}.
\begin{proposition}
    \label{cor:condexpect}
    For any $h \in \bounded{2(t+1)}$ and $(k^1 _{0:t-1}, \cdots, k^4 _{0:t-1}) \in [N]^{4t}$,
    \begin{equation*}
    \begin{alignedat}{6}
        & \sum_{k^{1:4} _t \in [N]^4} \pE \left[h^{\otimes 2}(\particle{k^1 _{0:t}}{0:t}, \cdots, \particle{k^4 _{0:t}}{0:t}) \prod_{j = 1}^4 \beta^\BS _t(k^j _t, k^j _{t-1}) \1_{k^1 _t \neq k^2 _t, k^3 _t \neq k^4 _t} \bigg| \F{t-1} \right] \\
        &  \hspace{2cm}\leq \frac{N(N-1)(N-2)(N-3)}{\Omega^4 _{t-1}} (g_{t-1} ^{\otimes 2} \bitransition{0}{t}[h])^{\otimes 2}(\particle{k^1 _{0:t-1}}{0:t-1}, \cdots, \particle{k^4 _{0:t-1}}{0:t-1}) \\ 
        & \hspace{2.4cm}+ \frac{N(N-1)(N-2)\boundg^3 |h|^2 _\infty}{\Omega^3 _{t-1}} \vartheta^N_t(\particle{k^{1:4}_{t-1}}{t-1})  + \frac{N(N-1)\boundg^2 |h|^2 _\infty}{\Omega^2 _{t-1}} \upsilon^N_t(\particle{k^{1:4}_{t-1}}{t-1})  \eqsp,
    \end{alignedat}
\end{equation*}
and 
\begin{multline*}
     \sum_{k^{1:4} _t \in [N]^4} \pE \left[h^{\otimes 2}(\particle{k^1 _{0:t}}{0:t}, \cdots, \particle{k^4 _{0:t}}{0:t}) \prod_{j = 1}^2 \beta^\BS _t(k^{2j-1} _t, k^{2j-1} _{t-1}) \normweight{k^{2j} _{t-1}}{t-1} \1_{k^1 _t = k^2 _t, k^3 _t = k^4 _t} \bigg| \F{t-1} \right] \\
     \leq \frac{N(N-1)}{\Omega^{4} _{t-1}} (g_{t-1} ^{\otimes 2}\bitransition{1}{t}[h])^{\otimes 2}(\particletraj{k^1}{t}, \cdots, \particletraj{k^4}{t}) + \frac{N\boundg^3}{\Omega^3 _{t-1}} \transition{t}\left[ \beta^N _t(., \particle{k^3 _{t-1}}{t-1})\right](\particle{k^1 _{t-1}}{t-1})  \eqsp.
\end{multline*}
where 
\begin{align*}
    \vartheta^N_t : x^{1:4} &\mapsto \transition{t}[ \beta^N _t(.,x^{4})](x^{1}) + \transition{t}[ \beta^N _t(.,x^{4})](x^{2}) + \transition{t}[ \beta^N _t(.,x^{3})](x^{1}) + \transition{t}[ \beta^N _t(.,x^{3})](x^{2})\eqsp,\\
    \upsilon^N_t : x^{1:4} &\mapsto \transition{t}[ \beta^N _t(.,x^3)](x^1) \transition{t}[\beta^N _t(.,x^4)](x^2) + \transition{t}[ \beta^N _t(.,x^4)](x^1) \transition{t}[ \beta^N _t(.,x^3)](x^2)\eqsp.
\end{align*}
\end{proposition}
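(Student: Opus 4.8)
The plan is to condition on $\F{t-1}$ and exploit that, since multinomial resampling is used, the particles $\particle{1:N}{t}$ are i.i.d.\ given $\F{t-1}$ with common law $\filter{t-1}^N\transition{t}$, while the trajectory prefixes $\particletraj{k^1}{t-1},\dots,\particletraj{k^4}{t-1}$ are $\F{t-1}$-measurable. The sums over $(k^1_t,\dots,k^4_t)$ are then split according to the coincidence pattern of these four indices, using the decompositions $\mathcal{I}^2_0 = \bigsqcup_{p=2}^4(\mathcal{I}^2_0\cap\sett^p_2)$ and $\mathcal{I}^2_1=(\mathcal{I}^2_1\cap\sett^2_2)\sqcup(\mathcal{I}^2_1\cap\sett^1_2)$ made explicit in Example~\ref{ex:explicitset}. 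On each piece the conditional expectation collapses to a product of integrals of backward weights against $\filter{t-1}^N\transition{t}$, which are evaluated or bounded through the identity~\eqref{lem:identity} of Lemma~\ref{lem:BSGTidentity}.

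For the first inequality, on the generic piece $\sett^4_2$ the four particles are i.i.d., so four applications of~\eqref{lem:identity} give $\pE[\prod_j\beta^\BS_t(k^j_t,k^j_{t-1})\,h^{\otimes2}\mid\F{t-1}]=\Omega_{t-1}^{-4}(g^{\otimes2}_{t-1}\bitransition{0}{t}[h])^{\otimes2}(\particletraj{k^1}{t-1},\dots,\particletraj{k^4}{t-1})$; this value does not depend on $(k^1_t,\dots,k^4_t)$, so summing over the $N(N-1)(N-2)(N-3)$ such tuples produces exactly the first term of the bound. On $\sett^3_2$ I would go through the four coincidence cases of Example~\ref{ex:explicitset} (one of $k^3_t,k^4_t$ equals one of $k^1_t,k^2_t$): in each, bound $|h^{\otimes2}|\le|h|^2_\infty$, apply~\eqref{lem:identity} to the two solo particles (each contributing a factor $\le\boundg/\Omega_{t-1}$), and at the coinciding particle, which carries two factors $\beta^N_t(x,\particle{k^i_{t-1}}{t-1})\beta^N_t(x,\particle{k^j_{t-1}}{t-1})$, keep one factor and integrate the other via~\eqref{lem:identity}, producing a term $\le(\boundg/\Omega_{t-1})\transition{t}[\beta^N_t(\cdot,\particle{k^j_{t-1}}{t-1})](\particle{k^i_{t-1}}{t-1})$; the $N(N-1)(N-2)$ free choices of the indices not pinned by the coincidence give the multiplicity, and summing the four cases assembles $\vartheta^N_t$. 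The piece $\sett^2_2$ is handled identically with its two cases $\{(k^3_t,k^4_t)=(k^1_t,k^2_t)\}$ and $\{(k^3_t,k^4_t)=(k^2_t,k^1_t)\}$, now with two coinciding particles and hence two surviving $\transition{t}[\beta^N_t(\cdot,\cdot)](\cdot)$ factors, yielding $\upsilon^N_t$ with multiplicity $N(N-1)$. Adding the three pieces gives the claimed inequality, using $g_{t-1}\le\boundg$ wherever a weight is bounded crudely.

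The second inequality is treated the same way using $\mathcal{I}^2_1$. On $\sett^2_2$ (the pairs $k^1_t=k^2_t$, $k^3_t=k^4_t$ with $k^1_t\neq k^3_t$) one applies~\eqref{lem:identity} to each of the two particles and then uses $\normweight{i}{t-1}=g_{t-1}(\particle{i}{t-1})/\Omega_{t-1}$ to recombine the weights, obtaining exactly $\Omega_{t-1}^{-4}(g^{\otimes2}_{t-1}\bitransition{1}{t}[h])^{\otimes2}$; the $N(N-1)$ free choices of the two distinct index values give the first term. On $\sett^1_2$ (all four indices equal) there is a single particle carrying two backward factors: keeping one and integrating the other via~\eqref{lem:identity}, and bounding the two $\normweight{}{}$-weights and $g_{t-1}$ by $\boundg/\Omega_{t-1}$, gives the second term after summing over the $N$ choices of the common index.

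The argument uses no analytic ingredient beyond Lemma~\ref{lem:BSGTidentity} and the conditional i.i.d.\ structure; the only delicate point is the bookkeeping — matching the combinatorial multiplicities to Example~\ref{ex:explicitset} and, in the $\sett^3_2$ and $\sett^2_2$ cases, choosing \emph{consistently} which of the two factors $\beta^N_t(x,\particle{k^i_{t-1}}{t-1})$, $\beta^N_t(x,\particle{k^j_{t-1}}{t-1})$ to integrate, so that the surviving terms assemble precisely into the functions $\vartheta^N_t$ and $\upsilon^N_t$ stated above rather than into a coarser bound.
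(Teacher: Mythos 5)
Your plan is correct and coincides with the paper's own proof: the same decomposition of the index sums over $\mathcal{I}^2_0\cap\sett^p_2$ (resp.\ $\mathcal{I}^2_1\cap\sett^p_2$) from Example~\ref{ex:explicitset}, the same use of the conditional i.i.d.\ structure and of identity~\eqref{lem:identity} in Lemma~\ref{lem:BSGTidentity}, with the exact computation retained on the fully distinct (resp.\ fully paired) piece and the crude bound $|h^{\otimes 2}|\le|h|^2_\infty$ together with ``keep one backward weight, integrate the other'' on the coinciding pieces. The multiplicities and the assembly of $\vartheta^N_t$ and $\upsilon^N_t$ match the paper's case analysis, so nothing is missing.
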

\begin{proof}
Let $(k^1 _{0:t-1}, \cdots, k^4 _{0:t-1}) \in [N]^{4t}$. First note that 
    \begin{multline*}
        \sum_{k^{1:4} _t \in [N]^4} \pE \left[h^{\otimes 2}(\particle{k^1 _{0:t}}{0:t}, \cdots, \particle{k^4 _{0:t}}{0:t}) \prod_{j = 1}^4 \beta^\BS _t(k^j _t, k^j _{t-1}) \1_{k^1 _t \neq k^2 _t, k^3 _t \neq k^4 _t} \bigg| \F{t-1} \right] \\
        = \sum_{p = 2}^4 \sum_{k^{1:4} _t \in \mathcal{I}^2 _0 \cap \sett^p _2} \pE \left[h^{\otimes 2}(\particle{k^1 _{0:t}}{0:t}, \cdots, \particle{k^4 _{0:t}}{0:t}) \prod_{j = 1}^4 \beta^\BS _t(k^j _t, k^j _{t-1}) \bigg| \F{t-1} \right] \eqsp.
    \end{multline*}
We compute each term  herebelow. For each $p\in [2:4]$ and $\bm{k} \eqdef (k^1 _t, \cdots, k^{4} _t)\in \mathcal{I}^2 _0 \cap \sett^p _2$, let $\bm{k}_p \eqdef \{ a_1, \cdots, a_p \} = \{k^1 _t, \cdots, k^{4} _t\}$ the set of cardinal $p$ containing the $p$ distinct elements in a tuple $\bm{k}\in \sett^p _2$. Define for any $a_i \in \bm{k}_p$, $V_{a_i} \eqdef \{k^j _t: j \in [1:4],\  k^j _t = a_i\}$.
\vspace{.3cm}\\
$-$   Let $(k^1 _t, \cdots, k^4 _t) \in \mathcal{I}^2 _0 \cap \sett^2 _2$. Then, $\bm{k}_2 = \{k^1 _t, k^2 _t\}$ and we either have $V_{k^1 _t} = \{ k^1 _t, k^3 _t\}$ and $V_{k^2} = \{k^2 _t, k^4 _t\}$ or $V_{k^1 _t} = \{k^1, k^4\}$ and $V_{k^2 _t} = \{k^2 _t, k^3 _t\}$. Assume that $V_{k^1 _t} = \{ k^1 _t, k^3 _t\}$ and $V_{k^2 _t} = \{k^2 _t, k^4 _t\}$. Then, by \eqref{lem:identity} in Lemma~\ref*{lem:BSGTidentity}
\begin{align*}
    \pE \bigg[\prod_{j = 1}^4 & \beta^\BS _t(k^j _t, k^j _{t-1}) \bigg| \F{t-1} \bigg]\\
     & = \pE \bigg[ \beta^\BS _t(k^1 _t, k^1 _{t-1}) \beta^\BS _t(k^1 _t, k^3 _{t-1})\big| \F{t-1} \bigg] \pE \bigg[ \beta^\BS _t(k^2 _t, k^2 _{t-1}) \beta^\BS _t(k^2 _t, k^4 _{t-1})\big| \F{t-1} \bigg]\\ 
     & =  \int \beta^\BS _t(k^1 _t, k^3 _{t-1}) \normweight{k^1 _{t-1}}{t-1} \transition{t}(\particle{k^1 _{t-1}}{t-1}, \rmd\particle{k^1 _t}{t}) \int \beta^\BS _t(k^2 _t, k^4 _{t-1}) \normweight{k^2 _{t-1}}{t-1} \transition{t}(\particle{k^2 _{t-1}}{t-1}, \rmd\particle{k^2 _t}{t})  \\
     & \leq \frac{\boundg^2}{\Omega^2 _{t-1}} \transition{t}\left[ \beta^N _t(.,\particle{k^3 _{t-1}}{t-1})\right](\particle{k^1 _{t-1}}{t-1}) \transition{t}\left[ \beta^N _t(.,\particle{k^4 _{t-1}}{t-1}) \right](\particle{k^2 _{t-1}}{t-1}) \eqsp.
\end{align*}
If $V_{k^1 _t} = \{k^1 _t, k^4 _t\}$ and $V_{k^2 _t} = \{k^2 _t, k^3 _t\}$,
$$
    \pE \left[\prod_{j = 1}^4 \beta^\BS _t(k^j _t, k^j _{t-1}) \bigg| \F{t-1} \right] 
    \leq \frac{\boundg^2}{\Omega^2 _{t-1}} \transition{t}\left[ \beta^N _t(.,\particle{k^4 _{t-1}}{t-1})\right](\particle{k^1 _{t-1}}{t-1}) \transition{t}\left[ \beta^N _t(.,\particle{k^3 _{t-1}}{t-1})\right](\particle{k^2 _{t-1}}{t-1}) \eqsp,
$$
and 
\begin{align*}
    &  \sum_{k^{1:4} _t \in \mathcal{I}^2 _0 \cap \sett^2 _2} \pE \left[ \prod_{j = 1}^4 \beta^\BS _t(k^j _t, k^j _{t-1}) \bigg| \F{t-1} \right] \\
    & \hspace{1.5cm} \leq \frac{N(N-1)\boundg^2}{\Omega^2 _{t-1}} \bigg\{ \transition{t}\left[ \beta^N _t(.,\particle{k^3 _{t-1}}{t-1})\right](\particle{k^1 _{t-1}}{t-1}) \transition{t}\left[ \beta^N _t(.,\particle{k^4 _{t-1}}{t-1})\right](\particle{k^2 _{t-1}}{t-1}) \\
     & \hspace{4cm} + \transition{t}\left[ \beta^N _t(.,\particle{k^4 _{t-1}}{t-1})\right](\particle{k^1 _{t-1}}{t-1}) \transition{t}\left[ \beta^N _t(.,\particle{k^3 _{t-1}}{t-1})\right](\particle{k^2 _{t-1}}{t-1})\bigg\} \eqsp.
\end{align*}
\vspace{.3cm}\\
$-$ Let $(k^1 _t, \cdots, k^4 _t) \in \mathcal{I}^2 _0 \cap \sett^3 _2$. Then, either $\bm{k}_3 = \{k^1 _t, k^2 _t, k^3 _t\}$ or $\bm{k}_3 = \{k^1 _t, k^2 _t, k^4 _t\}$. Assume that $\bm{k}_3 = \{k^1 _t, k^2 _t, k^3 _t\}$ and $V_{k^1 _t} = \{k^1 _t, k^4 _t\}$. Then, 
\begin{align*}
    \pE \left[\prod_{j = 1}^4 \beta^\BS _t(k^j _t, k^j _{t-1}) \bigg| \F{t-1} \right] & = \normweight{k^1 _{t-1}}{t-1}\normweight{k^2 _{t-1}}{t-1}\normweight{k^3 _{t-1}}{t-1} \transition{t}\left[ \beta^N _t(.,\particle{k^4 _{t-1}}{t-1})\right](\particle{k^1 _{t-1}}{t-1}) \\
    & \leq \frac{\boundg^3}{\Omega^3 _{t-1}} \transition{t}\left[ \beta^N _t(.,\particle{k^4 _{t-1}}{t-1})\right](\particle{k^1 _{t-1}}{t-1}) \eqsp.
\end{align*}    
Applying the same reasoning to all the combinations within $\mathcal{I}^2 _0 \cap \sett^3 _2$ we get 
\begin{align*}
    &  \sum_{k^{1:4} _t \in \mathcal{I}^2 _0 \cap \sett^3 _2} \pE \left[ \prod_{j = 1}^4 \beta^\BS _t(k^j _t, k^j _{t-1}) \bigg| \F{t-1} \right] \\ 
    & \hspace{1cm} \leq \frac{N(N-1)(N-2)\boundg^3}{\Omega^3 _{t-1}} \bigg\{ \transition{t}\left[ \beta^N _t(.,\particle{k^4 _{t-1}}{t-1})\right](\particle{k^1 _{t-1}}{t-1}) + \transition{t}\left[ \beta^N _t(.,\particle{k^4 _{t-1}}{t-1})\right](\particle{k^2 _{t-1}}{t-1}) \\
    & \hspace{3cm} + \transition{t}\left[ \beta^N _t(.,\particle{k^3_{t-1}}{t-1})\right](\particle{k^1 _{t-1}}{t-1}) + \transition{t}\left[ \beta^N _t(.,\particle{k^3_{t-1}}{t-1})\right](\particle{k^2 _{t-1}}{t-1}) \bigg\} \eqsp.
\end{align*}
\vspace{.3cm}\\
$-$ Let $(k^1 _t, \cdots, k^4 _t) \in \mathcal{I}^4 _0 \cap \sett^4 _2$. Then, $\bm{k}_4 = \{k^1 _t, k^2 _t, k^3 _t, k^4 _t\}$ and 
\begin{align*}
    & \pE \left[h^{\otimes 2}(\particletraj{k^1}{t}, \cdots, \particletraj{k^4}{t}) \prod_{j = 1}^4 \beta^\BS _t(k^j _t, k^j _{t-1}) \bigg| \F{t-1} \right] \\
    & \hspace{2cm}= \int h^{\otimes 2}(\particletraj{k^1}{t}, \cdots, \particletraj{k^4}{t}) \prod_{j = 1}^4 \beta^\BS _t(k^j _t, k^j _{t-1})\filter{t-1}^N \transition{t}( \rmd \particle{k^j _t}{t}) \\
    & \hspace{2cm}= \int h^{\otimes 2}(\particletraj{k^1}{t}, \cdots, \particletraj{k^4}{t}) \prod_{j = 1}^4 \normweight{k^j _{t-1}}{t-1} \transition{t}(\particle{k^j _{t-1}}{t-1}, \rmd \particle{k^j _t}{t}) \eqsp. \\
    & \hspace{2cm} = (g_{t-1} ^{\otimes 2} \bitransition{0}{t}[h])^{\otimes 2}(\particletraj{k^1}{t-1}, \cdots, \particletraj{k^4}{t-1}) / \Omega^4 _{t-1} \eqsp,
\end{align*}
which completes the proof of the first inequality. For the second inequality, write
\begin{multline*}
    \sum_{k^{1:4} _t \in [N]^4} \pE \left[h^{\otimes 2}(\particle{k^1 _{0:t}}{0:t}, \cdots, \particle{k^4 _{0:t}}{0:t}) \prod_{j = 1}^2 \beta^\BS _t(k^{2j-1} _t, k^{2j-1} _{t-1}) \normweight{k^{2j} _{t-1}}{t-1} \1_{k^1 _t = k^2 _t, k^3 _t = k^4 _t} \bigg| \F{t-1} \right] \\
    = \sum_{p = 1}^2 \sum_{k^{1:4} _t \in \mathcal{I}^2 _1 \cap \sett^p _2} \pE \left[h^{\otimes 2}(\particle{k^1 _{0:t}}{0:t}, \cdots, \particle{k^4 _{0:t}}{0:t}) \prod_{j = 1}^2 \beta^\BS _t(k^{2j-1} _t, k^{2j-1} _{t-1}) \normweight{k^{2j} _{t-1}}{t-1} \bigg| \F{t-1} \right] \eqsp,
\end{multline*}
\vspace{.3cm}\\
$-$ Let $(k^1 _t, \cdots, k^4 _t) \in \mathcal{I}^2 _1 \cap \sett^1 _2$. Then $\bm{k}_1 = \{k^1 _t\}$ and 
\begin{align*}
    \pE \left[\prod_{j = 1}^2 \beta^\BS _t(k^{2j - 1} _t, k^{2j -1} _{t-1}) \normweight{k^{2j} _{t-1}}{t-1} \bigg| \F{t-1} \right]  & \leq \normweight{k^1 _{t-1}}{t-1}\normweight{k^2 _{t-1}}{t-1}\normweight{k^4 _{t-1}}{t-1} \transition{t}[\beta^N _t(., \particle{k^3 _{t-1}}{t-1})](\particle{k^1 _{t-1}}{t-1}) \\
    & \leq \frac{\boundg^3}{\Omega^3 _{t-1}} \transition{t}\left[ \beta^N _t(., \particle{k^3 _{t-1}}{t-1})\right](\particle{k^1 _{t-1}}{t-1}) \eqsp,
\end{align*}
and 
\begin{equation*}
\sum_{k^{1:4} _t \in \mathcal{I}^2 _1 \cap \sett^1 _2} \pE \left[\prod_{j = 1}^2 \beta^\BS _t(k^{2j - 1} _t, k^{2j -1} _{t-1}) \normweight{k^{2j} _{t-1}}{t-1} \bigg| \F{t-1} \right]
 \leq \frac{N\boundg^3}{\Omega^3 _{t-1}} \transition{t}\left[ \beta^N _t(., \particle{k^3 _{t-1}}{t-1})\right](\particle{k^1 _{t-1}}{t-1}) \eqsp.
\end{equation*}
\vspace{.3cm}\\
$-$ Let $(k^1 _t, \cdots, k^4 _t) \in \mathcal{I}^2 _1 \cap \sett^2 _2$. Then, $\bm{k}_2 = \{k^1 _t, k^3 _t\}$, $V_{k^1 _t} = \{k^1 _t, k^2 _t\}$ and $V_{k^3 _t} = \{k^3 _t, k^4 _t\}$. Hence,
\begin{multline*}
     \pE \left[h^{\otimes 2}(\particle{k^1 _{0:t}}{0:t}, \cdots, \particle{k^4 _{0:t}}{0:t}) \prod_{j = 1}^2 \beta^\BS _t(k^{2j-1} _t, k^{2j-1} _{t-1}) \normweight{k^{2j} _{t-1}}{t-1}\bigg| \F{t-1} \right] \\  
     = (g_{t-1} ^{\otimes 2}\bitransition{1}{t}[h])^{\otimes 2}(\particle{k^1 _{0:t-1}}{0:t-1}, \cdots, \particle{k^4 _{0:t-1}}{0:t-1}) / \Omega^4 _{t-1} \eqsp,
\end{multline*}
and 
\begin{multline*}
     \sum_{k^{1:4} _{t} \in \mathcal{I}^2 _1 \cap \sett^2 _2}\pE \left[h^{\otimes 2}(\particle{k^1 _{0:t}}{0:t}, \cdots, \particle{k^4 _{0:t}}{0:t}) \prod_{j = 1}^2 \beta^\BS _t(k^{2j-1} _t, k^{2j-1} _{t-1}) \normweight{k^{2j} _{t-1}}{t-1}\bigg| \F{t-1} \right] \\  
     = \frac{N(N-1)}{\Omega^{4} _{t-1}} (g_{t-1} ^{\otimes 2}\bitransition{1}{t}[h])^{\otimes 2}(\particle{k^1 _{0:t-1}}{0:t-1}, \cdots, \particle{k^4 _{0:t-1}}{0:t-1})  \eqsp,
\end{multline*}
which completes the proof.
\end{proof} 
\begin{proposition} For any $t \in \N$, $h \in \bounded{2(t+1)}$ and $b \in \Bset_t$,
    \label{prop:norm2QBS}
    \begin{enumerate}[label=(\roman*)]
        \item \label{eq:finalnorm2b0}If $b_t = 0$,
        \begin{align*}
              &\big\| \muest{\BS}{b,t}(h) \big\|^2 _2 \leq  \frac{(N-2)(N-3)}{N(N-1)}  \big\|\muest{\BS}{b,t-1}(g^{\otimes 2} _{t-1}\bitransition{0}{t}[h]) \big\|_2 ^2  + \frac{N-2}{N-1} \boundg^3 | h |^2 _\infty \pE \bigg[ \frac{\Omega_{t-1}}{N} \nu(\Theta_{b,t}^N)\bigg] \nonumber \\
               & \hspace{6.5cm} + \pE \bigg[ \frac{\boundg^2 |h |^2 _\infty \Omega_{t-1} ^2}{N(N-1)}\nu^{\otimes 2}(\Upsilon_{b,t}^N)\bigg] \eqsp, \nonumber
          \end{align*}
          where 
          \begin{align*}
            \Theta^N _{b,t} & : x \mapsto \big[ \muest{\BS}{b,t-1}\big(\transitiondens{t}(.,x) \otimes \boldone \big)  + \muest{\BS}{b,t-1} \big(\boldone \otimes \transitiondens{t}(.,x) \big)\big] \\ &\hspace{3cm} \times \big[ \muest{\BS}{b,t-1}\big(\beta^N _t(x,.) \otimes \boldone) + \muest{\BS}{b,t-1}(\boldone \otimes \beta^N _t(x,.)\big)\big] \eqsp,\\
       \Upsilon^N _{b,t}&: (x,y) \mapsto \muest{\BS}{b,t-1}\big(\transitiondens{t}(.,x) \otimes \transitiondens{t}(.,y) \big) \muest{\BS}{b,t-1}\big(\beta^N _{t}(x,.) \otimes \beta^N _{t}(y,.)\big) \\
       &\hspace{3cm}+ \muest{\BS}{b,t-1}\big(\transitiondens{t}(.,x) \otimes \transitiondens{t}(.,y)\big)  \muest{\BS}{b,t-1}\big(\beta^N _{t}(y,.) \otimes \beta^N _{t}(x,.)\big) \eqsp. 
    \end{align*}
          \item \label{eq:finalnorm2b1}
          If $b_t = 1$,
          \begin{multline}    
              \big\| \muest{\BS}{b,t}(h) \big\|^2 _2 
               \leq  \frac{N-1}{N}  \big\| \muest{\BS}{b,t-1}(g^{\otimes 2} _{t-1}\bitransition{1}{t}[h]) \big\|^2 _2  \\
               + \boundg^3 |h|_\infty ^2 \int \pE \left[ \frac{\Omega _{t-1}}{N} \muest{\BS}{b,t-1}(\transitiondens{t}(.,x) \otimes \boldone) \muest{\BS}{b,t-1}(\beta^N _{t}(x,.) \otimes \boldone) \right] \nu(\rmd x) \eqsp. 
          \end{multline}
\end{enumerate}
\end{proposition}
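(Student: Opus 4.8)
The plan is to reduce both inequalities to the one-step estimate of Proposition~\ref{cor:condexpect} by squaring the recursive representation \eqref{eq:alternativeQ} and conditioning on $\F{t-1}$. First I would write $\big\| \muest{\BS}{b,t}(h) \big\|^2 _2 = \pE\big[\muest{\BS}{b,t}(h)^2\big]$ and substitute, for each of the two copies of $\muest{\BS}{b,t}(h)$, the time-$t$ branch of \eqref{eq:alternativeQ} corresponding to $b_t$. This produces a sum over four trajectory indices $k^{1:4}_{0:t}$ in which the time-$(t-1)$ part factorises as $\bigprod{b,t-1}(k^1_{0:t-1}, k^2_{0:t-1})\,\bigprod{b,t-1}(k^3_{0:t-1}, k^4_{0:t-1})$, and the time-$t$ part is $\Omega^4_{t-1}$ times $\prod_{j=1}^4 \beta^\BS_t(k^j_t, k^j_{t-1})\,\1_{k^1_t \neq k^2_t,\, k^3_t \neq k^4_t}$ when $b_t=0$, respectively $\prod_{j=1}^2 \beta^\BS_t(k^{2j-1}_t, k^{2j-1}_{t-1})\,\normweight{k^{2j}_{t-1}}{t-1}\,\1_{k^1_t = k^2_t,\, k^3_t = k^4_t}$ when $b_t=1$. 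Since $\bigprod{b,t-1}$ and $\Omega_{t-1}$ are $\F{t-1}$-measurable, taking $\pE[\,\cdot\mid\F{t-1}]$ leaves an inner sum over $k^{1:4}_t$ of the conditional expectation of $h^{\otimes 2}$ against those backward-kernel products, which is exactly the left-hand side of the first, respectively the second, display of Proposition~\ref{cor:condexpect}.

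Next I would insert the bounds of Proposition~\ref{cor:condexpect}. In each resulting term the factor $\Omega^4_{t-1}$ coming from \eqref{eq:alternativeQ} cancels the $\Omega^{-4}_{t-1}$, $\Omega^{-3}_{t-1}$, $\Omega^{-2}_{t-1}$ appearing there up to $1$, $\Omega_{t-1}$ and $\Omega^2_{t-1}$, while the combinatorial prefactors $N(N-1)(N-2)(N-3)$, $N(N-1)(N-2)$, $N(N-1)$ combine with the $1/\{N(N-1)\}^2$ of \eqref{eq:alternativeQ} to yield $(N-2)(N-3)/\{N(N-1)\}$, $(N-2)/\{N(N-1)\}$ and $1/\{N(N-1)\}$. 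For the leading term, $\big(g^{\otimes 2}_{t-1}\bitransition{0}{t}[h]\big)^{\otimes 2}$ factorises over the two pairs of trajectory indices, so summing it against $\bigprod{b,t-1}(k^1,k^2)\,\bigprod{b,t-1}(k^3,k^4)$ collapses, by the definition of $\muest{\BS}{b,t-1}$ via \eqref{def:bigprod}, to $\muest{\BS}{b,t-1}\big(g^{\otimes 2}_{t-1}\bitransition{0}{t}[h]\big)^2$; taking $\pE[\,\cdot\,]$ gives the leading term of the claimed bound \ref{eq:finalnorm2b0}. For the two remainders I would bound $h^{\otimes 2}$ by $|h|^2_\infty$, rewrite $\transition{t}\big[\beta^N_t(\cdot,z)\big](y) = \int \transitiondens{t}(y,x)\,\beta^N_t(x,z)\,\nu(\rmd x)$ to pull the integrals $\int \nu(\rmd x)$, resp.\ $\int \nu^{\otimes 2}(\rmd x,\rmd y)$, outside the trajectory sum, and notice that each monomial of $\vartheta^N_t$, resp.\ $\upsilon^N_t$, couples exactly one leg of trajectory $1$ with one leg of trajectory $2$; the trajectory sum therefore factorises into a product of two one-sided functionals of $\muest{\BS}{b,t-1}$ evaluated at $\transitiondens{t}(\cdot,x)\otimes\boldone$, $\boldone\otimes\transitiondens{t}(\cdot,x)$, $\beta^N_t(x,\cdot)\otimes\boldone$ or $\boldone\otimes\beta^N_t(x,\cdot)$, and regrouping these reproduces $\nu(\Theta^N_{b,t})$ and $\nu^{\otimes 2}(\Upsilon^N_{b,t})$. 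The case $b_t = 1$ runs along the same lines, now from the $b_t=1$ branch of \eqref{eq:alternativeQ} and the second display of Proposition~\ref{cor:condexpect}, whose right-hand side has only a $p=2$ leading term (prefactor $N(N-1)$, simplifying against $1/N^2$ to $(N-1)/N$) and a single $p=1$ correction, yielding \ref{eq:finalnorm2b1}.

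The main obstacle is the combinatorial bookkeeping of the last step rather than any individual estimate: one must check that every cross term $\transition{t}\big[\beta^N_t(\cdot,\particle{k^j_{t-1}}{t-1})\big](\particle{k^i_{t-1}}{t-1})$ in $\vartheta^N_t$ and $\upsilon^N_t$, with $i\in\{1,2\}$ and $j\in\{3,4\}$, genuinely links an index of trajectory $1$ to an index of trajectory $2$, so that summing against $\bigprod{b,t-1}(k^1_{0:t-1}, k^2_{0:t-1})\,\bigprod{b,t-1}(k^3_{0:t-1}, k^4_{0:t-1})$ produces a \emph{product} of two one-sided $\muest{\BS}{b,t-1}$-evaluations and not a single $\muest{\BS}{b,t-1}$ of a genuine tensor product; this is precisely what forces the asymmetric test functions $\transitiondens{t}(\cdot,x)\otimes\boldone$ versus $\boldone\otimes\transitiondens{t}(\cdot,x)$ (and likewise for $\beta^N_t(x,\cdot)$) in $\Theta^N_{b,t}$ and $\Upsilon^N_{b,t}$. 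Throughout I would keep the standing assumption $N\geq 4$, which guarantees that the cardinalities of the partition sets $\mathcal{I}^2_0 \cap \sett^p_2$ and $\mathcal{I}^2_1 \cap \sett^p_2$ entering Proposition~\ref{cor:condexpect} (cf.\ Example~\ref{ex:explicitset}) are the stated polynomials in $N$.
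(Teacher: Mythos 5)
Your proposal is correct and follows essentially the same route as the paper: squaring the recursion \eqref{eq:alternativeQ}, conditioning on $\F{t-1}$, invoking Proposition~\ref{cor:condexpect} term by term over the partition sets $\mathcal{I}^2_0 \cap \sett^p_2$ (resp. $\mathcal{I}^2_1 \cap \sett^p_2$), and then factorising the remainder sums via $\transition{t}\big[\beta^N_t(\cdot,z)\big](y)=\int \transitiondens{t}(y,x)\beta^N_t(x,z)\nu(\rmd x)$ into products of one-sided $\muest{\BS}{b,t-1}$-evaluations is exactly the paper's argument, including the prefactor bookkeeping.
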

\begin{proof}
    To prove \ref{eq:finalnorm2b0}, if $b_t = 0$, by \eqref{eq:alternativeQ},
    \begin{align*}
        \muest{\BS}{b,t}(h)^2 & = \sum_{k^{1:4} _{0:t} \in [N]^{4(t+1)}} \bigprod{b,t}(k^1 _{0:t}, k^2 _{0:t}) \bigprod{b,t}(k^3 _{0:t}, k^4 _{0:t})h^{\otimes 2}(\particletraj{k^1}{t}, \cdots, \particletraj{k^4}{t} ) \\
        & = \sum_{k^{1:4} _{0:t-1} \in [N]^{4t}} \bigprod{b,t}(k^1 _{0:t-1}, k^2 _{0:t-1}) \bigprod{b,t}(k^3 _{0:t-1}, k^4 _{0:t-1}) \\
        & \hspace{1cm} \times \bigg\{ \sum_{p = 2}^4 \sum_{k^{1:4} _t \in \mathcal{I}^2 _0 \cap \sett^p _2} \frac{\Omega^4 _{t-1}}{N^2(N-1)^2} \prod_{j = 1}^4 \beta^\BS _t(k^j _t, k^j _{t-1}) h^{\otimes 2}(\particletraj{k^1}{t}, \cdots, \particletraj{k^4}{t} )\bigg\} \eqsp,
        \end{align*}
        and by Proposition~\ref{cor:condexpect}, 
        \begin{align*}
        \big\| \muest{\BS}{b,t}(h) \big\|_2 ^2 & \leq \frac{(N-2)(N-3)}{N(N-1)} \left\| \muest{\BS}{b,t-1}(g^{\otimes 2} _{t-1} \bitransition{0}{t}[h]) \right\|_2 ^2 \\
        & \hspace{.2cm} + \frac{\boundg^3 |h|^2 _\infty (N-2)}{N(N-1)}\mathbb{E}\left[ \Omega_{t-1} \sum_{k^{1:4} _{0:t-1} \in [N]^{4t}} \bigprod{b,t-1}(k^1 _{0:t-1}, k^2 _{0:t-1}) \bigprod{b,t-1}(k^3 _{0:t-1}, k^4 _{0:t-1})\right. \\
        &\hspace{1.5cm}  \times \bigg\{ \transition{t}\left[ \beta^N _t(.,\particle{k^4 _{t-1}}{t-1})\right](\particle{k^1 _{t-1}}{t-1}) + \transition{t}\left[ \beta^N _t(.,\particle{k^4 _{t-1}}{t-1})\right](\particle{k^2 _{t-1}}{t-1}) \\
        & \hspace{3cm} \left.+ \transition{t}\left[ \beta^N _t(.,\particle{k^3_{t-1}}{t-1})\right](\particle{k^1 _{t-1}}{t-1}) + \transition{t}\left[ \beta^N _t(.,\particle{k^3_{t-1}}{t-1})\right](\particle{k^2 _{t-1}}{t-1}) \bigg\}\right] \\
        & \hspace{.5cm} + \frac{\boundg^2 |h|_\infty ^2 }{N(N-1)} \mathbb{E}\left[\Omega^2 _{t-1}\sum_{k^{1:4} _{0:t-1} \in [N]^{4t}} \bigprod{b,t-1}(k^1 _{0:t-1}, k^2 _{0:t-1}) \bigprod{b,t-1}(k^3 _{0:t-1}, k^4 _{0:t-1})\right. \\
        &\hspace{1.5cm} \times \bigg\{ \transition{t}\left[ \beta^N _t(.,\particle{k^3 _{t-1}}{t-1})\right](\particle{k^1 _{t-1}}{t-1}) \transition{t}\left[ \beta^N _t(.,\particle{k^4 _{t-1}}{t-1})\right](\particle{k^2 _{t-1}}{t-1}) \\
        & \hspace{3cm} \left.+ \transition{t}\left[ \beta^N _t(.,\particle{k^4 _{t-1}}{t-1})\right](\particle{k^1 _{t-1}}{t-1}) \transition{t}\left[ \beta^N _t(.,\particle{k^3 _{t-1}}{t-1})\right](\particle{k^2 _{t-1}}{t-1}) \bigg\}\right] \eqsp.
    \end{align*}    
    Then using $\As{assp:A}{}$,
    \begin{align*}
            & \sum_{k^{1:4} _{0:t-1}\in [N]^{4t}} \bigprod{b,t-1}(k^1 _{0:t-1}, k^2 _{0:t-1}) \bigprod{b,t-1}(k^3 _{0:t-1}, k^4 _{0:t-1}) \transition{t}[\beta^N _t(., \particle{k^3 _{t-1}}{t-1})](\particle{k^1 _{t-1}}{t-1}) \\
            & = \int \sum_{k^{1:4} _{0:t-1}\in [N]^{4t}} \bigprod{b,t-1}(k^1 _{0:t-1}, k^2 _{0:t-1}) \bigprod{b,t-1}(k^3 _{0:t-1}, k^4 _{0:t-1})  \beta^N _{t}(x, \particle{k^3 _{t-1}}{t-1}) \transitiondens{t}(\particle{k^1 _{t-1}}{t-1}, x) \nu(\rmd x) \nonumber\\
            & = \int  \muest{\BS}{b,t-1}\left( \transitiondens{t}(. ,x) \otimes \boldone\right) \muest{\BS}{b,t-1}\left( \beta^N _{t}(x,.) \otimes \boldone  \right) \nu(\rmd x) \eqsp, \nonumber
        \end{align*}
        and
        \begin{align*}
         & \sum_{k^{1:4} _{0:t-1}\in [N]^{4t}} \bigprod{b,t-1}(k^1 _{0:t-1}, k^2 _{0:t-1}) \bigprod{b,t-1}(k^3 _{0:t-1}, k^4 _{0:t-1}) \\
         & \hspace{4cm} \times \transition{t}[\beta^N _t(., \particle{k^3 _{t-1}}{t-1})](\particle{k^1 _{t-1}}{t-1}) \transition{t}[\beta^N _{t}(., \particle{k^4 _{t-1}}{t-1})](\particle{k^2 _{t-1}}{t-1}) \nonumber \\
            & \hspace{1cm} = \int \muest{\BS}{b,t-1}(\transitiondens{t}(.,x) \otimes \transitiondens{t}(.,y))  \muest{\BS}{b,t-1}\left( \beta^N _{t}(x,.) \otimes \beta^N _{t}(y,.) \right) \nu^{\otimes 2}(\rmd x, \rmd y)  \nonumber \eqsp.
        \end{align*}  
        which completes the proof of  \ref{eq:finalnorm2b0} by applying the same reasoning to the remaining terms. Item \ref{eq:finalnorm2b1} is obtained in the same way.
\end{proof}
\subsection{Supporting results for Theorem~\ref*{corr:paris}}
\label{sec:supportparis}

In this section, we prove the analogues of Propositions~\ref*{prop:mu_expression}-\ref*{prop:Qbound}-\ref*{lem:prodbeta} and ~\ref*{cor:condexpect} for $\parismuest{\BS}{b,t}$. We remind the reader that the number of sampled indices $M$ in the $\paris$ estimator is \emph{fixed} and that $\Gfilt{b}{t}$ is defined in \eqref{eq:Gfilt}. Let $\Gfilt{b}{t-1} \vee \particle{1:N}{t}$ be the following $\sigma$-algebra:
 \begin{equation}
    \label{eq:Gfiltparticle}
    \Gfilt{b}{t-1} \vee \particle{1:N}{t} \eqdef \sigma\big( \Gfilt{b}{t-1} \cup \sigma(\particle{1:N}{t}) \big) \eqsp.
\end{equation}
\begin{lemma}
    \label{lem:condexpectparis}
For any $t \in \N^{*}$, $h \in \measurable{2}$ and $b \in \Bset_t$, 
\begin{enumerate}[label=(\roman*)]
    \item $\pE \big[ \parismuest{\BS}{b,t}(h) \big| \Gfilt{b}{t-1} \big]= \parismuest{\BS}{b,t-1}(g^{\otimes 2} _{t-1} \bitransition{b_t}{t}[h])$,
    \item $\parismuest{\BS}{b,t}(h)$ is an unbiased estimator of $\mumeasure{b,t}(h)$,
\end{enumerate}
where $\parismuest{\BS}{b,t}(h)$ is defined in \eqref{eqdef:qtilde}.
\end{lemma}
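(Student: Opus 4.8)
The plan is to establish (i) first and then deduce (ii) by induction on $t$, following the pattern of the proof of Proposition~\ref*{prop:mu_expression}, with the one extra ingredient that a preliminary conditioning integrates out the rejection-sampling indices $\{J^i_{k,t-1}\}$. Write $\parismuest{\BS}{b,t}(h) = \Const{t}\sum_{i,j\in[N]^2}\Pbacksum^b_t(i,j)\,h(\particle{i}{t},\particle{j}{t})$ with $\Const{t}$ as in \eqref{eqdef:const}; since $\joint{t}^N(\boldone) = \joint{t-1}^N(\boldone)\,\Omega_{t-1}/N$, one has $\Const{t} = \Const{t-1}\,(N/(N-1))\,\Omega_{t-1}^2/N^2$ when $b_t=0$ and $\Const{t} = \Const{t-1}\,\Omega_{t-1}^2/N$ when $b_t=1$.

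For (i) I would apply the tower property along $\Gfilt{b}{t-1}\subset\Gfilt{b}{t-1}\vee\particle{1:N}{t}$ (see \eqref{eq:Gfiltparticle}). Conditionally on $\Gfilt{b}{t-1}\vee\particle{1:N}{t}$ the variables $J^m_{k,t-1}$, $(m,k)\in[M]\times[N]$, are independent with $J^m_{k,t-1}\sim\beta^\BS_t(k,\cdot)$, while $\Pbacksum^b_{t-1}$ is $\Gfilt{b}{t-1}$-measurable; substituting the defining recursion of $\Pbacksum^b_t$ and using the indicator $\1_{i\neq j}$ (case $b_t=0$) to split the two drawn indices into an independent pair, respectively $\1_{i=j}$ (case $b_t=1$) to keep a single drawn index paired with the deterministic summation variable, one gets that $\pE[\parismuest{\BS}{b,t}(h)\mid\Gfilt{b}{t-1}\vee\particle{1:N}{t}]$ equals $\Const{t}$ times $\sum_{i,j}\1_{i\neq j}\sum_{k,\ell}\beta^\BS_t(i,k)\beta^\BS_t(j,\ell)\Pbacksum^b_{t-1}(k,\ell)h(\particle{i}{t},\particle{j}{t})$ when $b_t=0$, and $\Const{t}$ times $\sum_i\sum_{k,\ell}\beta^\BS_t(i,k)\normweight{\ell}{t-1}\Pbacksum^b_{t-1}(k,\ell)h(\particle{i}{t},\particle{i}{t})$ when $b_t=1$.

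Then I would take the expectation given $\Gfilt{b}{t-1}$: under multinomial resampling the particles $\particle{1:N}{t}$ are i.i.d.\ with law $\filter{t-1}^N\transition{t}$ conditionally on $\Gfilt{b}{t-1}$, while $\Pbacksum^b_{t-1}$ is $\Gfilt{b}{t-1}$-measurable and each kernel $\ell\mapsto\beta^\BS_t(i,\ell) = \beta^N_t(\particle{i}{t},\particle{\ell}{t-1})$ depends on $\particle{i}{t}$ only. Hence, for a fixed admissible pair $(i,j)$, the identity $\beta^N_t(x,y)\,\filter{t-1}^N\transition{t}(\rmd x) = g_{t-1}(y)\,\transition{t}(y,\rmd x)/\Omega_{t-1}$ of Lemma~\ref*{lem:BSGTidentity} turns the inner integral into $\Omega_{t-1}^{-2}\,(g^{\otimes 2}_{t-1}\bitransition{b_t}{t}[h])(\particle{k}{t-1},\particle{\ell}{t-1})$. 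Summing the $N(N-1)$ (when $b_t=0$) or $N$ (when $b_t=1$) identical contributions and inserting the ratio $\Const{t}/\Const{t-1}$ recorded above, all combinatorial prefactors cancel and the right-hand side becomes $\Const{t-1}\sum_{k,\ell}\Pbacksum^b_{t-1}(k,\ell)\,(g^{\otimes 2}_{t-1}\bitransition{b_t}{t}[h])(\particle{k}{t-1},\particle{\ell}{t-1}) = \parismuest{\BS}{b,t-1}(g^{\otimes 2}_{t-1}\bitransition{b_t}{t}[h])$, which is (i).

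Claim (ii) then follows by induction on $t$. At $t=0$, $\parismuest{\BS}{b,0}(h) = \muest{\BS}{b,0}(h)$, which is unbiased for $\mumeasure{b,0}(h)$ by Proposition~\ref*{prop:mu_expression}. For the inductive step, the tower property and (i) give $\pE[\parismuest{\BS}{b,t}(h)] = \pE[\parismuest{\BS}{b,t-1}(g^{\otimes 2}_{t-1}\bitransition{b_t}{t}[h])]$; since $g^{\otimes 2}_{t-1}\bitransition{b_t}{t}[h]\in\measurable{2}$, the induction hypothesis followed by the definition \eqref{eq:defmu1} of $\mumeasure{b,t}$ yields $\pE[\parismuest{\BS}{b,t}(h)] = \mumeasure{b,t-1}(g^{\otimes 2}_{t-1}\bitransition{b_t}{t}[h]) = \mumeasure{b,t}(h)$, exactly as in the last step of the proof of Proposition~\ref*{prop:mu_expression}. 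The place where care is needed is the measurability/independence bookkeeping across the two conditionings — namely that $\Pbacksum^b_{t-1}$ is $\Gfilt{b}{t-1}$-measurable while the $J$-indices of distinct particles are conditionally independent given $\Gfilt{b}{t-1}\vee\particle{1:N}{t}$, and that the coalescence indicators $\1_{i\neq j}$ and $\1_{i=j}$ are precisely what reduces the conditional expectation to the product form on which Lemma~\ref*{lem:BSGTidentity} can be applied; once this is in place, the rest is the same rearrangement of finite sums as in the $\muest{\BS}{}$ case.
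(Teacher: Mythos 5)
Your proposal is correct and follows essentially the same route as the paper: condition first on $\Gfilt{b}{t-1}\vee\particle{1:N}{t}$ to integrate out the indices $J^i_{k,t-1}$ using their conditional independence and the $\Gfilt{b}{t-1}$-measurability of $\Pbacksum^b_{t-1}$, then condition on $\Gfilt{b}{t-1}$ and apply Lemma~\ref{lem:BSGTidentity} to the i.i.d.\ particles, after which the combinatorial constants cancel and (ii) follows by induction initialized at $\parismuest{\BS}{b,0}=\muest{\BS}{b,0}$. No gaps.
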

\begin{proof}
We start with the case $b_t = 0$. For any $(k, \ell) \in [N]^2$, $J^i _{k, t-1}$ and $J^i _{\ell, t-1}$ are independent conditionally on $\Gfilt{b}{t-1} \vee \particle{1:N}{t}$ for any $i \in [M]$ if $k \neq \ell$ and $\Pbacksum^b _{t-1}$ is $\Gfilt{b}{t-1}$-measurable, hence, using \eqref{eqdef:tautilde},
    \begin{align}
        \label{eq:pariscondexpect0}
        & \pE \left[ \Pbacksum^b _t(k,\ell) h(\particle{k}{t}, \particle{\ell}{t}) \middle| \Gfilt{b}{t-1} \right] \\
        & \hspace{2cm} = \pE \big[ \1_{k \neq \ell} h(\particle{k}{t}, \particle{\ell}{t}) M^{-1} \sum_{i = 1}^M \Pbacksum^b _{t-1}(J^i _{k,t-1}, J^i _{\ell, t-1}) \big| \Gfilt{b}{t-1} \big] \nonumber\\
        & \hspace{2cm} = \pE \bigg[  \1_{k \neq \ell} h(\particle{k}{t}, \particle{\ell}{t}) M^{-1} \sum_{i = 1}^M \sum_{n, m \in [N]^2} \beta^\BS _t(k,n) \beta^\BS _t(\ell,m) \Pbacksum^b _{t-1}(n,m) \big| \Gfilt{b}{t-1} \bigg] \nonumber \\
        & \hspace{2cm} =  \sum_{n,m \in [N]^2} \Pbacksum^b _{t-1}(n,m) \pE \bigg[ \1_{k \neq \ell} \beta^\BS _t(k,n) \beta^\BS _t(\ell,m)  h(\particle{k}{t}, \particle{\ell}{t}) \big| \F{t-1} \bigg] \nonumber\\
        & \hspace{2cm} = \frac{\1_{k \neq \ell}}{\Omega^2 _{t-1}} \sum_{n, m \in [N]^2} \Pbacksum^b _{t-1}(n,m) \big( g^{\otimes 2} _{t-1} \bitransition{0}{t}[h] \big)(\particle{n}{t-1}, \particle{m}{t-1}) \nonumber \eqsp.
    \end{align}    
    If $b_t = 1$, 
    \begin{align}
        \label{eq:pariscondexpect1}
        & \hspace{1cm}\pE \left[ \Pbacksum^b _t(k,\ell) h(\particle{k}{t}, \particle{\ell}{t}) \middle| \Gfilt{b}{t-1} \right] \\
        & \hspace{2cm} = \pE \big[ \1_{k = \ell} h(\particle{k}{t}, \particle{\ell}{t}) M^{-1} \sum_{i = 1}^M \sum_{m = 1}^N \normweight{m}{t-1} \Pbacksum^b _{t-1}(J^i _{k,t-1}, m) \big| \Gfilt{b}{t-1} \big] \nonumber \\
        & \hspace{2cm} = \pE \bigg[  \1_{k = \ell} h(\particle{k}{t}, \particle{\ell}{t}) M^{-1} \sum_{i = 1}^M \sum_{n, m \in [N]^2} \beta^\BS _t(k,n) \normweight{m}{t-1} \Pbacksum^b _{t-1}(n,m) \big| \Gfilt{b}{t-1} \bigg] \nonumber \\
        & \hspace{2cm} =  \sum_{n,m \in [N]^2} \Pbacksum^b _{t-1}(n,m) \pE \bigg[ \1_{k = \ell} \beta^\BS _t(k,n) \normweight{m}{t-1}  h(\particle{k}{t}, \particle{\ell}{t}) \big| \F{t-1} \bigg] \nonumber \\
        & \hspace{2cm} = \frac{\1_{k = \ell}}{\Omega^2 _{t-1}} \sum_{n, m \in [N]^2} \Pbacksum^b _{t-1}(n,m) \big( g^{\otimes 2} _{t-1} \bitransition{1}{t}[h] \big)(\particle{n}{t-1}, \particle{m}{t-1}) \eqsp. \nonumber
    \end{align}   
    Consequently, if $b_t = 0$, by \eqref{eq:pariscondexpect0}, 
    \begin{align*}
        & \pE \big[ \parismuest{\BS}{b,t}(h) \big| \Gfilt{b}{t-1} \big] \\
        & \hspace{.5cm} = \prod_{s = 0}^{t-1} N^{b_s} \left( \frac{N}{N-1}\right)^{1 - b_s} \frac{\joint{t-1}^N(\boldone)^2}{N^2} \sum_{k, \ell \in [N]^2} \frac{\Omega^2 _{t-1}}{N(N-1)}  \pE \bigg[ \Pbacksum^b _{t}(k,\ell) h(\particle{k}{t}, \particle{\ell}{t}) \big| \Gfilt{b}{t-1}\bigg] \\
        & \hspace{.5cm} = \prod_{s = 0}^{t-1} N^{b_s} \left( \frac{N}{N-1}\right)^{1 - b_s} \frac{\joint{t-1}^N(\boldone)^2}{N^2} \sum_{n, m \in [N]^2} \Pbacksum^b _{t-1}(n,m) \big( g^{\otimes 2} _{t-1} \bitransition{0}{t}[h] \big)(\particle{n}{t-1}, \particle{m}{t-1}) \\
        & \hspace{.5cm} = \parismuest{\BS}{b,t-1}(g^{\otimes 2} _{t-1} \bitransition{0}{t}[h] ) \eqsp,
    \end{align*}
    and in a similar way, $ \pE \big[ \parismuest{\BS}{b,t}(h) \big| \Gfilt{b}{t-1} \big] = \parismuest{\BS}{b,t-1}(g^{\otimes 2} _{t-1} \bitransition{1}{t}[h] )$ by \eqref{eq:pariscondexpect1} if $b_t = 1$. The second item follows straightforwardly by induction and the tower property. The induction is initialized by noting that $\parismuest{M}{b,0}(h)$ is equal to $\muest{\BS}{b,0}(h)$ which is an unbiased estimator of $\mumeasure{b,0}(h)$ by  Proposition~\ref*{prop:mu_expression}.
\end{proof}
    \begin{proposition}
        \label{prop:paris_conv_partitions} Let $t > 0$ and $N \geq 4$.
        \begin{enumerate}[label=(\roman*)]
        \item \label{item:paris_convb0} If $b_t = 0$, 
        \begin{multline}
 \label{eq:paris_conv_partitions_card4}
             \pE \bigg[ \Const{t}^2 \sum_{k^{1:4} _{t} \in \mathcal{I}^2 _0 \cap \sett^{4} _2} h(\particle{k^1 _t}{t}, \particle{k^2 _t}{t}) h(\particle{k^3 _t}{t}, \particle{k^4 _t}{t})\Pbacksum^b _t(k^1 _t, k^2 _t) \Pbacksum^b _t(k^3 _t, k^4 _t) \bigg| \Gfilt{b}{t-1} \bigg] \\
              =  \frac{(N-2)(N-3)}{N(N-1)} \parismuest{\BS}{b,t-1}(g^{\otimes 2} _{t-1} \bitransition{0}{t}[h])^2  \eqsp, 
        \end{multline}
        \begin{multline}
            \label{eq:paris_conv_partitions_card2}
            \pE \bigg[ \Const{t} ^2 \sum_{k^{1:4} _{t} \in \mathcal{I}^2 _0 \cap \sett^{2} _2} \Pbacksum^b _t(k^1 _t, k^2 _t) \Pbacksum^b _t(k^3 _t, k^4 _t) \bigg| \Gfilt{b}{t-1} \bigg] 
            \leq  \frac{\boundg^2 (M-1) \Omega^2 _{t-1}}{MN(N-1)} \nu^{\otimes 2}( \widetilde{\Upsilon}_{b,t}^{N,M})  \\ + \frac{\boundg^2 \Omega^2 _{t-1}}{MN(N-1)} \Const{t-1}^2 \sum_{k^{1:2} _{t-1} \in [N]^2} \mathsf{T}^{(1)}_{t-1,b}(k^1_{t-1},k^2_{t-1}) \eqsp,
        \end{multline}
        and
        \begin{multline}
            \label{eq:paris_conv_partitions_card3}
             \pE \bigg[ \Const{t}^2 \sum_{k^{1:4} _{t} \in \mathcal{I}^2 _0 \cap \sett^{3} _2} \Pbacksum^b _t(k^1 _t, k^2 _t) \Pbacksum^b _t(k^3 _t, k^4 _t) \bigg| \Gfilt{b}{t-1} \bigg]  \leq \frac{\Omega _{t-1} \boundg^3 (M-1)(N-2)}{MN(N-1)} \nu(\widetilde{\Theta}_{b,t}^{N,M})  \\
            + \frac{\Omega _{t-1} \boundg^3 (N-2)}{MN(N-1)} \Const{t-1}^2 \sum_{k^{1:3} _{t-1} \in [N]^3}
             \mathsf{T}^{(2)}_{t-1,b}(k^{1:3} _{t-1}) \eqsp, 
        \end{multline}
        where 
          \begin{align*}
            \widetilde{\Theta}^{N,M} _{b,t} & : x \mapsto \big[ \parismuest{\BS}{b,t-1}\big(\transitiondens{t}(.,x) \otimes \boldone \big)  + \parismuest{\BS}{b,t-1} \big(\boldone \otimes \transitiondens{t}(.,x) \big)\big] \\ &\hspace{3cm} \times \big[ \parismuest{\BS}{b,t-1}\big(\beta^N _t(x,.) \otimes \boldone) + \parismuest{\BS}{b,t-1}(\boldone \otimes \beta^N _t(x,.)\big)\big] \eqsp,\\
       \widetilde{\Upsilon}^{N,M} _{b,t}&: (x,y) \mapsto \parismuest{\BS}{b,t-1}\big(\transitiondens{t}(.,x) \otimes \transitiondens{t}(.,y) \big) \parismuest{\BS}{b,t-1}\big(\beta^N _{t}(x,.) \otimes \beta^N _{t}(y,.)\big) \\
       &\hspace{3cm}+ \parismuest{\BS}{b,t-1}\big(\transitiondens{t}(.,x) \otimes \transitiondens{t}(.,y)\big)  \parismuest{\BS}{b,t-1}\big(\beta^N _{t}(y,.) \otimes \beta^N _{t}(x,.)\big) \eqsp,
          \end{align*}
        and
        \begin{align*}
       \mathsf{T}^{(1)}_{b,t}&: (k^1 _{t},k^2 _{t})\mapsto\Pbacksum^b _{t}(k^1 _{t}, k^2 _{t})^2 + \Pbacksum^b _{t}(k^1 _{t}, k^2 _{t}) \Pbacksum^b _{t}(k^2 _{t}, k^1 _{t})\eqsp,\\
       \mathsf{T}^{(2)}_{b,t}&: (k^1 _{t},k^2 _{t},k^3 _{t})\mapsto \Pbacksum^b _{t}(k^1 _{t}, k^2 _{t}) \Pbacksum^b _{t}(k^3 _{t}, k^1 _{t}) + \Pbacksum^b _{t}(k^1 _{t}, k^2 _{t}) \Pbacksum^b _{t}(k^1 _{t}, k^3 _{t}) \\
            & \hspace{3cm} + \Pbacksum^b _{t}(k^1 _{t}, k^2 _{t}) \Pbacksum^b _{t}(k^2 _{t}, k^3 _{t}) + \Pbacksum^b _{t}(k^1 _{t}, k^2 _{t}) \Pbacksum^b _{t}(k^3 _{t}, k^2 _{t})\eqsp.
       \end{align*}
        \item \label{item:paris_convb1} If $b_t = 1$, 
        \begin{multline}
            \label{eq:paris_conv_partitions_b1card2}
            \pE \bigg[ \Const{t}^2 \sum_{k^{1:4} _{t} \in \mathcal{I}^2 _1 \cap \sett^{2} _2} h(\particle{k^1 _t}{t}, \particle{k^2 _t}{t}) h(\particle{k^3 _t}{t}, \particle{k^4 _t}{t}) \Pbacksum^b _t(k^1 _t, k^2 _t) \Pbacksum^b _t(k^3 _t, k^4 _t) \bigg| \Gfilt{b}{t-1} \bigg] \\
             =  \frac{N-1}{N}  \parismuest{\BS}{b,t-1}(g^{\otimes 2} _{t-1} \bitransition{1}{t}[h])^2  \eqsp,
       \end{multline}
       and 
        \begin{align}
            \label{eq:paris_conv_partitions_b1card1}
            & \pE \bigg[ \Const{t}^2 \sum_{k^{1:4} _{t} \in \mathcal{I}^2 _1 \cap \sett^{1} _2} \Pbacksum^b _t(k^1 _t, k^2 _t) \Pbacksum^b _t(k^3 _t, k^4 _t) \bigg| \Gfilt{b}{t-1} \bigg] \\
            & \hspace{1cm} \leq \frac{\Omega_{t-1} \boundg^3}{N} \int \parismuest{\BS}{b,t-1}(\transitiondens{t}(.,x) \otimes \boldone) \parismuest{\BS}{b,t-1}(\beta^N _t(x,.) \otimes 1) \nu(\rmd x)  \nonumber \\
            & \hspace{3cm} + \frac{\boundg^3\Omega _{t-1}}{M N} \Const{t-1}^2 \sum_{k^{1,2,4} _{t-1} \in [N]^3} \Pbacksum^b _{t-1}(k^1 _{t-1}, k^2 _{t-1}) \Pbacksum^b _{t-1}(k^1 _{t-1}, k^4 _{t-1})\eqsp. \nonumber
        \end{align}
    \end{enumerate}
    \end{proposition}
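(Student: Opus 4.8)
The plan is to prove Proposition~\ref{prop:paris_conv_partitions} by the same successive-conditioning scheme used for Proposition~\ref{cor:condexpect}, the new ingredient being the extra Monte Carlo layer carried by the $\paris$ recursion \eqref{eqdef:tautilde}. Recall $\Const{t}=N^{b_t}(N/(N-1))^{1-b_t}\,\Omega_{t-1}^2 N^{-2}\,\Const{t-1}$, so that $\Const{t}^2$ carries a $\Omega_{t-1}^4$ factor which will cancel the $\Omega_{t-1}^{-4}$ produced below. First I would substitute \eqref{eqdef:tautilde} into each product $\Pbacksum^b_t(k^1_t,k^2_t)\Pbacksum^b_t(k^3_t,k^4_t)$; for $b_t=0$ this unfolds, up to the indicators $\1_{k^1_t\neq k^2_t}\1_{k^3_t\neq k^4_t}$, into $M^{-2}\sum_{i,i'\in[M]}\Pbacksum^b_{t-1}(J^i_{k^1_t,t-1},J^i_{k^2_t,t-1})\Pbacksum^b_{t-1}(J^{i'}_{k^3_t,t-1},J^{i'}_{k^4_t,t-1})$, where $\Pbacksum^b_{t-1}$ is $\Gfilt{b}{t-1}$-measurable and, by construction, the families $\{J^i_{k,t-1}\}_{i\in[M]}$ are independent across $k$ and i.i.d.\ with law $\beta^\BS_t(k,\cdot)$ given $\Gfilt{b}{t-1}\vee\particle{1:N}{t}$ (the $\sigma$-field \eqref{eq:Gfiltparticle}).

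The argument then splits twice. First, decompose the sum over sample indices into the diagonal $i=i'$ (weight $1/M$) and the off-diagonal $i\neq i'$ (weight $(M-1)/M$). On the off-diagonal the two copies of $\Pbacksum^b_{t-1}$ involve disjoint sets of $J$-draws, so the conditional expectation given $\Gfilt{b}{t-1}\vee\particle{1:N}{t}$ factorizes into two ``ideal'' backward steps $\sum_{n,m}\beta^\BS_t(k,n)\beta^\BS_t(k',m)\Pbacksum^b_{t-1}(n,m)$; taking next the $\particle{1:N}{t}$-expectation via the identity $\pE[\beta^N_t(\particle{k_t}{t},\particle{n}{t-1})h(\particle{k_t}{t})\mid\F{t-1}]=\normweight{n}{t-1}\transition{t}[h](\particle{n}{t-1})$ of Lemma~\ref{lem:BSGTidentity}, one recognizes (after multiplying by $\Const{t}^2$ and Fubini) exactly the $\parismuest{\BS}{b,t-1}$-bilinear expressions $\nu^{\otimes 2}(\widetilde\Upsilon^{N,M}_{b,t})$ and $\nu(\widetilde\Theta^{N,M}_{b,t})$, the combinatorial prefactors $N(N-1)$, $N(N-1)(N-2)$ coming from counting the tuples in $\mathcal{I}^2_0\cap\sett^p_2$ (cf.\ Example~\ref{ex:explicitset}). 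On the diagonal $i=i'$ the same $J$-draw feeds both factors, so the conditional expectation retains a degenerate structure such as $\sum_{n,m}\beta^\BS_t(k,n)\beta^\BS_t(k',m)\Pbacksum^b_{t-1}(n,m)\Pbacksum^b_{t-1}(n,m)$ or its asymmetric analogues, which reduce, after the $\particle{1:N}{t}$-expectation, to $\Const{t-1}^2\sum\mathsf{T}^{(1)}_{t-1,b}$ and $\Const{t-1}^2\sum\mathsf{T}^{(2)}_{t-1,b}$ with the stated prefactors. Second, inside each $\mathcal{I}^2_0\cap\sett^p_2$ one runs the sub-case analysis of Proposition~\ref{cor:condexpect} — listing which of $k^1_t,\dots,k^4_t$ coincide — using $\beta^N_t\le 1$ and $\Omega_{t-1}\le N\boundg$ from $\As{assp:B}{}$ together with Lemma~\ref{lem:BSGTidentity} to turn residual $\beta$-factors into $\transition{t}[\beta^N_t(\cdot,x)]$ integrals and to extract the $\boundg^p\Omega_{t-1}^{-p}$ weights. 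The case $p=4$ — four distinct particle indices — factorizes completely on and off the diagonal and collapses to the exact equality \eqref{eq:paris_conv_partitions_card4} with prefactor $(N-2)(N-3)/(N(N-1))$. Part (ii) ($b_t=1$) is identical but shorter: only $\sett^1_2$ and $\sett^2_2$ occur, $\sett^2_2$ gives the exact equality \eqref{eq:paris_conv_partitions_b1card2} with prefactor $(N-1)/N$, and $\sett^1_2$ yields \eqref{eq:paris_conv_partitions_b1card1} from the off-diagonal factorized $\nu$-integral plus the diagonal $\Pbacksum^b_{t-1}(k^1_{t-1},k^2_{t-1})\Pbacksum^b_{t-1}(k^1_{t-1},k^4_{t-1})$ term.

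The step I expect to be the main obstacle is the triple bookkeeping in the diagonal contributions: one must track simultaneously which of $k^1_t,\dots,k^4_t$ coincide (the $\sett^p_2$ stratum), whether the two $\paris$ sample indices $i,i'$ coincide, and consequently which of the four draws $J^{\cdot}_{k^j_t,t-1}$ are shared and into which slots of the two copies of $\Pbacksum^b_{t-1}$ they land — it is precisely this matching that decides whether a term takes the form $\Pbacksum^b_{t-1}(n,m)^2$, $\Pbacksum^b_{t-1}(n,m)\Pbacksum^b_{t-1}(m,n)$, or one of the four asymmetric products appearing in $\mathsf{T}^{(2)}_{t-1,b}$, and getting every combinatorial prefactor and symmetrization exactly right (rather than only up to harmless constants) is the delicate part. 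Everything else is a mechanical repetition of the estimates already carried out in the proofs of Propositions~\ref{lem:prodbeta} and \ref{cor:condexpect}.
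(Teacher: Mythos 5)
Your plan is correct and follows essentially the same route as the paper's proof: unfold the PaRIS recursion, split the Monte Carlo indices into the diagonal $i=i'$ (yielding the $\mathsf{T}^{(1)}$, $\mathsf{T}^{(2)}$ remainders with weight $1/M$) and off-diagonal $i\neq i'$ (which factorizes into the $\parismuest{\BS}{b,t-1}$ bilinear integrals with weight $(M-1)/M$), stratify over $\mathcal{I}^2_{0}\cap\sett^p_2$ as in Proposition~\ref{cor:condexpect}, and use Lemma~\ref{lem:BSGTidentity} together with the identity $\Const{t}^2=\Const{t-1}^2\Omega_{t-1}^4/(N^2(N-1)^2)$ to obtain the stated prefactors, including the exact equalities for $p=4$ (resp.\ $p=2$ when $b_t=1$) where all $J$-draws are independent regardless of $i$ versus $i'$. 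The bookkeeping you flag as delicate is indeed where the paper spends its effort, but your decomposition is the right one and carries through.
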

    \begin{proof} We start with the case $b_t = 0$. 
        \vspace{.3cm}\\
        $-$ If $(k^1 _t, \cdots, k^4 _{t}) \in \mathcal{I}^2 _0 \cap \sett^4 _2$, then $k^1 _t \neq k^2 _t \neq k^3 _t \neq k^4 _t$ and conditionally on $\Gfilt{b}{t-1} \vee \particle{1:N}{t}$, $J^i _{k^1 _t,t-1}$, $J^i _{k^2 _t,t-1}$, $J^j _{k^3 _t,t-1}$ and $J^j _{k^4 _t,t-1}$ are independent for any $(i,j) \in [M]^2$ and $J^{1:M} _{k^\ell _t, t-1} \iid \beta^\BS _t(k^\ell _t, .)$ for any $\ell \in [1:4]$, thus, for any $(i,j) \in [M]^2$
    \begin{align*}
        \pE \bigg[ \Pbacksum^b _t(J^i _{k^1 _t, t-1} &, J^i _{k^2 _t, t-1})  \Pbacksum^b _t(J^j _{k^3 _t, t-1},J^j _{k^4 _t, t-1}) \bigg| \Gfilt{b}{t-1} \vee \particle{1:N}{t} \bigg] \\
        & = \pE \bigg[ \Pbacksum^b _t(J^i _{k^1 _t, t-1} , J^i _{k^2 _t, t-1}) \bigg| \Gfilt{b}{t-1} \vee \particle{1:N}{t} \bigg] \pE \bigg[ \Pbacksum^b _t(J^i _{k^3 _t, t-1} , J^i _{k^4 _t, t-1}) \bigg| \Gfilt{b}{t-1} \vee \particle{1:N}{t} \bigg] \\
        & = \sum_{k^{1:4} _{t-1} \in [N]^4} \prod_{n = 1}^4 \beta^\BS _t(k^n _t, k^n _{t-1}) \Pbacksum^b _{t-1}(k^1 _{t-1}, k^2 _{t-1}) \Pbacksum^b _{t-1}(k^3 _{t-1}, k^4 _{t-1}) \eqsp,
    \end{align*}
    and
    \begin{align*}
        & \pE \bigg[ h(\particle{k^1 _t}{t}, \particle{k^2 _t}{t}) h(\particle{k^3 _t}{t}, \particle{k^4 _t}{t}) M^{-2} \sum_{i,j \in [M]^2} \Pbacksum^b _t(J^i _{k^1 _t, t-1} , J^i _{k^2 _t, t-1}) \Pbacksum^b _t(J^j _{k^3 _t, t-1},J^j _{k^4 _t, t-1}) \bigg| \Gfilt{b}{t-1} \bigg] \\
        & = \pE \bigg[ h(\particle{k^1 _t}{t}, \particle{k^2 _t}{t}) h(\particle{k^3 _t}{t}, \particle{k^4 _t}{t}) M^{-2} \sum_{i,j \in [M]^2} \sum_{k^{1:4} _{t-1} \in [N]^4} \prod_{n = 1}^4 \beta^\BS _t(k^n _t, k^n _{t-1}) \\
        & \hspace{6cm} \times \Pbacksum^b _{t-1}(k^1 _{t-1}, k^2 _{t-1}) \Pbacksum^b _{t-1}(k^3 _{t-1}, k^4 _{t-1})\bigg| \Gfilt{b}{t-1} \bigg] \\
        & = \sum_{k^{1:4} _{t-1} \in [N]^4} \Pbacksum^b _{t-1}(k^1 _{t-1}, k^2 _{t-1}) \Pbacksum^b _{t-1}(k^3 _{t-1}, k^4 _{t-1}) \pE \bigg[ \prod_{n = 1}^4 \beta^\BS _t(k^n _t, k^n _{t-1})  h(\particle{k^1 _t}{t}, \particle{k^2 _t}{t}) h(\particle{k^3 _t}{t}, \particle{k^4 _t}{t}) \bigg| \Gfilt{b}{t-1} \bigg] \\
        & = \frac{1}{\Omega^4 _{t-1}} \sum_{k^{1:4} _{t-1} \in [N]^4} \Pbacksum^b _{t-1}(k^1 _{t-1}, k^2 _{t-1}) \Pbacksum^b _{t-1}(k^3 _{t-1}, k^4 _{t-1}) \big( g^{\otimes 2} _{t-1} \bitransition{0}{t}[h] \big)^{\otimes 2} (\particle{k^1 _{t-1}}{t-1}, \particle{k^2 _{t-1}}{t-1}, \particle{k^3 _{t-1}}{t-1}, \particle{k^4 _{t-1}}{t-1}) \eqsp,
    \end{align*}
    where we have used that $\Pbacksum^b _{t-1}(k^1 _{t-1}, k^2 _{t-1})$ and $\Pbacksum^b _{t-1}(k^3 _{t-1}, k^4 _{t-1})$ are $\Gfilt{b}{t-1}$-measurable in the third equality and then proceeded similarly to Proposition~\ref{cor:condexpect}. By Example~\ref{ex:explicitset}, $\card{\mathcal{I}^2 _0 \cap \sett^4 _2} = N(N-1)(N-2)(N-3)$ and since $b_t = 0$ implies that
    \begin{equation}
        \label{eq:ConstNidentity}
    \Const{t}^2 = \Const{t-1}^2 \frac{ \Omega^4 _{t-1}}{N^2(N-1)^2}    \eqsp,
    \end{equation}
    we obtain
    \begin{align*}
        & \pE \bigg[ \Const{t}^2 \sum_{k^{1:4} _{t} \in \mathcal{I}^2 _0 \cap \sett^{4} _2} h(\particle{k^1 _t}{t}, \particle{k^2 _t}{t}) h(\particle{k^3 _t}{t}, \particle{k^4 _t}{t})\Pbacksum^b _t(k^1 _t, k^2 _t) \Pbacksum^b _t(k^3 _t, k^4 _t) \bigg| \Gfilt{b}{t-1} \bigg] \\
        & =  \Const{t}^2 \sum_{k^{1:4} _{t} \in \mathcal{I}^2 _0 \cap \sett^{4} _2} \pE \bigg[ h(\particle{k^1 _t}{t}, \particle{k^2 _t}{t}) h(\particle{k^3 _t}{t}, \particle{k^4 _t}{t}) \\
        & \hspace{3cm} \times M^{-2} \sum_{i,j \in [M]^2} \Pbacksum^b _t(J^i _{k^1 _t, t-1} , J^i _{k^2 _t, t-1}) \Pbacksum^b _t(J^j _{k^3 _t, t-1},J^j _{k^4 _t, t-1}) \bigg| \Gfilt{b}{t-1} \bigg] \\
        & =  \frac{(N-2)(N-3)}{N(N-1)} \parismuest{\BS}{b,t-1}\big( g^{\otimes 2} _{t-1} \bitransition{0}{t}[h] \big)^2\eqsp.
    \end{align*}    
    \vspace{0.3cm}\\
        $-$ If $(k^1 _t, \cdots, k^4 _t) \in \mathcal{I}^2 _0 \cap \sett^2 _2$. Then $\bm{k}_2 = \{k^1 _t, k^2 _t\}$ and we either have $V_{k^1 _t} = \{k^1 _t, k^3 _t\}$ and $V_{k^2 _t} = \{k^2 _t, k^4 _t\}$ or $V_{k^1 _t} = \{k^1 _t, k^4 _t\}$ and $V_{k^2 _t} = \{k^2 _t, k^3 _t\}$. Assume that $V_{k^1 _t} = \{k^1 _t, k^3 _t\}$ and $V_{k^2 _t} = \{k^2 _t, k^4 _t\}$. Taking into account that $J^i _{k^1 _t, t-1} = J^i _{k^3 _t, t-1}$ and $J^i _{k^2 _t,t-1} = J^i _{k^4 _t, t-1}$, we get
        \begin{align*}
              \pE \bigg[  \Pbacksum^b _t(k^1 _t, k^2 _t) \Pbacksum^b _t(k^3 _t, k^4 _t) \bigg| \Gfilt{b}{t-1} \bigg]&\\
             &\hspace{-4cm}= \pE \bigg[ M^{-2} \sum_{i,j \in [M]^2} \Pbacksum^b _{t-1}(J^i _{k^1 _t, t-1} , J^i _{k^2 _t, t-1}) \Pbacksum^b _{t-1}(J^j _{k^3 _t, t-1},J^j _{k^4 _t, t-1}) \big| \Gfilt{b}{t-1} \big] \\
             & \hspace{-4cm} = \pE \bigg[ M^{-2} \sum_{i,j \in [M]^2} \1_{i = j} \sum_{k^{1:2} _{t-1}\in [N]^2} \beta^\BS _t(k^1 _t, k^1 _{t-1}) \beta^\BS _t(k^2 _t, k^2 _{t-1})  \Pbacksum^b _{t-1}(k^1 _{t-1}, k^2 _{t-1})^2  \\
             & \hspace{-3.5cm} +  M^{-2} \sum_{i, j \in [M]^2} \1_{i \neq j} \sum_{k^{1:4} _{t-1} \in [N]^4} \prod_{\ell = 1}^4 \beta^\BS _t(k^\ell _t, k^\ell _{t-1})  \Pbacksum^b _{t-1}(k^1 _{t-1}, k^2 _{t-1}) \Pbacksum^b _{t-1}(k^3 _{t-1}, k^4 _{t-1}) \big| \Gfilt{b}{t-1} \bigg] \eqsp.
        \end{align*}
        Then,
             \begin{align*}
            & \pE \bigg[  \Pbacksum^b _t(k^1 _t, k^2 _t) \Pbacksum^b _t(k^3 _t, k^4 _t) \bigg| \Gfilt{b}{t-1} \bigg] \leq \frac{\boundg^2 }{M \Omega^2 _{t-1}} \sum_{k^{1:2} _{t-1} \in [N]^2} \Pbacksum^b _{t-1}(k^1 _{t-1}, k^2 _{t-1})^2 +  \frac{(M-1)\boundg^2}{M \Omega^2 _{t-1}} \\
             & \times \sum_{k^{1:4} _{t-1} \in [N]^4} \transition{t}[\beta^N _t(., \particle{k^3 _{t-1}}{t-1})](\particle{k^1 _{t-1}}{t-1}) \transition{t}[\beta^N _t(., \particle{k^4 _{t-1}}{t-1})](\particle{k^2 _{t-1}}{t-1}) \Pbacksum^b _{t-1}(k^1 _{t-1}, k^2 _{t-1}) \Pbacksum^b _{t-1}(k^3 _{t-1}, k^4 _{t-1}) \eqsp.
        \end{align*}
    Similarly, if $V_{k^1 _t} = \{k^1 _t, k^4 _t\}$ and $V_{k^2 _t} = \{k^2 _t, k^3 _t\}$, we obtain 
    \begin{align*}
        & \pE \bigg[  \Pbacksum^b _t(k^1 _t, k^2 _t) \Pbacksum^b _t(k^3 _t, k^4 _t) \bigg| \Gfilt{b}{t-1} \bigg] \\
         & \leq \frac{\boundg^2 }{M \Omega^2 _{t-1}} \sum_{k^{1:2} _{t-1} \in [N]^2} \Pbacksum^b _{t-1}(k^1 _{t-1}, k^2 _{t-1}) \Pbacksum^b _{t-1}(k^2 _{t-1}, k^1 _{t-1}) +  \frac{(M-1)\boundg^2}{M \Omega^2 _{t-1}} \\
         & \times \sum_{k^{1:4} _{t-1} \in [N]^4} \transition{t}[\beta^N _t(., \particle{k^4 _{t-1}}{t-1})](\particle{k^1 _{t-1}}{t-1}) \transition{t}[\beta^N _t(., \particle{k^3 _{t-1}}{t-1})](\particle{k^2 _{t-1}}{t-1}) \Pbacksum^b _{t-1}(k^1 _{t-1}, k^2 _{t-1}) \Pbacksum^b _{t-1}(k^3 _{t-1}, k^4 _{t-1}) \eqsp.
    \end{align*}
    Consequently, 
    \begin{align*}
        & \pE \bigg[ \sum_{k^{1:4} _t \in \mathcal{I}^2 _0 \cap \sett^2 _2} \Pbacksum^b _{t} (k^1 _t, k^2 _t) \Pbacksum^b _t (k^3 _t, k^4 _t) \bigg| \Gfilt{b}{t-1}\bigg] \\
        & \leq \frac{\boundg^2 N(N-1)}{M \Omega^2 _{t-1}} \sum_{k^{1:2} _{t-1} \in [N]^2}\left\{ \Pbacksum^b _{t-1}(k^1 _{t-1}, k^2 _{t-1})^2 + \Pbacksum^b _{t-1}(k^1 _{t-1}, k^2 _{t-1}) \Pbacksum^b _{t-1}(k^2 _{t-1}, k^1 _{t-1})\right\} \\
        &  + \frac{\boundg^2 N(N-1)(M-1)}{M \Omega^2 _{t-1}} \int \sum_{k^{1:4} _{t-1} \in [N]^4}  \bigg\{ \transitiondens{t}(\particle{k^1 _{t-1}}{t-1}, x) \beta^N _t(x, \particle{k^3 _{t-1}}{t-1}) \transitiondens{t}(\particle{k^2 _{t-1}}{t-1}, y) \beta^N _t(y, \particle{k^4 _{t-1}}{t-1}) + \\
        &  \transitiondens{t}(\particle{k^1 _{t-1}}{t-1}, x) \beta^N _t(x, \particle{k^4 _{t-1}}{t-1}) \transitiondens{t}(\particle{k^2 _{t-1}}{t-1}, y) \beta^N _t(y, \particle{k^3 _{t-1}}{t-1}) \bigg\} \Pbacksum^b _{t-1}(k^1 _{t-1}, k^2 _{t-1}) \Pbacksum^b _{t-1}(k^3 _{t-1}, k^4 _{t-1})  \nu^{\otimes 2}(\rmd x, \rmd y)\eqsp.
    \end{align*}
    Then, using \eqref{eq:ConstNidentity},
    \begin{align*}
        & \pE \bigg[ \Const{t}^2 \sum_{k^{1:4} _t \in \mathcal{I}^2 _0 \cap \sett^2 _2} \Pbacksum^b _{t} (k^1 _t, k^2 _t) \Pbacksum^b _t (k^3 _t, k^4 _t) \bigg| \Gfilt{b}{t-1}\bigg]  \\
        & \leq \frac{\boundg^2 \Omega^2 _{t-1}}{M N(N-1)} \Const{t-1}^2 \sum_{k^{1:2} _{t-1} \in [N]^2}\left\{ \Pbacksum^b _{t-1}(k^1 _{t-1}, k^2 _{t-1})^2 + \Pbacksum^b _{t-1}(k^1 _{t-1}, k^2 _{t-1}) \Pbacksum^b _{t-1}(k^2 _{t-1}, k^1 _{t-1})\right\} \\
        & +  \frac{\boundg^2 (M-1) \Omega^2 _{t-1}}{MN(N-1)}  \int \bigg\{ \parismuest{\BS}{b,t-1}\big( \transitiondens{t}(.,x) \otimes \transitiondens{t}(.,y) \big) \parismuest{\BS}{b,t-1}\big( \beta^N _t(x,.) \otimes \beta^N _t(y,.) \big) \\
        & \hspace{3cm} + \parismuest{\BS}{b,t-1}\big( \transitiondens{t}(.,x) \otimes \transitiondens{t}(.,y) \big) \parismuest{\BS}{b,t-1}\big( \beta^N _t(y,.) \otimes \beta^N _t(x,.) \big) \bigg\} \nu^{\otimes 2}(\rmd x, \rmd y)\eqsp, 
        \end{align*}
    which yields \eqref{eq:paris_conv_partitions_card2}.
    \vspace{.3cm}\\
    $-$ If $(k^1 _t, \cdots, k^4 _t) \in \mathcal{I}^2 _0 \cap \sett^3 _2$. Then either $\bm{k}_3 = \{k^1 _t, k^2 _t, k^3 _t\}$ or $\bm{k}_3 = \{k^1 _t, k^2 _t, k^4 _t\}$. Assume that $\bm{k}_3 = \{k^1 _t, k^2 _t, k^3 _t\}$. Then $V_{k^1 _t} = \{k^1 _t, k^4 _t\}$, $J^i _{k^1 _t, t-1} = J^i _{k^4 _t, t-1}$ for any $i \in [M]$ and
    \begin{align*}
    &\pE \bigg[  \Pbacksum^b _t(k^1 _t, k^2 _t) \Pbacksum^b _t(k^3 _t, k^4 _t) \bigg| \Gfilt{b}{t-1} \bigg] \\
        & =\pE \bigg[ M^{-2} \sum_{i,j \in [M]^2} \Pbacksum^b _{t-1}(J^i _{k^1 _t, t-1} , J^i _{k^2 _t, t-1}) \Pbacksum^b _{t-1}(J^j _{k^3 _t, t-1},J^j _{k^4 _t, t-1}) \big| \Gfilt{b}{t-1} \bigg] \\
        &  = \pE \bigg[ M^{-2} \sum_{i,j \in [M]^2} \1_{i = j} \sum_{k^{1:3} _{t-1}\in [N]^3} \prod_{i = 1}^3 \beta^\BS _t(k^i _t, k^i _{t-1}) \Pbacksum^b _{t-1}(k^1 _{t-1}, k^2 _{t-1}) \Pbacksum^b _{t-1}(k^3 _{t-1}, k^1 _{t-1}) \\
        & +  M^{-2} \sum_{i, j \in [M]^2} \1_{i \neq j} \sum_{k^{1:4} _{t-1} \in [N]^4} \prod_{\ell = 1}^4 \beta^\BS _t(k^\ell _t, k^\ell _{t-1})  \Pbacksum^b _{t-1}(k^1 _{t-1}, k^2 _{t-1}) \Pbacksum^b _{t-1}(k^3 _{t-1}, k^4 _{t-1}) \big| \Gfilt{b}{t-1} \bigg] \\
        & \leq \frac{\boundg^3}{M\Omega^3 _{t-1}} \sum_{k^{1:3} _{t-1} \in [N]^3}  \Pbacksum^b _{t-1}(k^1 _{t-1}, k^2 _{t-1}) \Pbacksum^b _{t-1}(k^3 _{t-1}, k^1 _{t-1}) \\
        & \hspace{1cm} + \frac{(M-1) \boundg^3}{M \Omega^3 _{t-1}} \sum_{k^{1:4} _{t-1} \in [N]^4} \transition{t}[\beta^N _t(., \particle{k^4 _{t-1}}{t-1})](\particle{k^1 _{t-1}}{t-1}) \Pbacksum^b _t(k^1 _{t-1}, k^2 _{t-1}) \Pbacksum^b _t(k^3 _{t-1}, k^4 _{t-1}) \eqsp.
    \end{align*}
    The remaining combinations are treated in the exact same way  and \eqref{eq:paris_conv_partitions_card3} is obtained by using again \eqref{eq:ConstNidentity}.
    \vspace{.3cm}\\
     Consider now the case $b_t = 1$ and let $(k^1 _t, \cdots, k^4 _t) \in \mathcal{I}^2 _1 \cap \sett^2 _2$. Then $k^1 _t = k^2 _t$, $k^3 _t = k^4 _t$ and $k^1 _t \neq k^3 _t$. Thus,
    \begin{align*}
        & \pE \bigg[ h(\particle{k^1 _t}{t}, \particle{k^2 _t}{t}) h(\particle{k^3 _t}{t}, \particle{k^4 _t}{t}) M^{-2} \sum_{i,j \in [M]^2} \sum_{k^{2,4} _{t-1} \in [N]^2} \normweight{k^2 _{t-1}}{t-1} \normweight{k^4 _{t-1}}{t-1} \\
        & \hspace{5cm} \times \Pbacksum^b _t(J^i _{k^1 _t, t-1} , k^2 _{t-1}) \Pbacksum^b _t(J^j _{k^3 _t, t-1},k^4 _{t-1}) \bigg| \Gfilt{b}{t-1} \bigg] \\
        & = \pE \bigg[ h(\particle{k^1 _t}{t}, \particle{k^2 _t}{t}) h(\particle{k^3 _t}{t}, \particle{k^4 _t}{t})  \sum_{k^{1:4} _{t-1} \in [N]^4} \prod_{\ell = 1}^2 \beta^\BS _t(k^{2\ell-1} _t, k^{2\ell - 1} _{t-1}) \normweight{k^{2\ell} _{t-1}}{t-1} \\
        & \hspace{5cm} \times \Pbacksum^b _{t-1}(k^1 _{t-1}, k^2 _{t-1}) \Pbacksum^b _{t-1}(k^3 _{t-1}, k^4 _{t-1})\bigg| \Gfilt{b}{t-1} \bigg] \\
        & = \frac{1}{\Omega^4 _{t-1}} \sum_{k^{1:4} _{t-1} \in [N]^4} \Pbacksum^b _{t-1}(k^1 _{t-1}, k^2 _{t-1}) \Pbacksum^b _{t-1}(k^3 _{t-1}, k^4 _{t-1}) \big( g^{\otimes 2} _{t-1} \bitransition{1}{t}[h] \big)^{\otimes 2} (\particle{k^1 _{t-1}}{t-1}, \particle{k^2 _{t-1}}{t-1}, \particle{k^3 _{t-1}}{t-1}, \particle{k^4 _{t-1}}{t-1}) \eqsp.
    \end{align*}
    Since $b_t = 1$ implies that
    \[ 
        \Const{t}^2 = \Const{t-1}^2 \frac{ \Omega^4 _{t-1}}{N^2} \eqsp,
    \]
    and using $\card{\mathcal{I}^1 _1 \cap \sett^2 _2} = N(N-1)$, we get
    \begin{align*}
        & \pE \bigg[ \Const{t}^2 \sum_{k^{1:4} _{t} \in \mathcal{I}^2 _1 \cap \sett^{2} _2} h(\particle{k^1 _t}{t}, \particle{k^2 _t}{t}) h(\particle{k^3 _t}{t}, \particle{k^4 _t}{t})\Pbacksum^b _t(k^1 _t, k^2 _t) \Pbacksum^b _t(k^3 _t, k^4 _t) \bigg| \Gfilt{b}{t-1} \bigg] \\
        & =  \Const{t}^2 \sum_{k^{1:4} _{t} \in \mathcal{I}^2 _1 \cap \sett^{2} _2} \pE \bigg[ h(\particle{k^1 _t}{t}, \particle{k^2 _t}{t}) h(\particle{k^3 _t}{t}, \particle{k^4 _t}{t}) M^{-2} \sum_{i,j \in [M]^2} \sum_{k^{2,4} _{t-1} \in [N]^2} \normweight{k^2 _{t-1}}{t-1} \normweight{k^4 _{t-1}}{t-1} \\
        & \hspace{5cm} \times \Pbacksum^b _t(J^i _{k^1 _t, t-1} , k^2 _{t-1}) \Pbacksum^b _t(J^j _{k^3 _t, t-1},k^4 _{t-1}) \bigg| \Gfilt{b}{t-1} \bigg] \\
        & =  \frac{N-1}{N} \parismuest{\BS}{b,t-1}\big( g^{\otimes 2} _{t-1} \bitransition{1}{t}[h] \big)^2 \eqsp,
    \end{align*} 
    which proves \eqref{eq:paris_conv_partitions_b1card2}. 
    \vspace{0.2cm}\\
    Let $(k^1_t, \cdots, k^4 _t) \in \mathcal{I}^2 _1 \cap \sett^1 _2$. Then $\bm{k}_1 = \{k^1 _t\}$, $k^1 _t = k^2 _t = k^3 _t = k^4 _t$ and $J^i _{k^1 _t, t-1} = J^i _{k^3 _t, t-1}$. Hence, 
    \begin{align*}
        & \pE \bigg[ M^{-2} \sum_{i,j \in [M]^2} \sum_{k^{2,4} _{t-1} \in [N]^2} \normweight{k^2 _{t-1}}{t-1} \normweight{k^4 _{t-1}}{t-1} \Pbacksum^b _{t-1}(J^i _{k^1 _t, t-1} , k^2 _{t-1}) \Pbacksum^b _{t-1}(J^j _{k^3 _t, t-1}, k^4 _{t-1}) \big| \Gfilt{b}{t-1} \bigg] \\
        & = \pE \bigg[\frac{1}{M} \sum_{k^{1,2,4} _{t-1}\in [N]^3} \beta^\BS _t(k^1 _t, k^1 _{t-1}) \normweight{k^2 _{t-1}}{t-1} \normweight{k^4 _{t-1}}{t-1} \Pbacksum^b _{t-1}(k^1 _{t-1}, k^2 _{t-1}) \Pbacksum^b _{t-1}(k^1 _{t-1}, k^4 _{t-1}) \\
        & +  \frac{M-1}{M} \sum_{k^{1:4} _{t-1} \in [N]^4} \prod_{\ell = 1}^2 \beta^\BS _t(k^{2\ell-1} _t, k^{2\ell - 1} _{t-1}) \normweight{k^{2\ell} _{t-1}}{t-1} \Pbacksum^b _{t-1}(k^1 _{t-1}, k^2 _{t-1}) \Pbacksum^b _{t-1}(k^3 _{t-1}, k^4 _{t-1}) \big| \Gfilt{b}{t-1} \bigg] \\
        & = \frac{\boundg^3}{M \Omega^3 _{t-1}} \sum_{k^{1,2,4} _{t-1} \in [N]^3} \Pbacksum^b _{t-1}(k^1 _{t-1}, k^2 _{t-1}) \Pbacksum^b _{t-1}(k^1 _{t-1}, k^4 _{t-1}) \\
        & \hspace{3cm} + \frac{\boundg^3 (M - 1)}{M\Omega^3 _{t-1}} \int \parismuest{M}{b,t-1}(m_t(.,x) \otimes \boldone) \parismuest{M}{b,t-1}(\beta^N _t(x,.) \otimes \boldone) \nu(\rmd x) \eqsp.
   \end{align*}
This yields
\begin{align*}
    & \pE \bigg[\Const{t}^2 \sum_{k^{1:4} _t \in \mathcal{I}^2 _1 \cap \sett^1 _2} \Pbacksum^b _{t}(k^1 _t, k^2 _t) \Pbacksum^b _{t}(k^3 _t, k^4 _t) \big| \Gfilt{b}{t-1} \bigg] \\
    & \hspace{1cm} \leq \frac{\boundg^3\Omega _{t-1}}{M N}\Const{t-1}^2 \sum_{k^{1,2,4} _{t-1} \in [N]^3} \Pbacksum^b _{t-1}(k^1 _{t-1}, k^2 _{t-1}) \Pbacksum^b _{t-1}(k^1 _{t-1}, k^4 _{t-1}) \\
    & \hspace{3cm} + \frac{\boundg^3 (M - 1) \Omega_{t-1}}{M N} \int \parismuest{M}{b,t-1}(m_t(.,x) \otimes \boldone) \parismuest{M}{b,t-1}(\beta^N _t(x,.) \otimes \boldone) \nu(\rmd x) \eqsp,
\end{align*}
which in turn proves \eqref{eq:paris_conv_partitions_b1card1}.
    \end{proof}
    \begin{proposition}
        \label{prop:conv_partitions}
        Let $\As{assp:B}{assp:boundup}$ hold. Let $t \geq 0$. For any $b \in \Bset_t$
        \begin{enumerate}[label=(\roman*)]
            \item \label{item:parispropb0}If $b_t = 0$, then for $p \in \{2,3\}$,
            \begin{equation}
                \simplelim \pE \bigg[ \Const{t}^2 \sum_{k^{1:4} _t \in \mathcal{I}^2 _0 \cap \sett^p _2} \Pbacksum^b _{t}(k^1 _t, k^2 _t) \Pbacksum^b _t(k^3 _t, k^4 _t) \bigg] = 0\eqsp.
            \end{equation}  
            \item \label{item:parispropb1} If $b_t = 1$, then 
            \begin{equation}
                \simplelim \pE \bigg[ \Const{t}^2 \sum_{k^{1:4} _t \in \mathcal{I}^2 _1 \cap \sett^1 _2} \Pbacksum^b _{t}(k^1 _t, k^2 _t) \Pbacksum^b _t(k^3 _t, k^4 _t) \bigg] = 0\eqsp.
            \end{equation}
        \end{enumerate}
        Additionally, if $\As{assp:boundbelow}{}$ holds then the rate of convergence is $\bigo(N^{-1})$.
    \end{proposition}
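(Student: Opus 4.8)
The plan is to argue by induction on $t$, using the recursive upper bounds of Proposition~\ref{prop:paris_conv_partitions} to reduce each time-$t$ quantity to a quantity of the same kind at time $t-1$ --- controlled by the induction hypothesis --- together with integral terms that are treated exactly as the analogous terms in the proof of Theorem~\ref{thm:conv}. For the base case $t=0$, the statistic $\Pbacksum^b_0=\backsum^b_0$ is deterministic and explicit, namely $\Pbacksum^b_0(k,\ell)=\1_{k\neq\ell}$ if $b_0=0$ and $\Pbacksum^b_0(k,\ell)=\1_{k=\ell}$ if $b_0=1$, while $\Const{0}$ is deterministic of order $N^{-2}$, respectively $N^{-1}$; combining this with the cardinality estimates $\card{\mathcal{I}^2_0\cap\sett^p_2}=\bigo(N^p)$ and $\card{\mathcal{I}^2_1\cap\sett^1_2}=\bigo(N)$ of Lemma~\ref{lem:cardinals}, the quantities in the statement are $\bigo(N^{p-4})$ for $p\in\{2,3\}$ when $b_0=0$ and $\bigo(N^{-1})$ when $b_0=1$; both vanish and are $\bigo(N^{-1})$.

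For the induction step, fix $t>0$ and $b\in\Bset_t$ and assume the proposition for every vector of $\Bset_{t-1}$, in particular for the truncation $b_{0:t-1}$. Taking the outer expectation in \eqref{eq:paris_conv_partitions_card2}--\eqref{eq:paris_conv_partitions_card3} when $b_t=0$ (resp. \eqref{eq:paris_conv_partitions_b1card1} when $b_t=1$), and noting that the prefactors appearing there are uniformly bounded in $N$ thanks to $\Omega_{t-1}\leq N\boundg$, it suffices to show that the two families of terms they multiply converge to $0$. The \emph{remainder terms} are expectations of $\Const{t-1}^2$ times a sum, over two or three indices, of products of two $\Pbacksum^b_{t-1}$'s, i.e. the sums of $\mathsf{T}^{(1)}_{t-1,b}$, of $\mathsf{T}^{(2)}_{t-1,b}$, and the three-index sum in \eqref{eq:paris_conv_partitions_b1card1}; since $\Pbacksum^b_{t-1}$ vanishes on the diagonal when $b_{t-1}=0$ and off the diagonal when $b_{t-1}=1$, each of these sums is, after a splitting according to which of its indices coincide and using nonnegativity of $\Pbacksum^b_{t-1}$, bounded by a fixed multiple of a combination of $\sum_{k^{1:4}_{t-1}\in\mathcal{I}^2_0\cap\sett^p_2}\Pbacksum^b_{t-1}(k^1_{t-1},k^2_{t-1})\Pbacksum^b_{t-1}(k^3_{t-1},k^4_{t-1})$, $p\in\{2,3\}$, if $b_{t-1}=0$, and of $\sum_{k^{1:4}_{t-1}\in\mathcal{I}^2_1\cap\sett^1_2}\Pbacksum^b_{t-1}(k^1_{t-1},k^2_{t-1})\Pbacksum^b_{t-1}(k^3_{t-1},k^4_{t-1})$ if $b_{t-1}=1$; multiplied by $\Const{t-1}^2$ and averaged, these are precisely the quantities vanishing by the induction hypothesis. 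The \emph{integral terms}, of the form $\pE[\Omega^2_{t-1}N^{-2}\nu^{\otimes 2}(\widetilde{\Upsilon}^{N,M}_{b,t})]$ and $\pE[\Omega_{t-1}N^{-1}\nu(\widetilde{\Theta}^{N,M}_{b,t})]$, are handled mutatis mutandis as in the proof of Theorem~\ref{thm:conv}: substituting $\beta^N_t(x,\cdot)=g_{t-1}m_t(\cdot,x)/(\Omega_{t-1}\filter{t-1}^N(m_t(\cdot,x)))$ recasts each of them as a $\nu$-integral of $f_N(x)=\pE[B_N(x)]$ with $B_N(x)$ of the type appearing in that proof, namely $N^{-1}\filter{t-1}^N(m_t(\cdot,x))^{-1}$ times a product of two $\parismuest{\BS}{b,t-1}$-integrals, and one applies the generalised dominated convergence theorem~\ref{thm:GDCT} with dominating sequence built from $\parismuest{\BS}{b,t-1}(\boldone\otimes\boldone)^2$. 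The required inputs are supplied by Theorem~\ref{corr:paris} (convergence in probability of the $\parismuest{\BS}{b,t-1}$'s towards their $\mumeasure{b,t-1}$-limits), Proposition~\ref{prop:parisQbound} (uniform $\mathbf{L}_m$-boundedness of $\parismuest{\BS}{b,t-1}(\boldone)$, needed for uniform integrability) and $\As{assp:positive}{}$ (positivity of the limit $\filter{t-1}(m_t(\cdot,x))$); this gives $f_N(x)\to 0$ pointwise and $\int f_N\,\rmd\nu\to 0$, establishing items~\ref{item:parispropb0} and~\ref{item:parispropb1}.

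For the rate of convergence under the additional assumption $\As{assp:boundbelow}{}$, one dispenses with Theorem~\ref{thm:GDCT}: the bound $\beta^N_t(x,y)\leq\boundg\sigma_{+}/(\sigma_{-}\Omega_{t-1})$ from \eqref{eq:betaNbound} yields $\nu^{\otimes 2}(\widetilde{\Upsilon}^{N,M}_{b,t})\leq(\boundg\sigma_{+}/\sigma_{-})^2\Omega^{-2}_{t-1}\parismuest{\BS}{b,t-1}(\boldone\otimes\boldone)^2$ and $\nu(\widetilde{\Theta}^{N,M}_{b,t})\leq C\Omega^{-1}_{t-1}\parismuest{\BS}{b,t-1}(\boldone\otimes\boldone)^2$ for some constant $C$, so that, with $\sup_N\|\parismuest{\BS}{b,t-1}(\boldone\otimes\boldone)\|_2^2<\infty$ from Proposition~\ref{prop:parisQbound}, the integral terms are $\bigo(N^{-2})$; the remainder terms are $\bigo(N^{-1})$ by the induction hypothesis and the bounded prefactors preserve this order. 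The main obstacle is the combinatorial bookkeeping: one must check, case by case in $b_{t-1}\in\{0,1\}$, that the two- and three-index remainder sums genuinely collapse to (a bounded multiple of) the $\mathcal{I}^2_0\cap\sett^p_2$ and $\mathcal{I}^2_1\cap\sett^1_2$ sums at time $t-1$ so that the induction closes, and one must verify the three hypotheses of Theorem~\ref{thm:GDCT} for the $\paris$ analogues of the auxiliary random functions used in the proof of Theorem~\ref{thm:conv}, which is where the higher moment provided by Proposition~\ref{prop:parisQbound} enters.
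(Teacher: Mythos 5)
Your proof follows essentially the same route as the paper's: induction on $t$ with the explicit base case via the cardinality estimates, reduction of the two- and three-index remainder sums to the induction-hypothesis quantities using the support properties of $\Pbacksum^b_{t-1}$ (the paper's $D^N_{1,b},D^N_{2,b},D^N_{3,b}$, which it shows are in fact equal to the $\mathcal{I}^2_0\cap\sett^p_2$ and $\mathcal{I}^2_1\cap\sett^1_2$ sums up to the constants $2$ and $4$), treatment of the integral terms by the generalized dominated convergence argument of Theorem~\ref{thm:conv} with the moment bound of Proposition~\ref{prop:parisQbound}, and the strong-mixing bound on $\beta^N_t$ for the rate. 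The only slip is that the $\widetilde{\Theta}^{N,M}_{b,t}$ integral term carries a prefactor of order $(N-2)/(N(N-1))$ and is therefore $\bigo(N^{-1})$ rather than $\bigo(N^{-2})$, but this does not affect the claimed overall $\bigo(N^{-1})$ rate.
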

    \begin{proof}
    We prove \ref{item:parispropb0}-\ref{item:parispropb1} simultaneously by induction. Let $t = 0$ and $b \in \Bset_0$. By definition, $\Pbacksum^b _0 = \backsum^b _0$ and, 
    if $b_0 = 0$,
    \begin{align}
        \label{eq:convratet0_1}
        & \Const{0}^2 \sum_{k^{1:4} _0 \in \mathcal{I}^2 _0 \cap \sett^2 _2} \Pbacksum^b _{0}(k^1 _0, k^2 _0) \Pbacksum^b _0(k^3 _0, k^4 _0) = \frac{2}{N^2(N-1)^2} \sum_{i,j \in [N]^2} \1_{i \neq j} = \bigo(N^{-2}) \eqsp, \\
        \label{eq:convratet0_2}
        & \Const{0}^2 \sum_{k^{1:4} _0 \in \mathcal{I}^2 _0 \cap \sett^3 _2} \Pbacksum^b _{0}(k^1 _0, k^2 _0) \Pbacksum^b_0(k^3 _0, k^4 _0) = \frac{4}{N^2 (N-1)^2} \sum_{i,j,k \in [N]^3} \1_{i \neq j} \1_{i \neq k} = \bigo(N^{-1}) \eqsp.
    \end{align}
    If $b_0 = 1$, then 
    \begin{align}
        \label{eq:convratet0_3}
        \Const{0}^2 \sum_{k^{1:4} _0 \in \mathcal{I}^2 _1 \cap \sett^1 _2} \Pbacksum^b _{0}(k^1 _0, k^2 _0) \Pbacksum^b _t(k^3 _0, k^4 _0) = \frac{1}{N^2} \sum_{i, j \in [N]^2} \1_{i = j} = \bigo(N^{-1}) \eqsp.
    \end{align}
    Let $t > 0$ and assume that both \ref{item:parispropb0}-\ref{item:parispropb1} hold at $t-1$. Define
    \begin{align*}
        D^N _{1,b} & = \pE \bigg[ \Const{t-1}^2 \sum_{k^{1,2,4} _{t-1} \in [N]^3} \Pbacksum^b _{t-1}(k^1 _{t-1}, k^2 _{t-1}) \Pbacksum^b _{t-1}(k^1 _{t-1}, k^4 _{t-1}) \bigg]\eqsp, \\
        D^N _{2,b} & = \pE \bigg[ \Const{t-1}^2 \sum_{k^{1:2} _{t-1} \in [N]^2}  \mathsf{T}^{(1)}_{t-1,b}(k^1 _{t-1},k^2 _{t-1}) \bigg] \eqsp,\\
        D^N _{3,b} & = \pE \bigg[ \Const{t-1}^2 \sum_{k^{1:3} _{t-1} \in [N]^3} \mathsf{T}^{(2)}_{t-1,b}(k^1 _{t-1},k^2 _{t-1},k^3 _{t-1}) \bigg] \eqsp, 
    \end{align*}
    where $\mathsf{T}^{(1)}_{t-1,b}$ and $\mathsf{T}^{(2)}_{t-1,b}$ are defined in Proposition~\ref{prop:paris_conv_partitions}. 
    If $b_t = 0$, then by \ref{item:paris_convb0} in Proposition~\ref{prop:paris_conv_partitions}, using that $\Omega_{t-1}\leq N\boundg$,
    \begin{multline}
        \label{eq:S22_ineq}
         \pE \bigg[ \Const{t} ^2 \sum_{k^{1:4} _{t} \in \mathcal{I}^2 _0 \cap \sett^{2} _2} \Pbacksum^b _t(k^1 _t, k^2 _t) \Pbacksum^b _t(k^3 _t, k^4 _t) \bigg] \\
          \leq \pE \bigg[ \frac{\Omega^2 _{t-1} \boundg^2 (M-1)}{MN(N-1)} \int   \widetilde{\Upsilon}_{b,t}^{N,M}(x,y) \nu^{\otimes 2}(\rmd x, \rmd y ) \bigg] + 2\frac{ \boundg^4}{M} D_{2,b} ^N  \eqsp,
    \end{multline}
    and 
    \begin{multline}
        \label{eq:S32_ineq}
         \pE \bigg[ \Const{t}^2 \sum_{k^{1:4} _{t} \in \mathcal{I}^2 _0 \cap \sett^{3} _2} \Pbacksum^b _t(k^1 _t, k^2 _t) \Pbacksum^b _t(k^3 _t, k^4 _t) \bigg] \\
         \leq \pE \bigg[ \frac{\Omega _{t-1} \boundg^3 (M-1)(N-2)}{MN(N-1)} \int  \widetilde{\Theta}_{b,t}^{N,M}(x) \nu(\rmd x) \bigg]  + \frac{ \boundg^4}{M} D^N _{3,b}  \eqsp,
    \end{multline}
    where $\widetilde{\Theta}_{b,t}^{N,M}$ and $\widetilde{\Upsilon}_{b,t}^{N,M}$ are defined in Proposition~\ref{prop:paris_conv_partitions}. 
    We deal first with $D^N _{2,b}$ and $D^N _{3,b}$. If $b_{t-1} = 0$, then by definition of $\Pbacksum^b _{t-1}$ in \eqref{eqdef:tautilde},
    \begin{equation}
    \begin{aligned}
        \label{eq:DNb0}
        D^N _{2,b} & = \pE \bigg[ \Const{t-1}^2 \sum_{k^{1:4} _{t-1} \in \mathcal{I}^2 _0 \cap \sett^2 _2} \Pbacksum^b _{t-1}(k^1 _{t-1}, k^2 _{t-1}) \Pbacksum^b _{t-1}(k^3 _{t-1}, k^4 _{t-1}) \bigg] \eqsp, \\
        D^N _{3,b} & = \pE \bigg[ \Const{t-1}^2 \sum_{k^{1:4} _{t-1} \in \mathcal{I}^2 _0 \cap \sett^3 _2} \Pbacksum^b _{t-1}(k^1 _{t-1}, k^2 _{t-1}) \Pbacksum^b _{t-1}(k^3 _{t-1}, k^4 _{t-1}) \bigg] \eqsp.
    \end{aligned}
\end{equation}
    If $b_{t-1} = 1$, 
    \begin{equation}
        \label{eq:DNb1}
    \begin{aligned}
         D^N _{2,b} & = 2 \pE \bigg[ \Const{t-1}^2 \sum_{k^{1:4} _{t-1} \in \mathcal{I}^2 _1 \cap \sett^1 _2} \Pbacksum^b _{t-1}(k^1 _{t-1}, k^2 _{t-1}) \Pbacksum^b _{t-1}(k^3 _{t-1}, k^4 _{t-1}) \bigg]\eqsp, \\
         D^N _{3,b} & = 4 \pE \bigg[ \Const{t-1}^2 \sum_{k^{1:4} _{t-1} \in \mathcal{I}^2 _1 \cap \sett^1 _2} \Pbacksum^b _{t-1}(k^1 _{t-1}, k^2 _{t-1}) \Pbacksum^b _{t-1}(k^3 _{t-1}, k^4 _{t-1}) \bigg]\eqsp.
    \end{aligned}
\end{equation}
    In all cases, by the induction hypothesis we get for any $b \in \Bset_t$ with $b_t = 0$,
    \begin{equation*} 
        \label{eq:convTausquare}
         \simplelim D^N _{2,b} = 0, \quad  \simplelim D^N _{3,b} = 0  \eqsp.
    \end{equation*}
    Regarding the first terms in the r.h.s. of inequalities \eqref{eq:S22_ineq}-\eqref{eq:S32_ineq}, they go to zero when $N$ goes to infinity since they are, up to the constant $(M-1)/M \leq 1$, the $\paris$ counterpart of $B_N$ \eqref{eq:defBN} in the proof of Theorem~\ref*{thm:conv} and are treated in the exact same way since $\sup_{N \in \N} \pE \big[ \parismuest{\BS}{b,t}(\boldone)^3 \big] < \infty$ by Proposition~\ref{prop:parisQbound}.
    If $b_t = 1$, then, by Proposition~\ref{prop:paris_conv_partitions}, using that $\Omega_{t-1} \leq N\boundg$ , 
    \begin{align*}
        & \pE \bigg[ \Const{t}^2 \sum_{k^{1:4} _{t} \in \mathcal{I}^2 _1 \cap \sett^{1} _2} \Pbacksum^b _t(k^1 _t, k^2 _t) \Pbacksum^b _t(k^3 _t, k^4 _t) \bigg] \\
                & \hspace{2cm} \leq \pE \bigg[ \frac{\Omega_{t-1} \boundg^3}{N} \int \parismuest{\BS}{b,t-1}(\transitiondens{t}.,x) \otimes \boldone) \parismuest{\BS}{b,t-1}(\beta^N _t(x,.) \otimes 1) \nu(\rmd x) \bigg] + \frac{ \boundg^4}{M} D^N _{1,b} \eqsp.
    \end{align*}
    The first term goes to zero when $N$ goes to infinity similarly to the case $b_t=0$. As for the second term, if $b_{t-1} = 0$ then by definition of $D^N _{1,b}$ 
    \begin{equation*}
        0 \leq D^N _{1,b} \leq D^N _{3,b} \eqsp,
    \end{equation*}
    and if $b_{t-1} = 1$, 
    \begin{equation*}
        0 \leq D^N _{1,b} = \pE \bigg[ \Const{t}^2 \sum_{k^{1:2} _{t-1} \in [N]^2} \Pbacksum^b _{t-1}(k^1 _{t-1}, k^2 _{t-1})^2 \bigg] \leq D^N _{2,b} \eqsp.
    \end{equation*}
    Hence, in both cases $D^N _{1,b}$ goes to zero by \eqref{eq:convTausquare}. This ends the proof of the first claim. 
    
    Now assume that $\As{assp:boundbelow}{}$ holds also. We proceed by induction. At $t = 0$ the rate of convergence is $\bigo(N^{-1})$ by \eqref{eq:convratet0_1}-\eqref{eq:convratet0_2} and \eqref{eq:convratet0_3}. Let $t > 0$ and assume that the rate of convergence in \ref{item:paris_convb0} and \ref{item:paris_convb1} at $t-1$ is $\bigo(N^{-1})$. Assume that $b_t = 0$. By the strong mixing assumption we have that
 \begin{equation}
    \label{eq:betaNstrongmixbound}
    \beta^N _t(x,y) \leq \frac{\boundg \sigma_{+}}{\sigma_{-} \Omega _{t-1}}\eqsp, \quad \forall (x,y) \in \Xset^2 \eqsp.
\end{equation}
Using for example that \[ \int \parismuest{\BS}{b,t}(m_t(.,x) \otimes \boldone) \nu(\rmd x) = \int \parismuest{\BS}{b,t}(m_t(.,x) \otimes m_t(.,y)) \nu^{\otimes 2}(\rmd x, \rmd y) = \parismuest{\BS}{b,t}(\boldone \otimes \boldone) \eqsp, \]
and then bounding $\beta^N _t$ using \eqref{eq:betaNstrongmixbound} we get that
\begin{equation}
     \pE \bigg[ \frac{\Omega^2 _{t-1} \boundg^2 (M-1)}{MN(N-1)} \int \widetilde{\Upsilon}^{N,M} _{b,t} (x,y) \nu^{\otimes 2}(\rmd x, \rmd y ) \bigg]
     \leq \frac{\boundg^4 \sigma_{+}^2 (M-1)}{\sigma_{-}^2 MN(N-1)} \big\| \parismuest{\BS}{b,t-1}(\boldone \otimes \boldone) \big\|^2 _2 \eqsp,
\end{equation}
and
$$
     \pE \bigg[ \frac{\Omega _{t-1} \boundg^3 (M-1)(N-2)}{MN(N-1)} \int  \widetilde{\Theta}_{b,t}^{N,M}(x) \nu(\rmd x) \bigg] 
        \leq \frac{4\sigma_{+} \boundg^4 (M-1)(N-2)}{\sigma_{-} MN(N-1)} \big\| \parismuest{\BS}{b,t-1}(\boldone \otimes \boldone) \big\|_2 ^2 \eqsp,
$$
where both bounds are $\bigo(N^{-1})$ by Proposition~\ref{prop:parisQbound}. Going back to \eqref{eq:S22_ineq}-\eqref{eq:S32_ineq}, we obtain
$$
     \pE \bigg[ \Const{t} ^2 \sum_{k^{1:4} _{t} \in \mathcal{I}^2 _0 \cap \sett^{2} _2} \Pbacksum^b _t(k^1 _t, k^2 _t) \Pbacksum^b _t(k^3 _t, k^4 _t) \bigg] 
      \leq \frac{2\boundg^4 \sigma_{+}^2 (M-1)}{\sigma_{-}^2 MN(N-1)} \big\| \parismuest{\BS}{b,t-1}(\boldone \otimes \boldone) \big\|^2 _2
     + 2\frac{\boundg^4}{M} D^N _{2,b} \eqsp, 
$$
and 
\begin{multline*}
     \pE \bigg[ \Const{t}^2 \sum_{k^{1:4} _{t} \in \mathcal{I}^2 _0 \cap \sett^{3} _2} \Pbacksum^b _t(k^1 _t, k^2 _t) \Pbacksum^b _t(k^3 _t, k^4 _t) \bigg] \\
     \leq \frac{4\sigma_{+} \boundg^4 (M-1)(N-2)}{\sigma_{-} MN(N-1)} \big\| \parismuest{\BS}{b,t-1}(\boldone \otimes \boldone) \big\|_2 ^2 + \frac{\boundg^4}{M} D^N _{3,b} \eqsp. \nonumber
\end{multline*}
By \eqref{eq:DNb0}-\eqref{eq:DNb1} and the induction hypothesis, we get that $D^N _{2,b}$ and $D^N _{3,b}$ are both $\bigo(N^{-1})$. Finally, applying Proposition~\ref{prop:parisQbound} we get $\bigo(N^{-1})$ upper bounds. This ends the proof for the case $b_t = 0$ . The case $b_t = 1$ follows the same steps. 
    \end{proof}
    \begin{proposition}
        \label{prop:parisQbound}
        Assume that $\As{assp:B}{}$ holds. For all $M > 1$, $t \in \N$ and $b \in \Bset_t$,
        \begin{equation}
            \label{eq:parisQbound}
            \sup_{N \in \N} \big\| \parismuest{M}{b,t}(\boldone) \big\|_3 < \infty\eqsp.
        \end{equation}
    \end{proposition}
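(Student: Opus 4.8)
The plan is to replicate the induction on $t$ from the proof of Proposition~\ref{prop:Qbound}, the only new feature being the extra layer of Monte Carlo averaging carried by the index samples $\{J^{1:M}_{k,t-1}\}$. Writing $\parismuest{\BS}{b,t}(\boldone)=\Const{t}\sum_{i,j\in[N]^2}\Pbacksum^b_t(i,j)$ (see \eqref{eqdef:qtilde} and \eqref{eqdef:const}), one has
\[
\parismuest{\BS}{b,t}(\boldone)^3=\Const{t}^3\sum_{k^{1:6}_t\in[N]^6}\Pbacksum^b_t(k^1_t,k^2_t)\,\Pbacksum^b_t(k^3_t,k^4_t)\,\Pbacksum^b_t(k^5_t,k^6_t)\eqsp.
\]
At $t=0$, $\Pbacksum^b_0=\backsum^b_0$, hence $\parismuest{\BS}{b,0}(\boldone)=\muest{\BS}{b,0}(\boldone)=1$ and the claim holds. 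For the inductive step I would show that $\pE[\parismuest{\BS}{b,t}(\boldone)^3\mid\Gfilt{b}{t-1}]\leq C\,\parismuest{\BS}{b,t-1}(\boldone)^3$ for a constant $C$ independent of $N$, and conclude by the tower property together with the induction hypothesis $\sup_N\pE[\parismuest{\BS}{b,t-1}(\boldone)^3]<\infty$.

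The conditional expectation would be evaluated in two stages, exactly as in the proofs of Lemma~\ref{lem:condexpectparis} and Proposition~\ref{prop:paris_conv_partitions}. First, expanding each $\Pbacksum^b_t$ through its recursion and conditioning on $\Gfilt{b}{t-1}\vee\particle{1:N}{t}$ (see \eqref{eq:Gfiltparticle}), under which the $\{J^i_{k,t-1}\}$ are independent with $J^i_{k,t-1}\sim\beta^\BS_t(k,\cdot)$ while $\Pbacksum^b_{t-1}$ is measurable, the triple product reduces to a sum over parent indices $\ell^{1:6}_{t-1}$ weighted by a product of $\beta^\BS_t$'s and by three factors $\Pbacksum^b_{t-1}(\ell^{2a-1}_{t-1},\ell^{2a}_{t-1})$, $a=1,2,3$ — with suitable identifications of the $\ell$'s for the patterns in which two of the $J$'s coincide, which only shrinks the sum. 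Second, conditioning on $\Gfilt{b}{t-1}$ and integrating out $\particle{1:N}{t}$, which given $\Gfilt{b}{t-1}$ is i.i.d.\ $\filter{t-1}^N\transition{t}$ just as given $\F{t-1}$, the conditional expectation of the $\beta^\BS_t$-product is bounded, via Lemma~\ref*{lem:BSGTidentity} and $\beta^\BS_t\leq1$, by $\boundg^p/\Omega_{t-1}^p$ when $b_t=0$ and by $\boundg^{p+3}/\Omega_{t-1}^{p+3}$ when $b_t=1$ — with $p$ the number of distinct values among $k^1_t,\dots,k^6_t$ — exactly the estimates of Proposition~\ref{lem:prodbeta}. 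The remaining sums over $\ell^{1:6}_{t-1}$ then collapse, up to bounded constants absorbing the fixed factor $M^{-3}$ and the finitely many collision patterns, into $(\parismuest{\BS}{b,t-1}(\boldone)/\Const{t-1})^3$.

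Collecting the estimates one uses $\card{\mathcal{I}^3_{b_t}\cap\sett^p_3}=\bigo(N^p)$ (Lemma~\ref{lem:cardinals}), the identities $\Const{t}^3=\tfrac{\Omega_{t-1}^6}{N^3(N-1)^3}\Const{t-1}^3$ when $b_t=0$ and $\Const{t}^3=\tfrac{\Omega_{t-1}^6}{N^3}\Const{t-1}^3$ when $b_t=1$, and $\Omega_{t-1}\leq N\boundg$: each term in the sum over $p$ is then at most a constant (independent of $N$) times $\parismuest{\BS}{b,t-1}(\boldone)^3$, the decisive point being that the largest possible number of distinct time-$t$ indices — six if $b_t=0$, three if $b_t=1$ — matches the power of $\Omega_{t-1}$ produced by $\Const{t}^3$, so the surplus factor $\Omega_{t-1}^{6-p}$ (resp.\ $\Omega_{t-1}^{3-p}$) never carries a negative exponent; this is the same balancing as in Proposition~\ref{prop:Qbound}, and $N<6$ is handled by truncating the sum over $p$ as there. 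Taking expectations and invoking the induction hypothesis completes the proof. The main obstacle relative to the non-$\paris$ argument is purely combinatorial: the index samples create collision cases — two of the pairs $(i_a,k^c_t)$ coinciding across the three copies, or a sampled half meeting a frozen half in the $b_t=1$ recursion — that have no counterpart in Proposition~\ref{prop:Qbound}; verifying that every such case still reduces to a bounded multiple of $\parismuest{\BS}{b,t-1}(\boldone)^3$ is the bookkeeping-heavy but routine part, since $\beta^\BS_t\leq1$ and the fixed $M$ keep all these terms from growing with $N$.
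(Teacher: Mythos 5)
Your proposal is correct and follows essentially the same route as the paper's proof: induction on $t$, expansion of the cube over $k^{1:6}_t$, two-stage conditioning on $\Gfilt{b}{t-1}\vee\particle{1:N}{t}$ and then $\Gfilt{b}{t-1}$, the bounds of Proposition~\ref{lem:prodbeta} and Lemma~\ref{lem:cardinals}, and the same power-balancing between $\Const{t}^3$ and $\Omega_{t-1}^{p}$ (resp.\ $\Omega_{t-1}^{p+3}$). The collision bookkeeping you flag as the remaining work is exactly what the paper carries out by splitting the sum over $(i^1,i^2,i^3)\in[M]^3$ according to $\card{\{i^1,i^2,i^3\}}\in\{1,2,3\}$.
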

    \begin{proof} We proceed by induction on $t \in \N$. Assume for now that $N \geq 6$ and $M \geq 2$.
        
        Since $\parismuest{\BS}{b,0}(h) = \muest{\BS}{b,0}(h)$ for any $h$, the case $t = 0$ follows from Proposition~\ref{prop:Qbound}. Let $t > 0$ and assume that \eqref{eq:parisQbound} holds at $t-1$. We only treat the case $b_t = 0$. The proof for the case $b_t = 1$ follows using the same steps. Since we have that 
    \begin{equation}
    \begin{aligned}
        \big\| \parismuest{\BS}{b,t}(\boldone) \big\|^3 _3
        & = \pE \bigg[ \Const{t}^3 \sum_{k^{1:6} _t \in [N]^6}\prod_{\ell = 1}^3 \Pbacksum^b _t(k^{2\ell - 1} _t, k^{2\ell} _t)\bigg] \\
        & = \pE \bigg[ \frac{\Const{t}^3}{M^3} \sum_{k^{1:6} _t \in \mathcal{I}^3 _0} \sum_{i^{1:3} \in [M]^3} \prod_{\ell = 1}^3 \Pbacksum^b _{t-1}\big( J^{i^\ell} _{k^{2\ell - 1} _t, t-1}, J^{i^\ell} _{k^{2\ell} _t, t-1}\big) \bigg] \eqsp,
    \end{aligned}
\end{equation}
    the proof proceeds by (i) splitting the sum in three parts with respect to the  cardinal of the triplet $(i^1, i^2 ,i^3) \in [M]^3$, and (ii) bounding each term by $\| \parismuest{\BS}{b,t-1}(\boldone)\|^3 _3$ up to a constant independent of $N$.
        Let $(i^1, i^2, i^3) \in [M]^3$. If $\card{\{i^1, i^2, i^3\}} = 3$, then for all $k^{1:6} _t \in [N]^6$, $( J^{i^\ell} _{k^{2\ell - 1} _t, t-1}, J^{i^\ell} _{k^{2\ell} _t, t-1})_{\ell\in [1:3]}$ are mutually independent conditionally on $\Gfilt{b}{t-1} \vee \particle{1:N}{t}$, defined in \eqref{eq:Gfiltparticle}, and
        \begin{align*}
             \pE \bigg[ \prod_{\ell = 1}^3 & \Pbacksum^b _{t-1}\big( J^{i^\ell} _{k^{2\ell - 1} _t, t-1}, J^{i^\ell} _{k^{2\ell} _t, t-1}\big)  \bigg| \Gfilt{b}{t-1} \bigg] \\
            & = \pE \bigg[ \prod_{\ell = 1}^3 \pE \bigg[ \Pbacksum^b _{t-1}\big( J^{i^\ell} _{k^{2\ell - 1} _t, t-1}, J^{i^\ell} _{k^{2\ell} _t, t-1}\big) \bigg| \Gfilt{b}{t-1} \vee \particle{1:N}{t} \bigg] \bigg| \Gfilt{b}{t-1} \bigg] \\
            & = \pE \bigg[ \prod_{\ell = 1}^3 \sum_{k^{2\ell-1 : 2\ell} _{t-1} \in [N]^2} \beta^\BS _t(k^{2\ell-1} _t, k^{2\ell-1} _{t-1}) \beta^\BS _t(k^{2\ell} _t, k^{2\ell} _{t-1}) \Pbacksum^b _{t-1}\big(k^{2\ell - 1} _{t-1}, k^{2\ell} _{t-1}\big) \bigg| \Gfilt{b}{t-1} \bigg] \\
            & = \sum_{k^{1:6} _{t-1} \in [N]^6} \prod_{\ell = 1}^3 \Pbacksum^b _{t-1}\big(k^{2\ell - 1} _{t-1}, k^{2\ell} _{t-1}\big) \pE \bigg[ \prod_{n = 1}^6 \beta^\BS _t(k^n _t, k^n _{t-1}) \bigg| \F{t-1} \bigg] \eqsp. 
        \end{align*}
    Hence, by Proposition~\ref{lem:prodbeta}, Lemma~\ref{lem:cardinals} and similarly to the proof of Proposition~\ref{prop:Qbound},
    \begin{align*}
         \sum_{k^{1:6} _t \in \mathcal{I}^3 _0} & \pE \bigg[ \prod_{\ell = 1}^3 \Pbacksum^b _{t-1}\big( J^{i^\ell} _{k^{2\ell - 1} _t, t-1}, J^{i^\ell} _{k^{2\ell} _t, t-1}\big)  \bigg| \Gfilt{b}{t-1} \bigg] \\
        & = \sum_{k^{1:6} _{t-1} \in [N]^6} \prod_{\ell = 1}^3 \Pbacksum^b _{t-1}\big(k^{2\ell - 1} _{t-1}, k^{2\ell} _{t-1}\big) \sum_{k^{1:6} _t \in \mathcal{I}^3 _0} \pE \bigg[ \prod_{n = 1}^6 \beta^\BS _t(k^n _t, k^n _{t-1}) \bigg| \F{t-1} \bigg] \\
        & \leq \sum_{k^{1:6} _{t-1} \in [N]^6} \prod_{\ell = 1}^3 \Pbacksum^b _{t-1}\big(k^{2\ell - 1} _{t-1}, k^{2\ell} _{t-1}\big) \sum_{p = 2}^{6} \sum_{k^{1:6} _t \in \mathcal{I}^3 _0 \cap \sett^p _3} \frac{\boundg^p}{\Omega^p _{t-1}} \\
        & \lesssim \bigg( \sum_{p = 2}^{6} \frac{N^p }{\Omega^p _{t-1}} \bigg) \sum_{k^{1:6} _{t-1} \in [N]^6} \prod_{\ell = 1}^3 \Pbacksum^b _{t-1}\big(k^{2\ell - 1} _{t-1}, k^{2\ell} _{t-1}\big) \eqsp,
    \end{align*}
    where $\lesssim$ means less than or equal up to a multiplicative constant independent of $N$. Consequently,
    \begin{align*}
         \pE \bigg[ \frac{\Const{t}^3}{M^3} &\sum_{k^{1:6} _t \in \mathcal{I}^3 _0} \sum_{i^{1:3} \in [M]^3} \1_{i^1 \neq i^2 \neq i^3} \prod_{\ell = 1}^3 \Pbacksum^b _{t-1}\big( J^{i^\ell} _{k^{2\ell - 1} _t, t-1}, J^{i^\ell} _{k^{2\ell} _t, t-1}\big) \bigg| \Gfilt{b}{t-1} \bigg] \\
        &\hspace{1cm} = \frac{\Const{t}^3}{M^3}  \sum_{k^{1:6} _t \in \mathcal{I}^3 _0} \sum_{i^{1:3} \in [M]^3} \1_{i^1 \neq i^2 \neq i^3} \pE \bigg[ \prod_{\ell = 1}^3 \Pbacksum^b _{t-1}\big( J^{i^\ell} _{k^{2\ell - 1} _t, t-1}, J^{i^\ell} _{k^{2\ell} _t, t-1}\big) \bigg| \Gfilt{b}{t-1} \bigg] \\
        & \hspace{1cm}\lesssim \frac{M(M-1)(M-2)}{M^3} \bigg( \sum_{p = 2}^{6} \frac{N^p }{\Omega^p _{t-1}} \bigg) \Const{t}^3  \sum_{k^{1:6} _{t-1} \in [N]^6} \prod_{\ell = 1}^3 \Pbacksum^b _{t-1}\big(k^{2\ell - 1} _{t-1}, k^{2\ell} _{t-1}\big) \eqsp,
    \end{align*} 
    and
    \begin{multline*}
        \Const{t}^3 \bigg(\sum_{p = 2}^{6} \frac{N^p }{\Omega^p _{t-1}} \bigg)  = \Const{t-1}^3 \bigg( \sum_{p = 2}^{6} \frac{N^p \Omega^6 _{t-1}}{\Omega^p _{t-1} N^3 (N-1)^3} \bigg)\\
         \leq \Const{t-1}^3 \bigg( \sum_{p = 2}^6 \frac{\boundg^{6-p} N^6}{N^3 (N-1)^3}\bigg) \lesssim \Const{t-1}^3 \eqsp.
    \end{multline*}
    Therefore, 
    \begin{equation}
        \label{eq:Q3card3}
         \pE \bigg[ \frac{\Const{t}^3}{M^3} \sum_{k^{1:6} _t \in [N]^6} \sum_{i^{1:3} \in [M]^3} \1_{i^1 \neq i^2 \neq i^3} \prod_{\ell = 1}^3 \Pbacksum^b _{t-1}\big( J^{i^\ell} _{k^{2\ell - 1} _t, t-1}, J^{i^\ell} _{k^{2\ell} _t, t-1}\big) \bigg]
         \lesssim \big\| \parismuest{\BS}{b,t-1}(\boldone)\big\|_3 ^3 \eqsp.
    \end{equation}
    Assume now that $\card{\{i^1, i^2, i^3\}} = 2$ and that $i^1 = i^2$. For all $(k^1 _t, \cdots, k^6 _t) \in [N]^6$, conditionally on $\Gfilt{b}{t-1} \vee \particle{1:N}{t}$, $\Pbacksum^b _{t-1}\big( J^{i^3} _{k^{5} _t, t-1}, J^{i^3} _{k^{6} _t, t-1} \big)$ is independent from $\Pbacksum^b _{t-1}\big( J^{i^1} _{k^{1} _t, t-1}, J^{i^1} _{k^{2} _t, t-1} \big)$, $ \Pbacksum^b _{t-1}\big( J^{i^2} _{k^{3} _t, t-1}, J^{i^2} _{k^{4} _t, t-1} \big)$, hence, using \eqref{def:Iset}, 
    \begin{align}
        \sum_{k^{1:6} _t \in \mathcal{I}^3 _0} & \pE \bigg[ \prod_{\ell = 1}^3 \Pbacksum^b _{t-1}\big( J^{i^\ell} _{k^{2\ell - 1} _t, t-1}, J^{i^\ell} _{k^{2\ell} _t, t-1}\big)  \bigg| \Gfilt{b}{t-1} \vee \particle{1:N}{t} \bigg] \\
       & = \bigg\{ \sum_{k^{1:4} _t \in \mathcal{I}^2 _0} \pE \bigg[  \prod_{\ell = 1}^2 \Pbacksum^b _{t-1}\big( J^{i^\ell} _{k^{2\ell -1} _t, t-1}, J^{i^\ell} _{k^{2\ell} _t, t-1} \big) \bigg| \Gfilt{b}{t-1} \vee \particle{1:N}{t} \bigg] \bigg\} \nonumber \\
       & \hspace{2cm} \times \bigg\{ \sum_{k^{5:6} _t \in \mathcal{I}^1 _0}  \pE \bigg[ \Pbacksum^b _{t-1}\big( J^{i^3} _{k^{5} _t, t-1}, J^{i^3} _{k^{6} _t, t-1} \big) \bigg| \Gfilt{b}{t-1} \vee \particle{1:N}{t} \bigg] \bigg\} \nonumber \\
       & = \big( F^N _{2,b} + F^N _{3,b} + F^N _{4,b} \big) F^N _{1,b} \eqsp, \nonumber
    \end{align}
with
\begin{align*}
     F^N _{1,b} & \eqdef \sum_{k^{5:6} _t \in \mathcal{I}^1 _0}  \pE \bigg[ \Pbacksum^b _{t-1}\big( J^{i^3} _{k^{5} _t, t-1}, J^{i^3} _{k^{6} _t, t-1} \big) \bigg| \Gfilt{b}{t-1} \vee \particle{1:N}{t} \bigg] \\
    & = \sum_{k^{5:6} _t \in [N]^2} \1_{k^5 _t \neq k^6 _t} \sum_{k^{5:6} _{t-1} \in [N]^2} \beta^\BS _t(k^5 _t, k^5 _{t-1}) \beta^\BS _t(k^6 _t, k^6 _{t-1}) \Pbacksum^b _{t-1}(k^5 _{t-1}, k^6 _{t-1})  \eqsp,
\end{align*}
and
\begin{align*}
    & \sum_{k^{1:4} _t \in \mathcal{I}^2 _0} \pE \bigg[  \prod_{\ell = 1}^2 \Pbacksum^b _{t-1}\big( J^{i^\ell} _{k^{2\ell -1} _t, t-1}, J^{i^\ell} _{k^{2\ell} _t, t-1} \big) \bigg| \Gfilt{b}{t-1} \vee \particle{1:N}{t} \bigg] \\
     & \hspace{2cm} = \sum_{p = 2}^4 \sum_{k^{1:4} _t \in \mathcal{I}^2 _0 \cap \sett^p _2} \pE \bigg[  \prod_{\ell = 1}^2 \Pbacksum^b _{t-1}\big( J^{i^\ell} _{k^{2\ell -1} _t, t-1}, J^{i^\ell} _{k^{2\ell} _t, t-1} \big) \bigg| \Gfilt{b}{t-1} \vee \particle{1:N}{t} \bigg] \\
     & \hspace{2cm} = F^N _{2,b} + F^N _{3,b} + F^N _{4,b} \eqsp,
\end{align*}
where
\begin{align*}
    & F^N _{2,b} \eqdef \sum_{k^{1:2} _t \in [N]^2} \1_{k^2 _t \neq k^2 _t} \sum_{k^{1:2} _{t-1} \in [N]^2}  \beta^\BS _t(k^1 _t, k^1 _{t-1}) \beta^\BS _t(k^2 _t, k^2 _{t-1})\mathsf{T}^{(1)}_{t-1,b}(k^1 _{t-1}, k^2 _{t-1})\\
    & F^N _{3,b} \eqdef \sum_{k^{1:3} _t \in [N]^3} \1_{k^1 _t \neq k^2 _t \neq k^3 _t}
    \sum_{k^{1:3} _{t-1} \in [N]^3} \prod_{\ell = 1}^3 \beta^\BS _t(k^\ell _t, k^\ell _{t-1})\mathsf{T}^{(2)}_{t-1,b}(k^1 _{t-1}, k^2 _{t-1},k^3 _{t-1})\\
    & F^N _{4,b} \eqdef \sum_{k^{1:4} _t \in [N]^4} \1_{k^1 _t \neq k^2 _t \neq k^3 _t \neq k^4 _t}  \sum_{k^{1:4} _{t-1} \in [N]^4} \prod_{\ell = 1}^4 \beta^\BS _t(k^\ell _t, k^\ell _{t-1}) \Pbacksum^b _{t-1}(k^1 _{t-1}, k^2 _{t-1}) \Pbacksum^b _{t-1}(k^3 _{t-1}, k^4 _{t-1}) \eqsp,
\end{align*}
where $\mathsf{T}^{(1)}_{t-1,b}$ and $\mathsf{T}^{(2)}_{t-1,b}$ are defined in Proposition~\ref{prop:paris_conv_partitions}. 
    We now upperbound each $\pE[ F^N _{1,b} F^N _{i,b}| \Gfilt{b}{t-1}]$ for $i \in [2:4]$. 
    \vspace{.2cm}\\
     Consider first the case $\pE[ F^N _{1,b} F^N _{4,b}| \Gfilt{b}{t-1}]$.   Let $S_6 \eqdef \{k^{1:6} \in [N]^6: k^1 _t \neq k^2 _t \neq k^3 _t \neq k^4 _t, k^5 _t \neq k^6 _t \} \eqsp.$ Then, $S_6 \subset \big( \mathcal{I}^3 _0 \cap \sett^4 _3 \big) \sqcup \big( \mathcal{I}^3 _0 \cap \sett^5 _3 \big) \sqcup \big( \mathcal{I}^3 _0 \cap \sett^6 _3 \big)$ and 
    \begin{align}
        \label{eq:boundF1F4}
         \pE \big[ F^N _{1,b} F^N _{4,b} \big| \Gfilt{b}{t-1} \big] \nonumber & = \sum_{k^{1:6} _{t-1} \in [N]^6} \prod_{\ell = 1}^3 \Pbacksum^b _{t-1}(k^{2\ell - 1} _{t-1}, k^{2\ell} _{t-1})  \sum_{k^{1:6} _t \in \mathsf{S}_6}  \pE \bigg[ \prod_{n = 1}^6 \beta^\BS _t(k^n _t, k^n _{t-1}) \bigg| \Gfilt{b}{t-1}\bigg] \nonumber \\
        & \leq \sum_{k^{1:6} _{t-1} \in [N]^6} \prod_{\ell = 1}^3 \Pbacksum^b _{t-1}(k^{2\ell - 1} _{t-1}, k^{2\ell} _{t-1}) \sum_{p = 4}^6 \sum_{k^{1:6} _t \in \mathcal{I}^3 _0 \cap \sett^p _3} \pE \bigg[ \prod_{n = 1}^6 \beta^\BS _t(k^n _t, k^n _{t-1}) \bigg| \Gfilt{b}{t-1}\bigg] \nonumber \\
        & \lesssim \bigg( \sum_{p = 4}^6 \frac{N^p \boundg^p}{\Omega^p _{t-1}} \bigg)\sum_{k^{1:6} _{t-1} \in [N]^6} \prod_{\ell = 1}^3 \Pbacksum^b _{t-1}(k^{2\ell - 1} _{t-1}, k^{2\ell} _{t-1}) \eqsp,
    \end{align}
    by Proposition~\ref{lem:prodbeta} and Lemma~\ref{lem:cardinals}.
    \vspace{.2cm}\\
   Consider now the case $\pE[ F^N _{1,b} F^N _{3,b}| \Gfilt{b}{t-1}]$ and define 
    $\mathsf{S}_5 \eqdef \{k^{1:5} _t \in [N]^5: k^1 _t \neq k^2 _t \neq k^3 _t, k^4 _t \neq k^5 _t \}$. Then,
    \begin{multline*}
        \pE \big[ F^N _{1,b} F^N _{3,b} \big| \Gfilt{b}{t-1} \big]  \\
        = \sum_{k^{1:5} _{t-1} \in [N]^5}  \mathsf{T}^{(2)}_{t-1,b}(k^1 _{t-1}, k^2 _{t-1}, k^3 _{t-1}) \Pbacksum^b _{t-1}(k^4 _{t-1}, k^5 _{t-1}) \sum_{k^{1:5} _t \in \mathsf{S}_5}  \pE \bigg[ \prod_{n = 1}^5 \beta^\BS _t(k^n _t, k^n _{t-1}) \bigg| \Gfilt{b}{t-1}\bigg] \eqsp.
    \end{multline*}
    Proceeding similarly as in Proposition~\ref{lem:prodbeta} and Lemma~\ref{lem:cardinals}, it can be shown that 
    \begin{align*}
        \sum_{k^{1:5} _t \in \mathsf{S}_5}  \pE \bigg[ \prod_{n = 1}^5 \beta^\BS _t(k^n _t, k^n _{t-1}) \bigg| \Gfilt{b}{t-1}\bigg] \leq \frac{N^5 \boundg^5 }{\Omega^5 _{t-1}} + \frac{2N^4\boundg^4}{\Omega^4 _{t-1}} + \frac{N^3\boundg^3}{\Omega^3 _{t-1}} \eqsp.
    \end{align*}
    Finally, by noting that
    \begin{align*}
        \sum_{k^{1:5} _{t-1} \in [N]^5} \mathsf{T}^{(2)}_{t-1,b}(k^1 _{t-1}, k^2 _{t-1}, k^3 _{t-1}) \Pbacksum^b _{t-1}(k^4 _{t-1}, k^5 _{t-1}) \leq 4 \sum_{k^{1:6} _{t-1} \in [N]^6} \prod_{\ell = 1}^3 \Pbacksum^b _{t-1}(k^{2\ell - 1} _{t-1}, k^{2\ell} _{t-1}) 
    \end{align*}
    we get
    \begin{align}
        \label{eq:boundF1F3}
        \pE \big[ & F^N _{1,b} F^N _{3,b} \big| \Gfilt{b}{t-1} \big] \nonumber \\
        & \leq \bigg( \frac{N^5 \boundg^5 }{\Omega^5 _{t-1}} + \frac{2N^4\boundg^4}{\Omega^4 _{t-1}} + \frac{N^3\boundg^3}{\Omega^3 _{t-1}} \bigg) 
        \sum_{k^{1:5} _{t-1} \in [N]^5}  \mathsf{T}^{(2)}_{t-1,b}(k^1 _{t-1}, k^2 _{t-1}, k^3 _{t-1}) \Pbacksum^b _{t-1}(k^4 _{t-1}, k^5 _{t-1}) \nonumber \\
        & \leq 4 \bigg( \frac{N^5 \boundg^5 }{\Omega^5 _{t-1}} + \frac{2N^4\boundg^4}{\Omega^4 _{t-1}} + \frac{N^3\boundg^3}{\Omega^3 _{t-1}} \bigg) \sum_{k^{1:6} _{t-1} \in [N]^6} \prod_{\ell = 1}^3 \Pbacksum^b _{t-1}(k^{2\ell - 1} _{t-1}, k^{2\ell} _{t-1}) \eqsp.
    \end{align}
    Consider now the case $\pE[ F^N _{1,b} F^N _{2,b}| \Gfilt{b}{t-1}]$.  In the same way as for the previous case, we obtain
    \begin{equation}
        \label{eq:boundF1F2}
        \pE \big[  F^N _{1,b} F^N _{2,b} \big|\Gfilt{b}{t-1} \big] \leq 2 \left( \frac{N^4 \boundg^4}{\Omega^4 _{t-1}} + \frac{2N^3 \boundg^3}{\Omega^3 _{t-1}} + \frac{N^2 \boundg^2}{\Omega^2 _{t-1}}\right) \sum_{k^{1:6} _{t-1} \in [N]^6} \prod_{\ell = 1}^3 \Pbacksum^b _{t-1}(k^{2\ell - 1} _{t-1}, k^{2\ell} _{t-1}) \eqsp.
    \end{equation} 
    Thus, combining \eqref{eq:boundF1F2}, \eqref{eq:boundF1F3} and \eqref{eq:boundF1F4}, we obtain
    \begin{align*}
         \pE \bigg[ \Const{t}^3 & \sum_{k^{1:6} _t \in [N]^6} \sum_{i^{1:3} \in [M]^3} \1_{i^1 = i^2, i^3 \notin \{i^1, i^2\}} \prod_{\ell = 1}^3 \Pbacksum^b _{t-1}\big( J^{i^\ell} _{k^{2\ell - 1} _t, t-1}, J^{i^\ell} _{k^{2\ell} _t, t-1}\big) \bigg] \\
        & \lesssim \pE \bigg[ \bigg( \sum_{p = 2}^6 \frac{\Omega^6_{t-1} N^p \boundg^p}{\Omega^p_{t-1} N^3(N-1)^3}\bigg) \parismuest{\BS}{b,t-1}(\boldone)^3 \bigg] \lesssim \big\| \parismuest{\BS}{b,t-1}(\boldone) \big\|^3 _3\eqsp.
    \end{align*}
    The other triplets $(i^1, i^2, i^3)$ for which $\card{\{i^1, i^2, i^3\}} = 2$ are handled similarly and are bounded by $\big\| \parismuest{\BS}{b,t-1}(\boldone) \big\|^3 _3$ up to some multiplicative constant independent of $N$. We finally obtain
    \begin{equation}
        \label{eq:Q3card2}
        \pE \bigg[ \frac{\Const{t}^3}{M^3}  \sum_{k^{1:6} _t \in [N]^6} \sum_{i^{1:3} \in [M]^3} \1_{\card{\{i^1, i^2, i^3\}} = 2} \prod_{\ell = 1}^3 \Pbacksum^b _{t-1}\big( J^{i^\ell} _{k^{2\ell - 1} _t, t-1}, J^{i^\ell} _{k^{2\ell} _t, t-1}\big) \bigg] \lesssim \big\| \parismuest{\BS}{b,t-1}(\boldone) \big\|^3 _3 \eqsp.
   \end{equation}
   It now remains to treat the case $\card{\{i^1, i^2, i^3\}} = 1$. Let $(k^1, \cdots, k^d) \in [N]^d$. Denote by $\pos_d(k^{1:d})$ the set of elements in $[N]^d$ with positions of equal elements similar to those of the equal elements in $k^{1:d}$. For example, 
   \begin{align*}
        \pos_3((2,1,1)) & = \{(j,i,i) \mid  (i,j) \in [N]^2, i \neq j \}\eqsp, \\
        \pos_3((1,1,2)) & = \{(i,i,j) \mid  (i,j) \in [N]^2, i \neq j \}\eqsp, \\
        \pos_3((1,2,3)) & = \{(i,j,k) \mid (i,j,k) \in [N]^3, \card{\{i,j,k\}} = 3 \} \eqsp.
   \end{align*}
   Let $p \in [2:6]$ and $(k^1 _t, \cdots, k^6 _t) \in \mathcal{I}^3 _0 \cap \sett^p _3$. Without loss of generality, assume that the first $p$ elements of $k^{1:6} _t$ are all different. 
   \begin{align*}
       \pE \bigg[ \prod_{\ell = 1}^3  \Pbacksum^b _{t-1}\big( J^{i^\ell} _{k^{2\ell - 1} _t, t-1},& J^{i^\ell} _{k^{2\ell} _t, t-1}\big)  \bigg| \Gfilt{b}{t-1} \bigg] \\
       &  =  \sum_{k^{1:6} _{t-1} \in \pos_6(k^{1:6} _t)} \prod_{\ell = 1}^3 \Pbacksum^b _{t-1}(k^{2\ell - 1} _{t-1}, k^{2\ell} _t) \pE \bigg[ \prod_{n = 1}^p \beta^\BS _t(k^n _t, k^n _{t-1}) \bigg| \Gfilt{b}{t-1} \bigg] \\
       & \leq \frac{\boundg^p}{\Omega_{t-1}^p} \sum_{k^{1:6} _{t-1} \in \pos_6(k^{1:6} _t)} \prod_{\ell = 1}^3 \Pbacksum^b _{t-1}(k^{2\ell - 1} _{t-1}, k^{2\ell} _t) \\
       & \leq \frac{\boundg^p}{\Omega_{t-1}^p} \sum_{k^{1:6} _{t-1} \in [N]^6} \prod_{\ell = 1}^3 \Pbacksum^b _{t-1}(k^{2\ell - 1} _{t-1}, k^{2\ell} _t) \eqsp.
        \end{align*}
    Consequently, by Lemma~\ref{lem:cardinals}, 
    \begin{align*}
        \sum_{k^{1:6} _t \in \mathcal{I}^3 _0 \cap \sett^p _3} \pE \bigg[ \prod_{\ell = 1}^3 &  \Pbacksum^b _{t-1}\big( J^{i^\ell} _{k^{2\ell - 1} _t, t-1}, J^{i^\ell} _{k^{2\ell} _t, t-1}\big)  \bigg| \Gfilt{b}{t-1} \bigg]\\
         & \leq \sum_{k^{1:6} _t \in \mathcal{I}^3 _0 \cap \sett^p _3} \frac{\boundg^p}{\Omega_{t-1}^p} \sum_{k^{1:6} _{t-1} \in [N]^6} \prod_{\ell = 1}^3 \Pbacksum^b _{t-1}(k^{2\ell - 1} _{t-1}, k^{2\ell} _t) \\
         & \lesssim \frac{N^p \boundg^p}{\Omega^p _{t-1}} \sum_{k^{1:6} _{t-1} \in [N]^6} \prod_{\ell = 1}^3 \Pbacksum^b _{t-1}(k^{2\ell - 1} _{t-1}, k^{2\ell} _t) \eqsp,
    \end{align*}
    and,
        \begin{multline}
            \label{eq:Q3card1}
            \pE \bigg[ \frac{\Const{t}^3}{M^3}  \sum_{k^{1:6} _t \in [N]^6} \sum_{i^{1:3} \in [M]^3} \1_{\card{\{i^1, i^2, i^3\}} = 1} \prod_{\ell = 1}^3 \Pbacksum^b _{t-1}\big( J^{i^\ell} _{k^{2\ell - 1} _t, t-1}, J^{i^\ell} _{k^{2\ell} _t, t-1}\big) \bigg] \\
             \lesssim \pE \bigg[ \bigg( \sum_{p = 2}^6 \frac{\Omega_{t-1}^6 N^p \boundg^p}{\Omega_{t-1}^p N^3(N-1)^3}\bigg) \parismuest{\BS}{b,t-1}(\boldone)^3 \bigg] \lesssim \big\| \parismuest{\BS}{b,t-1}(\boldone) \big\|^3 _3  \eqsp.
       \end{multline}
    Finally, combining \eqref{eq:Q3card3}, \eqref{eq:Q3card2} and \eqref{eq:Q3card1} we get
  $$
        \big\| \parismuest{\BS}{b,t}(\boldone) \big\|^3 _3 = \pE \bigg[ \frac{\Const{t}^3}{M^3} \sum_{k^{1:6} _t \in [N]^6} \sum_{i^{1:3} \in [M]^3} \prod_{\ell = 1}^3 \Pbacksum^b _{t-1}\big( J^{i^\ell} _{k^{2\ell - 1} _t, t-1}, J^{i^\ell} _{k^{2\ell} _t, t-1}\big) \bigg] \lesssim \big\| \parismuest{\BS}{b,t-1}(\boldone) \big\|^3 _3 \eqsp,
 $$
    and hence $\sup_{N \geq 6} \big\| \parismuest{\BS}{b,t}(\boldone) \big\| _3 < \infty$ by the induction hypothesis. This ends the proof for the case $b_t = 0$.

    If $M = 2$, then \eqref{eq:Q3card3} is equal to $0$ and \eqref{eq:Q3card2}, \eqref{eq:Q3card1} remain the same. The result then follows. If $N < 6$ then it suffices to truncate the sums over $p$ to obtain the result. 
    \end{proof}
\section{Further algorithmic details}
\subsection{Alternative expression of the genealogy tracing variance estimator}
\label{sec:altexpr}
The expression of the CLE estimator \eqref{eq:CLE} provided in the main paper is different from the expression of the estimator appearing in \cite{olssonvar}. We show here that these are  two expressions of the same quantity. Note first that
\begin{align*}
    \left( \frac{1}{N}\sum_{i = 1}^N h(\particle{i}{t}) - \pred{t}^N (h) \right)^2 & = 0 \\
    & = N^{-2}\sum_{i, j \in [N]^2} \1_{\eve^i _{t,0} = \eve^j _{t,0}} \lbrace h(\particle{i}{t}) - \pred{t}^N (h)\rbrace \lbrace h(\particle{j}{t}) - \pred{t}^N(h) \rbrace \\
    & \hspace{2cm} + N^{-2}\sum_{i, j \in [N]^2} \1_{\eve^i _{t,0} \neq \eve^j _{t,0}} \lbrace h(\particle{i}{t}) - \pred{t}^N (h)\rbrace \lbrace h(\particle{j}{t}) - \pred{t}^N(h) \rbrace.
\end{align*}
On the other hand,
\begin{align*}
     \sum_{i, j \in [N]^2} \1_{\eve^i _{t,0} = \eve^j _{t,0}} & \lbrace h(\particle{i}{t}) - \pred{t}^N (h)\rbrace \lbrace h(\particle{j}{t}) - \pred{t}^N(h) \rbrace \\
    & = \sum_{k = 1}^N \sum_{i, j \in [N]^2} \1_{\eve^i _{t,0} = \eve^j _{t,0} = k} \lbrace h(\particle{i}{t}) - \pred{t}^N (h)\rbrace \lbrace h(\particle{j}{t}) - \pred{t}^N(h) \rbrace \\
    & =  \sum_{k = 1}^N \left( \sum_{i = 1}^N \1_{\eve^i _{t, 0} = k} \lbrace h(\particle{i}{t}) - \pred{t}^N (h) \rbrace \right)^2 \eqsp.
\end{align*}
Thus, 
\begin{align*}
    \predasymptvarestim{t}{h} & = -N^{-1} \sum_{i, j \in [N]^2} \1_{\eve^i _{t,0} \neq \eve^j _{t,0}} \lbrace h(\particle{i}{t}) - \pred{t}^N (h)\rbrace \lbrace h(\particle{j}{t}) - \pred{t}^N(h) \rbrace \\
    & = N^{-1} \sum_{k = 1}^N \left( \sum_{i = 1}^N \1_{\eve^i _{t, 0} = k} \lbrace h(\particle{i}{t}) - \pred{t}^N (h) \rbrace \right)^2.
\end{align*}  
where the expression in the second line is that of \cite{olssonvar}. By a similar reasoning, \eqref{eq:truncCLE} is also equivalent to their estimator.
\subsection{Variance estimators for the predictor and filter}
\label{subsec:predfiltervariance}
The asymptotic variances of the predictor and filter \eqref{eq:asymptvarpred}-\eqref{eq:asymptvarfilter} can be expressed using $\asymptvar{\gamma,t}$. Indeed,
    \begin{equation*}
        \frac{\asymptvar{\gamma,t}(h - \pred{t}(h))}{\joint{t}(\boldone)^2} = \sum_{s = 0}^{t} \left\{ \frac{\joint{s} (\mathbf{1}) \joint{s}\big(\Qmarg{s+1:t}[h - \pred{t}(h)]^2\big)}{\joint{t}(\boldone)^2} - \pred{t}(h - \pred{t}(h))^2 \right\} = \asymptvar{\pred, t}(h) \eqsp,
    \end{equation*}
    and using that
    \begin{equation*}
        \frac{\joint{t}(g_t \{h - \filter{t}(h) \})}{\joint{t+1}(\boldone)} = \frac{\joint{t}(g_t h )}{\joint{t+1}(\boldone)} - \filter{t}(h) = 0 \eqsp,
    \end{equation*}
    we get
    \begin{equation}
        \asymptvar{\phi, t}(h) = \frac{\asymptvar{\gamma,t}\big(g_t \{h - \filter{t}(h) \}\big)}{\joint{t+1}(\boldone)^2} \eqsp.
    \end{equation}
Then, replacing  $\joint{t}(h)$ and $\filter{t}(h)$ 
by their empirical approximations 
$\joint{t}^N(h)$ and $\filter{t}^N(h)$, we obtain
    \begin{align}
        \label{eq:suppl:eta_disjoint}
            \asymptvarestim{\eta, t}{\BS}(h) & \eqdef \frac{-N^{t}}{(N-1)^{t+1}} 
                 \sum_{i,j \in [N]^2} \backsum^\zero _{t}(i,j) \big\{ h(\particle{i}{t}) - \pred{t}^N (h) \big\} \big\{ h(\particle{j}{t}) - \pred{t}^N(h) \big\}\eqsp, \\  
        \label{eq:suppl:phi_disjoint}
            \asymptvarestim{\phi,t}{\BS}(h) & \eqdef \frac{-N^{t+2}}{(N-1)^{t+1}} 
                  \sum_{i,j \in [N]^2}  \normweight{i}{t} \normweight{j}{t}  \backsum^\zero _{t}(i,j)\big\{ h(\particle{i}{t}) - \filter{t}^N (h) \big\} \big\{ h(\particle{j}{t}) - \filter{t}^N(h) \big\} \eqsp.
    \end{align}  
    As a consequence of Theorem~\ref*{thm:consistencyVBS}, these estimators are also weakly consistent. 
    \begin{corollary}
        \label{cor:varestimators}
        Let $\As{assp:A}{assp:boundup}$ hold. For any $h \in \bounded{}$, $\asymptvarestim{\eta, t}{\BS}(h) \plim \asymptvar{\eta,t}(h)$ and $\asymptvarestim{\phi, t}{\BS}(h) \plim \asymptvar{\phi,t}(h)$.
    \end{corollary}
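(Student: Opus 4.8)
The plan is to deduce both statements from Theorem~\ref*{thm:consistencyVBS} by rewriting $\asymptvarestim{\eta,t}{\BS}$ and $\asymptvarestim{\phi,t}{\BS}$ as $\asymptvarestim{\gamma,t}{\BS}$ evaluated at suitably centred test functions. First I would verify the exact algebraic identities
\[
\asymptvarestim{\eta,t}{\BS}(h) = \frac{\asymptvarestim{\gamma,t}{\BS}\big(h - \pred{t}^N(h)\boldone\big)}{\joint{t}^N(\boldone)^2}\eqsp, \qquad \asymptvarestim{\phi,t}{\BS}(h) = \frac{\asymptvarestim{\gamma,t}{\BS}\big(g_t\{h - \filter{t}^N(h)\boldone\}\big)}{\joint{t+1}^N(\boldone)^2}\eqsp,
\]
directly from \eqref{eq:general_disjoint}, \eqref{eq:suppl:eta_disjoint} and \eqref{eq:suppl:phi_disjoint}. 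The key observation is that $\pred{t}^N\big(h - \pred{t}^N(h)\boldone\big) = 0$ and $\pred{t}^N\big(g_t\{h - \filter{t}^N(h)\boldone\}\big) = N^{-1}\Omega_t\{\filter{t}^N(h) - \filter{t}^N(h)\} = 0$, so the leading squared-mean term in \eqref{eq:general_disjoint} vanishes; the remaining powers of $N$ then match after using $\joint{t+1}^N(\boldone) = N^{-1}\Omega_t\joint{t}^N(\boldone)$ and $\normweight{i}{t} = g_t(\particle{i}{t})/\Omega_t$. At the population level these are precisely the manipulations recalled in Section~\ref*{subsec:predfiltervariance}, which give $\asymptvar{\eta,t}(h) = \asymptvar{\gamma,t}(h - \pred{t}(h)\boldone)/\joint{t}(\boldone)^2$ and $\asymptvar{\phi,t}(h) = \asymptvar{\gamma,t}(g_t\{h - \filter{t}(h)\boldone\})/\joint{t+1}(\boldone)^2$, and so provide the candidate limits.

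The second step exploits that $\asymptvarestim{\gamma,t}{\BS}$ is a homogeneous quadratic functional of its argument, since both $h \mapsto \joint{t}^N(h)^2$ and $h \mapsto \muest{\BS}{\zero,t}(h_t^{\otimes 2})$ are. Writing $\widetilde{\mathcal{V}}^{N,\BS}_{\gamma,t}(h,f) \eqdef N\big(\joint{t}^N(h)\joint{t}^N(f) - \muest{\BS}{\zero,t}(h_t\otimes f_t)\big)$ for the associated symmetric bilinear form, one has for every scalar $c$
\[
\asymptvarestim{\gamma,t}{\BS}(h - c\boldone) = \asymptvarestim{\gamma,t}{\BS}(h) - 2c\,\widetilde{\mathcal{V}}^{N,\BS}_{\gamma,t}(h,\boldone) + c^2\,\widetilde{\mathcal{V}}^{N,\BS}_{\gamma,t}(\boldone,\boldone)\eqsp.
\]
By the polarization identity $\widetilde{\mathcal{V}}^{N,\BS}_{\gamma,t}(h,f) = \tfrac14\big(\asymptvarestim{\gamma,t}{\BS}(h+f) - \asymptvarestim{\gamma,t}{\BS}(h-f)\big)$ together with Theorem~\ref*{thm:consistencyVBS} applied to the bounded functions $h+f$ and $h-f$, each of $\asymptvarestim{\gamma,t}{\BS}(h)$, $\widetilde{\mathcal{V}}^{N,\BS}_{\gamma,t}(h,\boldone)$ and $\widetilde{\mathcal{V}}^{N,\BS}_{\gamma,t}(\boldone,\boldone)$ converges in probability to the corresponding coefficient of the quadratic polynomial $c \mapsto \asymptvar{\gamma,t}(h - c\boldone)$.

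Finally I would invoke Slutsky's lemma: since $\pred{t}^N(h) \plim \pred{t}(h)$ by \eqref{eq:asconv} and the three coefficients above converge in probability, substituting $c = \pred{t}^N(h)$ yields $\asymptvarestim{\gamma,t}{\BS}\big(h - \pred{t}^N(h)\boldone\big) \plim \asymptvar{\gamma,t}\big(h - \pred{t}(h)\boldone\big)$; dividing by $\joint{t}^N(\boldone)^2 \plim \joint{t}(\boldone)^2 > 0$ (positivity being a consequence of $\As{assp:B}{}$) and using the Section~\ref*{subsec:predfiltervariance} identity gives $\asymptvarestim{\eta,t}{\BS}(h) \plim \asymptvar{\eta,t}(h)$. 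The case of $\asymptvarestim{\phi,t}{\BS}$ is handled in the same way, now with the bounded test function $g_t h$ in place of $h$, the centring constant $c = \filter{t}^N(h) \plim \filter{t}(h)$, and $\joint{t+1}^N(\boldone)^2 \plim \joint{t+1}(\boldone)^2 > 0$ in the denominator. The only delicate point — which the quadratic-expansion-plus-Slutsky device is designed to bypass — is that Theorem~\ref*{thm:consistencyVBS} is stated for fixed bounded test functions, whereas the argument of $\asymptvarestim{\gamma,t}{\BS}$ here depends on $N$ through $\pred{t}^N(h)$ (resp.\ $\filter{t}^N(h)$); everything else is routine bookkeeping.
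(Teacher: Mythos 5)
Your proposal is correct and follows essentially the same route as the paper: both reduce the problem to the algebraic identities of Section~\ref*{subsec:predfiltervariance}, then exploit (bi)linearity to expand the quadratic form in the random centring constant $\pred{t}^N(h)$ (resp.\ $\filter{t}^N(h)$) so that the fixed-test-function consistency results can be applied coefficient by coefficient, finishing with Slutsky. The only cosmetic difference is that you package the expansion via polarization of $\asymptvarestim{\gamma,t}{\BS}$ and invoke Theorem~\ref*{thm:consistencyVBS}, whereas the paper expands $\muest{\BS}{b,t}\big(\{h-\pred{t}^N(h)\}^{\otimes 2}\big)$ into four terms and applies Theorem~\ref*{thm:conv} directly.
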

    \begin{proof}
        It suffices to note that $\muest{\BS}{b,t}$ and $\mumeasure{b,t}$ are bilinear,  that $\eta^N _t(h) \plim \eta _t(h)$ and to apply Theorem~\ref*{thm:conv} again: 
    \begin{align*}
        & \muest{\BS}{b,t}\left( \{ h - \pred{t}^N(h)\}^{\otimes 2}\right) \\
         & \hspace{1.cm} = \muest{\BS}{b,t}(h^{\otimes 2}) - \pred{t}^N(h) \muest{\BS}{b,t}(h \otimes \boldone) - \pred{t}^N(h) \muest{\BS}{b,t}(\boldone \otimes h) + \pred{t}^N(h)^2 \muest{\BS}{b,t}(\boldone) \\
         & \hspace{1.cm} \plim \mumeasure{b,t}\left( \{ h - \pred{t}(h)\}^{\otimes 2}\right) \eqsp.
    \end{align*}
    Hence, $ \asymptvarestim{\gamma, t}{\BS}(h - \pred{t}^N(h)) \plim \asymptvar{\gamma, t}(h - \pred{t}(h))$ and using the fact that $\joint{t}^N(\boldone)^2 \plim \joint{t}(\boldone)^2$ we get the consistency for the predictive measures. The remaining limit is a straightforward application.
\end{proof}
    \begin{algorithm}
        \caption{Update at step $t+1$ of the variance estimator for the predictor}
        \label{alg:est1alg}
        \begin{algorithmic}
        \Require $\weight{1:N}{t}, \particle{1:N}{t},
         \particle{1:N}{t+1}$ and $\bm{\backsum}^\zero _t$
        \State{Compute $\bm{\beta}^\BS _{t+1}$}
        \If{$\paris$}
        \For{$k \in [1:N]$}
    
             Sample $J^{1:M} _{k,t} \iid \beta^\BS _{t+1}(k, .)$
    
        \EndFor
        \For{$(k,\ell) \in [1:N]^2$}
    
            Set $\backsum^\zero _{t+1}(k,\ell) = \1_{k \neq \ell} \sum_{i = 1}^{M} \backsum^\zero _{t}(
                J^i _{k,t}, J^i _{\ell,t}) / M$
    
        \EndFor
        \Else
    
        \State{Compute $\overline{\bm{\backsum}}^\zero _{t+1} = \bm{\beta}^\BS _{t+1} \bm{\backsum}^\zero _t \bm{\beta}^{\BS \prime} _{t+1}$.}
        \State{Set $\bm{\backsum}^\zero _{t+1} = \overline{\bm{\backsum}}^\zero _{t+1} - \mathrm{Diag}(\overline{\bm{\backsum}}^\zero _{t+1})$.}
        \EndIf
        \State{Compute $\bm{\mathcal{Q}} = \bm{\backsum}^\zero _{t+1} \odot \big[ \big\{ h(\particle{1:N}{t+1}) - \pred{t+1}^N(h) \big\}
         \big\{ h(\particle{1:N}{t+1}) - \pred{t+1}^N(h) \big\}^\top $\big].}\algorithmiccomment{$h$ is applied elementwise}\\
        \Return $- N^t / (N-1)^{t+1} \sum_{i, j \in [N]^2} \bm{\mathcal{Q}}_{i,j}, \quad \bm{\backsum}^\zero _{t+1}$.
        \end{algorithmic}
        \end{algorithm}

\subsection{$\GT$ term by term estimator of the asymptotic variance}
\label{apdx:diffBSGT}
In this section we derive the $\GT$ counterpart of the term by term estimator \eqref{eq:general_termbyterm}. Define for all $t > 0$
\begin{align}
    \label{eq:def_tauGT}
        \backsum^\GT _{b,t} (K^1 _t, K^2 _t) & \eqdef \pE_{\GT} \big[ \intersect{b,t}{K^1 _{0:t} , K^2 _{0:t}} \big| \F{t}, K^1 _t, K^2 _t \big] \eqsp, 
    \end{align}
and $\backsum^\GT _{b,0}(K^1 _0, K^2 _0) \eqdef \1_{K^1 _0 \neq K^2 _0, b_0 = 0} + \1_{K^1 _0 = K^2 _0, b_0 = 1}$.

By the tower property and the definition of $\pE_\GT [ \cdot | \F{t-1} ]$, for all $(k,\ell) \in [N]^2$ and $t > 0$, if $b_t = 0$,
\begin{align*}
    \backsum^\GT _{b,t}(k, \ell) & = \1_{k \neq \ell} \sum_{i, j \in [N]^2} \1_{A^k _t = i, A^\ell _t = j} \backsum^\GT _{b,t-1}(i,j) = \1_{k \neq \ell} \backsum^\GT _{b,t-1}(A^k _{t-1}, A^\ell _{t-1}) \eqsp,
\end{align*}
and if $b_t = 1$,
\begin{align*}
    \backsum^\GT _{b,t}(k, \ell) & = \1_{k = \ell} \sum_{i,j \in [N]^2} \1_{A^k _{t-1} = i }\normweight{j}{t-1} \backsum^\GT _{b,t-1}(i,j) = \1_{k = \ell} \sum_{i = 1}^N \normweight{j}{t-1} \backsum^\GT _{b,t-1}(A^k _{t-1},j) \eqsp.
\end{align*}
Similarly to $\BS$, we have by the tower property
\begin{align}
    \label{eq:QexprGT}
    \muest{\BS}{b, t}(h) & = \prod_{s = 0 }^t N^{b_s} 
    \bigg(\frac{N}{N-1}\bigg)^{1 - b_s} \frac{\joint{t}^N(\mathbf{1})^2}{N^2} \sum_{k, \ell \in [N]^2} \backsum^\GT _{b,t}(k,\ell) h(\particle{k}{t}, \particle{\ell}{t}) \eqsp,
\end{align}
and the term by term estimator is thus 
\begin{equation}
    \label{eq:tbt_exprGT}
    \tbtasymptvarestim{\gamma, t}{\GT}(h) =  \frac{N^{t-1} \joint{t}^N(\boldone)^2}{(N-1)^t} \sum_{k, \ell \in [N]^2} \left\{ S^\GT _t(k, \ell)-\frac{t+1}{N-1}\backsum^\GT _{\zero,t} (k,\ell) \right \} h(\particle{k}{t})h(\particle{\ell}{t}) \eqsp,
\end{equation}
where $S^\GT _t$ is such that for all $(k, \ell) \in [N]^2$,
\begin{align*}
        \label{eq:updatesumesGT}
         S^\GT _t(k,\ell) & = \sum_{s = 0}^t \backsum^\GT _{e_s, t}(k,\ell)  =  \1_{k = \ell} \sum_{i = 1}^N \normweight{j}{t-1} \backsum^\GT _{\zero,t-1}(A^k _{t-1},j) + \1_{k \neq \ell} S^\GT _{t-1}(A^k _{t-1}, A^\ell _{t-1})
         \eqsp,
\end{align*}   
which shows that \eqref{eq:tbt_exprGT} is also updated online in a rather simple way by propagating the matrices $\bm{S}^\GT _t$ and $\bm{\backsum}^\GT _{\zero,t}$.

\section{Technical results}
\label{sec:techres}
\begin{thm}[\textbf{Generalized dominated convergence theorem}]
    \label{thm:GDCT}
    Let $(f_N) _{N \in \N}$ be a sequence of $\sigmaX$-measurable functions and $(g_N)_{N \in \N}$ a sequence of non-negative $\sigmaX$-measurable functions. 
    Assume that the following assumptions hold.
    \begin{enumerate}[label=(\roman*)]
        \item There exists $C > 0$ such that $\left| f_N(x) \right| \leq C g_N(x)$ for all $N \in \N$ and $x \in \Xset$. 
        \item $(g_N)_{N\in \N}$ converges pointwise to $g$ and $\simplelim \int g_N \rmd \nu = \int g \rmd \nu < \infty$.
        \item $(f_N)_{N\in \N}$ converges pointwise to $f$.   
    \end{enumerate}
    Then, $f$ is $\nu$-integrable and $\simplelim \int f_N \rmd \nu = \int f \rmd \nu$.
\end{thm}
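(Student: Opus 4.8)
The plan is to reduce everything to Fatou's lemma applied to two auxiliary non-negative sequences, which is the standard route for this generalization of dominated convergence. First I would extract the pointwise consequences of the hypotheses: passing to the limit in $|f_N(x)|\le C g_N(x)$ using (ii)--(iii) gives $|f|\le C g$ on $\Xset$, and since (ii) provides $\int g\,\rmd \nu<\infty$ with $g\ge 0$, the function $f$ is $\nu$-integrable. Likewise each $f_N$ is $\nu$-integrable, because $\int g_N\,\rmd \nu\to\int g\,\rmd \nu<\infty$ forces $\int g_N\,\rmd \nu<\infty$ for all large $N$, and only the tail of the sequence is relevant to the conclusion.

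The key step is to apply Fatou's lemma to the non-negative sequences $C g_N+f_N$ and $C g_N-f_N$ (non-negative by (i)). For the first, pointwise convergence and Fatou yield
\[
\int (Cg+f)\,\rmd \nu \;\le\; \liminf_{N\to\infty} \int (Cg_N+f_N)\,\rmd \nu \;=\; C\int g\,\rmd \nu + \liminf_{N\to\infty}\int f_N\,\rmd \nu,
\]
where the last equality uses that $\int g_N\,\rmd \nu$ converges to the \emph{finite} number $\int g\,\rmd \nu$, so it factors out of the $\liminf$. Subtracting the finite quantity $C\int g\,\rmd \nu$ gives $\int f\,\rmd \nu\le\liminf_N\int f_N\,\rmd \nu$. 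Applying Fatou to $C g_N-f_N$ in exactly the same way gives $-\int f\,\rmd \nu\le-\limsup_N\int f_N\,\rmd \nu$, that is, $\limsup_N\int f_N\,\rmd \nu\le\int f\,\rmd \nu$.

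Combining the two bounds, $\limsup_N\int f_N\,\rmd \nu\le\int f\,\rmd \nu\le\liminf_N\int f_N\,\rmd \nu$, which forces $\lim_N\int f_N\,\rmd \nu$ to exist and equal $\int f\,\rmd \nu$. There is no genuine obstacle in this argument; the only point that needs a moment's care is the finiteness of $\int g\,\rmd \nu$, since that is precisely what allows it to be cancelled from both sides of the two Fatou inequalities — and it is the reason the weaker hypothesis $\int g_N\,\rmd \nu\to\int g\,\rmd \nu<\infty$ suffices in place of domination by a single fixed integrable function.
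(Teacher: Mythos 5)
Your proof is correct and is precisely the standard Fatou-based argument for this generalization of dominated convergence. The paper itself does not write out a proof — it simply cites Royden's \emph{Real Analysis} — and the argument found there is exactly the one you give (Fatou applied to $Cg_N+f_N$ and $Cg_N-f_N$, using the finiteness of $\int g\,\rmd\nu$ to cancel the $g$-terms), so there is nothing to add.
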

\begin{proof}
    The proof can be found in \cite{royden1988real}.
\end{proof}
Theorem~\ref{thm:parishoeff} and Lemma~\ref{lem:genhoeff} are borrowed from \cite{paris} and \cite{doucmoulines} respectively.
\begin{thm}
    \label{thm:parishoeff}
Assume that $\As{assp:B}{assp:boundup}$ hold. Then, for all $s \in \mathbb{N}\eqsp,$ $h_s \in \bounded{s+1}$ and $\big(f_{s}, \tilde{f}_{s}\big) \in \bounded{}^2$, there exist constants $\big(C_{s}, \widetilde{C}_{s}\big) \in\left(\mathbb{R}_{+}^{*}\right)^{2}$, depending on $h_{s}, f_{s}$, and $\tilde{f}_{s}$, such that for all $N \in \N^{*}$ and all $\varepsilon \in \mathbb{R}_{+}^{*}$,
\begin{equation}
    \label{eq:hoeffding1}
    \pP \left( \bigg| N^{-1} \sum_{i=1}^{N} \weight{i}{s} \big\{\big(\bwpath{s}^N[h_s] f_s \big)(\particle{i}{s})+ \tilde{f}_{s}\big(\particle{i}{s}\big)\big\}-\pred{s}\big(\bwpath{s}[h_s] f_{s}+\tilde{f}_{s}\big)\bigg| \geq \varepsilon\right)  \leq C_{s} \exp \left(-\widetilde{C}_{s} N \varepsilon^{2}\right)\eqsp, 
\end{equation}
\begin{equation}
    \label{eq:hoeffding2}
\pP \left(\left| \sum_{i=1}^{N} \normweight{i}{s} \left\{\left(\bwpath{s}^N[h_s] f_s \right)(\particle{i}{s})+ \tilde{f}_{s}\left(\particle{i}{s}\right)\right\}-\filter{s}\left(\bwpath{s}[h_s] f_{s}+\tilde{f}_{s}\right)\right| \geq \varepsilon\right) \leq C_{s} \exp \left(-\widetilde{C}_{s} N \varepsilon^{2}\right)\eqsp.
    \end{equation}
\end{thm}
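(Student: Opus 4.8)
The plan is to prove \eqref{eq:hoeffding1} and \eqref{eq:hoeffding2} simultaneously by induction on $s$, since the inductive step at time $s$ will feed on the time-$(s-1)$ bounds for both of them. For $s=0$ there is no backward kernel, $\bwpath{0}^N=\mathrm{Id}$, and the quantity in \eqref{eq:hoeffding1} is $N^{-1}\sum_{i=1}^N g_0(\particle{i}{0})\{h_0(\particle{i}{0})f_0(\particle{i}{0})+\tilde f_0(\particle{i}{0})\}$, a normalized sum of i.i.d.\ summands (drawn from $\transition{0}$) with deterministic envelope $\boundg(|h_0|_\infty|f_0|_\infty+|\tilde f_0|_\infty)$ under $\As{assp:B}{}$; the classical Hoeffding inequality applies. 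For \eqref{eq:hoeffding2} at $s=0$ one writes the normalized average as the ratio $\pred{0}^N(g_0\psi_0^N)/\pred{0}^N(g_0)$ with $\psi_0^N\eqdef h_0 f_0+\tilde f_0$, the numerator and denominator both concentrating exponentially by what was just proved, and invokes Lemma~\ref{lem:genhoeff}, whose hypotheses hold because $\pred{0}(g_0)>0$.

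For the inductive step, fix $s\geq 1$, assume both inequalities at time $s-1$ for all admissible functionals, and condition on $\F{s-1}$. Given $\F{s-1}$ the particles $\particle{1:N}{s}$ are i.i.d.\ with law $\filter{s-1}^N\transition{s}$, and the function $\psi^N_s\eqdef\bwpath{s}^N[h_s]f_s+\tilde f_s$ is $\F{s-1}$-measurable with $|\bwpath{s}^N[h_s]|\leq|h_s|_\infty$ pointwise, hence $g_s\psi_s^N$ has the deterministic envelope $\boundg(|h_s|_\infty|f_s|_\infty+|\tilde f_s|_\infty)$. Applying the ordinary Hoeffding inequality conditionally on $\F{s-1}$ — the constants depending only on this envelope — and then taking expectations shows that $N^{-1}\sum_i g_s(\particle{i}{s})\psi^N_s(\particle{i}{s})$ and $N^{-1}\sum_i g_s(\particle{i}{s})$ deviate from their conditional means $\filter{s-1}^N\transition{s}[g_s\psi^N_s]$ and $\filter{s-1}^N\transition{s}[g_s]$ by more than $\varepsilon$ with probability at most $2\exp(-\tilde cN\varepsilon^2)$. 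It remains to show that these conditional means concentrate exponentially fast around their limits; combined with the fluctuation bound, this yields \eqref{eq:hoeffding1}, and combined with Lemma~\ref{lem:genhoeff} applied to the ratio (valid since $\filter{s-1}\transition{s}(g_s)>0$), it yields \eqref{eq:hoeffding2}. The concentration of the conditional mean is obtained by expanding $\bwpath{s}^N[h_s]$ through the forward recursion \eqref{eq:FFBSrecursion}, which writes $\bwpath{s}^N[h_s](x)$, for each fixed $x$, as $\filter{s-1}^N\big(\transitiondens{s}(\cdot,x)\{\bwpath{s-1}^N[\tilde h_{0:s-1}](\cdot)+\tilde h_{s-1}(\cdot,x)\}\big)/\filter{s-1}^N(\transitiondens{s}(\cdot,x))$: both numerator and denominator are exactly of the form covered by the inductive hypothesis \eqref{eq:hoeffding2} at $s-1$ (with the bounded functions $f_{s-1}=\transitiondens{s}(\cdot,x)\leq\sigma_{+}$ and $\tilde f_{s-1}=\transitiondens{s}(\cdot,x)\tilde h_{s-1}(\cdot,x)$), so Lemma~\ref{lem:genhoeff} (the denominator limit $\filter{s-1}(\transitiondens{s}(\cdot,x))$ being positive by $\As{assp:positive}{}$) gives an exponential bound on $|\bwpath{s}^N[h_s](x)-\bwpath{s}[h_s](x)|$ — a quantitative strengthening of Proposition~\ref{prop:Tasconv}. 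Integrating this against $\transitiondens{s}(\particle{k}{s-1},\cdot)g_s(\cdot)f_s(\cdot)\nu$, averaging over $k$ with weights $\normweight{k}{s-1}$, and using the inductive hypothesis once more for the "$h_{s-1}=0$" comparison $\filter{s-1}^N\transition{s}[g_s\{\bwpath{s}[h_s]f_s+\tilde f_s\}]$ versus $\filter{s-1}\transition{s}[g_s\{\bwpath{s}[h_s]f_s+\tilde f_s\}]$ gives the required concentration.

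The main obstacle is the \emph{nested} ratio structure: $\bwpath{s}^N[h_s]$ is itself self-normalized, its denominator $\filter{s-1}^N(\transitiondens{s}(\cdot,x))$ varies with $x$ and has a limit that is merely pointwise positive (we do not assume $\As{assp:boundbelow}{}$), so the constants produced by Lemma~\ref{lem:genhoeff} depend on $x$; one must either integrate a bound made uniform in $x$ — available since $\transitiondens{s}\leq\sigma_{+}$ and $g_s,f_s$ are bounded, together with the deterministic a.s.\ bound $|\bwpath{s}^N[h_s]-\bwpath{s}[h_s]|\leq 2|h_s|_\infty$ that controls the heavy-tail regime — or propagate the $x$-dependence through the integral. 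Tracking the exponential constants $C_s,\widetilde C_s$, which legitimately depend on $s$ and on $h_s,f_s,\tilde f_s$, through the two nested invocations of Lemma~\ref{lem:genhoeff} and the repeated use of the inductive hypothesis is the only remaining bookkeeping. Finally, for a general $h_s\in\bounded{s+1}$ (not additive) the specialised recursion \eqref{eq:FFBSrecursion} is replaced by the general backward recursion $\bwpath{s}^N[h_{0:s}](x)=\filter{s-1}^N(\transitiondens{s}(\cdot,x)\,\bwpath{s-1}^N[h_{0:s}(\cdot,x)](\cdot))/\filter{s-1}^N(\transitiondens{s}(\cdot,x))$, and the argument above goes through verbatim.
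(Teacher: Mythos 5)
First, a point of calibration: the paper does not actually prove Theorem~\ref{thm:parishoeff} — it is imported verbatim from \cite{paris} (see the sentence preceding its statement), so there is no in-paper proof to compare against. Judged against the standard proof of such exponential deviation inequalities (\cite{doucmoulines,paris}), your architecture is the right one: joint induction on $s$ over \eqref{eq:hoeffding1} and \eqref{eq:hoeffding2}, conditional Hoeffding given $\F{s-1}$ (exploiting that $\bwpath{s}^N[h_s]$ is $\F{s-1}$-measurable with deterministic envelope $|h_s|_\infty$), and Lemma~\ref{lem:genhoeff} to pass to the self-normalized form. The base case and the fluctuation step are fine.

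The gap is in your treatment of the conditional mean. You propose to first derive, for each fixed $x$, an exponential bound on $|\bwpath{s}^N[h_s](x)-\bwpath{s}[h_s](x)|$ via Lemma~\ref{lem:genhoeff} and then integrate it in $x$, claiming the constants "can be made uniform in $x$". They cannot under \As{assp:B}{assp:boundup} alone: the constants produced by Lemma~\ref{lem:genhoeff} degrade as the lower bound $\beta$ on the limiting denominator $\filter{s-1}(\transitiondens{s}(\cdot,x))$ tends to $0$, and without \As{assp:boundbelow}{} there is no uniform-in-$x$ lower bound — \As{assp:positive}{} gives only pointwise positivity. The a.s.\ bound $|\bwpath{s}^N[h_s]-\bwpath{s}[h_s]|\leq 2|h_s|_\infty$ controls the \emph{size} of the deviation, not its \emph{probability}, so it does not rescue the integration. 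The reason the theorem nevertheless holds without \As{assp:boundbelow}{} is that the $x$-dependent denominator cancels \emph{before} any concentration argument is needed: since $\filter{s-1}^N\transition{s}(\rmd x)=\filter{s-1}^N(\transitiondens{s}(\cdot,x))\,\nu(\rmd x)$,
\begin{multline*}
\filter{s-1}^N\transition{s}\big[g_sf_s\{\bwpath{s}^N[h_s]-\bwpath{s}[h_s]\}\big]
=\int \filter{s-1}^N\big(\transitiondens{s}(\cdot,x)\big)\,g_s(x)f_s(x)\,\frac{a_N(x)}{\filter{s-1}^N(\transitiondens{s}(\cdot,x))}\,\nu(\rmd x)\\
=\sum_{i=1}^N\normweight{i}{s-1}\Big\{\bwpath{s-1}^N[\tilde h_{0:s-1}](\particle{i}{s-1})\,F(\particle{i}{s-1})+\tilde F(\particle{i}{s-1})\Big\}\eqsp,
\end{multline*}
where $a_N(x)\eqdef\filter{s-1}^N\big(\transitiondens{s}(\cdot,x)\{\bwpath{s-1}^N[\tilde h_{0:s-1}]+\tilde h_{s-1}(\cdot,x)-\bwpath{s}[h_s](x)\}\big)$, $F\eqdef\transition{s}[g_sf_s]$ and $\tilde F(y)\eqdef\int\transitiondens{s}(y,x)g_s(x)f_s(x)\{\tilde h_{s-1}(y,x)-\bwpath{s}[h_s](x)\}\nu(\rmd x)$, both bounded under \As{assp:B}{} and \As{assp:boundup}{}. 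This is a plain weighted particle average of exactly the form covered by the induction hypothesis \eqref{eq:hoeffding2} at time $s-1$, with limit $0$ by the backward decomposition of $\bwpath{s}[h_s]$; the only ratio then required is the single, $x$-free one $N^{-1}\Omega_{s-1}\to\pred{s-1}(g_{s-1})>0$, to which Lemma~\ref{lem:genhoeff} applies cleanly. With this replacement (together with the comparison $\filter{s-1}^N\transition{s}[g_s\{\bwpath{s}[h_s]f_s+\tilde f_s\}]$ versus its limit, which you already reduce correctly to the induction hypothesis with $h_{s-1}=0$), your induction closes.
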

\begin{lemma}
    \label{lem:genhoeff}
    Assume that $a_{N}, b_{N}$ and $b$ are random variables defined on the same probability space such that there exist positive constants $\beta, B_1, C_1, B_2, C_2$ and $M$ satisfying the following assumptions.
\begin{itemize}
    \item $\left|a_{N} / b_{N}\right| \leq M \eqsp,$ $\pP - a.s.$ and $b \geq \beta$.
    \item For all $\varepsilon>0$ and all $N \geq 1, \mathbb{P}\left(\left|b_{N}-b\right|>\varepsilon\right) \leq B_1 \exp(-C_1 N \varepsilon^{2})$.
    \item For all $\varepsilon>0$ and all $N \geq 1, \mathbb{P}\left(\left|a_{N}\right|>\varepsilon\right) \leq B_2 \exp(-C_2 N \varepsilon^2)$.
\end{itemize}
Then, there exist two positive constants $B_3, C_3$ such that
$$
\mathbb{P}\left(\left|\frac{a_{N}}{b_{N}}\right|>\varepsilon\right) \leq B_3 \exp (-C_3 N \varepsilon^2) \eqsp.
$$
\end{lemma}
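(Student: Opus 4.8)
The plan is to reduce the bound on $a_N/b_N$ to the two tail hypotheses by conditioning on the event that $b_N$ stays close to its limit $b$, and to exploit the almost-sure bound $|a_N/b_N|\le M$ to discard the range of large $\varepsilon$. First I would fix $\varepsilon>0$ and write, by a union bound,
\[
\pP\!\left(\left|\frac{a_N}{b_N}\right|>\varepsilon\right)\le \pP\!\left(\left|\frac{a_N}{b_N}\right|>\varepsilon,\ |b_N-b|\le \tfrac{\beta}{2}\right)+\pP\!\left(|b_N-b|>\tfrac{\beta}{2}\right).
\]
On the event $\{|b_N-b|\le\beta/2\}$ one has $b_N\ge b-\beta/2\ge\beta/2>0$ since $b\ge\beta$, hence $|a_N/b_N|\le 2|a_N|/\beta$ there, so that $\{|a_N/b_N|>\varepsilon\}\cap\{|b_N-b|\le\beta/2\}\subset\{|a_N|>\varepsilon\beta/2\}$. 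Applying the third hypothesis to the first term and the second hypothesis to the second term yields
\[
\pP\!\left(\left|\frac{a_N}{b_N}\right|>\varepsilon\right)\le B_2\exp\!\left(-\tfrac{C_2\beta^2}{4}\,N\varepsilon^2\right)+B_1\exp\!\left(-\tfrac{C_1\beta^2}{4}\,N\right).
\]

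Next I would upgrade the second term, whose exponent carries no $\varepsilon$, to one of the required form. Since $|a_N/b_N|\le M$ almost surely, the left-hand side is $0$ whenever $\varepsilon>M$, so it is enough to prove the claim for $0<\varepsilon\le M$; for such $\varepsilon$ we have $\varepsilon^2/M^2\le1$, whence $\exp(-C_1\beta^2 N/4)\le \exp\!\big(-(C_1\beta^2/4M^2)\,N\varepsilon^2\big)$. Taking $C_3:=\min\{C_2\beta^2/4,\ C_1\beta^2/(4M^2)\}$ and $B_3:=B_1+B_2$ then gives $\pP(|a_N/b_N|>\varepsilon)\le B_3\exp(-C_3 N\varepsilon^2)$ for all $\varepsilon\in(0,M]$, while the inequality is trivial for $\varepsilon>M$. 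This finishes the argument.

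The proof is essentially routine; the only step needing care is the last one, where the almost-sure uniform bound $M$ must be invoked precisely in order to restrict to $\varepsilon\le M$, so that the $\{|b_N-b|>\beta/2\}$-term can be rewritten with an $N\varepsilon^2$ exponent. I would also note that no independence between $a_N$ and $b_N$ is used—only the union bound—and that the threshold $\beta/2$ is chosen for convenience: any $c\beta$ with $c\in(0,1)$ works, with the constants $B_3,C_3$ adjusted accordingly.
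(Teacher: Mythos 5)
Your proof is correct and is essentially the standard argument: the paper does not prove this lemma itself (it is quoted from the cited reference \cite{doucmoulines}), and the proof given there proceeds exactly as you do — split on $\{|b_N-b|\le\beta/2\}$, use $b_N\ge\beta/2$ on that event to reduce to the tail bound on $a_N$, and invoke the almost-sure bound $M$ to restrict to $\varepsilon\le M$ so that the $\varepsilon$-free exponent in the $\{|b_N-b|>\beta/2\}$ term can be rewritten as $-(C_1\beta^2/(4M^2))N\varepsilon^2$. Your constants $B_3=B_1+B_2$ and $C_3=\min\{C_2\beta^2/4,\,C_1\beta^2/(4M^2)\}$ are exactly what this argument yields.
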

\section{Asymptotic variance of the joint predictive distribution}
\label{apdx:asymptvar}
In this section we provide some intuition on \eqref{eq:asymptvarjoint}. Let $h \in \bounded{}$. 
By the law of total variance, 
\begin{equation}
    \label{eq:totalvariance}
    \pV\big[ \joint{t+1}^N (h) \big] = \pV\big[ \pE\big[ \joint{t+1}^N (h) \mid \F{t} \big] \big] + \pE \big[ \pV[\joint{t+1}^N (h) \mid \F{t}\big]\big] \eqsp.
\end{equation}
As $\joint{t+1}^N(\boldone)$ is $\F{t}$-measurable and the particles at time $t+1$ are i.i.d conditionally on $\F{t}$, we have that
\begin{align}
    \label{eq:gammaQ}
    \pE\big[ \joint{t+1}^N (h) \mid \F{t} \big] & = \joint{t+1}^N(\boldone) \sum_{i = 1}^N \frac{\weight{i}{t}}{\Omega_{t}} \transition{t+1}[h](\particle{i}{t}) \nonumber\\
    & = \joint{t}^N(\boldone) N^{-1} \Omega _{t} \sum_{i = 1}^N \frac{\weight{i}{t}}{\Omega_{t}} \transition{t+1}[h](\particle{i}{t}) = \joint{t}^N \big(\Q{t+1}[h]\big) \eqsp.
\end{align}
On the other hand,
\begin{align*}
\pV \big[ \joint{t+1}^N [h] \mid \F{t} \big] & = \joint{t+1}^N (\mathbf{1})^2  \pV \bigg[ \frac{1}{N} \sum_{i = 1}^N h(\particle{i}{t+1}) \bigg| \F{t} \bigg] = N^{-1} \joint{t+1}^N(\boldone)^2 \pV _{\filter{t}^N \transition{t+1}} \big[ h(\particle{}{t+1}) \big] 
\end{align*}
where 
\begin{align*}
    \pV _{\filter{t}^N \transition{t+1}} \big[ h(\particle{}{t+1}) \big] = \filter{t}^N \transition{t+1}\big( \big\{ h - \filter{t}^N \transition{t+1}(h) \big\}^2 \big) \eqsp.
\end{align*}
Therefore, 
\begin{align}
    \label{eq:condvar}
    \pV \big[ \joint{t+1}^N (h) \mid \F{t} \big] & = N^{-2} \joint{t}^N(\boldone)^2 \Omega_t \pred{t}^N \big( \Q{t+1} \big[ \big\{ h - \filter{t}^N \transition{t+1}(h) \big\}^2 \big]\big) \nonumber \\
    & = N^{-1} \joint{t+1}^N(\boldone) \joint{t}^N \big( \Q{t+1} \big[ \big\{ h - \filter{t}^N \transition{t+1}(h) \big\}^2 \big] \big) \eqsp.
\end{align}
Replacing \eqref{eq:gammaQ}-\eqref{eq:condvar} in \eqref{eq:totalvariance}, we get the recursive formula
\begin{align*}
    N\pV\big[ \joint{t+1}^N (h) \big] & = N\pV\big[ \joint{t}^N \big(\Q{t+1}[h]\big) \big] + \pE\big[\joint{t+1}^N (\mathbf{1}) \joint{t}^N \big( \Q{t+1}\big[ \big\{ h - \filter{t}^N\transition{t+1}(h) \big\}^2 \big] \big) \big] \\
    & = \transition{0}\big(\Qmarg{1:t+1}[h]^2\big) - \joint{t+1}(h)^2 \\
    & \hspace{1.3cm} + \sum_{s = 1}^{t+1} \pE\big[\joint{s}^N (\mathbf{1}) \joint{s-1}^N \big( \Q{s}\big[\big\{ \Qmarg{s+1:t+1}[h] - \filter{s-1}^N\big(\transition{s}\big[\Qmarg{s+1:t+1}[h] \big] \big) \big\}^2 \big] \big)\big] \eqsp.
\end{align*}
With multiple applications of \eqref{eq:asconv} in the main paper, we get that
\begin{multline}
    \joint{s}^N (\mathbf{1}) \joint{s-1}^N \big( \Q{s}\big[\big\{ \Qmarg{s+1:t+1}[h] - \filter{s-1}^N\big(\transition{s}\big[\Qmarg{s+1:t+1}[h] \big] \big) \big\}^2 \big] \big) \\ \overset{a.s.}{\longrightarrow} \joint{s} (\mathbf{1}) \joint{s-1} \big( \Q{s}\big[\big\{ \Qmarg{s+1:t+1}[h] - \filter{s-1}\big(\transition{s}\big[\Qmarg{s+1:t+1}[h] \big] \big) \big\}^2 \big] \big) \eqsp,
\end{multline}
and, using that $\joint{s}(\boldone) \filter{s-1}( \transition{s}[h]) = \joint{s}(h)$ and $\joint{s}\big( \Qmarg{s+1:t+1}[h] \big) = \joint{t+1}(h)$ for all $h$ we get
\begin{align*}
    \joint{s} (\mathbf{1}) \joint{s-1} \big( \Q{s}\big[\big\{ \Qmarg{s+1:t+1}[h] & - \filter{s-1}\big(\transition{s}\big[\Qmarg{s+1:t+1}[h] \big] \big) \big\}^2 \big] \big) \\
    & = \joint{s} (\mathbf{1}) \joint{s} \big( \big\{ \Qmarg{s+1:t+1}[h] - \filter{s-1}\big(\transition{s}\big[\Qmarg{s+1:t+1}[h] \big] \big) \big\}^2 \big)\\
     & = \joint{s} (\mathbf{1}) \joint{s}\big[\Qmarg{s+1:t+1}(h)^2\big] - \joint{t+1}(h)^2 \eqsp.
\end{align*}
Finally, by $\As{assp:B}{}$ and the boundedness of $h$, $| \joint{s}^N(\boldone) | \leq \boundg^s $, $| \Qmarg{s+1:t}[h] | \leq \boundg^{t-s} | h |_\infty $, thus
\[ 
    \big| \joint{s}^N (\mathbf{1}) \joint{s-1}^N \big( \Q{s}\big[\big\{ 
        \Qmarg{s+1:t+1}[h] - \filter{s-1}^N\big(\transition{s}
        \big[\Qmarg{s+1:t+1}[h] \big] \big) \big\}^2 \big] \big) 
        \big| \leq 4\boundg^{2t} | h |_\infty \eqsp,
\]
and by the dominated convergence theorem, for any $s \in [0:t]$,
\begin{multline}
\simplelim \pE\big[ \joint{s}^N (\mathbf{1}) \joint{s-1}^N 
\big( \Q{s}\big[\big\{ \Qmarg{s+1:t+1}[h] - \filter{s-1}^N
\big(\transition{s}\big[\Qmarg{s+1:t+1}[h] \big] \big) 
\big\}^2 \big] \big) \big]\\ = \joint{s} (\mathbf{1})
 \joint{s}\big[\Qmarg{s+1:t+1}(h)^2\big] - \joint{t+1}(h)^2 \eqsp.
\end{multline}
and $N\pV\big[ \joint{t+1}^N [h] \big] \longrightarrow 
\sum_{s = 0}^{t+1}  \big\{ \joint{s} (\mathbf{1}) 
\joint{s}\big[\Qmarg{s+1:t+1}(h)^2\big] - \joint{t+1}(h)^2 \big\}$. As argued in Section~\ref*{subsec:disj} of the main paper, $\pE \big[ N \big(\joint{t+1}^N (h) - \joint{t+1}(h) \big)^2 \big]$ converges to the asymptotic variance when $N$ goes to infinity, and since $\joint{t+1}^N(h)$ is an unbiased estimator of $\joint{t+1}(h)$, $N\pV\big[ \joint{t+1}^N [h] \big] = \pE \big[ N \big(\joint{t+1}^N (h) - \joint{t+1}(h) \big)^2 \big]$. Therefore,
\[ 
    \simplelim \pE \big[ N \big(\joint{t+1}^N (h) - 
    \joint{t+1}(h) \big)^2 \big] = \sum_{s = 0}^{t+1}  \big\{ \joint{s}
     (\mathbf{1}) \joint{s}\big[\Qmarg{s+1:t+1}(h)^2\big] - \joint{t+1}(h)^2 \big\} \eqsp,
    \]
    which ends the proof.

\section{Computational time comparison}
\begin{figure}[h!]
    \centering
    \includegraphics[width = 10cm]{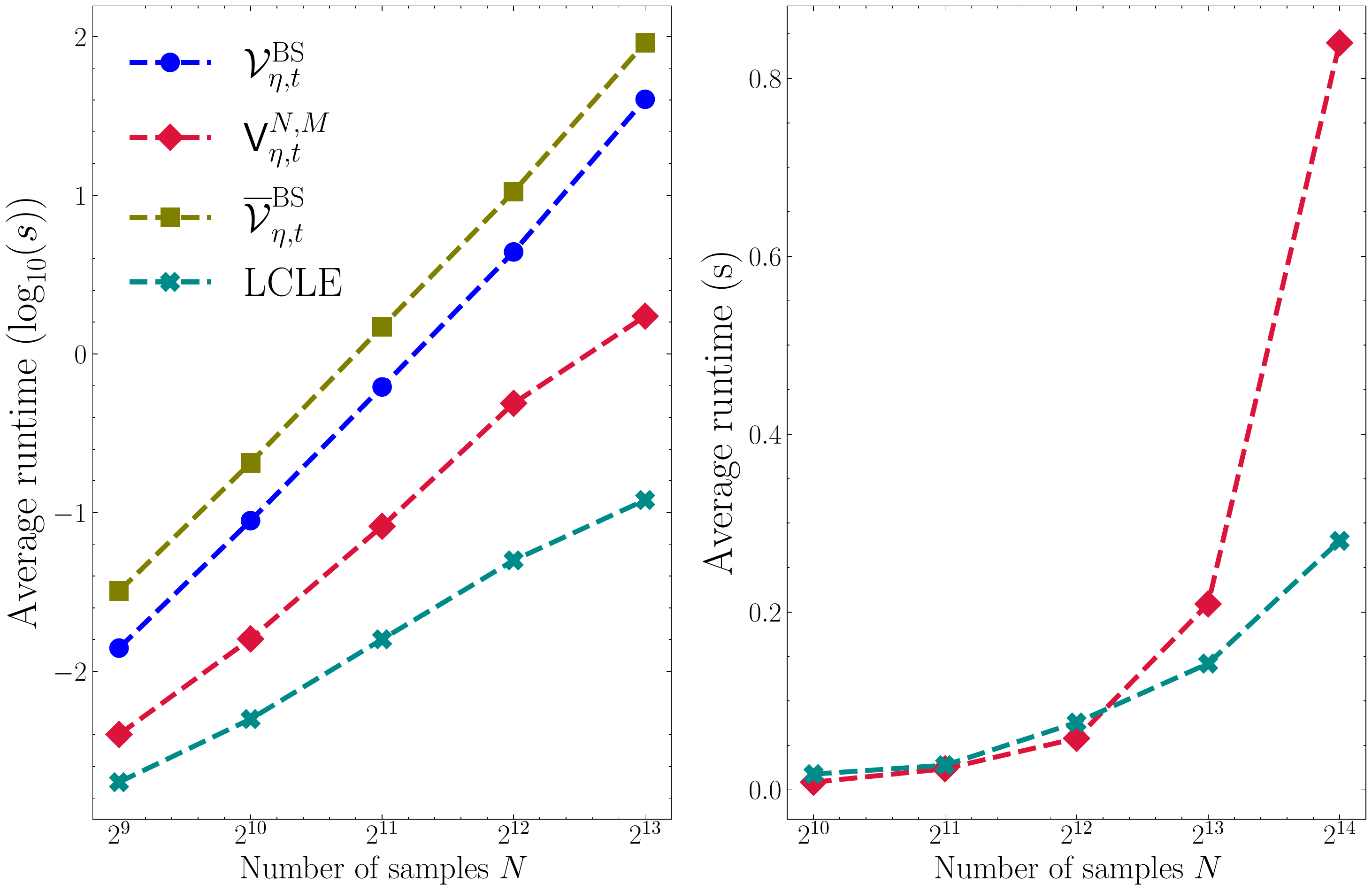}
    \caption{Comparison of time complexity (left) and runtime (right) for the different estimators, per time step. The runtime on the left plot is on CPU and that on the right plot on GPU, only for our most competitive estimator.}
    \label{fig:runtime}
\end{figure}

\end{document}